\iftrue %

\documentclass[11pt,letterpaper]{article}
\usepackage{amsthm,color,graphicx,url}
\usepackage[margin=1in]{geometry}
\usepackage[font=small,labelfont=bf]{caption}
\usepackage[labelformat=simple]{subcaption}

\usepackage{libertine}\usepackage[libertine]{newtxmath}
\usepackage[scaled=0.96]{zi4}
\usepackage[utf8]{inputenc}

\title{Planar Graph Orientation Frameworks, \\ Applied to KPlumber and Polyomino Tiling}

\author{%
  MIT Hardness Group%
    \footnote{Artificial first author to highlight that the other authors (in alphabetical order) worked as an equal group. Please include all
authors (including this one) in your bibliography, and refer to the authors as “MIT Hardness Group” (without “et al.”).}
\and
  Zachary Abel%
    \thanks{EECS, Massachusetts Institute of Technology, 32 Vassar St., Cambridge, MA 02139, USA, \protect\url{zabel@mit.edu}}
\and
  Erik D. Demaine%
    \thanks{CSAIL, Massachusetts Institute of Technology, 32 Vassar St., Cambridge, MA 02139, USA, \protect\url{{edemaine,jeli,zpeter}@mit.edu}}
\andlinebreak
  Jenny Diomidova%
    \thanks{Université d'Artois, CNRS, UMR 8188 CRIL, Lens, France\\CSAIL, Massachusetts Institute of Technology, 32 Vassar St., Cambridge, MA 02139, USA, \protect\url{diomidova@mit.edu}}
\and
  Jeffery Li%
    \footnotemark[2]
\and
  Zixiang Zhou%
    \footnotemark[2]
}
\date{}

\def\andlinebreak{\end{tabular}\linebreak\begin{tabular}[t]{c}}

\usepackage{hyperref}
\hypersetup{breaklinks,bookmarks,bookmarksnumbered,bookmarksopen,bookmarksopenlevel=2}
{\makeatletter \hypersetup{pdftitle={\@title}}}
{\makeatletter
 \gdef\xxxmark{%
   \expandafter\ifx\csname @mpargs\endcsname\relax %
     \expandafter\ifx\csname @captype\endcsname\relax %
       \marginpar{xxx}%
     \else
       xxx %
     \fi
   \else
     xxx %
   \fi}
 \gdef\xxx{\@ifnextchar[\xxx@lab\xxx@nolab}
 \long\gdef\xxx@lab[#1]#2{\textbf{[\xxxmark #2 ---{\sc #1}]}}
 \long\gdef\xxx@nolab#1{\textbf{[\xxxmark #1]}}
}

{\makeatletter \gdef\fps@figure{!htbp}}

\let\realbfseries=\bfseries
\def\bfseries{\realbfseries\boldmath}

\newtheorem{theorem}{Theorem}[section]
\newtheorem{lemma}[theorem]{Lemma}
\newtheorem{corollary}[theorem]{Corollary}
\newtheorem{observation}[theorem]{Observation}
\newtheorem{remark}[theorem]{Remark}
\theoremstyle{definition}
\newtheorem{definition}{Definition}[section]

\let\epsilon=\varepsilon
\def\defn#1{\textbf{\textit{\boldmath #1}}}

\else %

\documentclass[a4paper,UKenglish,cleveref,autoref,thm-restate,numberwithinsect]{socg-lipics-v2021}

\title{Planar Graph Orientation Frameworks, \\ Applied to KPlumber and Polyomino Tiling}

\titlerunning{Planar Graph Orientation Frameworks}

\author{MIT Hardness Group%
  \footnote{Artificial first author to highlight that the other authors (in alphabetical order) worked as an equal group. Please include all
authors (including this one) in your bibliography, and refer to the authors as “MIT Hardness Group” (without “et al.”).
}}
{CSAIL, Massachusetts Institute of Technology, 32 Vassar St., Cambridge, MA 02139, USA}{}{}{}

\author{Zachary Abel}{EECS, Massachusetts Institute of Technology, 32 Vassar St., Cambridge, MA 02139, USA}{zabel@mit.edu}{https://orcid.org/0000-0002-4295-1117}{}

\author{Erik D. Demaine}{CSAIL, Massachusetts Institute of Technology, 32 Vassar St., Cambridge, MA 02139, USA}{edemaine@mit.edu}{https://orcid.org/0000-0003-3803-5703}{}

\author{Jenny Diomidova}{Université d'Artois, CNRS, UMR 8188 CRIL, Lens, France\\CSAIL, Massachusetts Institute of Technology, 32 Vassar St., Cambridge, MA 02139, USA}{diomidova@mit.edu}{}{}

\author{Jeffery Li}{CSAIL, Massachusetts Institute of Technology, 32 Vassar St., Cambridge, MA 02139, USA}{jeli@mit.edu}{}{}

\author{Zixiang Zhou}{CSAIL, Massachusetts Institute of Technology, 32 Vassar St., Cambridge, MA 02139, USA}{zpeter@mit.edu}{}{}

\authorrunning{MIT Hardness Group} %

\Copyright{Zachary Abel, Erik D. Demaine, Jenny Diomidova, Jeffery Li, and Zixiang Zhou} %

\ccsdesc[500]{Theory of computation~Graph algorithms analysis}
\ccsdesc[500]{Theory of computation~Problems, reductions and completeness}

\keywords{Graphs, orientation, satisfiability, hardness, algorithms}

\category{} %

\relatedversion{} %

\acknowledgements{This paper was initiated during open problem solving in the MIT class on Algorithmic Lower Bounds: Fun with Hardness Proofs (6.5440) taught by Erik Demaine in Fall 2023. We thank the other participants of that class --- specifically
Evgeniya Artemova,
Krit Boonsiriseth,
Josh Brunner,
Lily Chung,
Della Hendrickson,
Hayashi Layers,
Jayson Lynch,
Richard Qi,
Mark Saengrungkongka,
Nathan Sheffield,
Andy Tockman,
Anthony Wang,
William Wang, and
Alek Westover
--- for helpful discussions and providing an inspiring atmosphere.
The original concept is based on early discussions in 2012 with
Martin Demaine, Takahasi Horiyama, and Ryuhei Uehara,
who we thank for helping invent this problem domain. 
}

\EventEditors{John Q. Open and Joan R. Access}
\EventNoEds{2}
\EventLongTitle{42nd Conference on Very Important Topics (CVIT 2016)}
\EventShortTitle{CVIT 2016}
\EventAcronym{CVIT}
\EventYear{2016}
\EventDate{December 24--27, 2016}
\EventLocation{Little Whinging, United Kingdom}
\EventLogo{}
\SeriesVolume{42}
\ArticleNo{23}

\fi

\usepackage{enumitem}
\usepackage{graphicx}
\usepackage{amsmath}
\usepackage{xcolor}
\usepackage{tikz}
\usepackage{xfrac}
\def\defn#1{\textbf{\textit{\boldmath #1}}}
\usetikzlibrary{arrows.meta, positioning}
\usetikzlibrary{decorations.pathreplacing}
\usetikzlibrary{intersections}
\usepackage{pifont}

\usepackage{colortbl}
\definecolor{header}{rgb}{0.29,0,0.51}
\definecolor{gray}{rgb}{0.85,0.85,0.85}
\definecolor{hard}{rgb}{1,0.85,0.85}
\definecolor{open}{rgb}{1,1,0.85}
\definecolor{easy}{rgb}{0.85,0.85,1}
\def\header#1{\cellcolor{header}\textcolor{white}{\textbf{#1}}}
\def\zebra{\rowcolors{2}{gray!75}{white}}

\newcommand{\I}{\texorpdfstring{\,\vcenter{\hbox{\includegraphics[scale=0.2]{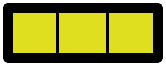}}}\,}I}
\renewcommand{\L}{\texorpdfstring{\,\vcenter{\hbox{\includegraphics[scale=0.2]{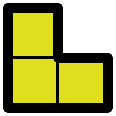}}}\,}L}

\newcommand{\OO}{\texorpdfstring{\,\vcenter{\hbox{\includegraphics[scale=0.2]{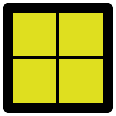}}}\,}O}
\newcommand{\TT}{\texorpdfstring{\,\vcenter{\hbox{\includegraphics[scale=0.2]{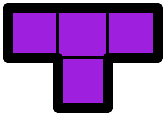}}}\,}T}
\newcommand{\LL}{\texorpdfstring{\,\vcenter{\hbox{\includegraphics[scale=0.2]{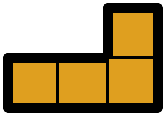}}}\,}L}
\newcommand{\JJ}{\texorpdfstring{\,\vcenter{\hbox{\includegraphics[scale=0.2]{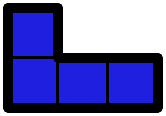}}}\,}J}
\renewcommand{\SS}{\texorpdfstring{\,\vcenter{\hbox{\includegraphics[scale=0.2]{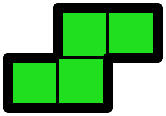}}}\,}S}
\newcommand{\ZZ}{\texorpdfstring{\,\vcenter{\hbox{\includegraphics[scale=0.2]{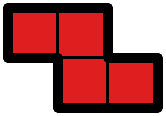}}}\,}Z}
\newcommand{\II}{\texorpdfstring{\,\vcenter{\hbox{\includegraphics[scale=0.2]{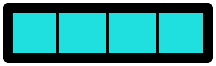}}}\,}I}
\newcommand{\ALL}{\II, \allowbreak \OO, \allowbreak \TT, \allowbreak \SS, \allowbreak \ZZ, \allowbreak \JJ, \allowbreak \LL}

\begin{document}

\maketitle

\begin{abstract}
Given a graph, when can we orient the edges to satisfy
local constraints at the vertices,
where each vertex specifies which local orientations
of its incident edges are allowed?
This family of graph orientation problems is a special kind of SAT problem, where each variable (edge orientation)
appears in exactly two clauses (vertex constraints) ---
once positively and once negatively.
We analyze the complexity of many natural vertex types
(patterns of allowed vertex neighborhoods),
most notably all sets of \emph{symmetric} vertex types
which depend on only the \emph{number} of incoming edges.
In many scenarios, including Planar and Non-Planar Symmetric Graph Orientation
with constants,
we give a full dichotomy characterizing
P vs.\ NP-complete problem classes.
We apply our results to obtain new polynomial-time algorithms,
resolving a 20-year-old open problem about KPlumber;
to simplify existing NP-hardness proofs for tiling with trominoes;
and to prove new NP-completeness results for tiling with tetrominoes.

\end{abstract}

\section{Introduction}

The main technique for proving NP-hardness is \emph{reduction} from a known NP-hard problem.
In principle, we could reduce from any NP-hard problem,
such as Cook's original Boolean satisfiability (SAT) \cite{Cook-1971}.
But in practice, some problems are easier to reduce from than others,
depending on the target problem being reduced to.
Still, the most common starting points for NP-hardness are
SAT or one of its many variants;
refer to the surveys \cite{HardnessBook,filhothesis}.
Notably, Schaefer's dichotomy theorem \cite{schaefer1978}
classifies the complexity of the family of SAT problems
parameterized by a set $\Gamma$ of clause types,
where a $j$-ary \defn{clause type} lists the allowed assignments
for its $j$ variables:
if the clause types all satisfy a common simplifying situation
(e.g., being satisfied when all variables are true),
then the SAT problem can be solved in polynomial time,
and otherwise it is NP-complete.
In practice, this theorem has provided several useful SAT variants
for reductions, such as Positive 1-in-3SAT and Positive Not-All-Equal (NAE) SAT
\cite{schaefer1978}.
A more complicated and recent dichotomy theorem \cite{delta-matroid-planar-schaefer}
characterizes the complexity of the \defn{planar} versions of these problems,
where the bipartite graph of connections between variables and clauses is assumed to be planar.
When the target problem is planar
(including many of the 2D games and puzzles analyzed at FUN),
Planar 3SAT and Planar Positive 1-in-3SAT \cite{planar-counting}
are common starting points for reductions.

In this paper, we aim to prove similar non-planar and planar dichotomies
for a family of ``graph orientation'' problems.
Our starting point is a problem introduced by
Horiyama, Ito, Nakatsuka, Suzuki, and Uehara in 2012 \cite{tromino-tiling},
as an intermediate problem
for an NP-hardness reduction to tiling a polyomino with trominoes.
Specifically, they proved NP-completeness of the following problem:
given an undirected 3-regular graph where each vertex $v$ is labeled
with a set $S_v$ that is either $\{0,3\}$, $\{1\}$, or $\{2\}$,
can we orient (direct) the edges so that each vertex $v$ has in-degree in $S_v$?%
\footnote{They called this problem ``1-in-3 Graph Orientation (GO)'', but a
more accurate name would be ``$\{0,3\}$-in-$3$, $1$-in-$3$, and $2$-in-$3$
Graph Orientation''.}

In this paper, we address two natural questions about a broader family of
graph orientation problems:
\begin{enumerate}
\item What if each vertex set $S_v$ is restricted to other possible sets?
  For which sets is the problem NP-complete?
  For example, we show that the sets $\{0,3\},\{1\}$ or the sets $\{0,3\},\{2\}$
  (but no single set)
  suffice for NP-completeness in 3-regular graphs,
  strengthening \cite{tromino-tiling}.
  As a consequence, the 2-in-3 or 1-in-3 gadgets in \cite{tromino-tiling}
  are no longer both necessary for proving NP-hardness of tromino tiling.
  We also show that the sets $\{0,4\},\{1\}$, the sets $\{0,4\},\{2\}$,
  or any \emph{one} of the sets
  $\{0,3\}, \{1,4\}, \{0,3,4\}, \{0,1,4\}$
  suffice for NP-completeness in 4-regular graphs.
  More generally, we also consider other vertex constraints that distinguish
  between (label) the incident edges.
\item What if the graph is assumed to be planar?
  In what cases does the problem remain NP-complete?
  Like Planar 3SAT, this constraint is useful for many 2D problems,
  such as the 2D tiling problem that originally motivated
  Horiyama et al.~\cite{tromino-tiling}.
  In particular, we show that all the aforementioned 3-regular and 4-regular problems
  remain NP-complete when the graph is planar.
  As a consequence, the crossover gadgets in \cite{tromino-tiling}
  are no longer necessary for proving NP-hardness of tromino tiling.

\end{enumerate}

\subsection{Problem}

The \defn{Graph Orientation (GO) problem} is parameterized by a set $\Gamma$ of vertex types,
where a degree-$j$ \defn{vertex type} specifies allowable patterns of directions
for $j$ labeled incident edges.
A vertex type can thus be viewed as a subset of $\{0,1\}^j$,
where $0$ represents an outgoing edge and $1$ represents an incoming edge.
Given an undirected graph where each degree-$j$ vertex
is labeled with a degree-$j$ vertex type in $\Gamma$
and a local numbering of the incident edges,
the problem is to decide whether the graph (edges)
can be oriented to satisfy every vertex.
For example, constraint graph satisfiability from
Nondeterministic Constraint Logic \cite{GPC,SharpSAT_ISAAC2024}
is an NP-complete problem that can be stated in this framework,
with $\Gamma$ consisting of two vertex types:
$\{(a,b,c) \mid a+b+c \geq 1\}$ (\textsc{or} vertex) and
$\{(a,b,c) \mid a=b=1 \text{ or } c=1\}$
(\textsc{and} vertex).

In \defn{Symmetric Graph Orientation}, the vertex types are all \defn{symmetric},
meaning that they treat all incident edges identically (as if they were unlabeled),
and thus depend only on the \emph{number} of incoming edges.
A degree-$j$ symmetric vertex type can thus be viewed as a subset $S$ of $\{0,1,\dots,j\}$,
specifying the set of allowed numbers of incoming edges;
we call this vertex type \defn{$S$-in-$j$}.
In Symmetric Graph Orientation, we are given an undirected graph
and a set $S_v$ for each vertex $v$ of degree~$j$,
where $S_v\text{-in-}j \in \Gamma$,
and the problem is to orient the edges so that every vertex $v$ has in-degree in $S_v$.
For example, the graph orientation problem of \cite{tromino-tiling}
uses $\Gamma = \{\{0,3\}\text{-in-}3,\{1\}\text{-in-}3,\{2\}\text{-in-}3\}$.

In \defn{Planar Graph Orientation}, we assume that the given graph is planar.
As mentioned above, this constraint is motivated by reductions to planar/2D problems,
as it generally means that the reduction can avoid implementing a \defn{crossover} gadget (an area in a planar embedding of a graph where two edges intersect).

Graph orientation problems are related to a special case of SAT called \defn{SAT-E$2$}
\cite{HardnessBook,filhothesis},
where each variable appears in exactly 2 clauses.
The connection is as follows: edges correspond to variables, and vertex constraints can be encoded as clauses.
Edge orientations represent variable settings, and edges connecting exactly two vertices correspond to each variable appearing in exactly two clauses.
Graph Orientation has the additional constraint that each variable appears in one clause negatively
and one clause positively, because a directed edge has the opposite orientation at its two endpoints.

\subsection{Results: Framework}

Our goal is to characterize for which collections $\Gamma$ of vertex types (Planar) $\Gamma$-GO is in P,
and for which $\Gamma$ it is NP-complete.
In this paper, we focus mainly on Symmetric Graph Orientation,
though
we also analyze some important asymmetric vertex types.
Together, our results provide a framework for NP-hardness reductions
based on Graph Orientation, providing a catalog of vertex types that suffice for hardness
(as well as many that do not),
both when restricted to planar graphs and not.

\begin{table}[]
\centering
\setlength{\extrarowheight}{5pt}
\zebra
\begin{tabular}{|c|c|c|}
\cline{2-3}
\multicolumn{1}{c}{} & \multicolumn{2}{|c|}{\header{Dichotomy Theorem}} \\ \hline
\header{Vertex types} & \header{General} & \header{Planar} \\ \hline
\begin{minipage}[t]{0.2\textwidth}\centering
$\{S$-in-$j\}$,\\$0$-in-$1$,\\$1$-in-$1$
\end{minipage} &
\begin{minipage}[t]{0.25\textwidth}
Theorem~\ref{thm:S-in-j-and-const}: in P iff
\begin{itemize}[nosep]
\item all bijunctive
\item all affine
\item $\max\Delta\Gamma \le 2$
\end{itemize}
else, NP-complete
\end{minipage} &
\begin{minipage}[t]{0.35\textwidth}
Theorem~\ref{thm:planar-S-in-j-and-const}: in P iff
\begin{itemize}[nosep]
\item all bijunctive
\item all affine
\item $\max\Delta\Gamma \le 2$
\item $\gcd\Delta\Gamma \ge 5$ and all\\$S \subseteq\{0,j\}$ or $S = \{1\}$
\item $\gcd\Delta\Gamma \ge 5$ and all\\$S \subseteq\{0,j\}$ or $S = \{j-1\}$
\end{itemize}
else, NP-complete\vspace{5pt}
\end{minipage} \\ \hline
\begin{minipage}[t]{0.2\textwidth}\centering
$\{S$-in-$j\}$,\\$0$-in-$k$ or $k$-in-$k$
\end{minipage} &
\multicolumn{2}{|c|}{\begin{minipage}[t]{0.3\textwidth}
Theorem~\ref{thm:S-in-j-term}: in P iff
\begin{itemize}[nosep]
\item any of the above
\item all bottom-up closed
\item all top-down closed
\end{itemize}
else, NP-complete\vspace{5pt}
\end{minipage}} \\ \hline
\begin{minipage}[t]{0.2\textwidth}\centering
$\{i$-in-$j\}$,\\duplicator ($f \ne 0$)
\end{minipage} &
\begin{minipage}[t]{0.25\textwidth}
Theorem~\ref{thm:i-in-j-and-dup}: in P iff
\begin{itemize}[nosep]
\item all bijunctive
\end{itemize}
else, NP-complete
\end{minipage} &
\begin{minipage}[t]{0.35\textwidth}
Theorem~\ref{thm:i-in-j-and-dup}: in P iff
\begin{itemize}[nosep]
\item all bijunctive
\item $f \ge 5$; all $i \in \{0,1,j\}$
\item $f \ge 5$; all $i \in \{0,j-1,j\}$
\end{itemize}
else, NP-complete\vspace{5pt}
\end{minipage} \\ \hline
\begin{minipage}[t]{0.2\textwidth}\centering
$\{i$-in-$j\}$,\\synchronizer
\end{minipage} &
\multicolumn{2}{|c|}{\begin{minipage}[t]{0.3\textwidth}
Theorem~\ref{thm:i-in-j-and-sync}: in P iff
\begin{itemize}[nosep]
\item all bijunctive
\item all $i < j/2$ or $1$-in-$2$
\item all $i > j/2$ or $1$-in-$2$
\item all $i \in \{0,1,j\}$
\item all $i \in \{0,j-1,j\}$
\end{itemize}
else, NP-complete\vspace{5pt}
\end{minipage}} \\ \hline
\begin{minipage}[t]{0.2\textwidth}\centering
$\{i$-in-$j\}$,\\alternator\vspace{5pt}
\end{minipage} &
Not applicable & Theorem~\ref{thm:i-in-j-and-alt}: always in P \\ \hline
\end{tabular}
\caption{Summary of our dichotomy results characterizing the complexity of (Planar) Graph Orientation. For the case described by each row, we give complete dichotomies for both general graphs and planar graphs. ``$\{S$-in-$j\}$'' denotes any set of symmetric vertex types. ``$\{i$-in-$j\}$'' denotes any set of symmetric vertex types each with exactly one valid number of incoming edges.
Duplicator, synchronizer, and alternator vertex types are defined in Section~\ref{subsec:duplicators}.}
\label{tab:all-results}
\end{table}

Our dichotomy results, detailed in Table~\ref{tab:all-results},
provide a near-complete classification of Symmetric Graph Orientation problems,
both in the planar and not-necessarily-planar settings (which we usually just call non-planar).
If we assume the existence of ``constant vertices'' that have degree $1$ and force an edge to
have a particular orientation, then the first row of Table~\ref{tab:all-results}
gives a complete characterization of P vs.\ NP-complete.
The second row generalizes to weaker forms of constants, namely,
either a $0$-in-$k$ or a $k$-in-$k$ vertex type for some $k \geq 1$.
The later rows specialize to symmetric vertex types with only one valid
number of incoming edges, but replace the required constants with
certain families of (sometimes asymmetric) vertex types called
``duplicators'', ``synchronizers'', and ``alternators'';
refer to Section~\ref{subsec:duplicators} for definitions.

In the planar setting, our hardness results are driven largely by analysis of which vertex types can build a crossover gadget. With the ability to create crossover gadgets, we can reduce the corresponding general versions of Graph Orientation problems to their planar variants by inserting crossover gadgets where any two edges intersect in a planar embedding.
This allows us to transfer hardness results for the relevant SAT variants \cite{S-in-k-SAT-2, delta-matroid-planar-schaefer} to Graph Orientation, even if the results are not for planar variants.

Along the way, we identify five surprising classes of (Planar) Graph Orientation problems that are solvable in polynomial time. Here are somewhat simplified statements of these results (the actual lemmas are slightly more general):

\begin{enumerate}
    \item In Lemma~\ref{lem:k>=5-easy}, we show that Planar Graph Orientation with $1$-in-$j$ and $\{0,k\}$-in-$k$ vertices is in P for all $k \ge 5$. The same also holds for $(j-1)$-in-$j$ and $\{0,k\}$-in-$k$.
    \item In Lemma~\ref{lem:top-down-easy}, we show that Graph Orientation with $S$-in-$j$ vertices is in P if $\forall i > j/2, i \in S \implies i - 1 \in S$. The same also holds if $\forall i < j/2, i \in S \implies i + 1 \in S$.
    \item In Theorem~\ref{thm:i-in-j-and-alt}, we show that Planar Graph Orientation with $i$-in-$j$ vertices and alternators is in P (an alternator is an even-degree vertex that forces its edges to alternate between pointing in and out).
\end{enumerate}

\subsection{Results: Applications}

To illustrate the usefulness of our Graph Orientation framework,
we show several applications of reductions from Graph Orientation:

\begin{enumerate}
    \item In Section~\ref{sec:kplumber}, we analyze the game \defn{KPlumber}, a video game involving rotating tiles in a grid to make corresponding edges match up. KPlumber was studied by Král et al.~\cite{kplumber}, who gave complexity classifications for when the input is restricted to some subsets of the tile types. Their paper left one family of cases unresolved, which they showed are equivalent to the case with all tile types but one (the ``curve'' tile). We resolve this open case in the positive: surprisingly, there is a polynomial-time algorithm for these instances, hence completing the complexity classification of KPlumber.
    \item In Section~\ref{sec:tromino-tiling}, we give simpler proofs for the known \cite{tromino-tiling,older-tromino-tiling} NP-hardness of \defn{tromino tiling}, allowing either a single tromino type ($\L$ or $\I$) or allowing both tromino types ($\L$ and $\I$).
    (See \cite{tromino-undecidability} for different simple proofs.)
    \item In Section~\ref{sec:tetromino-tiling}, we study the complexity of \defn{tetromino tiling}, extending past work on tromino tiling~\cite{tromino-tiling,older-tromino-tiling}. We determine the complexity of tiling a rectilinear polygon using translated and/or rotated copies of tetrominoes from a given subset $S$ of tetrominoes. We give full characterizations for the case where the subset $S$ consists of a single tetromino type ($|S|=1$), with or without reflection, and in the case where the subset $S$ is the set of all tetrominoes.
    See Table~\ref{tab:tetrominos}.
\end{enumerate}

\begin{table}[h!]
\centering
\setlength{\extrarowheight}{2pt}
\zebra
\begin{tabular}{|c|c|c|}
\hline
\header{Tetromino(es)} & \header{Complexity} & \header{Reference} \\ \hline
$\OO$ & P & Theorem~\ref{thm:O-tetromino} \\ \hline
$\SS$ & P & Theorem~\ref{thm:S-tetromino} \\ \hline
$\ZZ$ & P & Theorem~\ref{thm:S-tetromino} \\ \hline
$\II$ & NP-complete & \cite[Section 2.3.3]{tromino-tiling} \\ \hline
$\TT$ & NP-complete & Theorem~\ref{thm:T-tetromino} \\ \hline
$\LL$ & NP-complete & Theorem~\ref{thm:L-tetromino} \\ \hline
$\JJ$ & NP-complete & Theorem~\ref{thm:L-tetromino} \\ \hline
$\LL,\JJ$ & NP-complete & Theorem~\ref{thm:L-tetromino} \\ \hline
$\SS,\ZZ$ & NP-complete & Theorem~\ref{thm:SZ-tetromino} \\ \hline
all: $\ALL$ & NP-complete & Theorem~\ref{thm:T-tetromino} \\ \hline
\end{tabular}
\caption{Complexity of tiling with single tetromino sets (with or without reflection) and all tetrominoes.}
\label{tab:tetrominos}
\end{table}

\subsection{Outline}

The structure of the rest of the paper is as follows. In Section~\ref{sec:prelims}, we describe Schaefer's dichotomy theorems for SAT variants relevant to the results in our paper (including Planar SAT and SAT-E$2$). In Section~\ref{sec:graph-ori}, we describe our Graph Orientation framework in more detail and characterize the complexity of many graph orientation problems. In Section~\ref{sec:kplumber}, we apply our results in graph orientation to resolve the last open problem in the complexity of KPlumber. In Section~\ref{sec:tiling}, we apply our results in graph orientation to analyze the complexity of tromino and tetromino tiling problems.

\section{Schaefer-Type Dichotomies}
\label{sec:prelims}

In this section, we give some background on Schaefer's dichotomy theorems for SAT and its extensions to Planar SAT and SAT-E$2$ (where each variable appears in exactly two clauses). These results will form the backbone of our framework in Section~\ref{sec:graph-ori}.

A SAT problem is
parameterized by a set $\Gamma$ of \defn{relations},
where a $j$-ary relation specifies the allowed assignments
for its $j$ variables. Unless specified, each variable must appear positively in every relation (we do not allow negated literals). We call this problem $\Gamma$-SAT.

\begin{definition}
A relation is \defn{$0$-valid} if it can be satisfied by setting all variables to false. A relation is \defn{$1$-valid} if it can be satisfied by setting all variables to true.

A relation is \defn{Horn} if for any two satisfying assignments, their bitwise AND is also a satisfying assignment. A relation is \defn{dual-Horn} if for any two satisfying assignments, their bitwise OR is also a satisfying assignment.

A relation is \defn{affine} if for any three satisfying assignments, their bitwise XOR (exclusive or) is also a satisfying assignment. A relation is \defn{bijunctive} if for any three satisfying assignments, their bitwise MAJ (majority) is also a satisfying assignment.

A relation is \defn{self-complementary} if for any satisfying assignment, its bitwise NOT is also a satisfying assignment.
\end{definition}

\begin{theorem}[Schaefer's dichotomy \cite{schaefer1978}]\label{thm:schaefer}
$\Gamma$-SAT is NP-complete, except for the following 6 cases, which are in P:
\begin{enumerate}
    \item every relation in $\Gamma$ is $0$-valid;
    \item every relation in $\Gamma$ is $1$-valid;
    \item every relation in $\Gamma$ is Horn;
    \item every relation in $\Gamma$ is dual-Horn;
    \item every relation in $\Gamma$ is bijunctive;
    \item every relation in $\Gamma$ is affine.
\end{enumerate}
\end{theorem}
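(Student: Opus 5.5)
The plan follows the classical proof via Post's lattice of clones and Galois correspondence. There are two directions: the six tractable cases are each polynomial, and every other $\Gamma$ is NP-hard. Membership in NP is immediate: guess an assignment and check each constraint.

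\emph{Tractability.}
\begin{itemize}
\item If every relation is $0$-valid (resp.\ $1$-valid), the all-false (resp.\ all-true) assignment satisfies every instance.
\item A Horn relation is closed under AND, so it is expressible as a conjunction of Horn clauses. Satisfiability then follows by unit propagation, computing the minimal model. Dual-Horn is symmetric, using OR and the maximal model.
\item A bijunctive relation, closed under MAJ, is expressible as a conjunction of 2-clauses, so 2SAT applies via the implication graph.
\item An affine relation, closed under XOR of three tuples, is a coset of a subspace of $\mathrm{GF}(2)^j$, i.e., a system of linear equations. Solve it by Gaussian elimination.
\end{itemize}
In each case the key lemma is the closure characterization. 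The characterization need not be computed from a list of tuples in time exponential in the instance, since each relation in $\Gamma$ is fixed.

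\emph{Hardness.} Use the Galois connection between co-clones and polymorphism clones. If every relation in $\Gamma'$ is expressible from $\Gamma$ by a primitive positive formula (conjunction, existential quantification, variable identification, and equality), then $\Gamma'$-SAT reduces to $\Gamma$-SAT in polynomial time. The co-clone generated by $\Gamma$ is exactly the set of relations invariant under $\mathrm{Pol}(\Gamma)$.

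Suppose $\Gamma$ avoids all six cases. Then $\mathrm{Pol}(\Gamma)$ contains no constant $0$, no constant $1$, no AND, no OR, no MAJ, and no minority $x\oplus y\oplus z$. Using Post's classification of Boolean clones, every clone missing all six operations is contained in the clone generated by negation. Hence $\mathrm{Pol}(\Gamma)\subseteq\{\text{projections and their negations}\}$.

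The co-clone of $\Gamma$ therefore contains every self-complementary relation, in particular NAE-3. Positive NAE-3SAT is NP-hard, as shown directly by the standard reduction from 3SAT, which introduces one extra variable shared across all clauses. Therefore $\Gamma$-SAT is NP-hard.

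If the constants were available one could get all relations, but NAE suffices. One subtlety is that equality may appear in pp-definitions. It is removed by identifying variables, so no extra relation is required.

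\emph{Main obstacle.} The main obstacle is the clone-theoretic step: showing that a clone containing none of the six operations consists only of (negated) projections. To avoid citing all of Post's lattice, I would argue directly with a polymorphism $f$.
\begin{itemize}
\item If some unary polymorphism is constant, we are in a valid case. Otherwise the unary polymorphisms are identity or negation.
\item Suppose some polymorphism is not essentially unary. Use the standard Schaefer-style argument: by identifying variables, obtain a binary non-projection $g$.
\item If $g$ is idempotent, then $g$ is AND or OR. If not, $g(x,x)=\neg x$, and a case analysis on $g(0,1)$ and $g(1,0)$ either yields a constant or negation composed with something, or else leads through a ternary function to MAJ or minority.
\end{itemize}
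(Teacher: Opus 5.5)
The paper gives no proof of this theorem; it is cited from Schaefer. Your main line is the modern algebraic proof: pp-definitions give reductions, the co-clone of $\Gamma$ is $\mathrm{Inv}(\mathrm{Pol}(\Gamma))$, and Post's lattice finishes the job. This is a genuinely different route from Schaefer's original argument. He works directly with existential expressibility: he first reduces to having both constants or $x\neq y$, then shows that with constants a $\Gamma$ outside the four closure classes expresses every relation.

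Your route makes hardness almost immediate. The only clones avoiding all six operations are the projections and $\langle\neg\rangle$, whose invariants include NAE. The price is outsourcing all the combinatorics to Post's classification. With that classification cited, the argument is sound. One small tidy-up: the co-clone contains every self-complementary relation, so it contains NAE-$4$ and $x\neq y$. The shared-variable reduction from 3SAT to NAE-$4$SAT with negated literals therefore lands in it directly, without a detour through positive NAE-$3$SAT.

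The self-contained substitute you sketch for Post's lattice does not work as written. The step ``by identifying variables, obtain a binary non-projection $g$'' fails exactly in the cases that produce the last two operations:
\begin{itemize}
\item Every identification minor of $\mathrm{maj}$, of $x\oplus y\oplus z$, and of $\mathrm{maj}(x,y,\neg z)$ is a projection.
\item The clone generated by $\mathrm{maj}$ contains no binary operation other than projections.
\end{itemize}
Conversely, a binary $g$ can never lead to MAJ or minority. With no constant polymorphisms, $g(x,x)=\neg x$ forces $g\in\{\mathrm{NAND},\mathrm{NOR},\neg x,\neg y\}$. But $\mathrm{NAND}(x,\neg x)=1$ and $\mathrm{NOR}(x,\neg x)=0$, so a binary non-essentially-unary polymorphism must be idempotent, i.e.\ $\wedge$ or $\vee$.

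The correct organization is as follows. If a non-essentially-unary polymorphism $f$ has $f(x,\dots,x)=\neg x$, then $\neg$ is a polymorphism and you may replace $f$ by $\neg f$. So take an idempotent non-projection polymorphism $f$ of minimal arity $n$.
\begin{itemize}
\item If $n=2$, $f$ is $\wedge$ or $\vee$.
\item If $n\ge 3$, minimality makes every identification minor a projection.
\item For $n=3$ the eight possibilities are three projections, $\mathrm{maj}$, $x\oplus y\oplus z$, and the three variable-permutations of $p(x,y,z)=\mathrm{maj}(x,y,\neg z)$. These last ones generate majority, since $p(x,y,p(x,y,z))=\mathrm{maj}(x,y,z)$.
\item For $n\ge 4$, \'{S}wierczkowski's lemma makes $f$ a semiprojection. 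On $\{0,1\}$ every tuple of length at least $3$ repeats a value, so $f$ is a projection, a contradiction.
\end{itemize}

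One further caveat concerns your step ``not all $0$-valid $\Rightarrow$ constant $0\notin\mathrm{Pol}(\Gamma)$''. It needs every relation to be nonempty: the empty relation admits every polymorphism but is not $0$-valid under the paper's definition. In fact the statement as written fails for $\Gamma=\{\emptyset,R\}$ with $R=\{0000,1001,0101,0011\}$. No case applies, yet every instance is either trivially unsatisfiable or satisfied by all-false. So the usual nonemptiness convention should be made explicit.
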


For Planar $\Gamma$-SAT, where the bipartite graph of connections between variables and clauses must be planar, there is one additional easy case.

\begin{theorem}[Planar Schaefer's dichotomy \cite{delta-matroid-planar-schaefer}]\label{thm:planar-schaefer}
Planar $\Gamma$-SAT is NP-complete, except for the following 7 cases, which are in P:
\begin{itemize}
    \item cases 1-6 from Theorem~\ref{thm:schaefer};
    \item every relation $R \in \Gamma$ is self-complementary and $dR$ is an even $\Delta$-matroid (refer to \cite{delta-matroid-planar-schaefer} for details).
\end{itemize}
\end{theorem}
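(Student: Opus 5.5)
Since this is the planar dichotomy of \cite{delta-matroid-planar-schaefer}, I would follow its structure: tractability first, then hardness by constructing a planar crossover gadget. The crossover is what lets us import NP-hardness from the non-planar Theorem~\ref{thm:schaefer}.

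\emph{Tractability.} Cases 1--6 need nothing new, because planar instances are a special case of general ones and the algorithms behind Theorem~\ref{thm:schaefer} apply unchanged. For case~7, take a planar embedding of the variable--clause incidence graph. In it, each $j$-ary clause $R$ sees its occurrences in a cyclic order around the clause vertex. I would recode each assignment by the $j$ ``difference bits'' between cyclically consecutive occurrences, i.e., whether the two variables are equal or differ.
\begin{itemize}
\item Because $R$ is self-complementary, the set of allowed difference patterns is a well-defined relation $dR$. It loses exactly the global complementation, which is harmless for satisfiability.
\item The difference bits live on edges of a dual-type graph: each face is bounded by occurrence pairs, and the parity of differences around each face must be consistent.
\item This turns the instance into an edge-CSP in which every constraint is $dR$, an even $\Delta$-matroid, plus parity constraints, which are also even $\Delta$-matroids.
\item Such edge-CSPs are solvable in polynomial time by $\Delta$-matroid matching/parity algorithms, in the style of Kazda--Kolmogorov--Rolínek for even $\Delta$-matroids. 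I would cite this rather than reprove it.
\end{itemize}

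\emph{Hardness.} Assume $\Gamma$ falls into none of the seven cases. By Theorem~\ref{thm:schaefer}, $\Gamma$-SAT is NP-complete, so it suffices to give a planar gadget, built from relations in $\Gamma$, that realizes a crossover (the relation $x=x'\wedge y=y'$ with terminals in crossing cyclic order). Substituting this gadget at every crossing of a drawing of a general instance then gives a polynomial reduction to the planar problem. I would split into two subcases.
\begin{enumerate}
\item \textbf{Some relation is not self-complementary.} Together with failure of $0$-validity and $1$-validity, standard Schaefer-style planar gadgets implement the constants $0$ and $1$. With constants available, and since $\Gamma$ is neither Horn, dual-Horn, bijunctive nor affine, planar pp-definitions give a known hard relation such as $1$-in-$3$, NAE, or XOR combined with a non-affine relation. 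From these a crossover follows, e.g., the classical XOR crossover, or we reduce directly from Planar Positive 1-in-3SAT \cite{planar-counting}.
\item \textbf{Every relation is self-complementary.} Here constants cannot be implemented, and we must use the hypothesis that some $dR$ is not an even $\Delta$-matroid.
\end{enumerate}

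The main obstacle is subcase~2. I would first try the following. Pick a relation whose $dR$ is either not even (odd and even difference vectors mixed) or violates the $\Delta$-matroid exchange axiom. Then, by planar identification of variables and by attaching gadgets that project away coordinates, shrink the violating witness to a minimal relation of small arity, at most~$4$. Finally, enumerate these minimal non-$\Delta$-matroid relations up to symmetry and, for each, either exhibit a planar crossover directly or show that a crossover arises after composing a few copies. This mirrors the known fact that the non-even-$\Delta$-matroid relations on four edges include exactly the ones needed to swap two wires.
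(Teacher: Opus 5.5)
The paper does not prove this statement. It imports it from \cite{delta-matroid-planar-schaefer}, which completes work of Dvo\v{r}\'ak and Kupec, so your proposal has to stand on its own. Your tractability sketch for the seventh case is the right route: cyclic differences along each constraint's rotation, one even-parity constraint per face, then the even-$\Delta$-matroid edge-CSP algorithm. Two points should be made explicit. First, $dR$ must be read with respect to a rotation fixed to the argument order; otherwise $x_1=x_2\wedge x_3=x_4$ drawn in the order $x_1,x_3,x_2,x_4$ is already a crossover. Second, the face parities have unbounded arity, so they must first be split into ternary parities, which is itself an edge-CSP gadget.

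The hardness half, which is the substance of the theorem, is not established. In subcase~1 the constants can be obtained, but not by the standard Schaefer gadgets. Those identify arbitrary classes of coordinates, which is non-planar as soon as two classes interleave around the constraint vertex. A planar route is available: for a relation that is neither $0$- nor $1$-valid, identify only the cyclically consecutive blocks of a tuple with fewest blocks. This yields constants, $x\neq y$, or an alternating relation that projects to $x\neq y$. A ring of $=$/$\neq$ constraints around a non-self-complementary relation then forces a constant.

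The real gap is your next sentence: that with constants, every non-Schaefer $\Gamma$ planarly defines $1$-in-$3$, NAE, or XOR. Schaefer's derivations of these relations identify coordinates according to classes cut out by two or three chosen tuples, e.g.\ $t,s\in R$ with $t\wedge s\notin R$. When those classes interleave, you only get interleaved relations of larger arity. Collapsing them is essentially what a crossover does, so the step as stated is circular. A genuinely planar argument is needed here, and none is given.

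Subcase~2 is only a plan, not a proof. Its first branch is vacuous: every tuple of $dR$ has even weight, since cyclic differences sum to $0$ modulo~$2$. So the witness is always a failure of the exchange axiom. Without constants, the planar operations available act on $d$-tuples in two ways:
\begin{enumerate}
\item Restricting one coordinate to $0$ or $1$, up to complementing the next coordinate. This comes from identifying adjacent arguments, or joining them by $x\neq y$, which is expressible because $\Gamma$ is not $0$-valid.
\item XOR-merging two adjacent coordinates, which is what projection does.
\end{enumerate}
Restriction alone cannot shrink, for example, $\{0^6,1^6\}$: all of its proper restrictions are $\Delta$-matroids. You have not shown that these operations always reduce a violation to arity~$4$. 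Nor have you carried out the enumeration that produces a crossover in each case. So neither hardness subcase is actually proved.
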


Often, the set $\Gamma$ includes constants ($1$-ary relations that force a variable to be true or force it to be false). In this case, we can eliminate some of the easy cases because the true constant is not $0$-valid, the false constant is not $1$-valid, and neither constant is self-complementary.

\begin{corollary}[Schaefer's dichotomy with constants]\label{thm:schaefer-const}
If $\Gamma$ includes both constants, both $\Gamma$-SAT and Planar $\Gamma$-SAT are NP-complete, except for the following 4 cases, which are in P:
\begin{enumerate}
    \item every relation in $\Gamma$ is Horn;
    \item every relation in $\Gamma$ is dual-Horn;
    \item every relation in $\Gamma$ is bijunctive;
    \item every relation in $\Gamma$ is affine.
\end{enumerate}
\end{corollary}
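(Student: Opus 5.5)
The plan is to derive this corollary directly from Theorem~\ref{thm:schaefer} and Theorem~\ref{thm:planar-schaefer}. The idea is to show that when $\Gamma$ contains both constant relations, none of the extra tractable cases (0-valid, 1-valid, self-complementary with even $\Delta$-matroid) can apply to \emph{all} of $\Gamma$. What remains of the exception lists are exactly the four cases Horn, dual-Horn, bijunctive and affine.

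First I will identify the two constants. Call them $T=\{(1)\}$, which forces a variable true, and $F=\{(0)\}$, which forces it false. I then check which closure properties each one has.

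\begin{itemize}
\item $T$ is not $0$-valid, since its only satisfying assignment is $(1)$.
\item $F$ is not $1$-valid.
\item Neither constant is self-complementary, because the bitwise NOT of its unique satisfying assignment is not satisfying.
\item Each constant has a single satisfying assignment, and AND, OR, XOR of three copies, and MAJ all return that same assignment. So both constants are Horn, dual-Horn, affine and bijunctive. (XOR of three equal bits returns the bit, so the affine check also works.)
\end{itemize}

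Next comes the non-planar statement. By Theorem~\ref{thm:schaefer}, $\Gamma$-SAT is in P exactly when all relations of $\Gamma$ fall into one of the six classes. Since $T\in\Gamma$ is not $0$-valid, case~1 fails, and since $F\in\Gamma$ is not $1$-valid, case~2 fails. Adding the constants never breaks cases 3--6, so the condition ``every relation in $\Gamma$ is Horn'' and the other three are the remaining exceptions. Every other $\Gamma$ is NP-complete.

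For the planar statement I apply Theorem~\ref{thm:planar-schaefer} in the same way. Its seventh case requires every relation to be self-complementary, and this fails because of $T$ (or $F$). Hence Planar $\Gamma$-SAT is in P exactly in cases 3--6, and it is NP-complete otherwise.

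The only step needing care is this last one. Theorem~\ref{thm:planar-schaefer} quantifies over every relation in $\Gamma$, so a single non-self-complementary member already excludes the $\Delta$-matroid case; I never need the definition of $dR$. I expect no real obstacle, since the whole argument is checking that the constants violate exactly the unwanted closure properties.
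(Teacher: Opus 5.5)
Your proposal is correct and matches the paper's argument: the paper likewise derives the corollary from Theorems~\ref{thm:schaefer} and~\ref{thm:planar-schaefer} by noting that the true constant is not $0$-valid, the false constant is not $1$-valid, and neither constant is self-complementary, which rules out exactly the extra tractable cases. Your added check that both constants are Horn, dual-Horn, bijunctive and affine (so the four remaining cases can actually occur) is a small refinement that the paper leaves implicit.
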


We will focus on symmetric clause types, which depend only on the number of associated variables assigned true. For a set $S \subseteq \{0,1,\dots,j\}$, an $S$-in-$j$ clause is a $j$-ary relation that is satisfied if and only if the number of true variables among its arguments lies in $S$. %

\begin{definition}\label{def:bitstring}
The \defn{bitstring} of an $S$-in-$j$ clause is a length $j+1$ string with a \texttt{1} in ($0$-indexed) position $i$ if $i \in S$ and \texttt{0} if $i \not\in S$ (for example, the bitstring of $\{0,1,4\}$-in-$5$ is \texttt{110010}).

We say that the clause has a \defn{gap of size $k$} if the bitstring contains exactly $k$ consecutive zeros surrounded by ones. 
\end{definition}

Finally, consider $\Gamma$-SAT-E$2$, where each variable appears in exactly two clauses. This is more restrictive than $\Gamma$-SAT and thus admits more polynomial-time cases. If $\Gamma$ includes both constants, the following theorem \cite[Corollary~5.1]{S-in-k-SAT-2} gives a complete symmetric dichotomy:

\begin{theorem}[Symmetric $\Gamma$-SAT-E$2$ dichotomy \cite{S-in-k-SAT-2}]\label{thm:S-in-k-SAT-2}
    Let $\Gamma$ be a set of symmetric relations (i.e., each of the form $S$-in-$j$) that includes both constants. Then $\Gamma$-SAT-E$2$ is in P if no relation in $\Gamma$ has a gap of size at least $2$, and is equivalent to $\Gamma$-SAT otherwise.
\end{theorem}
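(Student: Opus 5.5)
The statement has two directions: a polynomial-time algorithm when every relation has gaps of size at most $1$, and an equivalence with unrestricted $\Gamma$-SAT when some relation has a gap of size at least $2$. I would prove them separately. The hardness direction builds an equality gadget of arity at least $3$ from constants. The polynomial direction reduces to a general-factor problem on graphs.

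\emph{Gap $\ge 2$ implies equivalence with $\Gamma$-SAT.} One reduction is trivial: $\Gamma$-SAT-E$2$ is a special case of $\Gamma$-SAT. For the converse, take an $S$-in-$j$ relation in $\Gamma$ whose bitstring contains $1\,0^k\,1$ with $k \ge 2$, at positions $a$ and $a+k+1$.
\begin{itemize}
\item Plug the true constant into $a$ of the arguments.
\item Plug the false constant into $j-(a+k+1)$ of the arguments.
\end{itemize}
This leaves $k+1$ free arguments, and the number of true ones among them must lie in $(S-a)\cap\{0,\dots,k+1\}=\{0,k+1\}$. So we obtain the relation $\{0,k+1\}$-in-$(k+1)$, which forces all $k+1 \ge 3$ variables to be equal. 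Each use of a constant is a separate unary clause, so every constant occurrence counts once, which is compatible with E$2$.

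Given a $\Gamma$-SAT instance in which a variable $x$ occurs $t$ times, chain these equality gadgets to produce $t$ copies of $x$. Each copy appears once inside the chain and once in an original clause. An arity-$3$ equality gadget has one input and two outputs, so a chain of them works. If $k+1>3$, fix the extra ports by pairing them through a further gadget, or simply use the extra outputs as additional copies. The result is a $\Gamma$-SAT-E$2$ instance that is equisatisfiable with the original. This reduction is parsimonious in structure and easy to check.

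\emph{All gaps $\le 1$ implies the problem is in P.} View the instance as a multigraph:
\begin{itemize}
\item each clause is a vertex $v$ carrying its set $S_v$;
\item each variable is an edge between the two clauses containing it.
\end{itemize}
A satisfying assignment is then exactly a spanning subgraph (the true edges) whose degree at each $v$ lies in $S_v$. This is the general factor problem, which Cornuéjols showed is solvable in polynomial time when no $S_v$ has a gap of length $\ge 2$, via a reduction to matching.

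Three cleanups are needed before applying that result.
\begin{enumerate}
\item \emph{Constants.} Delete the constant's edge and shift the neighbor's set: $S \mapsto S-1$ for a true constant, and truncate for a false one. Shifting and truncating a set with no gap of size $\ge 2$ preserves that property.
\item \emph{Variables used twice in one clause (loops).} Subdivide each such loop with a new vertex of type $\{0,2\}$-in-$2$. This vertex has gap $1$ and enforces equality of its two edges, so the semantics are unchanged.
\item \emph{Unused variables and empty clauses.} These are trivial to dispose of.
\end{enumerate}

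\emph{Main obstacle.} The substantive step is the polynomial-time side, namely invoking (or reproving) Cornuéjols' general factor theorem. One must check carefully that every local modification above keeps all sets gap-$\le 1$; in particular, the truncation caused by a false constant must not create a longer gap. I would verify this by noting that the gap structure of a set depends only on consecutive-element differences, which shifting and intersecting with an interval never increase. The hardness side is routine once the equality gadget is extracted.
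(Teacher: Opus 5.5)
The paper does not prove this statement; it imports it from \cite{S-in-k-SAT-2} (Corollary~5.1), so your argument has to stand on its own. Most of it does:
\begin{itemize}
\item Symmetric SAT-E$2$ is exactly the general factor problem on the clause--variable multigraph, so invoking Cornu\'ejols is the right move.
\item Your constant elimination and loop subdivision are correct. Constants only delete an end bit of a bitstring, so no gap of size $\ge 2$ can appear.
\item Extracting $\{0,k+1\}$-in-$(k+1)$ from $1\,0^k\,1$, using fresh variables each tied to its own unary constant clause, respects E$2$.
\end{itemize}

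The gap is in the copying step when $m=k+1$ is even. A chain of $g$ equality gadgets, linked by $g-1$ internal variables, has $gm-2(g-1)=g(m-2)+2$ free ports. When $m$ is even this number is even. Identifying two leftover ports into one variable removes ports two at a time. Hence a variable occurring an odd number of times cannot be replaced this way. Your alternative, keeping surplus outputs as ``additional copies'', violates E$2$: an unused output occurs in only one clause.

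This is not a defect of your particular chain. Take $\Gamma=\{1\text{-in-}3,\{0,4\}\text{-in-}4\}$ plus constants, and count true slots modulo $2$ over all non-constant clauses. This count is fixed. Internal variables contribute even or fixed amounts. So in \emph{any} E$2$ gadget, the number of external ports forced equal to a free variable must be even. Consequently no local gadget can handle a 1-in-3SAT variable that occurs three times.

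The missing step is to modify the source instance first. Duplicate every clause of the $\Gamma$-SAT instance; the paper uses the same device in the proof of Theorem~\ref{thm:S-in-j-and-const}. This preserves satisfiability and makes every occurrence count $t$ even, and at least $2$. Then:
\begin{itemize}
\item If $t=2$, use $x$ directly.
\item Otherwise, choose $g$ with $g(m-2)+2\ge t$ and $g(m-2)$ even.
\item Identify the surplus ports in pairs. This is harmless, since all of them equal $x$. Pair ports of different gadgets if you want to avoid a variable occurring twice in one clause.
\end{itemize}
With this step added, the equivalence direction goes through, and the rest of your proof is fine.
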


\section{Graph Orientation Framework}
\label{sec:graph-ori}

In a graph orientation problem, we are given a graph $G = (V,E)$ where each vertex in $V$ has a specified \defn{vertex type}, and we must determine whether there exists an orientation of $G$ (an assignment of directions to every edge in $E$) that satisfies every vertex in $V$. Each vertex places some constraints on its neighboring edges as specified by its vertex type. More formally, a vertex type of degree $d$ is a $d$-ary binary relation, i.e., a subset of $\{0,1\}^d$. If $\Gamma$ is a finite set of vertex types, we denote by \defn{$\Gamma$-Graph Orientation}, or \defn{$\Gamma$-GO}, the graph orientation problem on graphs where the types of all vertices belong to $\Gamma$. We allow self-loops and multiedges. We denote by \defn{Planar $\Gamma$-Graph Orientation}, or \defn{Planar $\Gamma$-GO}, the graph orientation problem where $G$ is guaranteed to be planar.

For an integer $j \ge 1$ and $S \subseteq \{0,1,\dots,j\}$, an $S$-in-$j$ vertex type has degree $j$ and is satisfied if and only if the number of neighboring edges directed inwards belongs to $S$. We abbreviate $\{i\}$-in-$j$ as just $i$-in-$j$.

Note that $S$-in-$j$ vertex types (in Graph Orientation) are very similar to $S$-in-$j$ clause types (in SAT). We define the \defn{bitstring} and a \defn{gap} of an $S$-in-$j$ vertex type analogously to Definition~\ref{def:bitstring}.

There is a natural reduction from $\Gamma$-GO to $\Gamma'$-SAT-E$2$, where $\Gamma' := \Gamma \cup \{\text{$1$-in-$2$}\}$ (each vertex becomes a clause from $\Gamma$, and each edge becomes a $1$-in-$2$ clause). We can use this reduction to transfer some of the easiness results from SAT and SAT-E$2$ over to Graph Orientation:

\begin{lemma}\label{lem:classic-easy-cases}
The following cases of $\Gamma$-GO are in P:
\begin{enumerate}
    \item all vertex types in $\Gamma$ are bijunctive;
    \item all vertex types in $\Gamma$ are affine;
    \item all vertex types in $\Gamma$ are symmetric and do not have gaps of size $2$ or more.
\end{enumerate}
\end{lemma}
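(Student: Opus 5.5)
The plan is to use the natural reduction described just above the statement: from $\Gamma$-GO to $\Gamma'$-SAT-E$2$ with $\Gamma' = \Gamma \cup \{\text{$1$-in-$2$}\}$. Each edge becomes a pair of Boolean variables, one per endpoint, recording whether the edge points \emph{into} that endpoint. Each vertex becomes a clause of its own type on the variables of its incident edge-ends. A $1$-in-$2$ clause on each edge's pair of variables forces exactly one endpoint to see the edge as incoming. Each variable then appears in exactly two clauses (its vertex clause and its edge clause), and satisfying assignments correspond bijectively to valid orientations. A self-loop just contributes two variables to the same vertex clause, linked by a $1$-in-$2$ clause, so it needs no special care. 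The reduction is clearly polynomial, so in each case it suffices to show that $\Gamma'$ falls into a polynomial-time class.

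For case 1, I would check that $1$-in-$2$, i.e.\ $\{01,10\}$, is bijunctive. This holds because any relation on two variables is bijunctive; directly, the majority of three strings from $\{01,10\}$ is again one of them. So $\Gamma'$ is all bijunctive, and Schaefer's dichotomy (Theorem~\ref{thm:schaefer}, case 5) gives polynomial time.

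For case 2, $1$-in-$2$ is the affine relation $x \oplus y = 1$: the XOR of three solutions has parity $1 \oplus 1 \oplus 1 = 1$. So $\Gamma'$ is all affine, and Theorem~\ref{thm:schaefer}, case 6, applies.

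Case 3 needs the most care, because Theorem~\ref{thm:S-in-k-SAT-2} assumes that both constants are in the relation set. I would apply it to $\Gamma'' = \Gamma' \cup \{\text{$0$-in-$1$}, \text{$1$-in-$1$}\}$. Since $\Gamma'$-SAT-E$2$ instances are also $\Gamma''$-SAT-E$2$ instances, a polynomial-time algorithm for the larger set suffices. The constants have bitstrings \texttt{10} and \texttt{01}, and $1$-in-$2$ has bitstring \texttt{010}; none of these has a gap at all. By hypothesis no type in $\Gamma$ has a gap of size $\ge 2$. Hence no relation in $\Gamma''$ has a gap of size at least $2$, and Theorem~\ref{thm:S-in-k-SAT-2} puts $\Gamma''$-SAT-E$2$ in P.

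The only subtle points are these: confirming that the definition of gap (zeros \emph{surrounded} by ones) does not count the leading and trailing zeros of \texttt{010}, and confirming that adding the constants is harmless. Neither is a real obstacle.
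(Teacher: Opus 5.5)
Your proposal is correct and follows the paper's proof. It encodes each vertex as its own clause and each edge as a $1$-in-$2$ clause, then checks that $1$-in-$2$ is bijunctive, affine, and gap-free, so that Theorem~\ref{thm:schaefer} and Theorem~\ref{thm:S-in-k-SAT-2} apply. Your explicit step of adjoining both constants to meet the hypothesis of Theorem~\ref{thm:S-in-k-SAT-2} is a detail the paper leaves implicit, and you handle it correctly.
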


\begin{proof}
    The first two cases follow from Theorem~\ref{thm:schaefer}. The last one follows from Theorem~\ref{thm:S-in-k-SAT-2}. Crucially, this uses the fact that $1$-in-$2$ is bijunctive, affine, symmetric, and does not have gaps of size $2$ or more.
\end{proof}

Note that the easiness for Horn or dual-Horn \emph{does not} carry over this way, because $1$-in-$2$ is not Horn or dual-Horn.

It can be somewhat tedious to use the definitions to check whether a relation is bijunctive or affine. Because we are mostly interested in symmetric relations, we provide the following helper lemma, which can make it easier to apply our main results. It applies both to SAT clauses and Graph Orientation vertices.

\begin{lemma}
    \begin{enumerate}
        \item An $S$-in-$j$ relation is bijunctive if and only if $S = \{0,\dots,j\}$, $S \subseteq \{0,j\}$, $S = \{0,1\}$, $S = \{j-1,j\}$, or $j \le 2$.
        \item An $S$-in-$j$ relation is affine if and only if $S = \{0,\dots,j\}$, $S \subseteq \{0,j\}$, $S = \{0,2,4,\dots\}$, or $S = \{1,3,5,\dots\}$.
        \item An $i$-in-$j$ relation is bijunctive if and only if it is $0$-in-$j$, $j$-in-$j$, or $1$-in-$2$.
        \item An $i$-in-$j$ relation is affine if and only if it is $0$-in-$j$, $j$-in-$j$, or $1$-in-$2$.
    \end{enumerate}
\end{lemma}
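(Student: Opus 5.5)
Parts 3 and 4 follow from parts 1 and 2 once we set $S=\{i\}$. A singleton equals $\{0,\dots,j\}$ only when $j=0$, which we exclude since $j\ge 1$. It lies in $\{0,j\}$ exactly for $0$-in-$j$ and $j$-in-$j$. It equals $\{0,1\}$ or $\{j-1,j\}$ only when $j=1$, which is again the constant case. The remaining cases are more delicate. For bijunctivity, the case $j\le2$ adds $1$-in-$2$; the other $j\le 2$ singletons are constants. For affinity, $\{i\}=\{0,2,4,\dots\}$ forces $j\le 1$, giving $\{0\}$-in-$1$, and $\{i\}=\{1,3,\dots\}$ forces $j\le 2$, giving $1$-in-$1$ or $1$-in-$2$. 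So the work is in parts 1 and 2.

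For the ``if'' directions I would check closure directly. Let $R$ be the symmetric relation. When $S=\{0,\dots,j\}$ everything is trivial. When $S\subseteq\{0,j\}$, the satisfying assignments are among the all-zeros and all-ones vectors, and both majority and XOR keep them inside $S$. For bijunctivity:
\begin{itemize}
\item If $S=\{0,1\}$, the majority of three vectors of weight at most $1$ has a $1$ only where two of them agree on a $1$, so it has weight at most $1$.
\item $S=\{j-1,j\}$ is the complement of the previous case.
\item For $j\le 2$, every binary relation is bijunctive, a standard fact that can also be checked by enumeration.
\end{itemize}
For affinity, XOR of three vectors has parity equal to the sum of their parities. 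So the even-weight and odd-weight sets are closed, since three odd parities sum to odd.

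For the ``only if'' directions I would exhibit explicit witnesses of non-closure. For affinity:
\begin{itemize}
\item Suppose $S$ contains two elements $a<b$ with $b-a$ odd. If some such pair comes with $S\ne\{0,\dots,j\}$, I want three weight vectors in $S$ whose XOR has a weight outside $S$.
\item The key tool is that with overlaps chosen freely, XOR of weights $a,a,b$ can realize weight $b$ adjusted by $a$ minus twice an overlap. For example, take $x$ and $y$ of weight $a$ with $|x\oplus y|=2t$, and $z$ of weight $b$ positioned freely.
\item This lets me generate all weights congruent to $b$ within a range, and, via mixed parities, weights of both parities. From there I would argue that $S$ must be an interval-closed parity class or everything.
\end{itemize}

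For bijunctivity, the plan has two steps. First suppose $j\ge3$ and $S$ contains some $i$ with $0<i<j$. Taking three vectors of weight $i$ arranged cyclically, the majority can realize weights $i-1$ and $i+1$ (when $j\ge 3$ leaves room). So $S$ is closed under moving toward the middle, and extending this shows $S\cap\{1,\dots,j-1\}$ must be all of $\{1,\dots,j-1\}$ or force one of the listed forms. Second, mixing an endpoint $0$ or $j$ with interior weights determines whether $0$ and $j$ are included.

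The main obstacle will be organizing these weight-realization arguments cleanly. That means showing exactly which weights the majority or XOR of three vectors of prescribed weights can attain, and then concluding that $S$ is one of the listed shapes. I would first prove a small lemma giving the attainable weight set of majority and of XOR of three vectors with given weights in $\{0,1\}^j$. The case analysis on $S$ then follows from that lemma.
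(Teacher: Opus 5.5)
Your overall structure matches the paper's: direct closure checks for the ``if'' directions (which you handle correctly, and in more detail than the paper), the padded cyclic triples $001,010,100$ and $110,101,011$ as non-closure witnesses, a finite leftover check, and parts 3--4 read off from 1--2. The gap is that both ``only if'' sketches rest on intermediate claims that are false, and they fail exactly on the exceptional sets in the statement.

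\textbf{Bijunctivity.} Padding $001,010,100$ with $i-1$ ones and $j-i-2$ zeros gives $i-1\in S$ only when $1\le i\le j-2$. Padding $110,101,011$ gives $i+1\in S$ only when $2\le i\le j-1$. So propagation runs \emph{outward}: $1$ forces $0$ (not $2$), $j-1$ forces $j$, and only $2\le i\le j-2$ propagates in both directions. Your claim that ``$S$ is closed under moving toward the middle'' is false. For instance, $\{0,1\}$-in-$j$ has $1\in S$ and $2\notin S$, yet it is bijunctive by your own ``if'' argument. What you actually get for $j\ge 3$ is a dichotomy: either $S$ meets $[2,j-2]$, and then $S=\{0,\dots,j\}$; or $S\cap[1,j-1]\subseteq\{1,j-1\}$, with $1\Rightarrow 0$ and $j-1\Rightarrow j$. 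Endpoint-mixing is then needed, but not to decide whether $0,j\in S$; the interior triples already force that. Its job is to kill the leftover sets $\{0,1,j\}$, $\{0,j-1,j\}$, and (for $j\ge 4$) $\{0,1,j-1,j\}$. You can do this with $\mathrm{MAJ}(1^j,e_1,e_2)$, which has weight $2$, and its complement.

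\textbf{Affinity.} Your first bullet asserts that any $S\ne\{0,\dots,j\}$ containing two elements of opposite parity admits an XOR witness. But $S=\{0,j\}$ with $j$ odd is such a set and is affine; an example is the $3$-equalizer $\{0,3\}$-in-$3$, i.e.\ $x_1=x_2=x_3$. Your tool needs two \emph{distinct} vectors $x,y$ of the same weight $a$, which requires $0<a<j$; endpoints alone generate nothing. Your target conclusion (``an interval-closed parity class or everything'') has two problems. It omits $S\subseteq\{0,j\}$. It also allows a sub-interval of a parity class, whereas the class must be the whole class in $[0,j]$.

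The fix, as in the paper, is to anchor everything on interior elements:
\begin{enumerate}
\item An interior $i\in S$ yields $i+2$ when $i\le j-2$ and $i-2$ when $i\ge 2$. So $S$ contains the entire parity class of $i$.
\item If $S$ has no interior element, then $S\subseteq\{0,j\}$.
\item Suppose an endpoint of the other parity sits alongside an interior $i$. Take $x\oplus y\oplus z$ with $|x|=|y|=i$, $|x\oplus y|=2$, and $z\in\{0^j,1^j\}$. This forces a weight of the other parity ($2$ or $j-2$) into $S$, and hence $S=\{0,\dots,j\}$.
\end{enumerate}
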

\begin{proof}[Proof sketch]
    \begin{enumerate}
        \item Suppose $S$-in-$j$ is bijunctive, meaning that for every three satisfying assignments their bitwise MAJ is also satisfying. Suppose $i \in S$ and $0 < i < j - 1$. Choose three bitstrings \texttt{001}, \texttt{010}, and \texttt{100}, and add the same padding to them so they have $i$ ones and $j-i$ zeros. Their bitwise MAJ has one fewer \texttt{1}, so $i-1 \in S$. Similarly, if $i \in S$ and $1 < i < j$, then $i+1 \in S$. This leaves a finite number of possibilities for $S$ that can be checked easily.
        \item Suppose $S$-in-$j$ is affine, meaning that for every three satisfying assignments their bitwise XOR is also satisfying. Suppose $i \in S$ and $0 < i < j - 1$. Choose three bitstrings \texttt{001}, \texttt{010}, and \texttt{100} with appropriate padding. Their bitwise XOR has two more ones, so $i+2 \in S$. Similarly, if $i \in S$ and $1 < i < j$, then $i-2 \in S$. This leaves a finite number of possibilities for $S$ that can be checked easily.
        \item This follows directly from 1.
        \item This follows directly from 2.\qedhere
    \end{enumerate}
\end{proof}

\begin{definition}
    A \defn{constant} vertex type is a $0$-in-$1$ or a $1$-in-$1$. A \defn{terminator} vertex type is a $0$-in-$j$ or a $j$-in-$j$ for some $j \ge 1$ (which can be thought of as a bundle of constants).
\end{definition}

\begin{definition}
    The \defn{net flow} of a vertex in a given satisfying edge orientation is its in-degree minus out-degree.
    
    The \defn{net flow} of a set of vertices in a given satisfying edge orientation is the sum of the net flows of the individual vertices.
\end{definition}

During reductions, we often build larger gadgets by connecting vertices together so that the resulting structure behaves like a single vertex of another type. For example, connecting two $1$-in-$3$ vertices by an edge creates a gadget with $4$ external edges, whose locally satisfying assignments are exactly those of a $1$-in-$4$ vertex.
Thus, given a GO instance with $1$-in-$4$ vertices,
we can replace each $1$-in-$4$ vertex with this gadget to get an equivalent GO instance with $1$-in-$3$ vertices instead.
We say that $1$-in-$3$ vertices \defn{simulate} $1$-in-$4$ vertices.

More generally, a \defn{gadget construction} is a GO graph $G$ where some
degree-1 vertices are marked \defn{external} meaning they have no constraint.
Considering all satisfying orientations of $G$,
restricting these to look at orientations of just the external edges
(those incident to external vertices),
and imagining these external edges are being incident to a single ``vertex'',
we obtain a vertex type $T$;
we say that $G$ is a \defn{simulation} of~$T$.
If $G$ uses vertex types from a set $\Gamma$, 
then we say that $\Gamma$ \defn{simulates}~$T$.
Simulations form a local notion of reduction
(inspired by similar notions in motion-planning gadgets
\cite{GadgetsChecked_FUN2022}).
In particular, if $(\Gamma \cup \{T\})$-GO is NP-hard and
$\Gamma'$ simulates $T$, then $(\Gamma \cup \Gamma')$-GO is NP-hard.
We will mostly consider \defn{planar} gadget constructions,
where the graph can be drawn in the plane with all external vertices
on the outer face, so replacement preserves planarity.
Thus, if Planar $(\Gamma \cup \{T\})$-GO is NP-hard and
$\Gamma'$ planarly simulates $T$,
then Planar $(\Gamma \cup \Gamma')$-GO is NP-hard.

The rest of this section is organized as follows. In Section~\ref{subsec:duplicators}, we consider a class of asymmetric vertex types called ``duplicators'' that are useful for copying variables in reductions, and fully characterize what they can simulate. In Section~\ref{subsec:S-in-j-with-const}, we prove dichotomies for $S$-in-$j$ vertex types in the case when both constant vertex types are allowed. We generalize these results in Section~\ref{subsec:removing-const} to terminators. Finally, in Section~\ref{subsec:i-in-j}, we consider $i$-in-$j$ vertex types without constants and prove more dichotomy theorems.

\subsection{Duplicators}\label{subsec:duplicators}

In reductions, we often need a way to copy a variable, possibly negating some of the copies. In this subsection, we study such vertex types and fully classify them.

A \defn{duplicator} is a vertex with exactly two satisfying assignments, which are inverses of each other. A duplicator is \defn{trivial} if it has degree at most $2$. There are only three such trivial duplicators (which happen to all be symmetric): $\{0,1\}$-in-$1$, $1$-in-$2$, and $\{0,2\}$-in-$2$. All other duplicators are \defn{non-trivial}. We say that a set of duplicators is \defn{non-trivial} if at least one of its members is non-trivial.

The two satisfying assignments of a duplicator have net flows $+f$ and $-f$ for some $f \ge 0$. We say that the \defn{net flow} of the duplicator is $\pm f$. 

The following three types of duplicators are particularly useful:

\begin{itemize}
    \item A \defn{$k$-equalizer} is a degree-$k$ duplicator whose two satisfying assignments have all edges pointing in or all edges pointing out.
    \item A \defn{synchronizer} is a degree-$4$ duplicator with two satisfying assignments as shown in Figure~\ref{fig:synchronizer}, left. One can think of it as a pair of edges that must point in the same direction.
    \item An \defn{alternator} is an even-degree duplicator whose two satisfying assignments alternate between inwards and outwards around the vertex (see Figure~\ref{fig:synchronizer}, right).
\end{itemize}

\begin{figure}
\centering
\begin{tikzpicture}[
node/.style={circle, draw, thick, circle, draw, minimum size=5mm, inner sep=0pt},
edge/.style={-latex, thick}
]
\begin{scope}[shift={(0,0)}, rotate=-45]
    \node[node] (A) at (0,0) {$S$};
    \draw[edge, red] (-0.75,0) -- (A);
    \draw[edge, blue] (A) -- (0.75,0);
    \draw[edge, blue] (0,-0.75) -- (A);
    \draw[edge, red] (A) -- (0,0.75);
\end{scope}
\begin{scope}[shift={(3,0)}, rotate=-45]
    \node[node] (A) at (0,0) {$S$};
    \draw[edge, red] (A) -- (-0.75,0);
    \draw[edge, blue] (0.75,0) -- (A);
    \draw[edge, blue] (A) -- (0,-0.75);
    \draw[edge, red] (0,0.75) -- (A);
\end{scope}
\draw (4.5,-1) -- (4.5,1);
\begin{scope}[shift={(6,0)}]
    \node[node] (A) at (0,0) {$A$};
    \draw[edge, red] (-0.75,0) -- (A);
    \draw[edge, blue] (A) -- (-0.375,0.6459);
    \draw[edge, red] (0.375,0.6459) -- (A);
    \draw[edge, blue] (A) -- (-0.375,-0.6459);
    \draw[edge, red] (0.375,-0.6459) -- (A);
    \draw[edge, blue] (A) -- (0.75,0);
\end{scope}
\begin{scope}[shift={(9,0)}]
    \node[node] (A) at (0,0) {$A$};
    \draw[edge, red] (A) -- (-0.75,0);
    \draw[edge, blue] (-0.375,0.6459) -- (A);
    \draw[edge, red] (A) -- (0.375,0.6459) ;
    \draw[edge, blue] (-0.375,-0.6459) -- (A);
    \draw[edge, red] (A) -- (0.375,-0.6459);
    \draw[edge, blue] (0.75,0) -- (A);
\end{scope}
\end{tikzpicture}
\caption{Left: The two satisfying assignments of a synchronizer vertex. Right: The two satisfying assignments of a $6$-alternator vertex. }
\label{fig:synchronizer}
\end{figure}
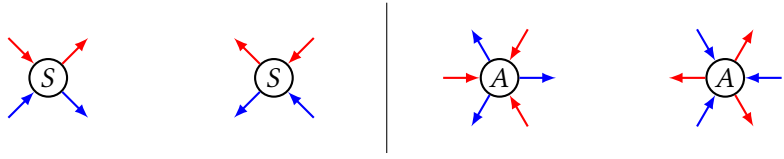

We now state the main theorem of this subsection, which provides a complete characterization of duplicators by showing which sets of duplicators can simulate which other duplicators.

\begin{theorem}\label{thm:dup-characterization}
Let $\Gamma$ be a finite non-empty set of duplicators, with net flows $\pm f_1, \dots, \pm f_k$.
\begin{enumerate}
    \item If $\Gamma$ only has trivial duplicators, they can only simulate themselves and $1$-in-$2$.
    \item Otherwise, if $\Gamma$ has only alternators, then $\Gamma$ can simulate all other alternators, and only them.
    \item Otherwise, $\Gamma$ can simulate a duplicator with net flow $\pm f$ if and only if $f$ is a multiple of $\gcd(f_1, \dots, f_k)$ (including synchronizers and alternators, which have net flow $0$).
\end{enumerate}

If we do not care about planarity, we just have cases 1 and 3.
\end{theorem}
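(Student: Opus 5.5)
The plan has two halves. First, find invariants that every gadget built from $\Gamma$ must satisfy; these give the ``only'' directions. Second, give explicit constructions for the ``can simulate'' directions.

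\textbf{Invariants.} Consider any gadget $G$ built from duplicators that simulates a duplicator $T$. Since $T$ has exactly two satisfying external assignments, each connected piece of $G$ that touches the external edges behaves rigidly. Fixing one internal vertex's assignment propagates along edges, because a duplicator's state is determined by any single edge. Hence each connected component has at most two global states, and flipping every edge maps one state to the other.

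For the net-flow statement, note that in a satisfying orientation the external edges' net flow into the gadget equals the sum of the internal vertices' net flows, since each internal edge contributes $+1$ and $-1$. That sum is $\sum \pm f_i$, so it is a multiple of $g=\gcd(f_1,\dots,f_k)$. This gives the ``only if'' direction of case~3.

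For case~1, trivial duplicators compose only into paths and cycles. $1$-in-$2$ vertices chained together give $1$-in-$2$. $\{0,2\}$-in-$2$ vertices alternate with these, and any degree-$\le 2$ component is a path with $\le 2$ external ends, so only the listed types arise.

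For case~2 (alternators only, planar), I would use a parity or face-2-colouring invariant. In a planar embedding where every vertex alternates, orient consistently so that, say, edges around each face alternate; this is a ``balanced'' cyclic pattern. I would show that the external edges, read in cyclic order around the outer face, must alternate in/out. I expect to prove this by induction, contracting an internal edge between two alternators; merging two alternators along an edge yields an alternator of degree $d_1+d_2-2$. Each contraction preserves the alternating property, and self-loops and multiedges also preserve alternation. This is the main obstacle, and the argument must handle planarity carefully, including the positions of external edges on the outer face. For the non-planar remark, crossings break the invariant, which is why case~2 disappears there.

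\textbf{Constructions.} For the ``if'' directions, first use a nontrivial duplicator $D$ of degree $\ge 3$ together with $1$-in-$2$ and $\{0,2\}$-in-$2$. Connecting external edges of $D$ through $1$-in-$2$ chains, or joining two of its edges to each other, lets us flip individual edges' polarity and discard edges.

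Plugging two edges of the same polarity together removes two edges and changes the net flow by $\pm 2$ or $0$; pairing an in-edge with an out-edge keeps the flow. Chaining copies of duplicators with net flows $f_i$ and $-f_i$ (via inversion), gluing along one edge each, we get any combination $\sum a_i f_i$, and in particular $g$ by B\'ezout. We also get arbitrarily large degree by fusing copies, where two duplicators sharing an edge form a duplicator.

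Then I would trim: attach pairs of opposite edges to each other, via self-loops, to reduce the degree to exactly what is needed. Then I would arrange the cyclic order, which matters only in the planar case, using a synchronizer or crossover built from a nontrivial non-alternator duplicator. I will try first to construct the synchronizer (net flow $0$, degree $4$, in/out order in,in / out,out in the cyclic sense), since a synchronizer lets us swap adjacent edges planarly and so realize any cyclic pattern.

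For case~2, the fusion of alternators along an edge gives all larger alternators, and self-loop trimming gives smaller ones. The nontrivial step is getting the synchronizer from a non-alternator in the planar setting. I would try a case analysis on the first position around $D$ where the pattern breaks alternation (two consecutive same-direction edges), and then close off the remaining edges with loops to isolate those two edges.
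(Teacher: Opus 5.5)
Your invariant half is fine and matches the paper: the gadget's net flow is $\sum \pm f_i$, and contracting an edge between two alternators or removing an innermost self-loop preserves alternation. The gap is in the constructions.

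\textbf{Tools you do not have.} $\{0,2\}$-in-$2$ is not in $\Gamma$ in general. It is not even simulable when $g\ge 3$: it has net flow $\pm 2$, so your own invariant rules it out, e.g.\ for $\Gamma=\{3\text{-equalizer}\}$. A chain of $1$-in-$2$ vertices is just an edge, so it cannot flip an edge's polarity. A self-loop forces its two ends to have opposite directions, so joining two same-polarity edges of one duplicator makes it unsatisfiable. Loops therefore only annihilate (in,out) pairs and never change net flow. Consequently ``chaining copies with net flows $f_i$ and $-f_i$ via inversion'' is unjustified. Gluing two duplicators along an edge fixes their relative state by the polarities of the glued edges; two equalizers glued this way are forced into opposite states and give only $\pm(f_1-f_2)$. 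The paper resolves this by building the synchronizer \emph{first}. It then uses the synchronizer as the linking element between copies, so each copy can be put in whichever state the B\'ezout coefficients require. Synchronizers placed at crossings then realize the target cyclic order.

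\textbf{The synchronizer construction.} You correctly identify this as the crux, but it is missing, and your sketch cannot work. Loops preserve net flow, so after isolating two consecutive same-direction edges the remaining edges still carry flow $f-2$ (or $-2$ if $f=0$) and cannot all be closed off. When they can ($f=2$), the result is the trivial $\{0,2\}$-in-$2$. The missing idea is to annihilate adjacent opposite pairs down to a small core and then glue \emph{two} copies so their flows cancel:
\begin{itemize}
\item For $f\ge 3$, reduce to the $f$-equalizer and join two copies by $f-2$ parallel edges; the copies are forced into opposite states, leaving in,in,out,out.
\item For $f\in\{1,2\}$, reduce to degree $3$ or $4$ and join two copies by one or two edges.
\item For $f=0$, annihilate only when this does not create an alternator, and check that this greedy process can only stop at a degree-$4$ synchronizer.
\end{itemize}
You also still need the case-3 situation where the only non-alternators in $\Gamma$ are trivial ($\{0,1\}$-in-$1$ or $\{0,2\}$-in-$2$ alongside a nontrivial alternator). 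There you must first combine them into a nontrivial non-alternator before your synchronizer argument applies.
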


Note that any network of duplicators has 0 or $2^k$ solutions, where $k$ is the number of connected components. So in order to simulate a duplicator with other duplicators, the graph must be connected.

First, we show that all non-trivial alternators are equivalent and are in fact the weakest type of non-trivial duplicator:

\begin{lemma}\label{lem:sim-alternators}
    A non-trivial alternator can simulate any other alternator, and nothing else.
\end{lemma}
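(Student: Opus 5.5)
The statement has two halves, and I will prove them separately. The positive half is a constructive simulation. The negative half is an invariant of planar networks of alternators.

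\textbf{Simulation.} It suffices to show that a non-trivial alternator, say a $2m$-alternator with $m\ge 2$, planarly simulates every other alternator. First I will get the $4$-alternator from a $2m$-alternator with $m>2$. The idea is to join consecutive pairs of its edges by a path through $1$-in-$2$ vertices. Connecting edges $i$ and $i+1$ directly with a single edge (a self-loop) is consistent, because one points in and the other out. A planar self-loop removes two adjacent edges and preserves alternation of the rest, so this reduces the degree by $2$ each time. Repeating it gives the $4$-alternator.

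Next I will get larger alternators from two $4$-alternators. Join one edge of the first to one edge of the second. The remaining $3+3=6$ external edges, read in cyclic order around the outer face, still alternate. This is because the joining edge points out of one vertex exactly when it points into the other. The key check is that the parity along the cyclic order works out. Around $A$ starting after the shared edge, the directions are in, out, in if the shared edge is out of $A$. Then around $B$ the shared edge is in, so its remaining edges are out, in, out. Concatenated, the sequence alternates. The gadget also has exactly two satisfying assignments, because the graph is connected. Iterating this produces every $2m$-alternator, and the trivial $1$-in-$2$ is obtained by a single chain. So a non-trivial alternator simulates all alternators.

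\textbf{Only alternators.} I must show that any planar gadget built from alternators (together with the internal wiring) simulates only alternators. Since every alternator is a duplicator, a connected network is a duplicator with two complementary assignments, and a disconnected one gives no duplicator. It remains to show the external edges alternate in cyclic order around the outer face.

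\begin{itemize}
\item Orient the plane embedding and call an edge-end at a vertex ``in'' or ``out''. Alternation at each vertex means that going around a vertex, consecutive edge-ends flip.
\item Walk along a face boundary. Passing through a vertex between consecutive edges flips in/out. Traversing an edge also flips, since its two ends have opposite roles.
\item Hence on the outer face, between two consecutive external edges, the parity of flips is determined by the number of edges and vertex-corners traversed.
\end{itemize}

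I would formalize this with a 2-coloring of faces and corners. Equivalently, I will use a $\pm$ labeling of angles: alternation means every angle at a vertex has the two bounding edge-ends of opposite type. Then in the graph with external vertices merged into one outer vertex, every vertex including the merged one satisfies the ``opposite at each angle'' property.

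To prove this for the merged vertex, I will use an Euler-type argument. The property at a vertex is equivalent to the edges around it being properly colored in/out cyclically, and a face boundary of length $\ell$ alternates consistently only if $\ell$ is even. I will try showing that all internal faces have even length in the appropriate sense. Every internal face is bounded by consistently flipping edge-ends, which forces even length. Then the outer face's total parity is forced by summing over internal faces.

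\textbf{Main obstacle.} The main obstacle is making this parity/planarity argument rigorous, particularly handling self-loops, multiedges, and $1$-in-$2$ wiring. That last case is degenerate, since a 2-alternator equals a $1$-in-$2$ vertex. I would handle it by contracting degree-2 vertices first.
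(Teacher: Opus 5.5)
Your simulation half is correct: shrinking to the $4$-alternator with self-loops on adjacent edges, then growing by single-edge joins (a $2a$- and a $2b$-alternator give a $(2a+2b-2)$-alternator, by exactly the parity check you did), uses the same two operations as the paper in the opposite order.

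\textbf{The gap is in the ``only alternators'' half.} Your bullets point the right way, but you never carry out the count. The formalization you substitute rests on a false claim: alternation does \emph{not} force internal faces to have even length. Your own first step refutes it. A self-loop joining two adjacent edges of an alternator bounds an internal face of length $1$. Likewise, three $4$-alternators joined pairwise in a triangle form a consistent gadget with a triangular face. The reason is that a closed face walk with $\ell$ edges has exactly $\ell$ corners, so its $2\ell$ flips are automatically even and say nothing about $\ell$. Moreover, summing over internal faces could give at most a single global parity condition. Alternation at the merged outer vertex is a separate condition at each of its corners, so that step cannot deliver what you need.

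\textbf{The missing step} is to apply your flip count once for each pair of consecutive external edges $\varepsilon_i,\varepsilon_{i+1}$. The outer-face walk from the internal end of $\varepsilon_i$ to the internal end of $\varepsilon_{i+1}$ traverses some $p\ge 0$ edges and exactly $p+1$ corners, all at alternator vertices. So the in/out type flips $2p+1$ times, an odd number, and $\varepsilon_i$ points into the gadget iff $\varepsilon_{i+1}$ points out. Equivalently, each face created at the merged vertex $X$ has $\ell$ edges but only $\ell-1$ alternator corners, so its corner at $X$ must flip. This count handles self-loops, multiedges, bridges traversed twice, and $1$-in-$2$ vertices (which are $2$-alternators), with no contraction needed.

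Once completed, your invariant is a genuinely different route from the paper's. The paper argues that every connected planar gadget can be built as a sequence of single-edge joins followed by self-loops between adjacent edges, and that both operations preserve alternation. Those are precisely the operations you already checked in your first half. The paper's route reuses those checks but needs the decomposition claim (non-tree edges form a non-crossing matching that can be added innermost first). Yours avoids that claim with a purely local count.
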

\begin{proof}
    Connecting two copies of the degree-$2d$ alternator results in a larger alternator of degree $4d-2 > 2d$. We can repeat this process to get to arbitrarily high degree. We can then add self-loops to get an alternator of the desired degree.

    Note that any simulation of a connected structure can be thought of as a sequence of connecting two gadgets and adding a self-loop between two adjacent edges. Both of these operations preserve the property of being an alternator.
\end{proof}

Next, we show that the synchronizer is the second-weakest type of non-trivial duplicator:

\begin{lemma}\label{lem:sim-synch}
    Every non-trivial duplicator other than alternators simulates a synchronizer.
\end{lemma}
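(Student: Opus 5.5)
We work only with the two planar operations used in the proof of Lemma~\ref{lem:sim-alternators}. The first is adding a self-loop between two edges that are adjacent in the cyclic order. The second is joining two gadgets by a single edge. Both keep us inside the class of duplicators. We encode a duplicator of degree $d$ by one of its two satisfying assignments, read as a cyclic binary word $w$ of length $d$ around the vertex (\texttt{1} = in, \texttt{0} = out). The other assignment is the complement $\bar w$, and the cyclic word is only defined up to complementation.

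A self-loop between two cyclically adjacent edges is consistent in both assignments exactly when those two positions carry different letters. So adding it deletes an adjacent \texttt{01} or \texttt{10} from the cyclic word, and the result is again a duplicator. Joining a copy with word $w$ to a copy with word $w'$ along edges $e,e'$ forces $w(e)\ne w'(e')$. The result is a duplicator whose cyclic word is $w$ with $e$ removed, spliced with $w'$ (suitably complemented) with $e'$ removed. From the figure, with cyclic order left, bottom, right, top, the synchronizer is the cyclic word \texttt{1100}, i.e.\ \texttt{0011}. Our goal is therefore to reach this word.

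\emph{Case $f=0$.} Here $w$ is balanced. Since $w$ is not an alternator, it is not cyclically alternating, so some run has length $\ge 2$; by balance there are runs of both letters. If the length is more than $4$, we delete an adjacent unequal pair at a boundary of a run, choosing it so that a run of length $\ge 2$ survives. For example, if some run has length $\ge 3$, trim it from one end. Otherwise, delete a boundary pair that touches a run of length $1$, or one that shortens one of several runs of length $2$. A short check shows that one of these choices is always available and keeps the word non-alternating. Deletions preserve balance, so we can continue down to length $4$. The only balanced cyclic words of length $4$ are \texttt{0101} and \texttt{0011}, and we have avoided the former. So we end at the synchronizer. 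I expect this bookkeeping step, showing that a non-alternating balanced word can always be shrunk while staying non-alternating, to be the main obstacle. It is still an elementary combinatorial check.

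\emph{Case $f\neq 0$.} Now $d\ge 3$, since the duplicator is non-trivial. The word $w$ has some cyclically adjacent equal pair: otherwise $w$ would alternate, forcing $d$ even and $f=0$. Because $d\ge 3$, we can pick an edge $e$ outside this pair. Take two copies of the duplicator and join them along their copies of $e$. The second copy is then forced into the complementary state, so the combined net flow is $(f-\sigma)+(-f+\sigma)=0$, where $\sigma=\pm1$ is the contribution of $e$. The resulting duplicator has degree $2d-2\ge 4$ and is balanced. It still contains the equal adjacent pair from the first copy, so it is not an alternator. The construction is planar, since it joins two vertices by one edge and all remaining edges lie on the outer face. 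Applying the case $f=0$ to this gadget yields a synchronizer. This proves Lemma~\ref{lem:sim-synch}, and all simulations used are planar.
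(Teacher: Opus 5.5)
Your proof is correct, but for nonzero net flow it takes a different route from the paper.

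For $f=0$ you do what the paper does. You annihilate cyclically adjacent opposite edges with self-loops, refusing any deletion that would produce an alternator, until the degree is $4$. Your run-length case analysis, showing that a safe deletion always exists at degree $\ge 6$, is valid. The paper gets the same fact more directly: if every deletion is unsafe at degree $2k$, the word must be $\uparrow\downarrow(\downarrow\uparrow)^{k-1}$, and deleting the third and fourth edges then leaves a non-alternator unless $k=2$.

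For $f\neq 0$, the paper never passes through the balanced case. When $f\ge 3$, it strips the duplicator down to an $f$-equalizer and joins two such equalizers by $f-2$ parallel edges. When $f\in\{1,2\}$, it strips down to degree $3$ or $4$ and joins two copies by one or two edges, reading off the synchronizer directly. Your single doubling step joins two copies along one edge outside an equal adjacent pair. This gives a balanced, non-alternating duplicator of degree $2d-2\ge 4$, and it handles every nonzero $f$ uniformly without the three-way split. The price is that the whole lemma rests on the $f=0$ shrinking argument. The paper's version instead produces explicit small gadgets whose behavior can be checked by inspection.

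Your cyclic-word encoding also spells out something the paper leaves implicit: why both operations preserve being a duplicator and preserve planarity.
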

\begin{proof}
    Suppose that our duplicator has net flow $\pm f$. We perform a case analysis on $f$:
    \begin{itemize}
        \item If $f \ge 3$, choose any adjacent opposite-direction edges and annihilate them by connecting them with a self-loop. Repeat this until all edges point the same way and we get the $f$-equalizer. Connect two of these equalizers together with $f-2$ edges (see Figure~\ref{fig:sim-synch}, left) to get a synchronizer.
        \item If $f = 2$ or $f = 1$, we apply the same process, but stop when the degree becomes $3$ or $4$. Connect two of these duplicators together (see Figure~\ref{fig:sim-synch}, right) to get a synchronizer.
        \item If $f = 0$, we apply the same process but we only annihilate edges if doing so does not result in an alternator. We continue this until we can no longer annihilate any edges. Let $d = 2k$ be the current degree. We have $d \ge 4$ because $d = 2$ corresponds to a trivial alternator. Without loss of generality, the first two edges point in opposite directions: $\uparrow\downarrow$. Because we do not currently have an alternator but annihilating the first two edges results in an alternator, we must have $\uparrow\downarrow(\downarrow\uparrow)^{k-1}$. But we can annihilate the third and fourth edges to get $\uparrow\downarrow(\downarrow\uparrow)^{k-2}$, which somehow has to be an alternator. It is only an alternator when $k=2$. Therefore, the only way for this process to terminate is if $d=4$ and we get a synchronizer.
        \qedhere
    \end{itemize}
\end{proof}

\begin{figure}
\centering
\begin{tikzpicture}[
node/.style={circle, draw, thick, circle, draw, minimum size=5mm, inner sep=0pt},
edge/.style={-latex, thick}
]
\begin{scope}[shift={(-0.85,0)}]
    \node[node] (A) at (-0.8,0) {$=$};
    \node[node] (B) at (0.8,0) {$=$};
    \draw[edge] ([yshift=0.1cm]B.west) -- ([yshift=0.1cm]A.east) node [midway, above] {$f-2$};
    \draw[edge] ([yshift=-0.1cm]B.west) -- ([yshift=-0.1cm]A.east);
    \draw[edge, red] (-1.33,0.53) -- (A);
    \draw[edge, red] (B) -- (1.33,0.53);
    \draw[edge, blue] (-1.33,-0.53) -- (A);
    \draw[edge, blue] (B) -- (1.33,-0.53);
\end{scope}
\draw (1.5,-1) -- (1.5,1);
\begin{scope}[shift={(3.7,0)}]
    \node[node] (A) at (-0.5,0) {$\cdot$};
    \node[node] (B) at (0.5,0) {$\cdot$};
    \draw[edge] (A) -- (B);
    \draw[edge, red] (-1.03,0.53) -- (A);
    \draw[edge, red] (B) -- (1.03,0.53);
    \draw[edge, blue] (-1.03,-0.53) -- (A);
    \draw[edge, blue] (B) -- (1.03,-0.53);
\end{scope}
\begin{scope}[shift={(6.7,0)}]
    \node[node] (A) at (-0.5,0) {$\cdot$};
    \node[node] (B) at (0.5,0) {$\cdot$};
    \draw[edge] ([yshift=0.1cm]B.west) -- ([yshift=0.1cm]A.east);
    \draw[edge] ([yshift=-0.1cm]A.east) -- ([yshift=-0.1cm]B.west);
    \draw[edge, red] (-1.03,0.53) -- (A);
    \draw[edge, red] (B) -- (1.03,0.53);
    \draw[edge, blue] (-1.03,-0.53) -- (A);
    \draw[edge, blue] (B) -- (1.03,-0.53);
\end{scope}
\end{tikzpicture}
\caption{Left: $f$-equalizers simulate a synchronizer. Right: Degree-$3$ or -$4$ duplicators of net flow $\pm1$ or $\pm2$ also simulate a synchronizer.}
\label{fig:sim-synch}
\end{figure}
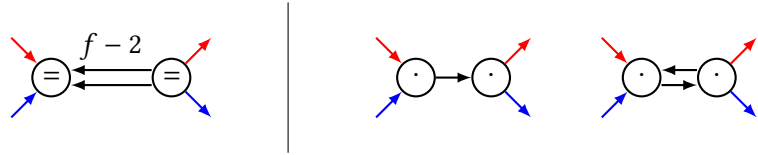

\begin{lemma}\label{lem:sim-lin-comb-dup}
    Duplicators with net flows $\pm f_1, \dots, \pm f_k$, together with a synchronizer, can simulate a duplicator with net flow $\pm f$ if and only if $f = \sum a_if_i$ for some integer coefficients $a_i$.
\end{lemma}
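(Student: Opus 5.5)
We prove the two directions separately. The ``only if'' direction is an invariant argument. The ``if'' direction is a sequence of explicit gadget constructions, using the synchronizer as a general-purpose rewiring tool.

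\emph{Necessity.} Consider any connected gadget built from duplicators with net flows $\pm f_1,\dots,\pm f_k$ and synchronizers (net flow $0$). As noted after Theorem~\ref{thm:dup-characterization}, a simulated duplicator must come from a connected network. In each satisfying orientation of a connected network, every vertex is in one of its two states, so it contributes net flow $\pm f_i$ (or $0$ for a synchronizer). An internal edge contributes $+1$ to one endpoint and $-1$ to the other, so internal edges cancel in the total. Hence the net flow across the external edges is $\sum \epsilon_v f_{i(v)}$ with signs $\epsilon_v\in\{\pm1\}$. This is an integer combination of the $f_i$, so the simulated net flow $f$ satisfies $f=\sum a_i f_i$.

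\emph{Sufficiency, step one: rewiring.} The synchronizer lets us connect edges across gadgets freely. Chaining synchronizers gives a ``wire bundle'' that carries two parallel edges forced to agree. Gluing a synchronizer into a self-loop or pairing its edges appropriately should yield a crossover-like rewiring, so that cyclic order no longer matters. Concretely, I would first show that synchronizers simulate a $4$-vertex with assignments forcing two designated pairs to agree. This lets us permute the positions of external edges of any gadget arbitrarily, at least up to the planar constraints.

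\emph{Sufficiency, step two: building the target.} With rewiring available, a duplicator is determined only by which external edges agree with which, i.e.\ by its multiset of in/out edges in one state. We proceed as follows.
\begin{enumerate}
\item From a duplicator of net flow $f_i$, annihilate adjacent opposite edges with self-loops, as in Lemma~\ref{lem:sim-synch}, to obtain an $f_i$-equalizer. A synchronizer can be inserted to make any two chosen edges adjacent first.
\item Connect two gadgets with net flows $g$ and $h$ by one edge. This gives a connected duplicator with net flow $g+h$ or $g-h$, depending on whether the connecting edge is oriented consistently with the first gadget's state. The relative phase is chosen by attaching to the second gadget via an inward or outward edge.
\item Iterating these connections realizes net flow $\sum a_i f_i$ for any integers $a_i$, using $|a_i|$ copies of each equalizer.
\item The result is a duplicator of net flow $\pm f$ that may have extra edges. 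Annihilate opposite pairs by self-loops, and add opposite pairs by splicing in synchronizer-based $0$-flow pieces, to reach any target edge pattern with net flow $f$.
\end{enumerate}

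\emph{Main obstacle.} I expect the main obstacle to be the final matching of the \emph{exact} cyclic pattern of in/out edges of an arbitrary target duplicator. This includes patterns like alternators and synchronizers, and patterns where the degree must be exactly right. It needs the rewiring claim that synchronizers let us swap adjacent external edges. I would try to prove this first by building a swap gadget from two synchronizers sharing two edges. Once it exists, any permutation of the external edges is achievable, so the target pattern follows from the net-flow count alone. A secondary point is when $f=0$ and some $f_i$ vanish; the combination then must still yield a connected gadget. This is handled by always including at least one synchronizer.
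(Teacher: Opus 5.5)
Your necessity argument is correct and matches the paper's. The sufficiency argument, however, depends on a rewiring claim that you never establish; you yourself name it as the ``main obstacle.'' The constructions you suggest for it either do not help (closing a synchronizer with a self-loop just gives a $1$-in-$2$) or are left unverified (two synchronizers sharing two edges). The missing idea is that no construction is needed, because a single synchronizer already acts as a crossover inside any connected duplicator network. When two consistently oriented paths cross, the four half-edges at the crossing read in, in, out, out in cyclic order, up to rotation. That is one state of a suitably rotated synchronizer, and reversing everything gives its other state. Every edge of a connected duplicator network is fixed by one global bit, so the extra synchronization it imposes between the two paths is automatically satisfied. With this observation the paper finishes in one step, with no reduction to equalizers, adjacent swaps, or pair insertions. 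It places the gadget of net flow $f$ inside the target and pairs up entering and exiting edges in the region between them; this is possible exactly because the net flows agree. It then joins each pair by a path, keeps the whole structure connected, and puts a synchronizer at every crossing.

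Your step~2 is also wrong as stated. When two gadgets are joined by a single edge, their relative phase is not free, because that edge must point out of one gadget and into the other. The equalizers from your step~1 have all edges pointing the same way in each state. So a direct edge between a $g$-equalizer in its $+g$ state and an $h$-equalizer forces the latter into its $-h$ state, and only $g-h$ is obtainable; there is no ``inward or outward edge'' to choose. To get $g+h$ you need something in between that has both an in-edge and an out-edge. For example, use a synchronizer whose two out-edges go one to each gadget and whose two in-edges stay external. This is exactly how the paper links the $|a_i|$ copies (Figure~\ref{fig:any-duplicator}, left). The paper's proof consists of using the synchronizer in these two roles, as a linker and as a crossover. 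Without them your outline does not go through.
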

\begin{proof}
    First observe that in any satisfying assignment, the net flow of the simulated duplicator is the sum of net flows of the individual vertices. So $f$ must be an integer linear combination of $f_i$. 
    
    We now show that this is sufficient. Take $|a_i|$ instances of a duplicator with net flow $\pm f_i$. For each duplicator, choose the assignment that has net flow $+f_i$ if $a_i \ge 0$ and $-f_i$ otherwise. Connect these duplicators together using synchronizers (Figure~\ref{fig:any-duplicator}, left) to get \emph{some} duplicator with net flow $\pm f$.
    
    To simulate the target duplicator with net flow $\pm f$, place our current gadget inside the target gadget (Figure~\ref{fig:any-duplicator}, right). Pair up the edges so that their orientations align. Connect each matched pair with a path, ensuring that the entire structure is connected. Finally, place a synchronizer at every intersection point.
\end{proof}

\begin{figure}
\centering
\begin{tikzpicture}[
  node2/.style={circle, draw, thick, circle, draw, minimum size=5mm, inner sep=0pt},
  node/.style={rectangle, draw, thick, align=center, minimum height=8mm},
  sum/.style={circle, draw, thick, minimum size=5mm,inner sep=0pt},
  edge/.style={-latex, thick},
  arrow/.style={-latex, thick, blue}
]

\foreach \i in {1,2,3} {
\node[node] (d\i) at ({2*(\i-1)},0) {Duplicator};
\draw[arrow] (d\i) -- ({2*(\i-1)},-1);
\draw[arrow] ({2*(\i-1)},1) -- (d\i);
\ifthenelse{\i = 1 \OR \i = 3}{\draw[arrow] ({2*(\i-1)-0.4},-1) -- ([xshift=-0.4cm]d\i.south);}{};
\ifthenelse{\i = 1 \OR \i = 2}{\draw[arrow] ([xshift=-0.4cm]d\i.north) -- ({2*(\i-1)-0.4},1);}{};
\ifthenelse{\i = 2 \OR \i = 3}{\draw[arrow] ({2*(\i-1)+0.4},-1) -- ([xshift=0.4cm]d\i.south);}{};
\ifthenelse{\i = 1 \OR \i = 3}{\draw[arrow] ({2*(\i-1)+0.4},1) -- ([xshift=0.4cm]d\i.north);}{};
}

\node[sum] (s1) at (1,-1) {$S$};
\node[sum] (s2) at (3,1) {$S$};

\draw[edge] (s1) -- (d1);
\draw[edge] (s1) -- (d2);
\draw[edge] (s2) -- (d2);
\draw[edge] (d3) -- (s2);

\draw[arrow] (1.4,-1.75) -- (s1);
\draw[arrow] (0.6,-1.75) -- (s1);
\draw[arrow] (3.4,1.75) -- (s2);
\draw[arrow] (s2) -- (2.6,1.75);

\end{tikzpicture}
\hspace{1.5cm}
\begin{tikzpicture}[
  edge/.style={-latex, thick},
  dot/.style={circle, fill=red, inner sep=1.3pt},
  box/.style={rectangle, draw, thick, minimum width=3cm, minimum height=3cm},
  innerbox/.style={rectangle, draw, thick, minimum width=1.3cm, minimum height=1.3cm}
]

\node[box] (outer) at (0,0) {};
\node[innerbox] (inner) at (0,0) {Dup};

\draw[edge, blue] (-2,0.5) -- ([yshift=0.5cm]outer.west);
\draw[edge, blue] ([yshift=-0.5cm]outer.west) -- (-2,-0.5);
\draw[edge, blue] (2,0.5) -- ([yshift=0.5cm]outer.east);
\draw[edge, blue] ([yshift=-0.5cm]outer.east) -- (2,-0.5);
\draw[edge, blue] ([xshift=0.5cm]outer.north) -- (0.5,2);
\draw[edge, blue] (-0.5,2) -- ([xshift=-0.5cm]outer.north);
\draw[edge, blue] (0.5,-2) -- ([xshift=0.5cm]outer.south);
\draw[edge, blue] (-0.5,-2) -- ([xshift=-0.5cm]outer.south);

\draw[edge] (inner.west) -- (-0.35,0); %
\draw[edge] (inner.east) -- (0.35,0);   %
\draw[edge] (0,-0.35) -- (inner.south); %
\draw[edge] (inner.north) -- (0,0.35);   %

\draw[edge, name path=curve1] ([xshift=-0.5cm]outer.south) .. controls (-0.5,-1) and (-1,-0.5) .. ([yshift=-0.5cm]outer.west);
\draw[edge, rounded corners=8pt, name path=curve2] ([xshift=0.5cm]outer.south) -- (0.5,-1.1) -- (-1.1,-1.1) -- (-1.1,0.95) -- (1.1,0.95) -- (1.1,0) -- (inner.east);
\draw[edge, rounded corners, name path=curve3] ([yshift=0.5cm]outer.west) -- (-1.3,0.5) -- (-0.9,0) -- (inner.west);
\draw[edge, rounded corners, name path=curve4] ([xshift=-0.5cm]outer.north) -- (-0.5,1.1) -- (0.5,1.1) -- ([xshift=0.5cm]outer.north);
\draw[edge, rounded corners, name path=curve5] (inner.south) -- (0,-0.9) -- (0.9,-0.9) -- (1.1,-0.5) -- ([yshift=-0.5cm]outer.east);
\draw[edge, rounded corners, name path=curve6] ([yshift=0.5cm]outer.east) -- (1.3,0.5) -- (1.3,1.2) -- (0,1.2) -- (inner.north);

\path [name intersections={of=curve1 and curve2, by={int1,int2}}];
\path [name intersections={of=curve2 and curve3, by={int3}}];
\path [name intersections={of=curve2 and curve6, by={int4}}];
\path [name intersections={of=curve4 and curve6, by={int5,int6}}];
\node[dot] at (int1) {};
\node[dot] at (int2) {};
\node[dot] at (int3) {};
\node[dot] at (int4) {};
\node[dot] at (int5) {};
\node[dot] at (int6) {};

\end{tikzpicture}
\caption{Left: Linking copies of a duplicator with synchronizers. Right: Simulating any duplicator of net flow $\pm f$. A synchronizer is placed at every red dot.}
\label{fig:any-duplicator}
\end{figure}
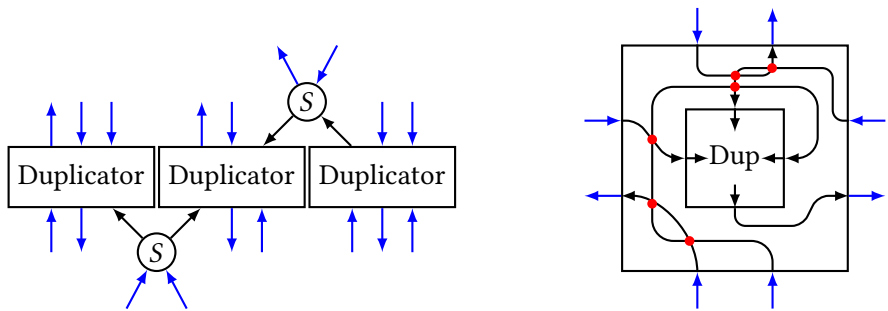

Combining the two previous lemmas, we get the following:

\begin{corollary}\label{lem:sim-gcd-dup}
    A non-trivial set of duplicators with net flows $\pm f_1, \dots, \pm f_k$ can simulate a duplicator with net flow $\pm f$ if and only if $f$ is an integer multiple of $\gcd(f_1, \dots, f_k)$.
\end{corollary}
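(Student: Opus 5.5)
The plan is to combine Lemma~\ref{lem:sim-synch} with Lemma~\ref{lem:sim-lin-comb-dup} and then apply B\'ezout's identity. I read the statement as in case~3 of Theorem~\ref{thm:dup-characterization}: the set is not made up solely of alternators together with net-flow-$0$ trivial duplicators. That remaining case is exactly the one covered by Lemma~\ref{lem:sim-alternators}, where $\gcd = 0$ and only alternators arise.

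\emph{Necessity.} Take any satisfying assignment of a gadget built from the duplicators. Its net flow on the external edges equals the sum of the net flows of its vertices, because every internal edge contributes $+1$ at one endpoint and $-1$ at the other. Each vertex contributes $\pm f_i$ for some $i$. So the simulated net flow $f$ is an integer combination of the $f_i$, and hence a multiple of $g=\gcd(f_1,\dots,f_k)$. This is the same argument as the first paragraph of the proof of Lemma~\ref{lem:sim-lin-comb-dup}.

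\emph{Sufficiency.} The first step is to obtain a synchronizer.
\begin{itemize}
\item If some member is a non-trivial non-alternator, Lemma~\ref{lem:sim-synch} gives a synchronizer directly.
\item Otherwise some non-trivial member is an alternator of degree $2d\ge 4$, and some trivial member has nonzero flow: either $\{0,1\}$-in-$1$ or $\{0,2\}$-in-$2$.
\begin{itemize}
\item Attaching a $\{0,1\}$-in-$1$ to one edge of the alternator gives a duplicator of degree $2d-1\ge 3$ and net flow $\pm1$.
\item Attaching a $\{0,2\}$-in-$2$ to one edge gives a degree-$2d$ duplicator with two adjacent equally oriented edges, so it is not an alternator.
\end{itemize}
In either case the result is a non-trivial non-alternator, and Lemma~\ref{lem:sim-synch} then yields a synchronizer.
\end{itemize}
With the synchronizer in hand, B\'ezout's identity writes any multiple $f$ of $g$ as $f=\sum a_i f_i$ with $a_i\in\mathbb{Z}$. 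Lemma~\ref{lem:sim-lin-comb-dup} then simulates every duplicator of net flow $\pm f$.

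\emph{Main obstacle.} The only delicate step is the synchronizer bootstrap in the degenerate situation where the nonzero flows come only from trivial duplicators. The key check there is that the combined gadget really is non-trivial and not an alternator. Everything else is a direct invocation of the two lemmas.
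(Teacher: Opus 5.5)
Your proposal is correct and takes the same approach as the paper: get a synchronizer from Lemma~\ref{lem:sim-synch}, then apply Lemma~\ref{lem:sim-lin-comb-dup} together with B\'ezout, with necessity coming from the sum-of-net-flows argument. The paper handles the degenerate case (a non-trivial alternator combined with a trivial $\{0,1\}$-in-$1$ or $\{0,2\}$-in-$2$) and the exclusion of all-alternator sets in the proof of Theorem~\ref{thm:dup-characterization}, case~3, rather than in the corollary itself. Your explicit handling of both points is sound.
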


We can finally prove the main theorem:

\begin{proof}[Proof of Theorem~\ref{thm:dup-characterization}]
    Let $\Gamma$ be a finite set of duplicators with net flows $\pm f_1, \dots, \pm f_k$.
    \begin{enumerate}
        \item If $\Gamma$ has only trivial duplicators, there are very limited ways to combine them while staying connected, and it is easy to check. Moreover, one can always vacuously simulate $1$-in-$2$, as it is just an edge.
        \item If $\Gamma$ has only alternators, then, by Lemma~\ref{lem:sim-alternators}, they can simulate other alternators and nothing else.
        \item If $\Gamma$ has at least one non-trivial non-alternator, then by Lemma~\ref{lem:sim-gcd-dup} they can simulate duplicators with net flow that is any multiple of $\gcd(f_1, \dots, f_k)$, and nothing else. It is also possible that $\Gamma$ has a non-trivial alternator and a trivial non-alternator (namely, $\{0,1\}$-in-$1$ or $\{0,2\}$-in-$2$). In this case we can combine them to get a non-trivial non-alternator and proceed with the proof.
        \qedhere
    \end{enumerate}
\end{proof}

This characterization has a useful corollary:
\begin{corollary}\label{cor:dups-equiv-dup}
    Any non-trivial set of duplicators is equivalent to a single duplicator. That is, for any non-trivial set $\Gamma_d$ of duplicators, there exists a duplicator $D$, such that $\Gamma_d$ together simulate $D$ and $D$ simulates each of $\Gamma_d$.
\end{corollary}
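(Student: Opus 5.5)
The plan is to read Corollary~\ref{cor:dups-equiv-dup} off Theorem~\ref{thm:dup-characterization}, splitting into the same two non-trivial cases that the theorem distinguishes. Throughout, let $\Gamma_d$ be a non-trivial set of duplicators with net flows $\pm f_1,\dots,\pm f_k$, and put $g=\gcd(f_1,\dots,f_k)$, with the convention $g=0$ when all $f_i=0$.

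\emph{Case A: $\Gamma_d$ consists only of alternators.} Since $\Gamma_d$ is non-trivial, it contains a non-trivial alternator, so by case~2 of Theorem~\ref{thm:dup-characterization} it simulates every alternator. Take $D$ to be any non-trivial alternator, for instance the $4$-alternator. Then $\Gamma_d$ simulates $D$. Conversely, Lemma~\ref{lem:sim-alternators} shows that $D$ simulates every alternator, and in particular every member of $\Gamma_d$, including the trivial alternator $1$-in-$2$ (a bare edge).

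\emph{Case B: $\Gamma_d$ contains a non-alternator.} By case~3 of Theorem~\ref{thm:dup-characterization}, $\Gamma_d$ simulates exactly the duplicators whose net flow is a multiple of $g$. Choose $D$ to be a non-trivial, non-alternator duplicator of net flow $\pm g$:
\begin{itemize}
\item if $g\ge 3$, take the $g$-equalizer;
\item if $g\in\{1,2\}$, take a degree-$4$ duplicator with net flow $\pm g$ that is not an equalizer;
\item if $g=0$, take a synchronizer.
\end{itemize}
Since $g$ is a multiple of itself, $\Gamma_d$ simulates $D$. Conversely, $D$ is a non-trivial non-alternator, so the singleton set $\{D\}$ falls under case~3 of the theorem, with gcd equal to $g$. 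Every $f_i$ is a multiple of $g$, so $D$ simulates each member of $\Gamma_d$.

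\emph{Main obstacle: trivial members of $\Gamma_d$.} Case~3 of the theorem, which goes through Corollary~\ref{lem:sim-gcd-dup}, is stated for simulating duplicators. I therefore need to confirm that the trivial duplicators $\{0,1\}$-in-$1$, $1$-in-$2$, and $\{0,2\}$-in-$2$ are also among the things $D$ simulates.
\begin{itemize}
\item $1$-in-$2$ is just an edge, so it is always simulated.
\item $\{0,1\}$-in-$1$ has net flow $\pm1$, so if it lies in $\Gamma_d$ then $g=1$, and $D$ simulates it by the linear-combination construction of Lemma~\ref{lem:sim-lin-comb-dup}.
\item $\{0,2\}$-in-$2$ has net flow $\pm2$, so if it lies in $\Gamma_d$ then $g\mid 2$, and the same construction applies.
\end{itemize}
One more point needs checking here. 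In Case~A, $\Gamma_d$ could also contain a trivial non-alternator, but then it is not all alternators and falls under Case~B, as the proof of Theorem~\ref{thm:dup-characterization} already notes. With these small cases verified, the corollary follows.
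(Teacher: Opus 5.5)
Your proposal follows the paper's proof: split on whether $\Gamma_d$ consists only of alternators, take $D$ to be either a non-trivial alternator or a non-trivial non-alternator duplicator of net flow $\pm\gcd(f_1,\dots,f_k)$, and read off both simulation directions from Theorem~\ref{thm:dup-characterization}. Your separate check of the trivial members is more careful than the paper's one-line argument.

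There is one slip to fix. For $g=1$ no degree-$4$ duplicator with net flow $\pm1$ exists: a degree-$d$ vertex with $m$ incoming edges has net flow $2m-d \equiv d \pmod 2$. Use a degree-$3$ duplicator instead, with assignments (in, in, out) and (out, out, in). In other words, take degree $g+2$ in both subcases $g\in\{1,2\}$; with that change the argument is complete.
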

\begin{proof}
If all duplicators are alternators, they are equivalent to an alternator. Otherwise, they are equivalent to a non-alternator duplicator with net flow $\pm \gcd(f_1, \dots, f_k)$.
\end{proof}

It turns out that NP-hardness usually requires being able to simulate a synchronizer, and a synchronizer usually suffices. On the other hand, an alternator is often too weak.

\subsection{$S$-in-$j$ Graph Orientation with Constants}\label{subsec:S-in-j-with-const}

We now consider (Planar) $\Gamma$-Graph Orientation assuming we have constants (both $0$-in-$1$ and $1$-in-$1$), which we denote by (Planar) $\Gamma$-GO$_c$. In this subsection, we provide a complete dichotomy for $\Gamma$-GO$_c$ and Planar $\Gamma$-GO$_c$.

Connecting a constant to a vertex drops the first or last bit of its bitstring. For example, $\{0,1,4\}$-in-$5$ has bitstring \texttt{110010}. If we force one of the edges to point inwards, we get $\{0,3\}$-in-$4$ (\texttt{10010}). Forcing another edge to point out, we get $\{0,3\}$-in-$3$ (\texttt{1001}). We can use this fact to extract any substring of a bitstring:

\begin{observation}\label{obs:substrings}
    If the bitstring of $S'$-in-$j'$ is a substring of the bitstring of $S$-in-$j$, then $S$-in-$j$ together with constants simulates $S'$-in-$j'$.
\end{observation}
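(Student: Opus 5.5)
The simulation is built by attaching constant vertices to edges of a single $S$-in-$j$ vertex, and it is planar as it stands. First I will record what one constant does to the bitstring $b_0b_1\cdots b_j$ of $S$-in-$j$, where $b_i=\texttt{1}$ iff $i\in S$.

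Attaching a $1$-in-$1$ constant to one incident edge forces that edge to point into the $S$-in-$j$ vertex. This is because the constant's single edge must point away from it. The remaining $j-1$ edges are then free, and the vertex is satisfied iff their in-count $i'$ satisfies $i'+1\in S$. The induced vertex type on those $j-1$ edges is therefore $(S-1)\cap\{0,\dots,j-1\}$-in-$(j-1)$, whose bitstring is $b_1\cdots b_j$, so the first bit is dropped. Symmetrically, a $0$-in-$1$ constant forces the edge outward, giving the type $S\cap\{0,\dots,j-1\}$-in-$(j-1)$ with bitstring $b_0\cdots b_{j-1}$, so the last bit is dropped.

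Because $S$-in-$j$ is symmetric, it does not matter which incident edge receives the constant. Together with the definition of simulation, this confirms that the resulting gadget has external edges whose allowed orientations are exactly those of the stated vertex type.

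Now suppose the bitstring of $S'$-in-$j'$ equals $b_a b_{a+1}\cdots b_{a+j'}$ for some $a\ge 0$ with $a+j'\le j$. I will attach $1$-in-$1$ constants to $a$ distinct edges and $0$-in-$1$ constants to another $j-j'-a$ distinct edges. There are enough edges, since $a+(j-j'-a)\le j$. Applying the single-step computation repeatedly, in any order because the operations commute, the remaining $j'$ edges form exactly an $S'$-in-$j'$ vertex. Written directly, an orientation of the $j'$ free edges with in-count $t$ is allowed iff $a+t\in S$, iff $b_{a+t}=\texttt{1}$, iff $t\in S'$.

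The gadget is a star whose leaves are the constants and the external vertices, so it is planar with all external vertices on the outer face. Hence the simulation is valid in the planar setting too.

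I do not expect any real obstacle here. The only care needed is the bookkeeping on the indices, together with the degenerate case $j'=0$, which gives either an always-satisfied or an unsatisfiable gadget. That case can be excluded, since vertex types have degree at least $1$.
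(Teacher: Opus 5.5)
Your approach is the same as the paper's: each attached constant deletes the first or last bit of the bitstring, and iterating this extracts any substring. Your index bookkeeping and the planarity remark are fine. There is, however, a concrete error: you have the two constants swapped.

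Under the paper's definition, an $S$-in-$j$ vertex counts the edges directed \emph{into} it. So a $1$-in-$1$ vertex needs its only edge to point into \emph{itself}, which means the edge points \emph{out of} the $S$-in-$j$ vertex and the \emph{last} bit is dropped. Likewise, a $0$-in-$1$ vertex forces the edge \emph{into} the $S$-in-$j$ vertex, dropping the \emph{first} bit. Your justification ``the constant's single edge must point away from it'' misreads $1$-in-$1$.

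This slip breaks your final construction as literally written. With $a$ copies of $1$-in-$1$ and $j-j'-a$ copies of $0$-in-$1$, the free edges with in-count $t$ are accepted iff $(j-j'-a)+t\in S$. That gadget simulates the substring $b_{j-j'-a}\cdots b_{j-a}$, not $b_a\cdots b_{a+j'}$, and the two agree only when $a=j-j'-a$. For example, extracting \texttt{110} from \texttt{110010} your way yields \texttt{010}.

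Since both constants are available, the fix is to exchange the two constant types throughout: attach $0$-in-$1$ to $a$ edges and $1$-in-$1$ to $j-j'-a$ edges. With that change the argument is complete.
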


In particular, we can use gaps to create equalizers:

\begin{lemma}\label{lem:sim-k-dup}
    If $S$ has a gap of size $k - 1$ (i.e., $a,a+k \in S$, but $a+1, \dots, a+k-1 \notin S$), then $S$-in-$j$ and constants simulate a $k$-equalizer.
\end{lemma}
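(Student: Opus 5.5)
The plan is to apply Observation~\ref{obs:substrings} and then read off the result directly. Since $a, a+k \in S$ and $a+1,\dots,a+k-1 \notin S$, the bitstring of $S$-in-$j$ contains the substring \texttt{1}\texttt{0}$^{k-1}$\texttt{1}, which runs from position $a$ to position $a+k$. This substring is exactly the bitstring of $\{0,k\}$-in-$k$.

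To extract it, I attach $a$ constants of type $1$-in-$1$ to $a$ of the edges of the $S$-in-$j$ vertex. Each of these forces an incoming edge and drops the first bit, which leaves $\{s-a : s \in S\}$-in-$(j-a)$. Since $a+k\le j$, I then attach $j-a-k$ constants of type $0$-in-$1$ to further edges. Each of these forces an outgoing edge and drops the last bit. The remaining $k$ edges are marked external. An orientation of them extends to a satisfying one exactly when the number of incoming external edges $i$ satisfies $i+a \in S$ with $0 \le i \le k$, that is, when $i \in \{0,k\}$.

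This is precisely the $k$-equalizer: all $k$ edges in or all $k$ edges out. The two assignments are complementary, so it is indeed a duplicator, with net flow $\pm k$. Planarity is automatic, because the gadget is a single vertex with pendant constants, and the external edges can all be drawn on the outer face.

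There is no real obstacle here. The only care needed is checking that the counts $a \ge 0$ and $j-a-k\ge 0$ are valid, which follows from $a+k \in S \subseteq \{0,\dots,j\}$, and noting that for $k=1$ the construction degenerates harmlessly to a trivial $\{0,1\}$-in-$1$ equalizer.
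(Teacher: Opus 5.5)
Your construction is correct and matches the paper's one-line proof: force $a$ edges in and $j-a-k$ edges out, leaving a $\{0,k\}$-in-$k$ vertex, i.e., a $k$-equalizer. One small slip: under this paper's in-degree convention, a $1$-in-$1$ constant's edge points into the constant and hence \emph{out of} the $S$-in-$j$ vertex, so the $a$ forced-incoming edges need $0$-in-$1$ constants and the $j-a-k$ forced-outgoing edges need $1$-in-$1$ constants (both types are available, so nothing else changes).
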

\begin{proof}
    Attaching $a$ inwards edges and $j-(a+k)$ outwards edges to $S$-in-$j$ simulates a $k$-equalizer.
\end{proof}

Let $\Delta S$ be the set of differences between consecutive elements of $S$ (for example, $\Delta \{0,1,4\} = \{1,3\}$). In other words, $k \in \Delta S$ if and only if $S$ has a gap of size $k-1$. For a set $\Gamma$ of $S$-in-$j$ vertex types, define $\Delta \Gamma = \bigcup_{\text{$S$-in-$j$} \in \Gamma} \Delta S$.

\begin{lemma}\label{lem:gcd-delta-gamma}
If $\max \Delta \Gamma > 2$, $\Gamma$ can simulate any duplicator whose net flow is a multiple of $\gcd \Delta \Gamma$.
\end{lemma}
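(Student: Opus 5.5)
The plan is to turn each gap into an equalizer and then invoke the duplicator characterization. The context is $\Gamma$-GO$_c$, so constants are available.

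For each $k \in \Delta\Gamma$, some $S$-in-$j \in \Gamma$ has a gap of size $k-1$. By Lemma~\ref{lem:sim-k-dup}, $\Gamma$ together with constants simulates a $k$-equalizer. A $k$-equalizer is a duplicator of net flow $\pm k$, and it is non-trivial exactly when $k \ge 3$. So we obtain a set $\Gamma_d$ of equalizers, one for each $k \in \Delta\Gamma$, with net flows $\pm k$.

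Since $\max\Delta\Gamma > 2$, $\Gamma_d$ contains an equalizer of degree at least $3$, so $\Gamma_d$ is non-trivial. An equalizer of degree $k \ge 3$ is not an alternator: alternators have net flow $0$, whereas this one has net flow $\pm k \neq 0$. Hence we are in case~3 of Theorem~\ref{thm:dup-characterization}, or equivalently Corollary~\ref{lem:sim-gcd-dup}. It gives that $\Gamma_d$ simulates every duplicator whose net flow is a multiple of $\gcd(\Delta\Gamma)$, including synchronizers and alternators (net flow $0$).

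The small equalizers ($k=1,2$) are trivial, but they can still be included in $\Gamma_d$ without harm. Corollary~\ref{lem:sim-gcd-dup} only requires that the set be non-trivial, and Lemma~\ref{lem:sim-lin-comb-dup} uses each $f_i$ together with the synchronizer obtained from the non-trivial member via Lemma~\ref{lem:sim-synch}. Finally, simulation is transitive (substitute gadgets for vertices), so $\Gamma$ with constants simulates these duplicators.

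The main point to check is that the simulations respect planarity if the statement is meant in the planar setting. Lemma~\ref{lem:sim-k-dup} only attaches constants to a single vertex, which is planar. The duplicator constructions of Lemmas~\ref{lem:sim-synch} and~\ref{lem:sim-lin-comb-dup} are planar because synchronizers are placed at crossings. So no obstacle is expected beyond confirming that the non-triviality hypothesis of Corollary~\ref{lem:sim-gcd-dup} holds, which is exactly where $\max\Delta\Gamma > 2$ is used.
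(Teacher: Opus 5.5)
Your proposal is correct and matches the paper's proof. You obtain a $k$-equalizer for each $k \in \Delta\Gamma$ via Lemma~\ref{lem:sim-k-dup}, use $\max\Delta\Gamma > 2$ to get a non-trivial, non-alternator member, and apply case~3 of Theorem~\ref{thm:dup-characterization}; your remarks on including the trivial $1$- and $2$-equalizers and on planarity are valid elaborations of steps the paper leaves implicit.
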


\begin{proof}
    For every $k \in \Delta \Gamma$, we can simulate a $k$-equalizer by Lemma~\ref{lem:sim-k-dup}. Because $\max \Delta \Gamma > 2$, one of these duplicators is non-trivial, so by Theorem~\ref{thm:dup-characterization} we can simulate any duplicator with net flow that is a multiple of $\gcd \Delta \Gamma$.
\end{proof}

On the other hand, if $\max \Delta \Gamma \le 2$, we do not have any gaps of size $2$ or more, so $\Gamma$-GO$_c$ is in P by Lemma~\ref{lem:classic-easy-cases}.

We can now state and prove the dichotomy for Graph Orientation with constants:

\begin{theorem}\label{thm:S-in-j-and-const}
    Let $\Gamma$ be a set of $S$-in-$j$ vertex types. Then $\Gamma$-GO$_c$ is NP-hard, except for the following cases, which are in P:
    \begin{enumerate}
        \item all vertex types in $\Gamma$ are bijunctive;
        \item all vertex types in $\Gamma$ are affine;
        \item $\max \Delta \Gamma \le 2$.
    \end{enumerate}
\end{theorem}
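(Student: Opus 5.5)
The easy cases are already handled. Cases 1 and 2 follow from Lemma~\ref{lem:classic-easy-cases}, because constants ($0$-in-$1$ and $1$-in-$1$) are both bijunctive and affine. Case 3 also follows from Lemma~\ref{lem:classic-easy-cases}: constants have no gaps, and $\max\Delta\Gamma\le 2$ means no gap of size $2$ or more.

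For hardness, assume $\max\Delta\Gamma>2$, some vertex type is not bijunctive, and some (possibly different) vertex type is not affine. The plan is to reduce from $\Gamma'$-SAT, where $\Gamma'$ is $\Gamma$ viewed as symmetric clauses together with both constants. Since $\Gamma'$ contains a relation with a gap of size at least $2$, Theorem~\ref{thm:S-in-k-SAT-2} makes $\Gamma'$-SAT-E$2$ equivalent to $\Gamma'$-SAT. Corollary~\ref{thm:schaefer-const} says $\Gamma'$-SAT is NP-complete unless all of $\Gamma'$ is Horn, all dual-Horn, all bijunctive, or all affine. Bijunctive and affine are excluded by assumption. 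I still have to rule out "all Horn" and "all dual-Horn". A symmetric Horn relation must be closed under AND, which forces $S$ to be downward closed in a way incompatible with a gap of size $\ge 2$. Concretely, if $a,a+k\in S$ with $k\ge 3$, take two assignments with $a+k$ ones overlapping in exactly $a$ ones (possible when $a+2k\le j$). Otherwise I would use a more careful choice, or handle the leftover small cases, such as $\{0,j\}$ (which is bijunctive anyway), by direct check. The dual-Horn case is symmetric. So $\Gamma'$-SAT-E$2$ is NP-hard.

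Next I reduce SAT-E$2$ to GO$_c$. Each clause becomes a vertex of its type, and each variable occurs in exactly two clauses. A GO edge is exactly a variable that appears positively at one end and negatively at the other, whereas in SAT-E$2$ both occurrences are positive. I therefore need a "negation" on edges, i.e., a way to make an edge read as incoming at both endpoints. This is exactly what a duplicator of net flow $\pm 2$ provides, namely $\{0,2\}$-in-$2$. Lemma~\ref{lem:gcd-delta-gamma} gives duplicators with net flow any multiple of $g=\gcd\Delta\Gamma$. If $g$ divides $2$, I obtain the $2$-equalizer directly and splice it into every edge.

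The main obstacle is the case $g\ge 3$ (e.g.\ only $\{0,3\}$-type gaps). Here parity arguments show that a $2$-equalizer cannot be simulated. I would instead use a $g$-equalizer or a general duplicator to route each variable. For each variable, I would place a duplicator whose in/out edge pattern supplies one "positive" edge to each clause. I would then rebalance the in/out pattern at each clause by attaching constants, using Observation~\ref{obs:substrings} to shift the bitstring so that reversed edges are accounted for. An alternative is to argue via a non-planar version of Theorem~\ref{thm:S-in-k-SAT-2}'s proof with negated literals. I expect this bookkeeping of orientations when $g\ge3$ to be where most care is needed. Membership in NP is immediate.
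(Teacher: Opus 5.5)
\textbf{There is a genuine gap in the hardness direction.} Your handling of the easy cases is fine. The problem is the source problem: it allows only positive literals, and your claim that a symmetric relation with a gap of size at least $2$ cannot be Horn is false.

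Take $S=\{0,1,4\}$ with $j=4$. The AND of two satisfying assignments either returns one of them (when the other is \texttt{1111}) or has at most one $1$, so $\{0,1,4\}$-in-$4$ is Horn. It is not bijunctive, since the MAJ of \texttt{1000}, \texttt{0100}, \texttt{1111} is \texttt{1100}. It is not affine, and $\Delta S=\{1,3\}$. More generally, every $[0,a]\cup\{j\}$ with $j-a\ge 3$ is Horn. This is exactly where your overlap argument fails: two assignments with $j$ ones can only overlap in $j$ ones. Also, as written, an overlap of exactly $a$ ones gives an AND with $a\in S$ ones, which is no contradiction.

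So for $\Gamma=\{\{0,1,4\}\text{-in-}4\}$, which satisfies all your hardness hypotheses, your $\Gamma'$ (with constants) is entirely Horn. Then $\Gamma'$-SAT is in P by Schaefer, and your reduction proves nothing, although the theorem asserts NP-hardness here.

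\textbf{The missing idea} is that a plain GO edge is a resource, not an obstacle. An edge reads as ``in'' at one endpoint exactly when it reads as ``out'' at the other. Hence a vertex of type $R$ can also act as a clause on negated literals, i.e., as the complemented relation $R'$.

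The paper therefore reduces from SAT$_c$ over $\Gamma\cup\{R'\mid R\in\Gamma\}$. If this set were all Horn, every $R\in\Gamma$ would be both Horn and dual-Horn, because $R'$ is Horn iff $R$ is dual-Horn. It would then be bijunctive, since MAJ is expressible by AND and OR. The dual-Horn case is symmetric. In the example above, the complement $\{0,3,4\}$-in-$4$ is not Horn, so Schaefer with constants now gives hardness.

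This also removes your $g\ge 3$ obstacle, which arises only because you insist on SAT-E$2$ with two positive occurrences per variable; you do not need Theorem~\ref{thm:S-in-k-SAT-2} at all. Instead, make $k=\gcd\Delta\Gamma$ copies of every clause. Then each variable has $p$ positive and $n$ negative occurrences with $k\mid p$ and $k\mid n$. Read an edge directed variable$\to$clause as ``literal true''. Represent the variable by a single duplicator whose $p$ edges to positive occurrences point out and whose $n$ edges to negative occurrences point in when the variable is true. This duplicator has net flow $\pm(p-n)$, a multiple of $k$, so it is available by Lemma~\ref{lem:gcd-delta-gamma}. Every clause, including each $R'$-clause, becomes an ordinary $R$-vertex with $R\in\Gamma$, and the argument is uniform in $k$.
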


\begin{proof}
    These 3 cases are in P by Lemma~\ref{lem:classic-easy-cases}. Suppose we do not have any of them.

    By Lemma~\ref{lem:gcd-delta-gamma}, we can simulate any duplicator with net flow that is a multiple of $k := \gcd \Delta \Gamma$.

    Let $\Gamma' = \Gamma \cup \{R' \mid R \in \Gamma\}$, where $R'$ is the relation obtained from $R$ by negating all inputs.
    
    We reduce from Planar $\Gamma'$-SAT$_c$. Not all relations in $\Gamma'$ are affine, and not all are bijunctive. If all relations in $\Gamma'$ were Horn, then every relation in $\Gamma$ would be both Horn and dual-Horn at the same time. However, a relation that is both Horn and dual-Horn is also bijunctive (because bitwise MAJ can be expressed in terms of bitwise AND and bitwise OR). So not all relations in $\Gamma'$ are Horn, and similarly, not all are dual-Horn. By Theorem~\ref{thm:schaefer-const}, the problem we are reducing from is in fact NP-hard. 

    Duplicate every clause $k$ times. Now every variable is used a multiple of $k$ times positively and a multiple of $k$ times negatively. Take the usual clause-variable bipartite graph, and represent each clause by the corresponding vertex type from $\Gamma$ and each variable by a duplicator with net flow a multiple of $k$ (as said earlier, we can simulate such a duplicator). In a satisfying orientation, edges directed from variable vertices to clause vertices encode true literals, and the reverse direction encodes false literals.
\end{proof}

For the planar dichotomy, we will need to construct a crossover. In the following, we describe several scenarios where we can construct a crossover:

\begin{lemma}\label{lem:crossover}
    The following sets of vertex types simulate a crossover:
    \begin{enumerate}
        \item $2$-in-$4$ and synchronizer;
        \item $1$-in-$3$, $2$-in-$3$, and synchronizer;
        \item $\{1,j\}$-in-$j$ and constants (for $j \ge 4$);
        \item $1$-in-$3$ and $3$-equalizer;
        \item $1$-in-$3$ and $4$-equalizer.
    \end{enumerate}
\end{lemma}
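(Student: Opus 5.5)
The plan is to build, in each of the five cases, an explicit planar gadget with four external edges $a,b,c,d$ in cyclic order. Its set of satisfying external orientations should be exactly the four assignments with $a$ in iff $c$ out and $b$ in iff $d$ out, i.e.\ two independent wires crossing. I will establish this by a short enumeration for each gadget. The guiding idea is the classical ``three XORs'' crossover. If we can planarly realize a gadget that forces a relation between the parities of two wire pairs, then a chain of such gadgets swaps the two signals. The synchronizer plays the role of the wire that may cross a single edge of its own gadget while preserving direction.

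\textbf{Case 1.} I would first try a gadget consisting of two $2$-in-$4$ vertices joined by two parallel edges, with synchronizers inserted on the external paths. The observation driving this is net-flow bookkeeping. A $2$-in-$4$ vertex has net flow $0$, so any network of them together with synchronizers (also net flow $0$) forces the external in-degree to equal the external out-degree. I will place the vertices so that the two parallel internal edges carry one signal from each of the incoming wires. The synchronizers then re-route the copies so that the signal entering at $a$ emerges at $c$ and the signal entering at $b$ emerges at $d$. For the verification step I will enumerate the orientations of the $O(1)$ internal edges.

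\textbf{Case 2.} Connecting a $1$-in-$3$ vertex and a $2$-in-$3$ vertex by an edge gives a degree-$4$ gadget. Its external in-degree split is $(0,2)$ or $(1,1)$ between the two sides, so it is a restriction of $2$-in-$4$. I will use two such pairs, with opposite orientations so that the missing $(2,0)$ split is recovered by symmetry. Together with synchronizers, these replace each $2$-in-$4$ vertex in the Case~1 gadget. I will check that the crossover still emerges, which is the only real new verification here.

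\textbf{Case 3.} With constants, by Observation~\ref{obs:substrings} we only get substrings of \texttt{010}$\cdots$\texttt{01}. These are $1$-in-$k$, $\{0\}$-in-$k$-type pieces, and the $(j-1)$-equalizer $\{0,j-1\}$-in-$(j-1)$. The relation $\{1,j\}$-in-$j$ itself acts as ``exactly one in, or all in''. My plan is to use a $(j-1)$-equalizer (via Lemma~\ref{lem:sim-k-dup}) to obtain a non-trivial duplicator. By Lemma~\ref{lem:sim-synch} this yields a synchronizer. I will also derive $1$-in-$3$ and, via the equalizers, enough flexibility to reduce to a Case~1/2-style construction. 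Cases 4 and 5 are handled the same way: the $3$- or $4$-equalizer gives a synchronizer by Lemma~\ref{lem:sim-synch}. It remains to produce a $2$-in-$3$-like vertex from $1$-in-$3$ and equalizers. For example, attaching an equalizer forced to ``all in'' on all but a few edges flips the orientation count. Then Case~2 applies.

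\textbf{Main obstacle.} The step I expect to be hardest is Case~1: finding the concrete small planar gadget and proving that no spurious external assignment (e.g.\ $a,c$ both in) survives. I would search by hand among gadgets with at most three $2$-in-$4$ vertices and a few synchronizers, using net-flow parity to prune candidates. For Cases~3--5, the subtle point is to ensure the constructed $2$-in-$3$ substitute is genuinely symmetric. If it is not, I would fall back to replicating the Case~2 trick of pairing oppositely oriented copies.
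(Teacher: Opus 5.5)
Your proposal is a plan rather than a proof. The gadgets that carry the argument are never exhibited, and in two places the suggested route cannot work as described.

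\textbf{Case~1 has no gadget, and all your other cases reduce to it.} Your starting candidate, two $2$-in-$4$ vertices joined by two parallel edges, simulates just another $2$-in-$4$: all six patterns with two in-edges occur. The net-flow count only certifies ``two in, two out'', which does not rule out spurious states such as $a$ and $c$ both in. The paper's gadget is as follows.
Put three $2$-in-$4$ vertices $B,C,D$ on a path $\ell - B - C - D - r$ between the left and right external edges. Add two synchronizers, one above and one below, each adjacent to $B,C,D$ and to one external edge ($t$, resp.\ $b$). Choose the cyclic orders so that $B$ and $C$ receive an in-edge from a synchronizer iff its external edge points in, and $D$ iff it points out.
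Let $\alpha,\beta$ indicate whether $t,b$ point in, and $x,y$ whether $BC$ points into $B$ and $CD$ into $C$. The constraints at $B,C,D$ read $\mathrm{in}(\ell)+\alpha+\beta+x=2$, $\alpha+\beta+(1-x)+y=2$, and $(1-\alpha)+(1-\beta)+(1-y)+\mathrm{in}(r)=2$. These force $\alpha+\beta=1$, $y=x$, $\mathrm{in}(r)=x$, $\mathrm{in}(\ell)=1-x$, which are exactly the four crossover states.
Your Case~2 idea is right: the paper glues two ($1$-in-$3$, $2$-in-$3$) pairs crosswise into a $4$-cycle alternating $1$-in-$3$ and $2$-in-$3$, one external edge per vertex, which simulates $2$-in-$4$. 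But it only helps once Case~1 exists.

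\textbf{Cases~3--5 lack a source of $2$-in-$3$.} In Case~3 the ingredients you name are the constant-fed pieces: $1$-in-$k$, terminators, the $(j-1)$-equalizer, and the synchronizer it yields. For $j\ge 6$ these cannot yield a crossover unless $\mathrm{P}=\mathrm{NP}$. Planar GO over these types (the synchronizer being itself built from $(j-1)$-equalizers) is in P by Lemma~\ref{lem:k>=5-easy}. With a crossover, however, it would simulate non-planar GO with $1$-in-$3$ and a $(j-1)$-equalizer, which is NP-hard by Theorem~\ref{thm:i-in-j-and-dup}.
The missing idea is to use $\{1,j\}$-in-$j$ unforced. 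Join two copies by $j-2$ parallel edges and feed one constant in. Each in-degree is $\equiv 1 \pmod{j-1}$, so the three free external edges have total in-degree $\equiv 2 \pmod{j-1}$, hence exactly $2$. Every pair is realizable, so this gives $2$-in-$3$.

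In Cases~4--5 your ``flip'' trick fails for a similar reason. An equalizer with a forced edge is just a terminator, and forcing edges only extracts substrings (Observation~\ref{obs:substrings}). No substring of \texttt{0100}, \texttt{1001}, or \texttt{10001} is \texttt{0010}.
The paper instead gives a direct crossover for Case~4: four outer $1$-in-$3$'s, four $3$-equalizers, and a central $1$-in-$4$ whose unique in-edge selects which equalizer is all-out, thereby fixing both wire directions. For Case~5 it gives an explicit $13$-vertex $2$-in-$3$ gadget from $1$-in-$3$'s and $4$-equalizers, then reduces to Case~2.
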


\begin{proof}
    See Figure~\ref{fig:crossover-construction}.
    \begin{enumerate}
        \item We can construct the crossover shown in Figure~\ref{fig:crossover-2-in-4-sync}.
        \item We can construct a $2$-in-$4$ using $1$-in-$3$ and $2$-in-$3$ as shown in Figure~\ref{fig:2-in-4-from-1-in-3-and-2-in-3}, thus reducing to Case~1.
        \item Connect two $\{1,j\}$-in-$j$ vertices with $j-2$ edges as shown in Figure~\ref{fig:2-in-3-from-1j-in-j-and-const}. Fix one of the edges to point in from a constant. The in-degree of every vertex is $1 \pmod{j-1}$, so the in-degree of the entire gadget is $2 \pmod{j-1}$. But it is also at most $3$. So it is exactly $2$. It is not hard to check that any pair of edges can point in, so it is $2$-in-$3$. Feeding constants into a $\{1,j\}$-in-$j$ makes $1$-in-$3$ and $(j-1)$-equalizers, which make a synchronizer, which reduces to Case~2. %
        \item We can construct the crossover shown in Figure~\ref{fig:crossover-from-1-in-3-3eq}. The middle $1$-in-$4$ is formed from two $1$-in-$3$s. Its sole incoming edge forces all other edges and ensures that the gadget is one of the four crossover states.
        \item We can construct a $2$-in-$3$ as shown in Figure~\ref{fig:2-in-3-from-1-in-3-4eq}. The $4$-equalizer can simulate a synchronizer, reducing to Case~2.\qedhere
    \end{enumerate}
\end{proof}

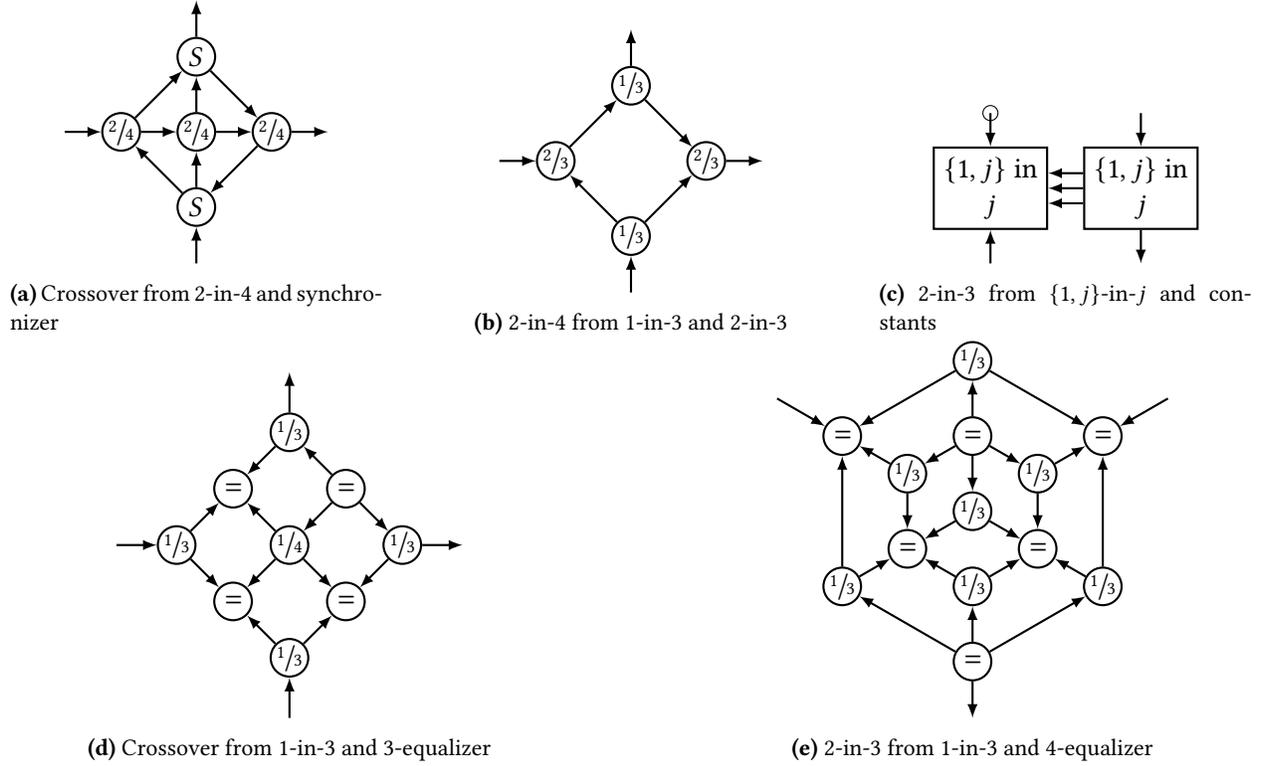
\begin{figure}
\centering
\begin{subfigure}{0.3\textwidth}
\centering
\begin{tikzpicture}[
node/.style={circle, draw, thick, circle, draw, minimum size=5mm, inner sep=0pt},
edge/.style={-latex, thick}
]
    \node[node] (A) at (0,1) {$S$};
    \node[node] (B) at (-1,0) {$\sfrac24$};
    \node[node] (C) at (0,0) {$\sfrac24$};
    \node[node] (D) at (1,0) {$\sfrac24$};
    \node[node] (E) at (0,-1) {$S$};
    \draw[edge] (A) -- (0,1.75);
    \draw[edge] (B) -- (A);
    \draw[edge] (C) -- (A);
    \draw[edge] (A) -- (D);
    \draw[edge] (-1.75,0) -- (B);
    \draw[edge] (B) -- (C);
    \draw[edge] (C) -- (D);
    \draw[edge] (D) -- (1.75,0);
    \draw[edge] (E) -- (B);
    \draw[edge] (E) -- (C);
    \draw[edge] (D) -- (E);
    \draw[edge] (0,-1.75) -- (E);
\end{tikzpicture}
\subcaption{Crossover from $2$-in-$4$ and synchronizer}
\label{fig:crossover-2-in-4-sync}
\end{subfigure}%
\hfill%
\begin{subfigure}{0.3\textwidth}
\centering
\begin{tikzpicture}[
node/.style={circle, draw, thick, circle, draw, minimum size=5mm, inner sep=0pt},
edge/.style={-latex, thick}
]
    \node[node] (A) at (0,1) {$\sfrac13$};
    \node[node] (B) at (-1,0) {$\sfrac23$};
    \node[node] (D) at (1,0) {$\sfrac23$};
    \node[node] (E) at (0,-1) {$\sfrac13$};
    \draw[edge] (A) -- (0,1.75);
    \draw[edge] (B) -- (A);
    \draw[edge] (A) -- (D);
    \draw[edge] (-1.75,0) -- (B);
    \draw[edge] (D) -- (1.75,0);
    \draw[edge] (E) -- (B);
    \draw[edge] (E) -- (D);
    \draw[edge] (0,-1.75) -- (E);
\end{tikzpicture}
\subcaption{$2$-in-$4$ from $1$-in-$3$ and $2$-in-$3$}
\label{fig:2-in-4-from-1-in-3-and-2-in-3}
\end{subfigure}%
\hfill%
\begin{subfigure}{0.3\textwidth}
\centering
\begin{tikzpicture}[
node/.style={circle, draw, thick, circle, draw, minimum size=5mm, inner sep=0pt},
edge/.style={latex-, thick}
]
    \draw (-1,0) node[rectangle,draw,thick,align=center] (A) {$\{1,j\}\text{ in}$\\$j$};
    \draw (1,0) node[rectangle,draw,thick,align=center] (B) {$\{1,j\}\text{ in}$\\$j$};
    \draw[edge] (A) -- (B);
    \draw[edge] ([yshift=0.2cm]A.east) -- ([yshift=0.2cm]B.west);
    \draw[edge] ([yshift=-0.2cm]A.east) -- ([yshift=-0.2cm]B.west);
    \draw (-1,1) circle (0.1);
    \draw[edge] (A) -- (-1,1);
    \draw[edge] (A) -- (-1,-1);
    \draw[edge] (B) -- (1,1);
    \draw[edge] (1,-1) -- (B);
\end{tikzpicture}
\subcaption{$2$-in-$3$ from $\{1,j\}$-in-$j$ and constants}
\label{fig:2-in-3-from-1j-in-j-and-const}
\end{subfigure}

\begin{subfigure}{0.45\textwidth}
\centering
\begin{tikzpicture}[
node/.style={circle, draw, thick, circle, draw, minimum size=5mm, inner sep=0pt},
edge/.style={-latex, thick}
]
    \node[node] (A) at (0,1.5) {$\sfrac13$};
    \node[node] (B) at (-0.75,0.75) {$=$};
    \node[node] (C) at (0.75,0.75) {$=$};
    \node[node] (D) at (-1.5,0) {$\sfrac13$};
    \node[node] (E) at (0,0) {$\sfrac14$};
    \node[node] (F) at (1.5,0) {$\sfrac13$};
    \node[node] (G) at (-0.75,-0.75) {$=$};
    \node[node] (H) at (0.75,-0.75) {$=$};
    \node[node] (I) at (0,-1.5) {$\sfrac13$};
    \draw[edge] (A) -- (0,2.3);
    \draw[edge] (A) -- (B);
    \draw[edge] (C) -- (A);
    \draw[edge] (D) -- (B);
    \draw[edge] (E) -- (B);
    \draw[edge] (C) -- (E);
    \draw[edge] (C) -- (F);
    \draw[edge] (-2.3,0) -- (D);
    \draw[edge] (F) -- (2.3,0);
    \draw[edge] (D) -- (G);
    \draw[edge] (E) -- (G);
    \draw[edge] (E) -- (H);
    \draw[edge] (F) -- (H);
    \draw[edge] (I) -- (G);
    \draw[edge] (I) -- (H);
    \draw[edge] (0,-2.3) -- (I);
\end{tikzpicture}
\subcaption{Crossover from $1$-in-$3$ and $3$-equalizer}
\label{fig:crossover-from-1-in-3-3eq}
\end{subfigure}%
\hfill%
\begin{subfigure}{0.45\textwidth}
\centering
\begin{tikzpicture}[
node/.style={circle, draw, thick, circle, draw, minimum size=5mm, inner sep=0pt},
edge/.style={-latex, thick}
]
    \node[node] (A) at (0,2) {$\sfrac13$};
    \node[node] (B) at (-1.732,1) {$=$};
    \node[node] (C) at (0,1) {$=$};
    \node[node] (D) at (1.732,1) {$=$};
    \node[node] (E) at (-0.866,0.5) {$\sfrac13$};
    \node[node] (F) at (0.866,0.5) {$\sfrac13$};
    \node[node] (G) at (0,0) {$\sfrac13$};
    \node[node] (H) at (-0.866,-0.5) {$=$};
    \node[node] (I) at (0.866,-0.5) {$=$};
    \node[node] (J) at (-1.732,-1) {$\sfrac13$};
    \node[node] (K) at (0,-1) {$\sfrac13$};
    \node[node] (L) at (1.732,-1) {$\sfrac13$};
    \node[node] (M) at (0,-2) {$=$};
    \draw[edge] (A) -- (B);
    \draw[edge] (C) -- (A);
    \draw[edge] (A) -- (D);
    \draw[edge] (-2.598,1.5) -- (B);
    \draw[edge] (2.598,1.5) -- (D);
    \draw[edge] (E) -- (B);
    \draw[edge] (C) -- (E);
    \draw[edge] (C) -- (F);
    \draw[edge] (F) -- (D);
    \draw[edge] (C) -- (G);
    \draw[edge] (E) -- (H);
    \draw[edge] (F) -- (I);
    \draw[edge] (G) -- (H);
    \draw[edge] (G) -- (I);
    \draw[edge] (J) -- (B);
    \draw[edge] (J) -- (H);
    \draw[edge] (K) -- (H);
    \draw[edge] (K) -- (I);
    \draw[edge] (L) -- (I);
    \draw[edge] (L) -- (D);
    \draw[edge] (M) -- (J);
    \draw[edge] (M) -- (K);
    \draw[edge] (M) -- (L);
    \draw[edge] (M) -- (0,-2.75);
\end{tikzpicture}
\subcaption{$2$-in-$3$ from $1$-in-$3$ and $4$-equalizer}
\label{fig:2-in-3-from-1-in-3-4eq}
\end{subfigure}
\caption{Constructions used in Lemma~\ref{lem:crossover}. Vertices labeled $S$ denote synchronizers, $\sfrac{a}{b}$ denote $a$-in-$b$, $=$ denote equalizers.}
\label{fig:crossover-construction}
\end{figure}

It turns out that all of these cases are also NP-hard:

\begin{lemma}\label{lem:base-hard-cases}
    Planar Graph Orientation is NP-hard with the following sets of vertex types:
    \begin{enumerate}
        \item $2$-in-$4$ and synchronizer;
        \item $1$-in-$3$, $2$-in-$3$, and synchronizer;
        \item $\{1,j\}$-in-$j$ and constants (for $j \ge 4$);
        \item $1$-in-$3$ and $3$-equalizer;
        \item $1$-in-$3$ and $4$-equalizer.
    \end{enumerate}
\end{lemma}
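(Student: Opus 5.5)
Since Lemma~\ref{lem:crossover} shows that each of the five sets simulates a crossover, planarity comes for free: it suffices to show NP-hardness of the \emph{non-planar} problem for each set. Given any (non-planar) instance, we draw it in the plane and replace every edge crossing by the crossover gadget. The gadgets use only vertex types from the same set, so the resulting planar instance is equivalent. The plan is therefore to prove non-planar NP-hardness for each case, relying wherever possible on Theorem~\ref{thm:S-in-j-and-const}, the duplicator machinery, and the reductions between cases already used in the proof of Lemma~\ref{lem:crossover}.

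\textbf{Case 1.} For $2$-in-$4$ and a synchronizer, I would reduce from the non-planar problem with constants. The synchronizer simulates every duplicator of net flow $0$. Hardness then follows from Theorem~\ref{thm:schaefer} (Positive 1-in-3SAT or similar) via the construction in the proof of Theorem~\ref{thm:S-in-j-and-const}: each clause becomes a vertex and each variable becomes a duplicator.

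Two difficulties arise. First, $2$-in-$4$ is self-complementary, so a \emph{constant} cannot be simulated. The natural source problem is therefore $\Gamma$-SAT for $\Gamma=\{2\text{-in-}4\}$ without constants. This relation is neither $0$-valid, $1$-valid, Horn, dual-Horn, bijunctive, nor affine (using the helper lemma for the last two), so Theorem~\ref{thm:schaefer} gives NP-hardness. Second, the variable duplicators must have net flow $0$. Each variable's occurrences must split so that in- and out-edges balance, and this is arranged by using every clause twice, once as $R$ and once as its complement $R'$. Since $2$-in-$4$ equals its own negation, the satisfying set is unchanged, and every variable now appears equally often positively and negatively, so a net-flow-$0$ duplicator suffices.

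\textbf{Case 2} reduces to Case~1 by Figure~\ref{fig:2-in-4-from-1-in-3-and-2-in-3}.

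\textbf{Case 3.} Constants are available, and for $S=\{1,j\}$ with $j\ge4$ we have $\Delta S=\{j-1\}$ with $j-1\ge3$. The helper lemma shows $S$ is neither bijunctive nor affine. Theorem~\ref{thm:S-in-j-and-const} gives non-planar hardness directly.

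\textbf{Cases 4 and 5.} For $1$-in-$3$ with a $4$-equalizer, Lemma~\ref{lem:crossover} shows we can build $2$-in-$3$ and a synchronizer, so this reduces to Case~2. For the $3$-equalizer, the equalizer is non-trivial with net flow $\pm3$, so by Theorem~\ref{thm:dup-characterization} it simulates a synchronizer and every duplicator of net flow divisible by $3$. I would then reduce from Positive 1-in-3SAT, which is NP-hard by Theorem~\ref{thm:schaefer}, with each clause a $1$-in-$3$ vertex. Every variable must be represented by a duplicator whose edges point toward the clauses in the true state. Its net flow is then $\pm(\text{number of occurrences})$, which must be a multiple of $3$. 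To arrange this, I triplicate every clause, which preserves satisfiability.

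\textbf{Main obstacle.} The delicate point throughout is net-flow bookkeeping. In Case~1 this means balancing occurrences through self-complementarity, as above. In Case~4 the triplication trick handles it. A second concern is ensuring that the crossover gadgets from Lemma~\ref{lem:crossover} are exact simulations, so that substituting them at crossings preserves equivalence; since Lemma~\ref{lem:crossover} already asserts this, I would rely on it.
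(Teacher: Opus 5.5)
Your proposal is correct and, for Cases~1, 2, 4, and~5, follows the paper's proof: doubling each $2$-in-$4$SAT clause with complemented literals so the synchronizer's net-flow-$0$ duplicators can represent variables, using the gadget of Figure~\ref{fig:2-in-4-from-1-in-3-and-2-in-3}, triplicating $1$-in-$3$SAT clauses so each variable becomes a $3k$-equalizer, and using the $2$-in-$3$ gadget of Figure~\ref{fig:2-in-3-from-1-in-3-4eq}, with Lemma~\ref{lem:crossover} supplying planarity. The one deviation is Case~3: you invoke Theorem~\ref{thm:S-in-j-and-const} (correctly, since $\{1,j\}$-in-$j$ is neither bijunctive nor affine and $\max\Delta\Gamma=j-1\ge 3$) and then planarize, whereas the paper simply reduces from Case~2 via Figure~\ref{fig:2-in-3-from-1j-in-j-and-const}; your route is valid but gains nothing, because the Case~3 crossover you need is itself built through that same reduction to Case~2.
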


\begin{proof}
    See Figure~\ref{fig:crossover-construction}.
    \begin{enumerate}
        \item Reduction from $2$-in-$4$SAT. Duplicate every clause and negate all literals in the copy (which does not change the clause's meaning). Now every variable appears positively $k$ times and negatively $k$ times for some $k$, so we can represent it with a duplicator of degree $2k$ and net flow $0$. Finally, we have crossovers from Lemma~\ref{lem:crossover}.
        \item Reduction from Case~1 as shown in Figure~\ref{fig:2-in-4-from-1-in-3-and-2-in-3}.
        \item Reduction from Case~2 as shown in Figure~\ref{fig:2-in-3-from-1j-in-j-and-const}.
        \item Reduction from $1$-in-$3$SAT. Create three copies of each clause. Now every variable is used $3k$ times for some $k$, so we can represent it with a $3k$-equalizer. Finally, we have crossovers from Lemma~\ref{lem:crossover}.
        \item Reduction from Case~2 as shown in Figure~\ref{fig:2-in-3-from-1-in-3-4eq}.\qedhere
    \end{enumerate}
\end{proof}

We showed Planar Graph Orientation is NP-hard with $1$-in-$3$ and a $3$-equalizer or a $4$-equalizer. It is natural to ask what happens if we instead have a $5$-equalizer. Surprisingly, there is a phase transition, and the problem becomes polynomial!

\begin{lemma}\label{lem:k>=5-easy}
    For each fixed $k \ge 5$, Planar Graph Orientation where all vertices are either unsatisfiable, terminators, $k$-equalizers, or $1$-in-$j$ for some $j$ is in P.
\end{lemma}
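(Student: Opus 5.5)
The plan is to turn the orientation problem into a covering problem, simplify it with local rules, and then use planarity together with $k\ge5$ to show that the simplified instance is either forced or small enough to solve directly. I have not identified the decisive planarity argument yet; it is flagged as the main obstacle below.

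\emph{Preprocessing.} First discard instances containing an unsatisfiable vertex. Next, propagate terminators. A terminator fixes the orientation of all its edges. A $1$-in-$j$ vertex that receives a forced incoming edge forces its other edges outward. A $1$-in-$j$ vertex with all but one edge forced outward forces the last edge inward. A $k$-equalizer with one forced edge forces all of its edges. We delete forced edges and lower the degrees of the $1$-in-$j$ vertices accordingly, rejecting on any contradiction. What remains is a planar graph with only unforced $1$-in-$j$ vertices and $k$-equalizers. This is still a planar graph, but some degrees have changed, and we must track multi-edges and self-loops.

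\emph{Reformulation.} Call an equalizer \emph{active} if all its edges point out. Each $1$-in-$j$ vertex must receive exactly one incoming edge. An edge between two equalizers forces exactly one endpoint to be active. An edge between two $1$-in-$j$ vertices must be the incoming edge of exactly one endpoint. Consider a connected component $C$ of the subgraph spanned by the $1$-in-$j$ vertices and the edges among them. Counting incoming edges shows $|E(C)|\le|V(C)|$. If $C$ is unicyclic, it receives nothing from equalizers, so all adjacent equalizers are inactive. This propagates, and the orientation inside $C$ is one of its two cycle orientations. If $C$ is a tree, it must receive exactly one edge from an active equalizer. Any choice of that edge extends uniquely by orienting the tree away from the entry vertex. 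Contracting each tree component to a single node gives an equivalent planar instance. In it, each element node has to be hit exactly once by active $k$-equalizers, counted with multiplicity, and each equalizer--equalizer edge has exactly one active endpoint.

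\emph{Main obstacle: the role of planarity and $k\ge5$.} The contracted instance is a planar exact-cover-type problem whose sets have size $k$. By Lemma~\ref{lem:base-hard-cases}, the analogous problem with $3$- or $4$-equalizers is NP-hard, so the argument has to use $k\ge5$ essentially. My first attempt is Euler-formula discharging on the planar bipartite incidence graph between active equalizers and element nodes. In a solution, the active equalizers with their $k$ edges form vertex-disjoint stars that cover every element node. The goal is to show that faces of low degree, or some local structure, must then exist and can be reduced. Each such reduction would either force some equalizer, after which we return to propagation, or split the instance. The problem is in P once the repeated reductions stop with an instance solvable by bipartite matching or $2$-coloring, namely when each element node has at most two remaining candidate equalizers. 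I expect that a face-degree count using $\frac1k+\frac12<1$ for $k\ge5$, in contrast to the tight cases $k=3,4$, forces such low-degree configurations. Making this count rigorous, including the treatment of equalizer--equalizer edges, is the step I am least sure of.
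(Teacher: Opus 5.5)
Your preprocessing and your contraction of $1$-in-$j$ components are correct. However, they only reproduce the easy reduction rules: fixing terminator edges, and merging adjacent $1$-in-$j$ vertices into a $1$-in-$(j_1+j_2-2)$ vertex. They use neither planarity nor $k\ge5$, and the whole content of the lemma lies in the step you leave open, so this is not yet a proof.

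The heuristic you sketch for that step would also not work as stated. The inequality $\frac1k+\frac12<1$ holds for $k=3,4$ as well, so it cannot separate the hard cases from the easy ones. Discharging on the stars of active equalizers \emph{in a solution} yields no algorithm, since no solution is known. The counting must be done on the instance graph itself, and it should produce an edge whose orientation is the same in all solutions. You then fix that edge, simplify, and repeat until no $1$-in-$j$ vertex with $j\ge3$ remains, at which point the instance is affine.

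Concretely, three ingredients are missing.

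(i) A reduction to a \emph{simple} planar bipartite graph between $k$-equalizers and $1$-in-$j$ vertices with $j\ge3$. A self-loop at an equalizer is a contradiction. Two parallel edges between an equalizer and a $1$-in-$j$ vertex must both point away from the latter. An edge $uv$ between two equalizers is eliminated by deleting $u,v$ and joining their remaining neighbors in nested pairs $u_iv_i$, which keeps planarity. This last operation is only a relaxation, since the new edges are no longer forced to agree. But every solution of the original maps to one of the new instance, so an edge forced in the new instance is forced in the original, and if it is some $u_iv_i$ then $uv$ is forced. Keeping equalizer--equalizer edges as a side constraint, as your reformulation does, prevents any clean Euler count.

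(ii) An Euler count for $k\ge6$. On the reduced graph with $x$ equalizers and $y$ other vertices, all faces have length at least $4$. So $kx=|E|\le 2(x+y)-4$ together with $3y\le kx$ gives $kx/3\le 2x-4$, which is impossible for $k\ge6$. The relevant threshold is $\frac1k+\frac13\le\frac12$, not your inequality.

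(iii) A discharging argument for $k=5$, where such graphs do exist. Use face charge $6-\frac32\deg f$, charge $6-2\deg v$ on $1$-in-$j$ vertices, and charge $1$ on equalizers, for a total of $12$. Faces of length $\ge6$ take $1$ from each incident equalizer, and $1$-in-$j$ vertices of degree $\ge4$ take $\frac13$ from each neighbor. Some equalizer then keeps positive charge: a $5$-equalizer all of whose incident faces are quadrilaterals and which has two non-consecutive degree-$3$ neighbors. In that wheel, assuming a suitable surrounding equalizer is all-out propagates around until some $1$-in-$j$ vertex receives two incoming edges. Hence that equalizer's orientation is forced.
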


We will prove this lemma in Section~\ref{sec:k>=5}, but for now we can state and prove the dichotomy for Planar Graph Orientation with constants:%

\begin{theorem}\label{thm:planar-S-in-j-and-const}
    Let $\Gamma$ be a set of $S$-in-$j$ vertices. Then Planar $\Gamma$-GO$_c$ is NP-hard, except for the three polynomial cases from Theorem~\ref{thm:S-in-j-and-const}, and two new polynomial cases:
    \begin{enumerate}
        \setcounter{enumi}{3}
        \item $\gcd \Delta \Gamma \ge 5$, and all $S$-in-$j$ in $\Gamma$ have $S \subseteq \{0,j\}$ or $S = \{1\}$;
        \item $\gcd \Delta \Gamma \ge 5$, and all $S$-in-$j$ in $\Gamma$ have $S \subseteq \{0,j\}$ or $S = \{j-1\}$.
    \end{enumerate}
\end{theorem}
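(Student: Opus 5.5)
The polynomial cases are quick. Cases 1--3 are in P by Theorem~\ref{thm:S-in-j-and-const}, and planarity only restricts the input. For Cases 4 and 5, every $S$-in-$j$ with $S \subseteq \{0,j\}$ is one of three things: unsatisfiable ($S=\emptyset$), a terminator ($S=\{0\}$ or $\{j\}$), or a $j$-equalizer ($S=\{0,j\}$, which contributes $j$ to $\Delta\Gamma$). The constants are terminators. Since $\gcd\Delta\Gamma \ge 5$, each equalizer degree $j$ is a multiple of $k:=\gcd\Delta\Gamma\ge5$. A $j$-equalizer can be replaced by a planar tree of $k$-equalizers, with internal edges arranged so that the whole tree equalizes; this needs care and may force us to handle multiples of $k$ directly in Lemma~\ref{lem:k>=5-easy}. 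The remaining vertices are $1$-in-$j$, so Lemma~\ref{lem:k>=5-easy} applies. Case 5 follows from Case 4 by reversing every edge: this swaps $i$-in-$j$ with $(j-i)$-in-$j$, preserves equalizers, swaps the two kinds of terminators, and keeps $\Delta\Gamma$.

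For hardness, assume none of the five cases holds. By the proof of Theorem~\ref{thm:S-in-j-and-const}, the non-planar problem is NP-hard. So it suffices to build a planar crossover from $\Gamma$ plus constants, and then replace every crossing of a drawing by it. I would split on $g:=\gcd\Delta\Gamma$. Since $\max\Delta\Gamma>2$, Lemma~\ref{lem:gcd-delta-gamma} applies.

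If $g\le 4$, Lemma~\ref{lem:gcd-delta-gamma} gives a synchronizer and a $g'$-equalizer for every multiple $g'$ of $g$, in particular $3$- or $4$-equalizers when $g\in\{1,3,4\}$, and $4$-equalizers when $g=2$. I would then produce a suitable small vertex by Observation~\ref{obs:substrings}. Since $\Gamma$ is not all bijunctive and not all affine, some $S$ has an element strictly between $0$ and $j$ that is not one of the bijunctive patterns. Trimming its bitstring with constants should yield one of three gadgets: $1$-in-$3$, $2$-in-$3$, or a $\{1,j'\}$-in-$j'$ / $2$-in-$4$ pattern. For example, a substring \texttt{0100} gives $1$-in-$3$, and \texttt{0010} gives $2$-in-$3$. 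A bitstring \texttt{1...1} with an isolated interior $1$ can be trimmed further. Then Lemma~\ref{lem:crossover} finishes: cases 4 or 5 handle $1$-in-$3$ with an equalizer, and case 2 handles $1$-in-$3$ and $2$-in-$3$ with a synchronizer. If we only get $1$-in-$3$ with $g=2$, we use the $4$-equalizer. If we only get $2$-in-$3$, we reverse all edges (the constants are symmetric) and do the same. A leftover situation is a bitstring with no isolated $1$ and runs of \texttt{1}s; there I would extract \texttt{011} or \texttt{110} together with gaps and compose these with equalizers to reach $1$-in-$3$ or $2$-in-$3$.

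If $g\ge5$, all gaps are large, so every $S$ is a union of isolated points spaced by multiples of $g$. Failing Cases 4 and 5 means that some $S\not\subseteq\{0,j\}$ is not $\{1\}$-in-$j$, and that either some $S\not\subseteq\{0,j\}$ is not $\{j-1\}$-in-$j$, or both a $1$-in-$j$ and a $(j'-1)$-in-$j'$ occur. In the first situation, $S$ contains an interior $i$ with $2\le i$ (or $i\le j-2$ by symmetry), or $|S|\ge2$ with an interior point. Trimming gives $\{1,1+mg\}$-in-$(1+mg)$ type vertices, handled by Lemma~\ref{lem:crossover}(3), or $2$-in-$3$ combined with $1$-in-$3$ (from \texttt{0100}-type substrings) plus synchronizers. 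In the second situation, $1$-in-$3$ and $2$-in-$3$ arise directly, and together with a synchronizer from Lemma~\ref{lem:gcd-delta-gamma} we apply Lemma~\ref{lem:crossover}(2).

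The main obstacle is this case analysis on bitstrings, showing that every non-excluded $\Gamma$ yields one of the crossover configurations of Lemma~\ref{lem:crossover}. The delicate part is the $g\ge5$ regime, where a single isolated interior $i\ge2$ must be converted (via constants and equalizers) into $2$-in-$3$.
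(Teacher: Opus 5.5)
Your hardness argument has a genuine gap when $g=1$. You commit to building a crossover for every $g\le 4$, and the sub-case you cannot reach by trimming is a bitstring with runs of \texttt{1}s and no isolated interior \texttt{1}. An example is $\Gamma=\{\{0,1,4\}\text{-in-}4\}$ (bitstring \texttt{11001}), which is neither bijunctive nor affine and has $\max\Delta\Gamma=3$.

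Your proposed fix, extracting \texttt{110}/\texttt{011} and the gaps and composing them with equalizers, cannot work. Every proper substring of \texttt{11001} is bijunctive ($\{0,1\}$-in-$3$, $\{0,3\}$-in-$3$, $\{0,1\}$-in-$2$, \dots), and duplicators and constants are bijunctive too. A gadget glued from bijunctive vertex types only simulates bijunctive types: a simulation is a pp-definition in which each internal edge appears once negated, and MAJ is self-dual. But $1$-in-$3$ and $2$-in-$3$ are not bijunctive. So you would need a genuinely new multi-vertex planar gadget using the full vertex type, and you do not supply one.

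The missing observation is that this sub-case arises only when $g=1$, since a run \texttt{11} puts $1\in\Delta\Gamma$. For $g=1$, no crossover is needed at all. The only step in the proof of Theorem~\ref{thm:S-in-j-and-const} that breaks planarity is duplicating each clause $g$ times, which is a no-op when $g=1$. Lemma~\ref{lem:gcd-delta-gamma} (through Lemma~\ref{lem:sim-lin-comb-dup}) planarly simulates a duplicator of every net flow with any cyclic in/out pattern. So each variable of the Planar $\Gamma'$-SAT$_c$ instance can be replaced in place, and the reduction stays planar. This is exactly how the paper disposes of $g=1$.

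With $g=1$ handled that way, the rest of your outline matches the paper, though it needs tightening.

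For $g\in\{2,3,4\}$, no bitstring contains \texttt{11}. So a non-affine bitstring must contain \texttt{0100} or \texttt{0010}; otherwise its \texttt{1}s alternate or sit only at the ends, which is affine. Then Cases~4/5 of Lemma~\ref{lem:crossover} apply with the $3$- or $4$-equalizer, after reversing all edges for $2$-in-$3$.

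For $g\ge5$, the step you call delicate is immediate. A \texttt{1} at position $2\le i\le j-2$ has two zeros on each side, so trimming gives $2$-in-$4$, and Case~1 applies with the synchronizer. The remaining bitstrings have \texttt{1}s only among positions $0,1,j-1,j$:
\begin{itemize}
\item $\{1,j\}$, or $\{0,j-1\}$ after reversal, gives Case~3;
\item $\{1,j-1\}$ gives $1$-in-$3$ and $2$-in-$3$, hence Case~2;
\item otherwise, failing both easy Cases~4 and~5 forces a $1$-in-$j$ and a $(j'-1)$-in-$j'$ with $j,j'\ge3$, which again gives Case~2.
\end{itemize}
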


\begin{proof}
    The first 3 cases are easy by Lemma~\ref{lem:classic-easy-cases}.

    Suppose $\gcd \Delta \Gamma \ge 5$ and all $S$-in-$j$ vertices have $S \subseteq \{0,j\}$ or $S = \{1\}$. If there is an unsatisfiable vertex, the problem is trivially unsatisfiable. For each $f$-equalizer vertex, $f$ is a multiple of $\gcd \Delta \Gamma$ by definition (because $f$ is one of the arguments to the GCD), so we can simulate it with $\gcd \Delta \Gamma$-equalizers. Now the problem is in P by Lemma~\ref{lem:k>=5-easy}.

    Similarly, if $\gcd \Delta \Gamma \ge 5$ and all $S$-in-$j$ vertices have $S \subseteq \{0,j\}$ or $S = \{j-1\}$, we can reverse all the edges before applying Lemma~\ref{lem:k>=5-easy}.

    Suppose we do not have any of the 5 easy cases.

    The only step in the proof of Theorem~\ref{thm:S-in-j-and-const} that does not preserve planarity is the duplication. If $\gcd \Delta \Gamma = 1$, then that step is a no-op, so we are done. Otherwise, it suffices to construct a crossover.

    Suppose $2 \le \gcd \Delta \Gamma \le 4$. We can construct either a $3$-equalizer (if $\gcd \Delta \Gamma = 3$) or a $4$-equalizer (if $\gcd \Delta \Gamma \in \{2,4\}$). If the bitstring of some $S$ contains \texttt{0100} or \texttt{0010}, we have $1$-in-$3$ or $2$-in-$3$, which simulates a crossover (Lemma~\ref{lem:crossover}, Case~4 or 5). We also know that our bitstrings cannot contain \texttt{11}, because then $\gcd \Delta \Gamma$ would be $1$. If there is \texttt{1} not at either end, then in order to avoid these three patterns, all bits must alternate between \texttt{0} and \texttt{1}, so it is affine. If \texttt{1}s are only at the ends, then it is also affine. So all vertex types are affine, which contradicts our assumption that we do not have any of the easy Schaefer cases.

    Now suppose $5 \le \gcd \Delta \Gamma$. If some bitstring contains \texttt{1} far away from the ends, then it must be surrounded by at least $2$ zeros on each side, so we have a $2$-in-$4$, which simulates a crossover (Lemma~\ref{lem:crossover}, Case~1). This leaves the following cases:
    \begin{itemize}
        \item \texttt{000..000} --- unsatisfiable
        \item \texttt{000..001} --- terminator
        \item \texttt{100..000} --- terminator
        \item \texttt{100..001} --- equalizer
        \item \texttt{000..010} --- $(j-1)$-in-$j$
        \item \texttt{010..000} --- $1$-in-$j$
        \item \texttt{100..010} --- simulates crossover by Lemma~\ref{lem:crossover}, Case~3 (reverse all edges)
        \item \texttt{010..001} --- simulates crossover by Lemma~\ref{lem:crossover}, Case~3
        \item \texttt{010..010} --- simulates crossover by Lemma~\ref{lem:crossover}, Case~2
    \end{itemize}

    Three of these can simulate a crossover on their own, so suppose we do not have any of them. To avoid the easy planar cases, we must have at least one $(j-1)$-in-$j$ and at least one $1$-in-$j$. But the former simulates $2$-in-$3$, and the latter simulates $1$-in-$3$, so together they simulate a crossover by Lemma~\ref{lem:crossover}, Case~2.
\end{proof}

\subsubsection{A Planar Easiness Result}\label{sec:k>=5}

We now prove Lemma~\ref{lem:k>=5-easy} by giving an algorithm. The high-level insight is that the large degree of the $k$-equalizers, combined with planarity, forces the existence of local simplifying structures in the graph. Formally, the crux of the algorithm is the following lemma, where we adopt the convention that self-loops have two possible orientations.

\begin{lemma}
Let $k \ge 5$, and let $G$ be an instance of Planar Graph Orientation where all vertices are terminators, $k$-equalizers, or $1$-in-$j$ for some $j$. Then as long as there is at least one $1$-in-$j$ vertex with $j \ge 3$, there is an edge whose orientation is fixed in all solutions. Moreover, this edge and its orientation can be found in polynomial time.
\end{lemma}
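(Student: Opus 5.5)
The plan is a preprocessing phase that removes every configuration whose forced edge is immediate, followed by an Euler-formula counting (discharging) argument. The counting shows that the preprocessed graph must contain a bounded-size local configuration that forces an edge. Every configuration used has constant size, so searching for it and verifying the forced orientation take polynomial time. That gives the ``moreover'' part of the statement.

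\textbf{Preprocessing.} The following configurations each force an edge directly:
\begin{itemize}
\item Any edge incident to a terminator is forced.
\item A self-loop at a $1$-in-$j$ vertex supplies that vertex's unique incoming edge, so every other edge there is forced outward.
\item A self-loop at a $k$-equalizer makes the instance unsatisfiable. Then every edge is vacuously forced, and unsatisfiability can be reported.
\item Two parallel edges between $1$-in-$j$ vertices $u$ and $v$ must point one into $u$ and one into $v$. Hence all remaining edges at $u$ and $v$ point out. Likewise, parallel edges between a $1$-in-$j$ vertex and an equalizer must both point into the equalizer.
\item A $1$-in-$1$ vertex is a constant, and its edge is forced.
\item A $1$-in-$2$ vertex is just an edge and can be suppressed. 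This only merges two edges into one, and a forced orientation of the merged edge yields a forced orientation of the original edges.
\end{itemize}
After these steps we may assume a simple planar graph $G$. Its vertices are $k$-equalizers and $1$-in-$j$ vertices with $j \ge 3$, and equalizer--equalizer parallel edges have been merged, since they are automatically consistent. By hypothesis at least one $1$-in-$j$ vertex with $j \ge 3$ remains.

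\textbf{Reformulation.} View each equalizer $e$ as a Boolean variable $x_e$, with $x_e = 1$ meaning ``all edges point out of $e$.'' If $x_e = 1$, then every $1$-in-$j$ neighbor of $e$ spends its unique incoming edge on $e$. So two equalizers sharing a $1$-in-$j$ neighbor cannot both be out, and an out-equalizer forces all the other edges of its $1$-in-$j$ neighbors outward. Adjacent equalizers must take opposite values. These local implications are what the configurations will exploit.

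\textbf{Configurations to look for.} I would target these forcing patterns.
\begin{enumerate}
\item A $1$-in-$j$ vertex $u$ adjacent to two equalizers $e, e'$ that also share a second $1$-in-$j$ neighbor $v$, i.e., a $4$-face $e\,u\,e'\,v$. Whenever $e$ or $e'$ is out, $u$ and $v$ are saturated from that side, which constrains the remaining edges at $u$ and $v$. I would combine this with a third equalizer neighbor of $u$ to show that some specific edge cannot point in some direction.
\item A $1$-in-$j$ vertex all of whose neighbors are equalizers that pairwise share another $1$-in-$j$ neighbor. Then at most one of these equalizers is out. Comparing with the demand at $u$, exactly one edge points in, and the counting forces one of them. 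I would refine this pattern until it pins down a specific edge.
\item A triangle among $1$-in-$j$ vertices, or among two $1$-in-$j$ vertices and one equalizer. Parity and the exactly-one condition then force an edge.
\end{enumerate}

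\textbf{Discharging.} Assign charge $\deg(v) - 4$ to each vertex and $\deg(f) - 4$ to each face, for a total of $-8$. Every equalizer then has charge at least $k - 4 \ge 1$, and this is exactly where $k \ge 5$ enters. Negative charge sits only on triangular faces. Triangles are either forcing configurations or involve an equalizer--equalizer edge. An equalizer--equalizer edge forces the two equalizers to be opposite, which lets me merge them into a single equalizer-like gadget. The rules would then move positive charge from equalizers and from large faces onto triangles and degree-$3$ vertices. If no forcing configuration is present, every element should end with nonnegative charge, contradicting the total of $-8$.

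\textbf{Main obstacle.} The hard part is choosing the discharging rules and the exact list of reducible configurations so that each configuration genuinely forces a \emph{single specified edge} in every solution, not merely constrains a set of edges. I expect the case of degree-$3$ $1$-in-$j$ vertices surrounded by $4$-faces to be the tight one, since it is precisely where $k = 4$ fails, as the crossovers of Lemma~\ref{lem:crossover} show. I would try first to handle it with the shared-neighbor pattern (the first configuration above).
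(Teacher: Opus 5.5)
There is a genuine gap. The heart of the lemma is showing that, after preprocessing, some bounded configuration forcing a \emph{specific} edge always exists. That step is never supplied, and none of the patterns you list forces an edge as stated. A $4$-face $e\,u\,e'\,v$ only tells you that $e$ and $e'$ are not both out. In your second pattern, ``at most one neighbor of $u$ is out'' is already implied by $u$'s own constraint, so nothing is pinned. No discharging rules are given either.

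For $k=5$ the tight case genuinely occurs. The vertex--face incidence graph of the icosahedron is a simple planar bipartite graph with only degree-$5$ equalizers, degree-$3$ $1$-in-$j$ vertices and $4$-faces. So counting alone can never give a contradiction; you must exhibit a configuration and prove it forces an edge.

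The paper's configuration is a $5$-equalizer $c$ whose five incident faces all have degree $4$ and which has two non-consecutive degree-$3$ neighbors. Let $m,m'$ be the two consecutive remaining neighbors and $o$ the outer equalizer of the face they span. If $o$ pointed out, $m$ and $m'$ would be saturated. Then $c$ and the outer equalizers of the faces on the far sides of $m$ and $m'$ are forced ``in''. Each degree-$3$ neighbor, already sending to $c$ and to one of those equalizers, must receive from its third neighbor, another outer equalizer. Those two outer equalizers are both adjacent to the fifth neighbor, which then gets two incoming edges. Hence $o$ must point in. Existence is proved by discharging with charge $6-\frac32\deg f$ on faces, $6-2\deg v$ on $1$-in-$j$ vertices and $1$ on equalizers, for a total of $12$. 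Faces of degree $\ge 6$ take $1$ from each incident equalizer, and $1$-in-$j$ vertices of degree $\ge 4$ take $\frac13$ from each neighbor. A positively charged equalizer is then the center of this configuration.

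Your preprocessing also does not reach a setting where such counting can close. You never merge adjacent $1$-in-$j$ vertices: an edge between a $1$-in-$j_1$ and a $1$-in-$j_2$ vertex contracts to a $1$-in-$(j_1+j_2-2)$ vertex. Without this, the graph is not bipartite, and your discharging would have to handle odd faces and $1$-in-$j$--$1$-in-$j$ edges.

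Your handling of equalizer--equalizer edges leaves the vertex class. Merging parallel equalizer edges drops an equalizer's degree below $k$, which destroys the charge $k-4\ge 1$ your plan relies on. Merging an adjacent pair yields a net-flow-$0$ duplicator of degree $2k-2$, not an equalizer. The paper instead deletes adjacent equalizers $u,v$ and joins $u_i$ to $v_i$ in rotation order. This preserves planarity and keeps every remaining equalizer at degree exactly $k$. Every original solution maps to a solution of the new instance, so an edge fixed in the new instance is either some $u_iv_i$ (which fixes $uv$) or already fixed in the original.

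After these rules $G$ is simple, planar and bipartite between $k$-equalizers and $1$-in-$j$ vertices with $j\ge 3$. Let $x$ be the number of equalizers and $y$ the number of $1$-in-$j$ vertices. For $k\ge 6$, the bipartite Euler bound $|E|\le 2|V|-4$ together with $|E|=kx\ge 3y$ is already contradictory. So only $k=5$ needs the configuration above.
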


If we find such an edge, we can delete it and adjust the incident vertices appropriately. For example, if it pointed into a $1$-in-$j$ vertex, that vertex becomes $0$-in-$(j-1)$, i.e., a terminator. It is not hard to see that this always produces either an allowed vertex type (terminator, $k$-equalizers, or $1$-in-$j$), or an unsatisfiable vertex.

We repeat this until we get a contradiction, or there are no more $1$-in-$j$ vertices with $j \ge 3$, after which the problem becomes affine and can be finished in polynomial time.

\begin{proof}
Consider the following reduction rules in order and apply the first applicable one:
\begin{enumerate}
    \item If there is a terminator vertex, it fixes the orientation of its incident edges.
    \item If there is a $1$-in-$2$ vertex, contract it into an edge connecting its neighbors and recurse on the resulting graph.
    \item If there is a self-loop on a $k$-equalizer vertex, conclude the instance is unsatisfiable.
    \item If there is a self-loop on a $1$-in-$j$ vertex ($j \ge 3$), the other incident edges of this vertex must point outward. There is at least one other edge because $j \ge 3$.
    \item If there is an edge between a $1$-in-$j_1$ and a $1$-in-$j_2$ vertex, merge them into a $1$-in-$(j_1+j_2-2)$ vertex and recurse on the resulting graph.
    \item If there are at least two edges between a $k$-equalizer vertex and a $1$-in-$j$ vertex, both edges must be directed away from the $1$-in-$j$ vertex.
    \item If there is at least one edge (possibly multiple edges) between two distinct $k$-equalizer vertices, say $u$ and $v$, take one such edge $uv$. Starting from this edge, let $v$ followed by $u_1,\dots,u_{k'}$ be the neighbors of $u$ other than $v$ in clockwise order, and similarly let $u$ followed by $v_1,\dots,v_{k'}$ be the neighbors of $v$ other than $u$ in counterclockwise order ($0 \le k' < k$). Form a new graph by deleting $u$ and $v$ and connecting each $u_i$ with $v_i$, which preserves planarity.

    Note that every solution to the original instance has a corresponding solution in the new instance. For example, if the solution had $u_i \rightarrow u \leftarrow v \rightarrow v_i$, the corresponding solution has $u_i \rightarrow v_i$.
    
    We recurse on this new graph to find an edge $e$ whose orientation is the same in all solutions to the new instance. If $e$ is one of $u_iv_i$, then the orientation of $uv$ is fixed. Otherwise, $e$ must have the same orientation in all solutions to the original instance.
\end{enumerate}

If none of the above rules applies, then $G$ is simple, nonempty, planar, and bipartite, with $1$-in-$j$ and $k$-equalizer vertices forming a bipartition, and all $1$-in-$j$ vertices have $j \ge 3$.

If $k \ge 6$, we claim that this is impossible. The degrees of the $k$-equalizer vertices are all exactly $k$ and the degrees of the $1$-in-$j$ vertices are all at least $3$. Suppose that there are $x$ $k$-equalizer vertices and $y$ $1$-in-$j$ vertices. Then there are $kx$ edges, so $kx = \sum_{v\text{ is }1\text{-in-}j}\deg(v) \ge 3y \implies y \le \frac{kx}{3}$. Because $y > 0$, a corollary of Euler's formula for bipartite planar graphs is that $|E| \le 2|V|-4$. But $kx \le 2(x+y)-4 \le 2(x+\frac{kx}{3})-4 \implies \frac{kx}{3} \le 2x-4$ is a contradiction when $k \ge 6$. Therefore, this case is impossible.

Otherwise, $k = 5$. We add one more reduction rule:
\begin{enumerate}
    \setcounter{enumi}{7}
    \item Suppose the adjacent faces of a $5$-equalizer vertex all have degree $4$ and two non-consecutive neighboring vertices have degree $3$ (see Figure~\ref{fig:rule}). 
    
    If the blue vertex in Figure~\ref{fig:rule} was directed outwards (as shown), the other blue edges are forced, and a contradiction is reached at the red vertex. Therefore, the blue vertex must be directed inwards. Return one of its incident edges.\label{item:rule}
\end{enumerate}

\begin{figure}[h]
\centering
\begin{tikzpicture}[
node/.style={circle, draw, thick, circle, draw, minimum size=5mm, inner sep=0pt},
edge/.style={-latex, thick}
]
\node[node] (center) at (0,0) {$=$};

\node[node] (outer0) at ({2.5*cos(72*(0)-18)}, {2.5*sin(72*(0)-18)}) {$=$};
\node[node] (outer1) at ({2.5*cos(72*(1)-18)}, {2.5*sin(72*(1)-18)}) {$=$};
\node[node, blue] (outer2) at ({2.5*cos(72*(2)-18)}, {2.5*sin(72*(2)-18)}) {$=$};
\node[node] (outer3) at ({2.5*cos(72*(3)-18)}, {2.5*sin(72*(3)-18)}) {$=$};
\node[node] (outer4) at ({2.5*cos(72*(4)-18)}, {2.5*sin(72*(4)-18)}) {$=$};

\node[node, green] (mid0) at ({1.545*cos(72*(0)+18)}, {1.545*sin(72*(0)+18)}) {$\sfrac13$};
\node[node] (mid1) at ({1.545*cos(72*(1)+18)}, {1.545*sin(72*(1)+18)}) {$\sfrac1?$};
\node[node] (mid2) at ({1.545*cos(72*(2)+18)}, {1.545*sin(72*(2)+18)}) {$\sfrac1?$};
\node[node, green] (mid3) at ({1.545*cos(72*(3)+18)}, {1.545*sin(72*(3)+18)}) {$\sfrac13$};
\node[node, red] (mid4) at ({1.545*cos(72*(4)+18)}, {1.545*sin(72*(4)+18)}) {$\sfrac1?$};

\foreach \i in {0,...,4} {
    \draw[edge, blue] (mid\i) -- (center);
    \node[green] at ({1.25*cos(72*(\i)-18)}, {1.25*sin(72*(\i)-18)}) {$4$};
    \draw[thick] (outer\i) -- ({3.3*cos(72*(\i)-18)}, {3.3*sin(72*(\i)-18)});
    \draw[thick] (outer\i) -- ({3.3*cos(72*(\i)-18+9)}, {3.3*sin(72*(\i)-18+9)});
    \draw[thick] (outer\i) -- ({3.3*cos(72*(\i)-18-9)}, {3.3*sin(72*(\i)-18-9)});
}
\foreach \i in {1,2,4} {
    \draw[thick] (mid\i) -- ({3*cos(72*(\i)+18+5)}, {3*sin(72*(\i)+18+5)});
    \draw[thick] (mid\i) -- ({3*cos(72*(\i)+18-5)}, {3*sin(72*(\i)+18-5)});
}
\draw[edge, blue] (outer2) -- (mid1);
\draw[edge, blue] (outer2) -- (mid2);
\draw[edge, blue] (mid1) -- (outer1);
\draw[edge, blue] (mid2) -- (outer3);
\draw[edge, blue] (mid0) -- (outer1);
\draw[edge, blue] (mid3) -- (outer3);
\draw[edge, blue] (outer0) -- (mid0);
\draw[edge, blue] (outer4) -- (mid3);
\draw[edge, blue] (outer0) -- (mid4);
\draw[edge, blue] (outer4) -- (mid4);
\end{tikzpicture}
    \caption{The structure in reduction rule \ref{item:rule}. Green vertices/faces denote degree constraints. }
    \label{fig:rule}
\end{figure}
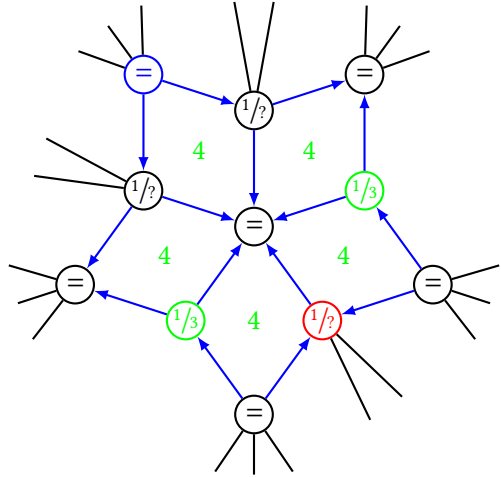

We show that this structure must always appear.

Suppose there are $x$ $5$-equalizer vertices and $y$ $1$-in-$j$ vertices. We have $|V| = x+y$, $|E| = 5x$, and Euler's formula $|V|-|E|+|F| = 2$. They imply that $(x+y)-5x+|F| = 2$, so $y+|F| = 4x+2$.

We use the discharging method \cite{discharging}. Initially, assign a charge of $6-\frac{3}{2}\deg(f)$ on each face $f$, a charge of $6-2\deg(v)$ on each $1$-in-$j$ vertex $v$, and a charge of $1$ on each $5$-equalizer vertex. The sum of all charges is
\begin{align*}
    &x + \sum_{f \in F} 6-\frac{3}{2}\deg(f) + \sum_{v\text{ is }1\text{-in-}j}6-2\deg(v) \\
    &= x + 6|F|-\frac{3}{2}(2|E|) + 6y-2(|E|) = x + 6(y+|F|)-5|E| \\
    &= x + 6(4x+2) - 5(5x) = 12 > 0.
\end{align*}
Now, apply the following discharging rules:
\begin{enumerate}
    \item Each face $f$ with $\deg(f) \ge 6$ takes $1$ charge from each of its $\frac12 \deg(f)$ $5$-equalizer neighbors.
    \item Each $1$-in-$j$ vertex $v$ with $\deg(v) \ge 4$ takes $\frac{1}{3}$ charge from each of its neighboring $5$-equalizer vertices.
\end{enumerate}
The final charge on each face $f$ is either $0$ if $\deg(f) = 4$ or $6-\frac{3}{2}\deg(f)+\frac12\deg(f) \le 0$ if $\deg(f) \ge 6$. The final charge on each $1$-in-$j$ vertex $v$ is either $0$ if $\deg(v) = 3$ or $6-2\deg(v)+\frac{1}{3}\deg(v) < 0$ if $\deg(v) \ge 4$. As the total charge is positive, some vertex $v$ has a positive final charge, which must be a $5$-equalizer vertex. This vertex cannot be adjacent to a face of degree $\ge 6$ nor $3$ or more $1$-in-$j$ vertices of degree $\ge 4$. Thus, some two non-consecutive neighbors of $v$ both have degree $3$, so $v$ is the center vertex of the structure in Figure~\ref{fig:rule}.

Therefore, one of the above rules always applies.
\end{proof}

Planar Graph Orientation with $1$-in-$j$ and $k$-equalizers is closely related to Planar Exact Cover by sets of size $k$ (Planar X$k$C). The $k$-equalizer vertices correspond to sets, and $1$-in-$j$ vertices correspond to elements. Planar X$3$C is well known to be NP-complete \cite{planar3dm}, but to our knowledge, other values of $k$ have not been studied. We note the following corollary.
\begin{corollary}
    Planar Exact Cover by sets of size $k$ is NP-hard for $k = 3,4$ and in P for all other $k$.
\end{corollary}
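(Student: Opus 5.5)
The corollary follows by translating Planar Exact Cover by $k$-sets into Planar Graph Orientation with $1$-in-$j$ vertices and $k$-equalizers, and back. Given a planar X$k$C instance, let $B$ be its planar bipartite incidence graph between sets and elements. Replace each set by a $k$-equalizer and each element of frequency $j$ by a $1$-in-$j$ vertex.

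\textbf{The correspondence.} An orientation in which an equalizer sends all its edges outward means the set is chosen. An equalizer with all edges inward means the set is not chosen. A $1$-in-$j$ element vertex then has exactly one incoming edge precisely when the element is covered by exactly one chosen set. So exact covers correspond bijectively to satisfying orientations, and planarity is preserved in both directions. Conversely, every instance of Planar GO whose vertices are $1$-in-$j$ and $k$-equalizers arises this way, provided it is bipartite between the two types with no parallel edges. It can be put in that form by the reduction rules of the previous lemma: merging adjacent $1$-in-$j$ vertices, contracting $1$-in-$2$ vertices, and handling equalizer--equalizer edges. I expect this converse direction is not even needed, since both the hardness and the easiness arguments go through the forward map or by direct citation.

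\textbf{Easiness for $k\ge 5$.} Apply the forward map and invoke Lemma~\ref{lem:k>=5-easy}. A subtlety is elements of frequency $0$: these give unsatisfiable $1$-in-$0$ vertices, which that lemma explicitly permits. Elements of frequency $1$ give $1$-in-$1$ vertices, which are terminators.

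\textbf{Easiness for $k\le 2$.} For $k=1$ the problem is trivial: every element must lie in exactly one singleton. For $k=2$ it is perfect matching restricted to a given edge set, which is polynomial, with singleton-free loops excluded.

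\textbf{Hardness for $k=3$.} This is the known NP-hardness of Planar X$3$C~\cite{planar3dm}.

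\textbf{Hardness for $k=4$.} This is the step needing care. Lemma~\ref{lem:base-hard-cases}, Case~5 shows Planar GO with $1$-in-$3$ vertices and $4$-equalizers is NP-hard. The task is to turn such an instance into X$4$C. I would first run the preprocessing rules so the graph is bipartite between $1$-in-$j$ vertices and $4$-equalizers and simple:
\begin{itemize}
\item Merging adjacent $1$-in-$j$ vertices preserves both the instance and planarity.
\item Rule~6 handles double edges by forcing them.
\item Terminators fix edges.
\item An equalizer--equalizer edge forces the two equalizers to opposite states. I would subdivide such an edge with a $1$-in-$2$ vertex, which is an element in exactly two sets.
\item A self-loop at an equalizer gives an immediate no-instance.
\end{itemize}
Once the instance is bipartite and simple, it is exactly an incidence graph of a planar X$4$C instance with sets of size exactly $4$.

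The main obstacle is making sure every reduction step yields only $1$-in-$j$ and $4$-equalizer vertices with no leftover degenerate types, or an outright answer. The check is that each forced edge turns a $1$-in-$j$ vertex into either a terminator or a $1$-in-$(j-1)$ vertex. A $4$-equalizer with one edge fixed becomes a bundle of terminators, which in turn fix edges. This propagation runs in polynomial time and preserves equivalence.
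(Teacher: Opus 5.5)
Your proposal is correct and follows the route the paper intends; the paper gives no written proof of this corollary beyond the set/$k$-equalizer, element/$1$-in-$j$ correspondence. As in that route, you get easiness for $k\ge 5$ from Lemma~\ref{lem:k>=5-easy} via the forward map, and hardness from Planar X$3$C and Lemma~\ref{lem:base-hard-cases} (Case~5, for $k=4$), and your $k=4$ normalization correctly fills in what the paper leaves implicit. In particular, subdividing equalizer--equalizer edges with $1$-in-$2$ elements is the right choice, since Rule~7 of the easiness lemma only gives a one-way correspondence; so your remark that the converse direction is ``not even needed'' is inaccurate, but your $k=4$ paragraph carries out that converse properly.
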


\subsection{Removing Constants}\label{subsec:removing-const}

Now suppose that instead of single constants, we have $0$-in-$f$ and $f$-in-$f$, so the constants appear only in groups of $f$. We claim that this does not change the dichotomy.

\begin{definition}
    An \defn{$f$-terminator} is a degree-$|f|$ vertex type that can be satisfied only by $f$ incoming edges (if $f > 0$) or $-f$ outgoing edges (if $f < 0$).

    A \defn{positive terminator} is a $f$-terminator for some $f > 0$, and a \defn{negative terminator} is a $f$-terminator for some $f < 0$.
\end{definition}

Note that $k \ge 1$ disconnected $f$-terminators can simulate a $kf$-terminator. We now prove that having both positive and negative terminators is almost as good as having constants, by establishing an equivalence between Planar $\Gamma$-GO with $f$-terminators and Planar $\Gamma$-GO$_c$.

\begin{lemma}\label{lem:nonlocally-const}
    Let $\Gamma$ be a set of $S$-in-$j$ and duplicator vertex types such that $\Gamma$ can simulate both positive and negative terminators. Then (Planar) $\Gamma$-GO is equivalent to (Planar) $\Gamma$-GO$_c$.
\end{lemma}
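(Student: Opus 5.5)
The direction from $\Gamma$-GO to $\Gamma$-GO$_c$ is trivial: every instance of the former is an instance of the latter. For the converse, it suffices to show that $\Gamma$ planarly simulates each of the two constants, $1$-in-$1$ and $0$-in-$1$. Then every constant vertex in a (Planar) $\Gamma$-GO$_c$ instance can be replaced by a local gadget that has one external edge on its outer face, which preserves both satisfiability and planarity. The hypothesis gives simulations of some $a$-terminator with $a>0$ and some $b$-terminator with $b<0$; I will use these as black boxes. An $a$-terminator behaves like a bundle of $a$ copies of $1$-in-$1$ (all its edges are forced inward), and a $b$-terminator like a bundle of $|b|$ copies of $0$-in-$1$.

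The key construction joins positive and negative terminators so that their forced edges cancel. If an edge of a positive terminator is identified with an edge of a negative terminator, the orientations agree: the edge points out of the negative one and into the positive one. Take $|b|$ copies of the $a$-terminator and $a$ copies of the $b$-terminator. Together they have $a|b|$ edges forced in and $a|b|$ edges forced out. Pair up all but one inward edge with all but one outward edge. What remains is a single inward edge and a single outward edge. To get a $1$-in-$1$, expose the leftover inward edge as the external edge. To get a $0$-in-$1$, expose the outward edge instead. In each case the other leftover edge still has to be absorbed; I would handle this by building the gadget with exactly one unmatched stub, using $(|b|+1)$ copies of the $a$-terminator against $a$ copies of the $b$-terminator and one further adjustment. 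More cleanly, I would choose counts $m,n\ge 1$ with $ma - n|b| = 1$ (respectively $-1$). This is possible only if $\gcd(a,|b|)=1$, so before choosing counts I first reduce, by the steps below, to terminators of coprime magnitudes. As a fallback, I would build a single closed cancelling gadget and tap one edge from it.

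For the reduction step, one gluing operation already does the work. Matching $\min(a,|b|)$ edges between one $a$-terminator and one $b$-terminator produces an $(a+b)$-terminator, which is positive or negative according to the sign of $a+b$. Repeating this is a Euclidean-algorithm style descent on $(a,|b|)$. It yields terminators of magnitude $g=\gcd(a,|b|)$ with both signs. When $g=1$ these are exactly the two constants, and we are done. When $g>1$, the descent ends with a $g$-terminator and a $(-g)$-terminator, and I expect this case to be the main obstacle. Here, before replacing constants, I would duplicate the $\Gamma$-GO$_c$ instance $g$ times, or argue via net flow, and then group constants $g$ at a time into $\pm g$-terminators. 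This grouping needs care to stay planar. One option is to take $g$ disjoint copies of the instance nested in concentric layers, and route the $i$-th constant of each copy to a common terminator along a radial corridor. Alternatively, I would check whether the $S$-in-$j$ and duplicator types in $\Gamma$ can contribute a flow of $\pm1$ that breaks the gcd, since the statement restricts $\Gamma$ to these types presumably for exactly this reason.

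Planarity of the gluing itself is routine. Each merge joins two gadgets by parallel edges and leaves the remaining stubs on the outer face. Arranging the terminators along a path, with a running-balance greedy rule that adds a negative terminator when the surplus is positive and a positive one otherwise, keeps all open stubs contiguous on the outer face. So every gadget is a planar simulation in the sense used throughout this section.
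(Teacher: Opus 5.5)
\textbf{Gap: the planar case with $g>1$.} Several parts of your proposal are correct: the easy direction, the Euclidean gluing of terminators (planar, and yielding $\pm g$-terminators with $g=\gcd(a,|b|)$), the case $g=1$, and the non-planar case with $g>1$ (take $g$ copies and merge the $g$ copies of each constant into one terminator). The gap is the planar case with $g>1$, which is the actual content of the lemma.

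There, constants cannot be simulated at all. For example, with $\Gamma=\{2\text{-in-}4,0\text{-in-}4,4\text{-in-}4\}$ every vertex, hence every gadget, has net flow divisible by $4$. So neither your ``closed cancelling gadget'' fallback nor a search for a flow of $\pm1$ can succeed, and the reduction has to transform the whole instance.

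Your concentric-layers routing does not do this planarly, for three reasons:
\begin{itemize}
\item A connected gadget drawn without crossings lies in a single face of each copy.
\item Every other copy lies entirely on one side of any cycle $C$ of a given copy.
\item Any gadget built from $\pm g$-terminators has net flow divisible by $g$.
\end{itemize}
So the constants on each side of $C$, together with the whole copies nested on that side, must already have net flow $\equiv 0 \pmod g$. This fails in general. Take $g=2$ and a $3$-connected $G$ with exactly two constants of the same type, at vertices on opposite sides of a separating cycle. Every copy has even total net flow but odd net flow on each side, so no number of copies and no routing works. A ``radial corridor'' would have to cross edges of $G$, i.e., it needs a crossover, which you do not have.

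\textbf{Missing idea: modify $G$ so that constants can migrate across edges.} This is where the hypothesis that $\Gamma$ consists of $S$-in-$j$ and duplicator types is used. With $\pm f$-terminators available (your $g$ would do), the paper proceeds as follows.
\begin{itemize}
\item First take $f$ copies of the instance, so the total net flow of the constants is $\equiv 0 \pmod f$.
\item Subdivide every edge with $f$ gadgets simulating $1$-in-$2$. These are built from a strict substring \texttt{10$^k$1} or \texttt{010} of some bitstring, fed with constants. If only \texttt{110} or \texttt{011} is available, the paper instead uses chains of $\{0,1\}$-in-$2$ gadgets threaded through two alternators, which any non-trivial duplicator simulates.
\item The constants hanging off these gadgets can be drawn on either side of the edge. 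A post-order pass over a spanning tree of the dual graph therefore makes the constant net flow of every face $\equiv 0 \pmod f$. The root face is fixed automatically because the total is $\equiv 0 \pmod f$.
\item Inside each face, an $f$-terminator converts each constant of one sign into $f-1$ constants of the other sign. These are then grouped $f$ at a time into terminators along the face.
\item A final case check shows that if none of these subdividing gadgets exists, then all types are bijunctive or gap-free, so both problems are in P.
\end{itemize}
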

\begin{proof}
    Suppose we can simulate a $-f_1$-terminator and a $+f_2$-terminator. Then we can simulate an $f$-terminator and a $-f$-terminator for $f := f_1f_2$. %

    Let $G$ be a (Planar) $\Gamma$-GO$_c$ instance. We perform a series of transformations to $G$ until we can replace all constants with $\pm f$-terminators as the last step. First, make $f$ copies of $G$, so that the total net flow from all the constants is a multiple of $f$. In the non-planar setting, we are done. The rest of this proof is for the planar setting.

    If all vertex types are bijunctive, both Planar $\Gamma$-GO and Planar $\Gamma$-GO$_c$ are in P by Lemma~\ref{lem:classic-easy-cases}. Similarly, if no $S$-in-$j$ has a gap of size $2$ or more and we do not have any non-trivial duplicators, then both problems are in P by Lemma~\ref{lem:classic-easy-cases} (all trivial duplicators are symmetric and do not have gaps of size $2$ or more). We will assume that neither of these easy cases holds.

    \begin{enumerate}
        \item We first consider the case that the bitstring of some $S$ contains a \emph{strict} substring of the form \texttt{10$^k$1} ($k \ge 1$). We connect two copies of $S$-in-$j$ with $k$ edges and add at least one constant to simulate a $1$-in-$2$.
        
        Subdivide each (non-constant) edge of the duplicated graph with $f$ copies of this $1$-in-$2$ (Figure~\ref{fig:sim-1-in-2}). Crucially, each original edge now has at least $f$ constants hanging off of it. We have the freedom to choose which of these constants are on the top side of the edge and which are on the bottom side. We will use this freedom to make sure that the constants in each face of the graph add up to a multiple of $f$.
    
        Let $T$ be a spanning tree of the dual graph. We now ``fix'' each face (vertex in $T$) so that the total net flow of all constants within the face is $0 \pmod f$. Process the faces in a post-order traversal of $T$ (starting at an arbitrary root). For each face $u$, consider the edge adjacent to its parent $p$ in $T$. By moving some of the (at least $f$) constants hanging off of this edge to the opposite face, we can make $u$ have net flow $0 \pmod f$ without disturbing the net flow of any other face except $p$. After repeating this process for all non-root faces, the root face automatically has $0 \pmod f$ net flow because the duplicated graph began with $0 \pmod f$ total net flow from constants.
        
        Now each face has some positive and negative constants hanging inside the face. Attach an $f$-terminator to all constants that point inward to the face to split them into $f-1$ outward-pointing constants. We now have only (a multiple of $f$ number of) outward-pointing constants, so we can simulate them in groups of $f$ using $-f$-terminators.
        
        \item If the bitstring of some $S$ contains a \emph{strict} substring \texttt{010}, feed constants to it to simulate a $1$-in-$2$. Subdivide each graph edge with $f$ of these. The face fixing procedure still works.

        \item If we have an alternator and the bitstring of some $S$ contains a \emph{strict} substring \texttt{110}, we can feed constants to it to simulate a $\{0,1\}$-in-$2$, and instead replace each graph edge with the construction in Figure~\ref{fig:sim-01-in-2}. This introduces two new faces, which are also included in the dual spanning tree.
        \item If we have an alternator and the bitstring of some $S$ contains a \emph{strict} substring \texttt{011}, we can do a symmetric construction.
    \end{enumerate}

    We claim that this casework is exhaustive. To avoid the first case, each of our vertex types must be either an interval $[a,b]$-in-$j$, a duplicator, or unsatisfiable. By assumption, not all of our vertex types are bijunctive, so we must have some $[a,b]$-in-$j$ with $j \ge 3$ (so the bitstring has at least 4 bits). Similarly, we must have a non-trivial duplicator, because intervals do not have any gaps of size $2$ or more. By Theorem~\ref{thm:dup-characterization}, every non-trivial duplicator simulates an alternator.

    To avoid cases 2-4, the bitstring of our interval cannot contain \texttt{010}, \texttt{110}, or \texttt{011}. This is only possible if it is \texttt{10..00}, \texttt{00..01}, or \texttt{11..11}. However, all of these are also bijunctive.
\end{proof}

\begin{figure}
\centering
\begin{subfigure}{\linewidth}
\centering
\begin{tikzpicture}[
  node2/.style={circle, draw, thick, circle, draw, minimum size=5mm, inner sep=0pt},
  node/.style={rectangle, draw, thick, align=center, minimum width=10mm, minimum height=10mm, inner sep=0pt},
  edge/.style={-latex, thick},
  arrow/.style={-latex, thick, blue}
]

\node[node2] (i1) at (-3,0) {$u$};
\node[node] (b1) at (-1.5,0) {$1$\\in $2$};
\node[node] (b2) at (0,0) {$1$\\in $2$};
\node[node] (b3) at (1.5,0) {$1$\\in $2$};
\node[node2] (i2) at (3,0) {$v$};

\foreach \i in {1,2,3} {
    \draw[arrow] ([xshift=-0.2cm]b\i.south) -- ([xshift=-0.2cm,yshift=-0.3cm]b\i.south);
    \draw[arrow] ([xshift=0.2cm,yshift=-0.27cm]b\i.south) -- ([xshift=0.2cm,yshift=0.03cm]b\i.south);
}

\draw[edge] (i1) -- (b1);
\draw[edge] (b1) -- (b2);
\draw[edge] (b2) -- (b3);
\draw[edge] (b3) -- (i2);

\node at (0,1.1) {$f$};
\draw [decorate,decoration={brace,amplitude=6pt}] (-2.2,0.6) -- (2.2,0.6);

\end{tikzpicture}
\subcaption{Subdividing an edge $(u,v)$ with $1$-in-$2$ vertices.}
\label{fig:sim-1-in-2}
\end{subfigure}

\begin{subfigure}{\linewidth}
\centering
\begin{tikzpicture}[
  node2/.style={circle, draw, thick, circle, draw, minimum size=5mm, inner sep=0pt},
  node/.style={rectangle, draw, thick, align=center, minimum width=10mm, minimum height=10mm, inner sep=0pt},
  sync/.style={rectangle, draw, thick, minimum width=6mm, minimum height=6mm, inner sep=0pt},
  edge/.style={-latex, thick},
  arrow/.style={-latex, thick, blue}
]

\node[node] (f1) at (-1.5,3) {$\{0,1\}$\\in $2$};
\node[node] (f2) at (0,3) {$\{0,1\}$\\in $2$};
\node[node] (f3) at (1.5,3) {$\{0,1\}$\\in $2$};

\node[node] (m1) at (-1.5,1.5) {$\{0,1\}$\\in $2$};
\node[node] (m2) at (0,1.5) {$\{0,1\}$\\in $2$};
\node[node] (m3) at (1.5,1.5) {$\{0,1\}$\\in $2$};

\node[node] (b1) at (-1.5,0) {$\{0,1\}$\\in $2$};
\node[node] (b2) at (0,0) {$\{0,1\}$\\in $2$};
\node[node] (b3) at (1.5,0) {$\{0,1\}$\\in $2$};

\foreach \i in {1,2,3} {
    \foreach \n in {f,m,b} {
        \draw[arrow] ([xshift=-0.2cm]\n\i.south) -- ([xshift=-0.2cm,yshift=-0.3cm]\n\i.south);
        \draw[arrow] ([xshift=0.2cm,yshift=-0.27cm]\n\i.south) -- ([xshift=0.2cm,yshift=0.03cm]\n\i.south);
    }
}

\node[node2] (i1) at (-4,0) {$u$};
\node[sync] (s1) at (-3,0.75) {$A$};
\node[node2] (i2) at (4,0) {$v$};
\node[sync] (s2) at (3,0.75) {$A$};

\draw[edge] (i1) -| ([xshift=-0.13cm]s1.south);
\draw[edge] ([xshift=-0.13cm]s1.north) |- (f1);
\draw[edge] (f1) -- (f2);
\draw[edge] (f2) -- (f3);
\draw[edge] (f3) -| ([xshift=0.13cm]s2.north);
\draw[edge] ([xshift=-0.13cm]s2.north) |- (m3);
\draw[edge] (m3) -- (m2);
\draw[edge] (m2) -- (m1);
\draw[edge] (m1) -| ([xshift=0.13cm]s1.north);
\draw[edge] ([xshift=0.13cm]s1.south) |- (b1);
\draw[edge] (b1) -- (b2);
\draw[edge] (b2) -- (b3);
\draw[edge] (b3) -| ([xshift=-0.13cm]s2.south);
\draw[edge] ([xshift=0.13cm]s2.south) |- (i2);

\node at (0,4.1) {$f$};
\draw [decorate,decoration={brace,amplitude=6pt}] (-2.2,3.6) -- (2.2,3.6);
\end{tikzpicture}

\subcaption{Subdividing an edge $(u,v)$ with $\{0,1\}$-in-$2$ vertices and two alternators ($A$). The two loops ensure that each $\{0,1\}$-in-$2$ has exactly one incoming edge.}
\label{fig:sim-01-in-2}
\end{subfigure}
\caption{Replace each edge $(u,v)$ with the gadgets as shown. Blue arrows denote constants fed into the $S$-in-$j$ to make it simulate $1$-in-$2$ or $\{0,1\}$-in-$2$ respectively.}
\end{figure}
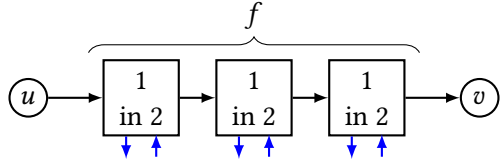
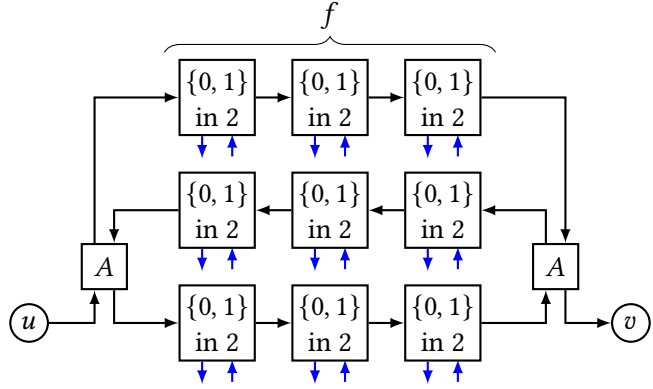

Now we consider the case when we only have either a positive or a negative terminator. We get two new polynomial cases.

\begin{definition}
    An $S$-in-$j$ vertex type is \defn{top-down closed} if $\forall i > j/2$, $i \in S \implies i-1 \in S$, that is, ``the top half is downward closed.''
    
    An $S$-in-$j$ vertex type is \defn{bottom-up closed} if $\forall i < j/2$, $i \in S \implies i+1 \in S$, that is, ``the bottom half is upward closed.''
\end{definition}

\begin{lemma}\label{lem:top-down-easy}
    Graph Orientation (planar or non-planar) with only top-down closed vertices, or only bottom-up closed vertices, is in P.
\end{lemma}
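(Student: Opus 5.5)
By reversing every edge, an $S$-in-$j$ vertex becomes $(j-S)$-in-$j$, and this swaps top-down closed with bottom-up closed. So it suffices to treat the case where every vertex is top-down closed. The plan is to show that such an instance has a solution if and only if a related instance, in which every vertex's allowed in-degrees form an \emph{interval}, has a solution. Interval-constrained orientation is a flow problem with lower and upper bounds, and is solvable in polynomial time.

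First I would pin down the shape of a top-down closed $S$. Split $S$ into $L = S\cap[0,j/2]$ and $U = S\cap(j/2,j]$. If $U\neq\emptyset$ with maximum $b$, repeatedly applying the closure $i\in S \Rightarrow i-1\in S$ (valid while $i>j/2$) shows that $S \supseteq [\lfloor j/2\rfloor, b]$. Accordingly, assign to each vertex $v$ of degree $j_v$ an interval $I_v=[l_v,u_v]$:
\begin{itemize}
\item if $U_v\neq\emptyset$, take $I_v=[\lfloor j_v/2\rfloor, b_v]$;
\item if $U_v=\emptyset$ and $L_v\neq\emptyset$, take $I_v=\{M_v\}$, where $M_v=\max L_v$;
\item if $S_v=\emptyset$, the instance is trivially unsatisfiable.
\end{itemize}
In every case $I_v\subseteq S_v$, so any orientation meeting the intervals solves the original instance. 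Moreover $l_v\le j_v/2$ always holds.

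For the converse, suppose $d$ is the in-degree vector of some solution. I would invoke the Hakimi/Frank (equivalently Hoffman circulation) criterion. An orientation with $l_v\le d'_v\le u_v$ exists if and only if, for every $X\subseteq V$,
\[ e(X)\le u(X) \quad\text{and}\quad l(X)\le |E|-e(V\setminus X), \]
where $e(X)$ counts edges with both endpoints in $X$.
\begin{itemize}
\item \emph{Upper condition.} By construction $u_v\ge d_v$ for every $v$: if $U_v\neq\emptyset$ then $d_v\le b_v$, and otherwise $d_v\in L_v$ so $d_v\le M_v$. The edges inside $X$ all point into $X$, so $e(X)\le d(X)\le u(X)$.
\item \emph{Lower condition.} We have
\[ l(X)\le \sum_{v\in X} j_v/2 = e(X)+\tfrac12|\delta(X)| \le |E|-e(V\setminus X). \]
\end{itemize}
Hence the interval instance is feasible, and the algorithm is simply: build the intervals, then solve the bounded flow problem.

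The step I expect to need the most care is the exact form of the interval orientation criterion. Specifically, I need to confirm that the two families of cut conditions together (and not only separately) suffice, which is the standard consequence of Hoffman's circulation theorem on the edge/vertex bipartite network. I would also check edge cases such as self-loops, which contribute one in-edge to their vertex and are handled by counting them in $e(X)$, and odd $j$, where $\lfloor j/2\rfloor<j/2$ only helps the lower condition.
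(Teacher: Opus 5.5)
Your proof is correct, but it takes a different route from the paper's.

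\textbf{The paper's route.} The paper argues directly with augmenting paths. It starts from a pseudo-orientation in which each vertex receives exactly $\max S_v$ incoming edge-ends, allowing undirected and bidirected edges. It first removes undirected edges by reversing paths that end at a bidirected edge, which preserves all in-degrees. If the search gets stuck, it exhibits a set $A$ with $e(A)>\sum_{v\in A}\max S_v$, certifying infeasibility. It then removes bidirected edges by reversing walks that lower the in-degree of a vertex whose in-degree exceeds its out-degree. This is the only place top-down closure is used.

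\textbf{Your route.} You replace each $S_v$ by the interval $[\min(\lfloor j_v/2\rfloor,\max S_v),\max S_v]\subseteq S_v$. You then use the bounded-in-degree orientation criterion to show that no solutions are lost. The upper family follows from any solution because $u_v=\max S_v$. The lower family holds unconditionally because $l_v\le j_v/2$.

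\textbf{The step you flagged.} It does go through. Take the network with arcs $s\to e$ bounded by $[1,1]$, $e\to$ each endpoint of $e$, $v\to t$ bounded by $[l_v,u_v]$, and an unbounded arc $t\to s$.
\begin{itemize}
\item Hoffman cuts containing neither $s$ nor $t$ reduce, after choosing the edge-nodes optimally, to $e(X)\le u(X)$.
\item Cuts containing both reduce to $l(V\setminus X)\le |E|-e(X)$.
\item Mixed cuts are vacuous.
\end{itemize}
So the two families are jointly sufficient, and self-loops are handled correctly by counting them once in $e(X)$.

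\textbf{What each approach buys.} Your route is shorter and more modular. It also yields a clean characterization: an instance with nonempty top-down closed sets is feasible iff $e(X)\le\sum_{v\in X}\max S_v$ for every $X\subseteq V$. The price is invoking Hoffman's theorem and a lower-bounded flow solver. The paper's argument is self-contained and purely combinatorial, and its failed search produces exactly that violated cut as the infeasibility certificate.
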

\begin{proof}
    We prove the top-down case; the bottom-up case is symmetric. We temporarily allow edges to be undirected (contributing in-flow to neither of its endpoints) or bidirected (contributing in-flow to both of its endpoints). First, arbitrarily assign edges around every $S$-in-$j$ vertex so that exactly $\max S$ edges point in. Then, we fix the undirected and bidirected edges.
    
    We first fix the undirected edges iteratively. Suppose there is an undirected edge. Perform a breadth-first search from all undirected edges in the graph formed by the reverse of the current orientation until we reach a bidirected edge or the search terminates. In the first case, we find a path $u_1 - u_2 \leftarrow \dots \leftarrow u_{k-1} \leftrightarrow u_k$. Reverse this path into $u_1 \rightarrow u_2 \rightarrow \dots \rightarrow u_{k-1} \rightarrow u_k$. This does not change any in-degrees and fixes one undirected and one bidirected edge.

    If instead the search terminates, we find a cut $V = A \sqcup B$ with only undirected and singly directed edges in $A$, and all edges between $A$ and $B$ are singly directed towards $B$. It follows that the instance is unsatisfiable because the total in-flow within $A$ is already the maximum possible but does not use up all the edges.

    Repeat the above until the instance is deemed unsatisfiable or there are no undirected edges left. Then, we claim the instance must be satisfiable. We can fix the bidirected edges one by one. Take a bidirected edge $u_1 \leftrightarrow u_2$, and take a maximal walk $u_1,u_2,\dots,u_k$ from it that does not revisit any edges. The last vertex $u_k$ must have an in-degree strictly larger than out-degree. Let $u_t \leftrightarrow u_{t+1}$ be the last bidirected edge on this walk, so we have a subwalk $u_t \leftrightarrow u_{t+1} \rightarrow \dots \rightarrow u_k$. Reverse this walk into $u_t \leftarrow u_{t+1} \leftarrow \dots \leftarrow u_k$. This fixes one bidirected edge. It also reduces the in-degree of $u_k$ by one, but because $u_k$ is top-down closed, the new in-degree is also valid.
\end{proof}

\begin{lemma}\label{lem:sim-other-terminator}
    Let $\Gamma$ be a set of $S$-in-$j$ vertex types which simulate a positive terminator. Then either all vertex types in $\Gamma$ are bottom-up closed (and $\Gamma$-GO is in P), or $\Gamma$ (planarly) simulates a negative terminator.
\end{lemma}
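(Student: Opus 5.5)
The plan is to assume some vertex type $S$-in-$j \in \Gamma$ is not bottom-up closed and use it, together with positive terminators, to build a negative terminator. If every type in $\Gamma$ is bottom-up closed, then $\Gamma$-GO is in P by Lemma~\ref{lem:top-down-easy}, so this assumption covers the remaining case.

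Non-bottom-up-closedness gives an $i < j/2$ with $i \in S$ but $i+1 \notin S$. A positive $f$-terminator supplies $f$ forced incoming edges. Taking $i$ disjoint copies, we get $if$ forced-in edges in total, but each copy has $f$ edges and we must use all of them. The cleaner route is to feed the vertex with a positive terminator directly, in the following steps.

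\textbf{Step 1 (exploit the gap).} Fix the first $i$ edges of the $S$-in-$j$ vertex to be incoming. We do this using $i$ copies of a positive $f$-terminator, attaching one edge of each terminator to the vertex. Each terminator still has $f-1$ dangling edges, all forced to be incoming at the terminator. Those dangling edges are external edges of the gadget that point \emph{outward} from the gadget's perspective once they are connected to something else.

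So a single $f$-terminator $T$ with one edge pinned into some vertex becomes a gadget whose other $f-1$ external edges point away from the outside world. Viewed from the outside, this is exactly a $-(f-1)$-terminator, provided $f \ge 2$ and the pinned edge is absorbed by a vertex that accepts an incoming edge in every case.

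\textbf{Step 2 (the absorbing vertex).} The job is to find a gadget that contains some edge forced to point into it in all solutions, while its other external edges are forced outward. Concretely, take the $S$-in-$j$ vertex and connect $j-i$ of its edges to positive terminators; those edges are then forced to point out of the $S$-vertex.

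The in-degree of the $S$-vertex is then at most $i$. To make the remaining $i$ edges all point in, we need $S \cap [0,i] = \{i\}$. This need not hold, so instead I would choose the configuration with the smallest in-degree available.

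Let $m = \min S$. If $m < j/2$ is the only problem, note that $S$ contains some element above $i$ only if it is not below. Take $a$ to be the largest element of $S$ with $a < j/2$ and $a+1 \notin S$. Route $j-a-\ldots$ edges into terminators, and choose the number $t$ of edges pinned to terminators so that the residual bitstring starts \texttt{1} at position $a$ followed by a \texttt{0}. This is essentially Observation~\ref{obs:substrings}, but performed with only one kind of constant.

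\textbf{Step 3 (combine with a terminator to get a net negative flow).} Adjust the counts so that the total flow the gadget receives is negative, which yields a negative terminator. Planarity is immediate, since everything is a tree of gadgets.

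\textbf{Main obstacle.} The hard part is the counting in Step 2: with only forced-out edges available, we can only delete bits from the top end of the bitstring. We therefore have to show that the failure of bottom-up closure, the condition $i < j/2$, and the forced-out edges coming in bundles of $f$ (using multiple copies) together force a configuration in which some edge's orientation is determined to point outward. I expect this to need a parity/size argument using $i < j/2$. The condition $i<j/2$ suggests pairing two copies of the vertex connected by edges, so that the excess out-degree is guaranteed, and then checking the residual bitstring.
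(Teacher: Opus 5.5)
Your proposal never actually produces a negative terminator, and the steps it does contain have the edge directions reversed. A positive $f$-terminator is satisfied only when all its edges point \emph{into} it. So attaching one of its edges to the $S$-in-$j$ vertex forces that edge to point \emph{out of} the $S$-vertex, and Step~1 (``fix the first $i$ edges to be incoming'') cannot be done with positive terminators.

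The Step~1 gadget also has the wrong sign. A terminator $T$ with one edge pinned elsewhere has its other $f-1$ external edges pointing into $T$, i.e.\ into the gadget, so it is a positive $(f-1)$-terminator, not a $-(f-1)$-terminator. Step~2 has the same problem: forcing the residual edges \emph{into} the $S$-vertex would again give only a positive terminator, whereas a negative terminator needs its external edges forced \emph{out}.

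As you note under ``Main obstacle'', attaching positive terminators only deletes bits from the top of the bitstring. The residual therefore always begins at position $0$ of the original bitstring. Choosing $t$ in Step~2 ``so that the residual bitstring starts with \texttt{1} at position $a$'' is impossible unless $a=0$, and the proposal stops without resolving this.

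The missing idea is self-loops, which delete bits from \emph{both} ends. A self-loop contributes exactly one incoming and one outgoing edge at its vertex, so it turns $S$-in-$j$ into $\{s-1 : s\in S,\ 1\le s\le j-1\}$-in-$(j-2)$.
\begin{itemize}
\item Add $i$ self-loops to a vertex with $i\in S$, $i+1\notin S$. This gives $S'$-in-$j'$ with $j'=j-2i$, $0\in S'$ and $1\notin S'$.
\item The hypothesis $i<j/2$ is used exactly here: it guarantees $j'\ge 1$, so positions $i$ and $i+1$ survive the trimming.
\item Take $f$ copies of this vertex, and send $j'-1$ edges of each copy into positive $f$-terminators. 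The $(j'-1)f$ forced edges split into $j'-1$ groups of $f$; grouping consecutive edges keeps the construction planar.
\item Each copy now has in-degree at most $1$. Since $1\notin S'$, its last edge must also point out.
\item The $f$ last edges therefore form a $-f$-terminator.
\end{itemize}
No parity argument or pairing of copies is needed. This is the paper's proof, and your treatment of the bottom-up closed case via the top-down/bottom-up easiness lemma is fine.
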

\begin{proof}
    If all vertices are bottom-up closed, $\Gamma$-GO is in P by Lemma~\ref{lem:top-down-easy}. Otherwise, $\Gamma$ has an $S$-in-$j$ that is not bottom-up closed ($\exists i < j/2 \colon i \in S, i+1 \not\in S$). Add $i$ self-loops to this vertex so that, without loss of generality, $0 \in S, 1 \not\in S$.
    
    Let the positive terminator we have be an $f$-terminator ($f > 0$). Take $f$ copies of this $S$-in-$j$ and connect $j-1$ edges of each one to a $(j-1)f$-terminator. Each $S$-in-$j$ now has $j-1$ outgoing edges, so the last edge must also point out, obtaining a $-f$-terminator.
\end{proof}

In summary, we obtain the following theorem:
\begin{theorem}\label{thm:S-in-j-term}
    Let $\Gamma$ be a set of $S$-in-$j$ vertex types that can simulate either a positive terminator or a negative terminator. Then we have one of the following cases:
    \begin{enumerate}
        \item all vertex types in $\Gamma$ are bottom-up closed, and $\Gamma$-GO is in P; or
        \item all vertex types in $\Gamma$ are top-down closed, and $\Gamma$-GO is in P; or
        \item (Planar) $\Gamma$-GO is equivalent to (Planar) $\Gamma$-GO$_c$, and the dichotomy in Theorem~\ref{thm:planar-S-in-j-and-const} (in the planar setting) or Theorem~\ref{thm:S-in-j-and-const} (in the non-planar setting) may be applied.
    \end{enumerate}
\end{theorem}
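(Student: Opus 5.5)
The plan is to assemble the theorem from Lemmas~\ref{lem:top-down-easy}, \ref{lem:sim-other-terminator}, and~\ref{lem:nonlocally-const}, splitting on which kind of terminator $\Gamma$ is assumed to simulate.

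\emph{Positive terminator.} Suppose $\Gamma$ simulates a positive terminator. By Lemma~\ref{lem:sim-other-terminator}, either every vertex type in $\Gamma$ is bottom-up closed, or $\Gamma$ planarly simulates a negative terminator as well. The first alternative is case~1, and it is in P by Lemma~\ref{lem:top-down-easy}.

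\emph{Negative terminator.} Suppose instead $\Gamma$ simulates a negative terminator. I will use the edge-reversal symmetry. Reversing every edge maps $S$-in-$j$ to $(j-S)$-in-$j$. It swaps positive and negative terminators, and it swaps bottom-up closed with top-down closed: $i<j/2$, $i\in S \Rightarrow i+1\in S$ becomes $j-i>j/2$, $j-i\in S' \Rightarrow j-i-1\in S'$.

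Applying Lemma~\ref{lem:sim-other-terminator} to the reversed family gives one of two outcomes. Either every type in $\Gamma$ is top-down closed, which is case~2 and in P by Lemma~\ref{lem:top-down-easy}. Or $\Gamma$ simulates a positive terminator, since simulation gadgets reverse along with the types.

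\emph{Both terminators.} In every remaining situation, $\Gamma$ simulates both a positive and a negative terminator. The hypothesis of Lemma~\ref{lem:nonlocally-const} is then met, because $\Gamma$ consists only of $S$-in-$j$ types, which is allowed there. That lemma gives equivalence of (Planar) $\Gamma$-GO with (Planar) $\Gamma$-GO$_c$, which is case~3. The dichotomies of Theorems~\ref{thm:S-in-j-and-const} and~\ref{thm:planar-S-in-j-and-const} then apply directly.

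\emph{Planarity.} The only point needing care is that the terminator simulation in Lemma~\ref{lem:sim-other-terminator} is planar, so case~3 holds in the planar setting too. In that gadget, $f$ copies of a vertex are each attached to a common terminator, and this is planar as long as each copy's $j-1$ edges are grouped consecutively around the terminator. Planarity also survives the reversal symmetry, since reversing edges does not change the underlying graph. I expect this bookkeeping, rather than any new construction, to be the main (mild) obstacle.
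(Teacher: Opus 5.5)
Your proposal is correct and follows the paper's proof. The paper sends the all-bottom-up-closed and all-top-down-closed cases to Lemma~\ref{lem:top-down-easy}, and otherwise obtains the opposite terminator via Lemma~\ref{lem:sim-other-terminator} and concludes with Lemma~\ref{lem:nonlocally-const}; your edge-reversal argument for the negative-terminator case just spells out the symmetry the paper leaves implicit.
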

\begin{proof}
    In the first two cases, $\Gamma$-GO is in P by Lemma~\ref{lem:top-down-easy}. Otherwise, we can simulate a terminator of the opposite type by Lemma~\ref{lem:sim-other-terminator}, and apply Lemma~\ref{lem:nonlocally-const}.
\end{proof}

Being able to simulate some terminator is a pretty weak condition. If for some $S$-in-$j$ vertex type, $S$ has a unique closest element to $j/2$ (as opposed to there being a tie between $j/2 - a$ and $j/2 + a$), then we can simulate a terminator simply by adding self-loops.

Unfortunately, there are still some cases where we cannot simulate any terminator, and the problem remains open. For example, we do not know whether Graph Orientation with $\{1,4\}$-in-$8$ and $\{0,2\}$-in-$2$ (which is equivalent to $\{1,4\}$-in-$8$SAT-E$2$) is in P or NP-hard. A complete dichotomy for Symmetric Graph Orientation would imply having a complete dichotomy for Symmetric $\Gamma$-SAT-E$2$ (without constants), which we do not have.

\subsection{$i$-in-$j$ Dichotomy}\label{subsec:i-in-j}

One case where we \emph{can} get a complete dichotomy without constants is if sets $S$ are singletons, i.e., $\Gamma$ is a set of $i$-in-$j$ vertex types. Graph Orientation with only $\Gamma$ is in P by Lemma~\ref{lem:classic-easy-cases}, so to make the problem not trivial, we also allow duplicators. By Corollary~\ref{cor:dups-equiv-dup}, without loss of generality, we only have a single non-trivial duplicator $D$. We want to characterize the complexity of $(\Gamma \cup \{D\})$-GO and Planar $(\Gamma \cup \{D\})$-GO.

This case is particularly useful for tiling problems; see Section~\ref{sec:tiling}. Any subregion that needs to be tiled with $k$-ominoes needs to have an area that is a multiple of $k$. This naturally leads to graph orientation vertices whose net flow is constant modulo $k$, such as $i$-in-$k$, $k$-equalizers, and synchronizers.

In the following subsections, we consider the cases where $D$ is an alternator, synchronizer, or neither (i.e., has a non-zero net flow) and prove a complete dichotomy in each case (Theorems~\ref{thm:i-in-j-and-dup},~\ref{thm:i-in-j-and-sync},~\ref{thm:i-in-j-and-alt}). Surprisingly, if $D$ is an alternator, Planar Graph Orientation turns out to always be in P. These results together fully characterize the complexity of Graph Orientation and Planar Graph Orientation for any set of $i$-in-$j$ vertex types and non-trivial duplicators.

\subsubsection{$i$-in-$j$ and Nonzero Net Flow Duplicators}

We first consider when $D$ is neither a synchronizer nor an alternator, i.e., it has a nonzero net flow. The main idea is that unless $i = j/2$, we can add self-loops to get a terminator, and then apply Theorem~\ref{thm:S-in-j-term}. If instead $i = j/2$, we can make a $2$-in-$4$ vertex.

\begin{theorem}\label{thm:i-in-j-and-dup}
    Let $\Gamma$ be a set of $i$-in-$j$ vertex types, and $D$ be a non-trivial duplicator with net flow $\pm f$ ($f > 0$). (Planar) Graph Orientation with $\Gamma$ and $D$ is NP-complete, except for the following cases, which are in P:
    \begin{enumerate}
        \item all vertex types in $\Gamma$ are bijunctive ($0$-in-$j$, $j$-in-$j$, or $1$-in-$2$);
        \item (Planar only) $f \ge 5$ and $i \in \{0, 1, j\}$ for all $i$-in-$j$ in $\Gamma$;
        \item (Planar only) $f \ge 5$ and $i \in \{0, j-1, j\}$ for all $i$-in-$j$ in $\Gamma$.
    \end{enumerate}
\end{theorem}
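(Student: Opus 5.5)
The proof splits into the easy cases and the hard cases; membership in NP is immediate. For the easy cases: case~1 is in P by Lemma~\ref{lem:classic-easy-cases}, since every duplicator is bijunctive. A duplicator has two complementary assignments, and the majority of three assignments drawn from a two-element set is one of them. For cases 2 and 3 in the planar setting, Theorem~\ref{thm:dup-characterization} (case 3) lets us replace $D$ by an $f$-equalizer, because an $f$-equalizer has net flow $\pm f$ and both simulate each other. Every $0$-in-$j$ and $j$-in-$j$ vertex is a terminator. If $f\ge 5$ and all $i\in\{0,1,j\}$, we are exactly in the setting of Lemma~\ref{lem:k>=5-easy} with $k=f$. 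Case~3 follows by reversing all edges.

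For hardness, suppose first that some $i$-in-$j\in\Gamma$ has $i\neq j/2$. Adding self-loops to it (each loop contributes exactly one in and one out) reduces it to a terminator. This gives $(i-\lfloor\cdot\rfloor)$-in-$(j-2m)$ with in-degree $0$ or full, depending on which side of $j/2$ the value $i$ lies. Now Theorem~\ref{thm:S-in-j-term} applies to $\Gamma\cup\{f\text{-equalizer}\}$. The equalizer is viewed as $\{0,f\}$-in-$f$, which is symmetric, so all vertex types are $S$-in-$j$. Bottom-up closure of all types fails as soon as some $i$-in-$j$ has $0<i<j/2$, and it also fails for the equalizer since $f\ge2$ (or $f=1$ needs separate care). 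Top-down closure fails symmetrically. So if $\Gamma$ is not all bijunctive, we should be in case 3 of that theorem. The remaining subcase, where all $i\in\{0,j\}\cup\{j/2\}$, is handled below. In case 3 we apply Theorem~\ref{thm:S-in-j-and-const} and Theorem~\ref{thm:planar-S-in-j-and-const} with $\Delta\Gamma\ni f$. A non-bijunctive $i$-in-$j$ with $0<i<j$ and $j\ge3$ is neither bijunctive nor affine, and it has a gap of size at least 2 (or $f\ge3$ does), so $\max\Delta\Gamma>2$. Hence the general problem is NP-hard. In the planar setting, the extra easy cases 4 and 5 of Theorem~\ref{thm:planar-S-in-j-and-const} require $\gcd\Delta\Gamma\ge5$ and all $i\in\{0,1,j\}$ (or $\{0,j-1,j\}$). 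When $\gcd\Delta\Gamma\ge5$, $\Delta\Gamma$ contains the gaps $i$ and $j-i$ from each $i$-in-$j$, which forces $f\ge5$. So these cases coincide exactly with our cases 2 and 3. The converse containment is the delicate bookkeeping: it requires $\gcd\Delta\Gamma$ to include $i$ and $j-i$, and I will check that $\{i\}$-in-$j$ contributes no differences (a singleton has no gaps). The gcd therefore comes only from $f$ and from terminators. I expect this to need rephrasing in terms of the constructed vertex types, for instance by first extracting $1$-in-$3$ or $2$-in-$4$ directly.

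The remaining case is that every non-bijunctive type is $(j/2)$-in-$j$ with $j\ge4$; here I will build $2$-in-$4$ together with a synchronizer. Adding $(j-4)/2$ self-loops yields $2$-in-$4$, and Lemma~\ref{lem:sim-synch} gives a synchronizer from $D$. Lemma~\ref{lem:base-hard-cases} case~1 then gives planar hardness. The main obstacle I expect is the planar matching of the easy cases in the terminator branch. The $f=1,2$ subtleties in the bottom-up/top-down closure checks are a smaller concern, and there I would fall back on constructing crossovers directly via Lemma~\ref{lem:crossover}: cases 4 and 5 from $1$-in-$3$ with a $3$- or $4$-equalizer, obtained from gcd multiples of $f$.
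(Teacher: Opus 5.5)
Your easy cases are fine. For $f\ge5$ the $f$-equalizer is non-trivial, hence equivalent to $D$, so Lemma~\ref{lem:k>=5-easy} applies directly; this is a slightly more direct route than the paper's, which reads cases 2 and 3 off Theorem~\ref{thm:planar-S-in-j-and-const}. The $(j/2)$-in-$j$ branch also matches the paper.

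There is a genuine gap in the hardness branch when $f\in\{1,2\}$. Replacing $D$ by an $f$-equalizer is harmless only if that equalizer is non-trivial. For $f=1,2$ it is $\{0,1\}$-in-$1$ or $\{0,2\}$-in-$2$, which is strictly weaker than $D$. Your claim that $\max\Delta\Gamma>2$ anyway (``a non-bijunctive $i$-in-$j$ \ldots\ has a gap of size at least 2'') is false. A singleton $\{i\}$ has no gaps, and neither do the terminators you get from self-loops, so $\Delta(\Gamma\cup\{f\text{-equalizer}\})=\{f\}$. For $f\le 2$ the substituted problem is then in P by Lemma~\ref{lem:classic-easy-cases}, case 3.

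A concrete counterexample: take $\Gamma=\{1\text{-in-}3\}$ and a degree-$3$ duplicator $D$ with net flow $\pm1$. The statement says this is NP-complete, and indeed $D$ simulates a $3$-equalizer, so Lemma~\ref{lem:base-hard-cases} case 4 applies. But $\{1\text{-in-}3,\{0,1\}\text{-in-}1\}$-GO is in P. So this is not a side issue in the closure checks, as you suggest; the reduction chain itself breaks.

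The paper's fix is to replace $D$ by the pair consisting of a $2f$-equalizer and a $3f$-equalizer. Both are symmetric and non-trivial, and together they are equivalent to $D$ by Theorem~\ref{thm:dup-characterization}. They give $\max\Delta\Gamma'=3f\ge3$ and $\gcd\Delta\Gamma'=f$. With that substitution, the rest of your plan goes through uniformly in $f$: Theorem~\ref{thm:S-in-j-term}, then the two dichotomies, with the corrected bookkeeping that the gcd comes only from the equalizers.

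Your fallback could also be made rigorous, but as written it lacks its key ingredient. Extracting $1$-in-$3$ or $2$-in-$3$ from an arbitrary $i$-in-$j$ needs constants. To get them, you would first build the opposite terminator: take a $kf$-equalizer simulated from $D$, with $kf\ge3$ larger than your terminator's degree, and feed some of its edges into your terminator. Then invoke Lemma~\ref{lem:nonlocally-const} with $D$ itself in the vertex set; that lemma allows duplicators. Only then can you pair $1$-in-$3$ or $2$-in-$3$ with a $3$- or $4$-equalizer, which is a multiple of $f$. Lemma~\ref{lem:base-hard-cases}, cases 4--5, then gives planar hardness directly, without routing through a crossover.
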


\begin{proof}
    The first case is in P by Lemma~\ref{lem:classic-easy-cases} (duplicators are bijunctive).
    
    Suppose that we have an $i$-in-$j$ vertex type that is not $0$-in-$j$, $j$-in-$j$, or $1$-in-$2$. If $i = j/2$, add self-loops to it to get $2$-in-$4$, which is hard by Lemma~\ref{lem:base-hard-cases} (note that $j$ has to be at least $4$, because if it were $2$ the vertex type would be bijunctive). Otherwise, WLOG $i > j/2$ and we can add self-loops to it to get a positive terminator.

    To apply Theorem~\ref{thm:S-in-j-term}, we want all our vertex types to be symmetric, but $D$ might not be. We could try replacing $D$ with an $f$-equalizer, which is symmetric. However, this fails for $f = 1$ and $f = 2$, because an $f$-equalizer would be trivial but $D$ is not. Instead, we replace $D$ with a pair of a $2f$-equalizer and a $3f$-equalizer. Let $\Gamma' := \Gamma \cup \{2f\text{-equalizer}, 3f\text{-equalizer}\}$. By Theorem~\ref{thm:dup-characterization}, this pair of vertex types is equivalent to $D$, so our problem is equivalent to (Planar) $\Gamma'$-GO. Also note that $\max \Delta \Gamma' = \max\{2f,3f\} \ge 3$ and $\gcd \Delta \Gamma' = \gcd\{2f,3f\} = f$.

    Now that every vertex type is symmetric, we can apply Theorem~\ref{thm:S-in-j-term}. Because $3f$-equalizer is neither top-down closed nor bottom-up closed, we can then apply Theorem~\ref{thm:S-in-j-and-const} (in the non-planar setting) or Theorem~\ref{thm:planar-S-in-j-and-const} (in the planar setting).

    Not all relations are bijunctive, not all are affine (note that an $i$-in-$j$ relation is affine if and only if it is bijunctive), and $\max \Delta \Gamma' \ge 3$. So in the non-planar setting, we get NP-hardness unconditionally. In the planar setting, the remaining two easy cases of Theorem~\ref{thm:planar-S-in-j-and-const} correspond exactly to easy cases 2 and 3 of this theorem.
\end{proof}

\subsubsection{$i$-in-$j$ and Synchronizers}

Next, we consider the case when $D$ is a synchronizer. Unlike the previous case, we cannot replace $D$ with symmetric vertex types and apply Theorem~\ref{thm:S-in-j-and-const}, because synchronizers are fundamentally not symmetric.

\begin{theorem}\label{thm:i-in-j-and-sync}
    Let $\Gamma$ be a set of $i$-in-$j$ vertex types. (Planar) Graph Orientation with $\Gamma$ and a synchronizer is NP-complete, except for the following cases, which are in P:
    \begin{enumerate}
        \item all vertex types in $\Gamma$ are bijunctive ($0$-in-$j$, $j$-in-$j$, or $1$-in-$2$);
        \item $i < j/2$ or $i = 1, j = 2$ for all $i$-in-$j$ in $\Gamma$;
        \item $i > j/2$ or $i = 1, j = 2$ for all $i$-in-$j$ in $\Gamma$;
        \item $i \in \{0, 1, j\}$ for all $i$-in-$j$ in $\Gamma$;
        \item $i \in \{0, j-1, j\}$ for all $i$-in-$j$ in $\Gamma$.
    \end{enumerate}
\end{theorem}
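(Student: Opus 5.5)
We split the argument into the polynomial cases and the hardness cases, treating the synchronizer $D$ as the net-flow-$0$ non-alternator duplicator it is.

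\textbf{Polynomial cases.} Case~1 follows from Lemma~\ref{lem:classic-easy-cases}, since the synchronizer is bijunctive.

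For case~2, we use Lemma~\ref{lem:top-down-easy} in a slightly generalized form. Every $i$-in-$j$ with $i<j/2$ is top-down closed, since no $i'>j/2$ lies in $S$. The $1$-in-$2$ vertex is also top-down closed. The synchronizer is not symmetric, so we rerun the augmenting-path proof of Lemma~\ref{lem:top-down-easy} with the synchronizer treated as two independent ``edge-through'' constraints. Since it forces one pair of edges to point the same way, and likewise the other pair, a path may pass straight through it. Alternatively, we observe that splitting each synchronizer into two $1$-in-$2$ vertices gives an equivalent instance: the two pairs are independent in the synchronizer's two assignments only up to a simultaneous flip, but the remaining vertices are symmetric in all edges. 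We need to check this carefully; if it fails, we fall back on the direct path argument. Case~3 is symmetric by reversing all edges.

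For cases~4 and~5, every vertex is a terminator, $1$-in-$j$, or a synchronizer. Each $1$-in-$j$ with $j\ge3$ and each synchronizer is a clause of an exact-cover type instance. The planar structure is not needed here, since the problem is not restricted to planar graphs. Synchronizers are just edge-pair identifications, so replacing each synchronizer by two edges (valid, since flipping all orientations is harmless for $1$-in-$j$ only when consistent) reduces the instance to $1$-in-$j$ vertices plus terminators. That problem has no gaps of size $\ge2$ and is in P by Lemma~\ref{lem:classic-easy-cases}. The key point in both easy directions is that a synchronizer, unlike an equalizer, can be replaced by plain edges. We must verify this replacement is sound: with edges $a,b$ same and $c,d$ same, the synchronizer allows exactly (in,out,in,out) and its reverse, which is precisely the behaviour of two pass-through edges $a\to b$, $c\to d$ coupled. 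The coupling is the obstacle, and we expect to argue that it can be decoupled in the easy cases via an exchange argument.

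\textbf{Hardness.} Otherwise $\Gamma$ contains a vertex with $i\ge j/2$ and one with $i\le j/2$, neither being $1$-in-$2$ or bijunctive. It also contains some $i$-in-$j$ with $2\le i\le j-2$, or else both a $1$-in-$j$ and a $(j'-1)$-in-$j'$ with $j,j'\ge3$. If some $i=j/2$, self-loops give $2$-in-$4$, which is hard by Lemma~\ref{lem:base-hard-cases}, Case~1. If $1$-in-$j$ and $(j'-1)$-in-$j'$ both appear, self-loops give $1$-in-$3$ and $2$-in-$3$, which is hard by Case~2. Otherwise we have $i>j/2$ and $i'<j'/2$ vertices, one of which has $2\le i\le j-2$. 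Self-loops yield a positive terminator $(2i-j)$ and a negative terminator, so by Lemma~\ref{lem:nonlocally-const} we effectively have constants. Constants then extract $1$-in-$3$ and $2$-in-$3$ from such a vertex, and we finish with Case~2.

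The main obstacle is the easy cases~2 and~3, specifically handling the asymmetric synchronizer inside the flow argument of Lemma~\ref{lem:top-down-easy}.
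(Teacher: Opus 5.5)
Your hardness direction follows the paper's route: self-loops to reach $2$-in-$4$ or terminators of both signs, Lemma~\ref{lem:nonlocally-const} to obtain constants, substrings \texttt{0100}/\texttt{0010} to get $1$-in-$3$ and $2$-in-$3$, then Lemma~\ref{lem:base-hard-cases}, Case~2. The gap is in the polynomial cases 2--5. You never establish them, and the replacement you propose is unsound.

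A synchronizer with pairs $(a,b)$ and $(c,d)$ allows only two assignments: $a,c$ in and $b,d$ out, or the reverse. Splitting it into two independent pass-throughs loses exactly this coupling. For example, a lone synchronizer with self-loops $a$--$c$ and $b$--$d$ is unsatisfiable, but its decoupled version is a $2$-cycle of $1$-in-$2$ vertices and is satisfiable. Here is a case~4 instance where your algorithm answers wrongly:
\begin{itemize}
\item feed $a$ from a $0$-in-$1$, send $c$ into a $1$-in-$1$, and join $b$ to $d$;
\item add a disjoint satisfiable component: a $1$-in-$3$ with one $0$-in-$1$ neighbor and two $1$-in-$1$ neighbors.
\end{itemize}
This instance is unsatisfiable, since $a$ points in while $c$ points out. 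Yet ``replace the synchronizer by edges, then apply Lemma~\ref{lem:classic-easy-cases}'' reports it satisfiable. Rerunning Lemma~\ref{lem:top-down-easy} with the synchronizer treated as two independent constraints inherits the same error. No exchange argument can rescue this: the coupling is exactly what makes Lemma~\ref{lem:base-hard-cases}, Case~2 hard, whereas $1$-in-$3$ and $2$-in-$3$ alone are gapless and polynomial.

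The missing idea is a one-line counting argument, with no flow machinery. Each edge contributes $+1$ to the net flow of one endpoint and $-1$ to the other, so in any satisfying orientation the net flows sum to $0$. In case~2 every $i$-in-$j$ has net flow $2i-j\le 0$, with equality only for $1$-in-$2$, and the synchronizer has net flow $0$. Hence any instance containing another vertex type is immediately unsatisfiable. Otherwise it uses only $1$-in-$2$ vertices and synchronizers, which are bijunctive. Case~3 is symmetric.

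For cases~4 and~5, first propagate the terminators $0$-in-$j$ and $j$-in-$j$. A fixed edge turns $1$-in-$j$ into $0$-in-$(j-1)$ or $1$-in-$(j-1)$, and it forces all four edges of a synchronizer. What remains is an instance of case~2 (respectively case~3).

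Separately, fix one slip in your hardness case analysis. A self-loop consumes one incoming and one outgoing edge, so from $1$-in-$j$ it yields $0$-in-$(j-2)$. It never produces $1$-in-$3$ unless $j=3$ already; a failing example is $\{1\text{-in-}4,3\text{-in-}4\}$, the set the paper uses for the tetromino reductions. Handle that case as in your third paragraph: one self-loop each on $1$-in-$j$ and $(j'-1)$-in-$j'$ gives terminators of both signs, then apply Lemma~\ref{lem:nonlocally-const} and take substrings.

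Your opening claim that $\Gamma$ has non-bijunctive types on both sides of $j/2$ is also false, e.g.\ for $\{2\text{-in-}5,5\text{-in-}5\}$. However, the terminator route does not need it.
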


\begin{proof}
    The first case is in P by Lemma~\ref{lem:classic-easy-cases}. In the second case, the in-degree of each vertex must be less than or equal to the out-degree (with equality only for duplicators). If there are any $i$-in-$j$ vertices for $j > 2$, the instance is unsatisfiable. Otherwise, all vertices are bijunctive and we can solve the problem in polynomial time. Similarly for the third case. In the fourth and fifth cases, we can first eliminate all constants and then reduce to the second or third case. Now assume we do not have any of the easy cases.

    Consider some $i$-in-$j$ such that $i \ge j/2$ and $(i,j) \ne (1,2)$. Add self-loops until we get $2$-in-$4$ or a positive terminator. Similarly, we can get $2$-in-$4$ or a negative terminator.

    If we got $2$-in-$4$, the problem is hard by Lemma~\ref{lem:base-hard-cases}. Otherwise, we have both a positive and a negative terminator. We also have a non-bijunctive symmetric vertex type and a synchronizer. Then, by Lemma~\ref{lem:nonlocally-const}, we can assume that we also have constants.

    Consider some $i$-in-$j$ vertex type such that $i \notin \{0, 1, j\}$. Then its bitstring contains \texttt{0010}, and we can simulate $2$-in-$3$. Similarly, we can simulate $1$-in-$3$, which gives us NP-hardness by Lemma~\ref{lem:base-hard-cases}.
\end{proof}

\subsubsection{$i$-in-$j$ and Alternators}

Finally, we consider the case when $D$ is an alternator. This is only meaningful in the planar setting, because in the non-planar setting alternators are the same as synchronizers.

We show that Planar Graph Orientation with $i$-in-$j$ and alternators is always in P, in surprising contrast with the previous subsections. We first prove a lemma whose construction will be used in the proof.

\begin{lemma}\label{lem:alternators-alwayssat}
    Planar Graph Orientation with only alternators and $(j/2)$-in-$j$ is always satisfiable.
\end{lemma}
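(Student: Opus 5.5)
We plan to construct a satisfying orientation directly from the planar embedding. Fix a plane embedding of the instance $G$. Every vertex has even degree, because alternators have even degree by definition and $(j/2)$-in-$j$ forces $j$ even. Since every vertex has even degree, each connected component of $G$ has an Euler tour. Instead of an arbitrary Euler tour, we will use the structure of the embedding to build a family of closed walks in which every vertex is ``crossed straight through'' or ``turned'' in a way that fits its type. We will then orient each walk consistently along its traversal direction.

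At each vertex we pair up its incident edges (half-edges, with a self-loop counting twice) into transitions, chosen by type. At a $(j/2)$-in-$j$ vertex any pairing works: a walk that passes through via a transition contributes exactly one incoming and one outgoing edge, so the in-degree is $j/2$. At an alternator we need every transition to pair \emph{cyclically adjacent} edges, so that one is in and the next is out. The plan is to pair $e_1$ with $e_2$, $e_3$ with $e_4$, and so on, around the rotation at the vertex. However, adjacent pairing alone only forces each pair to be one-in-one-out. It does not force the global alternating pattern: the pairs $(e_1,e_2)$ and $(e_3,e_4)$ could be oriented in--out and out--in independently. This is the main obstacle.

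To handle it, we use the following key idea. With the \emph{non-crossing} (adjacent) transition system everywhere, the closed walks (``straight-ahead'' are replaced by ``turning'' walks) form a family of non-self-crossing curves in the plane, essentially face-boundary-like cycles of a medial structure. We want to choose orientations of these curves so that at each alternator the in/out pattern alternates. Consider the curves obtained by taking at every alternator the adjacent pairing and at $(j/2)$-in-$j$ vertices also the adjacent pairing. These curves touch without crossing, and such a family can be 2-colored region-wise. Concretely, the faces of the plane complement get a checkerboard-type structure, and we orient each curve so that a fixed color class lies on its left. Near an alternator, the sectors between consecutive edges alternate between lying inside a curve's turn and between two turns. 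A fixed-left-side rule then makes consecutive edges alternate in/out.

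So the concrete steps are as follows. First, define the turning transition system at every vertex. Second, show that the resulting closed curves can be perturbed into a disjoint family of Jordan curves near each vertex, each vertex being a point where several curves kiss. Third, 2-color the complement regions of this curve family (possible since the curves are disjoint Jordan curves, via nesting parity). Fourth, orient every curve with color~1 on its left. Fifth, verify locally: at a degree-$2d$ vertex the $2d$ corner sectors alternate between ``inside a turn'' and ``between turns'', hence alternate in color. An edge then points in exactly when the color-1 sector is on a prescribed side of it, which yields alternation at alternators and balanced in/out at $(j/2)$-in-$j$ vertices. The step we expect to require the most care is this local verification: the regions in the sectors must alternate colors consistently, and this must hold even with self-loops and multiedges.
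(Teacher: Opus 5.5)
Your proposal is correct and produces the same orientation as the paper, which 2-colors the faces directly (every degree is even, so the dual is bipartite) and orients each edge with the black face on its left. Splitting each vertex into non-crossing corner turns and 2-coloring the complement of the resulting disjoint Jordan curves by nesting parity is just a self-contained proof of that face 2-colorability, so the Euler-tour and transition-system framing can be dropped.
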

\begin{proof}
    Every vertex has even degree, so the dual graph is bipartite. Take a 2-coloring of the faces, say with black and white, and orient every edge so that the black face is on its left. The edges around a vertex alternate as required.
\end{proof}

\begin{theorem}\label{thm:i-in-j-and-alt}
    Let $\Gamma$ be a set of $i$-in-$j$ vertex types. Planar Graph Orientation with $\Gamma$ and alternators is in P.
\end{theorem}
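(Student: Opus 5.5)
The plan is to reduce, after some local simplifications, to a problem that the polynomial cases already available (Lemma~\ref{lem:classic-easy-cases} and the extra planar case of Theorem~\ref{thm:planar-S-in-j-and-const}) can solve. Two observations drive it. First, by Lemma~\ref{lem:sim-alternators}, a connected planar network of alternators behaves as a single alternator. Second, an alternator of degree $2d$ is exactly the conjunction of ``consecutive edges around the vertex point opposite ways''.

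\textbf{Step 1: normalize the alternators.} Repeatedly contract every edge between two alternators into one alternator; this is the gluing operation from the proof of Lemma~\ref{lem:sim-alternators}. Treat self-loops on an alternator as follows: a loop joining two cyclically consecutive edges is always consistent and can be deleted; any other loop either forces a contradiction or can be absorbed by the same parity bookkeeping. Afterwards every alternator is adjacent only to $i$-in-$j$ vertices. Next, if two cyclically consecutive edges of an alternator $A$ both go to the same $i$-in-$j$ vertex $v$, then exactly one of them points into $v$ in every solution. Delete both edges, turning $v$ into $(i-1)$-in-$(j-2)$ and keeping $A$ an alternator. Similarly, merge adjacent $1$-in-$j$ style pairs wherever that yields another $i$-in-$j$ type, and discard vertices that become terminators by fixing their edges. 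Each rule shrinks the instance and preserves planarity, so these rules are applied exhaustively in polynomial time.

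\textbf{Step 2: parity structure from the face $2$-colouring.} Now each alternator $A$ carries one Boolean state $x_A$, and the edge in position $t$ around $A$ points out of $A$ iff $x_A \oplus (t \bmod 2)=1$. I would use the construction of Lemma~\ref{lem:alternators-alwayssat} as a reference: locally around each alternator, orient edges so the ``black'' corner is on the left. Measured relative to this reference, the effect of an alternator on its neighbours is self-complementary: flipping $x_A$ flips all of its edges. Each $i$-in-$j$ vertex then becomes a constraint on the number of flipped incident edges. The goal is to show that the resulting constraints fall into the planar easy class of Theorem~\ref{thm:planar-schaefer}, namely self-complementary relations whose $dR$ is an even $\Delta$-matroid. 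Alternators are self-complementary duplicators. The difficulty is that $i$-in-$j$ is self-complementary only when $i=j/2$.

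\textbf{Step 3 (main obstacle): the vertices with $i\neq j/2$.} I expect the core of the proof to be showing that vertices with $i \ne j/2$ cannot survive Step~1 in a satisfiable nontrivial instance, or else are forced. I would try a net-flow and Euler-formula counting argument, in the style of the discharging proof in Section~\ref{sec:k>=5}. Alternators contribute net flow $0$, so the total excess $\sum_v (2i_v - j_v)$ must be absorbed entirely by edges among $i$-in-$j$ vertices. Planarity, together with the reduction rules, should then force a locally reducible configuration: either an edge whose orientation is determined, or a pair of consecutive alternator edges to a common vertex. Each such configuration fixes or deletes edges, and we recurse. If this works, every remaining $i$-in-$j$ vertex has $i=j/2$. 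By Lemma~\ref{lem:alternators-alwayssat} such an instance is always satisfiable, so the algorithm reduces to applying the forced-edge rules and answering ``yes'' unless a contradiction appears. If the counting argument falls short, the fallback is to verify directly that each reduced $i$-in-$j$ vertex type, viewed through the reference orientation and paired with its mirror copy, gives an even $\Delta$-matroid, and then to invoke Theorem~\ref{thm:planar-schaefer}.
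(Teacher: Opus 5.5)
There is a genuine gap in Step~3, and it is the core of your argument. You claim that after the local rules, every surviving vertex with $i\neq j/2$ is forced, so you end with only balanced vertices. This is false. Take the $3$-cube graph, which is planar, cubic and bipartite. Make one colour class $1$-in-$3$ and the other $2$-in-$3$. The edges directed from a $2$-in-$3$ vertex into a $1$-in-$3$ vertex form a perfect matching, and every perfect matching arises this way. The cube's three parallel classes are perfect matchings, so every edge is oriented each way in some solution, and no edge is forced. None of your Step~1 rules applies here: merging a $1$-in-$3$ with an adjacent $2$-in-$3$ does not give a symmetric type. Adding a disjoint component containing alternators changes nothing.

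So no Euler/discharging argument can produce the forced configuration you hope for. The analogy with Lemma~\ref{lem:k>=5-easy} breaks because that argument relied on every equalizer having fixed degree $k\ge5$, which together with Euler's formula forces local structure. Here alternators may have degree $4$ and $i$-in-$j$ vertices degree $3$. The identity $\sum_v(2i_v-j_v)=0$ is a global count that forces nothing locally.

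The fallback does not rescue the plan. Case~7 of Theorem~\ref{thm:planar-schaefer} requires \emph{every} relation to be self-complementary, and $i$-in-$j$ with $i\neq j/2$ is not. Replacing $R$ by $R(x)\vee R(\bar x)$ changes the problem. Choosing consistently between $R$ and its mirror needs a switch variable shared by all clauses, which destroys planarity. Step~2 has a smaller problem too: its global reference orientation needs a face $2$-colouring, which exists only when all degrees are even.

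The missing idea is how planarity should actually be used, which the paper does via an LP relaxation rounded through the dual graph.
\begin{itemize}
\item \textbf{The LP.} Take variables $x_{uv}=-x_{vu}\in[-1,1]$. Each $i$-in-$j$ vertex $u$ gets $\sum_v x_{uv}=j-2i$. Each alternator gets $x_{uv_1}=-x_{uv_2}=x_{uv_3}=\cdots$.
\item \textbf{The dual potential.} Put weight $x_{uv}$ on the dual edge crossing $uv$. Every dual face sums to an integer, namely a primal vertex's net flow. Hence there is a potential $\pi$ from faces to $\mathbb{R}/\mathbb{Z}$ with $\pi_{v^*}-\pi_{u^*}\equiv x_{uv}$. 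Around an alternator, the faces alternate between two potential values.
\item \textbf{Rounding when $\pi$ takes two values.} Moving along $y_{uv}=[\pi_{v^*}=\pi_0]-[\pi_{u^*}=\pi_0]$ preserves every vertex sum, since the sum telescopes around the vertex. It also preserves every alternator constraint and is zero on tight edges. So it can be used to make one more edge tight.
\item \textbf{Rounding when $\pi$ is constant.} Then $x\in\{-1,0,1\}$. The zero edges form an instance with only balanced vertices and alternators, which Lemma~\ref{lem:alternators-alwayssat} solves.
\end{itemize}
This handles unbalanced vertices with many solutions, which a forced-edge strategy cannot.
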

\begin{proof}
    \textbf{Algorithm.} We use linear programming. For every vertex $u$ and neighbor $v$, create a variable $x_{uv}$. The meaning of this variable is that $x_{uv} = 1$ if $u$ is directed towards $v$ and $x_{uv} = -1$ if $v$ is directed towards $u$ in a satisfying orientation. The constraints are:
    \begin{itemize}
        \item For each edge $(u,v)$, $x_{uv} = -x_{vu}$ and $-1 \le x_{uv} \le 1$.
        \item For a $i$-in-$j$ vertex $u$ with neighbors $v_1,\dots,v_j$, $x_{uv_1}+x_{uv_2}+\dots+x_{uv_j} = (j-i)-i$.
        \item For a $2k$-alternator vertex $u$ with neighbors $v_1,\dots,v_{2k}$ in cyclic order, $x_{uv_1} = -x_{uv_2} = x_{uv_3} = \dots = -x_{uv_{2k}}$.
    \end{itemize}
    The key claim is that this linear program is feasible if and only if there exists a satisfying graph orientation. The feasibility of this linear program can be checked in polynomial time.

    \textbf{Proof of Correctness.} Any satisfying orientation corresponds to a solution with $x_{uv} \in \{-1,1\}$ for all edges $(u,v)$, so if the linear program is infeasible, there is no satisfying orientation. Otherwise, we must prove that if there exists a feasible solution, then there exists a solution with $x_{uv} \in \{-1,1\}$ for all edges $(u,v)$. Call an edge $(u,v)$ \emph{slack} if $-1 < x_{uv} < 1$.

    Let $\{x_{uv}\}$ be a solution with at least one slack edge. We give a procedure that modifies the solution to decrease the number of slack edges. Correctness follows by repeating this procedure.

    Consider the dual graph $G^*$. For an edge $(u,v)$, let face $u^*$ be to the left of $\overrightarrow{uv}$ and face $v^*$ be to the right of $\overrightarrow{uv}$ (technically ambiguous notation but it will be clear from context). For each edge $(u,v)$, draw an edge from $u^*$ to $v^*$ in $G^*$ weighted by $x_{uv}$. The key observation is:

    \textit{Observation.} Every cycle in $G^*$ has weights summing to $0 \pmod 1$.

    \textit{Proof of Observation.} It suffices to consider the faces of $G^*$ because faces form a cycle basis. The sum of the weights along a face of $G^*$ corresponds to the net flow into a vertex of the primal graph, which is an integer as required.

    This implies that there exists a potential function $\pi$ mapping faces to elements of $\mathbb{R}/\mathbb{Z}$ such that for each edge $(u,v)$, $\pi_{v^*}-\pi_{u^*} \equiv x_{uv} \pmod 1$. By construction, if $f_1,\dots,f_{2k}$ are the faces around an alternator vertex in cyclic order, then we also have $\pi_{f_1} = \pi_{f_3} = \dots = \pi_{f_{2k-1}}$ and $\pi_{f_2} = \pi_{f_4} = \dots = \pi_{f_{2k}}$.

    Because $\{x_{uv}\}$ has a slack edge, either there are at least two distinct values in the range of $\pi$ or $x_{uv} \in \{-1,0,1\}$ for all edges $(u,v)$. In the latter case, consider the subgraph formed by edges $(u,v)$ with $x_{uv} = 0$. Replace these values with a satisfying orientation given by Lemma~\ref{lem:alternators-alwayssat}.

    Otherwise, let $\pi_0$ be any value in the range of $\pi$. Define a nonzero vector $\{y_{uv}\}$ by $y_{uv} = [\pi_{v^*}=\pi_0] - [\pi_{u^*}=\pi_0]$ where $[\cdot]$ is the Iverson bracket. For all $\epsilon \in \mathbb{R}$, note that $x+\epsilon y$ satisfies the flow constraints at every vertex. Pick $\epsilon$ so that $x+\epsilon y$ hits the boundary of one of the $-1 \le x_{uv}+\epsilon y_{uv} \le 1$ constraints. Note that $y_{uv} = 0$ for all non-slack edges $(u,v)$, so replacing $x$ with $x+\epsilon y$ strictly decreases the number of slack edges as required.
\end{proof}

\section{KPlumber}
\label{sec:kplumber}

KPlumber is a single-player computer game included in some standard Linux distributions. The game is played on a rectangular grid of cells, where each cell contains a rotatable tile that depicts some segments of a pipe. The possible tiles are shown in Figure~\ref{fig:kplumber}. Each of the four edges of a tile is either empty or contains an unterminated pipe end. The goal of the game is to rotate the tiles so that all the unterminated ends of the pipes are paired along interior edges, forming a closed pipe system. The system may contain one or more connected components of pipes. From a complexity standpoint, the natural decision problem is whether, given a grid of tile types, there exists a way to rotate the tiles to solve the puzzle.

\begin{figure}[h!]
\centering
\begin{tikzpicture}[x=1cm, y=1cm, scale=1.0, transform shape]
\def\pathup{\draw[line width=0.25cm, color=violet] (0.5,0.375) -- (0.5,1);}
\def\pathdown{\draw[line width=0.25cm, color=violet] (0.5,0.625) -- (0.5,0);}
\def\pathleft{\draw[line width=0.25cm, color=violet] (0.625,0.5) -- (0,0.5);}
\def\pathright{\draw[line width=0.25cm, color=violet] (0.375,0.5) -- (1,0.5);}

\newcommand{\tile}[2]{
  \begin{scope}[shift={(#1)}]
    \draw[fill=yellow] (0,0) rectangle (1,1);
    #2
    \draw[thick] (0,0) rectangle (1,1);
  \end{scope}
}

\node[anchor=west] at (-2.8,0.5) {Empty tile (\texttt{0}):};
\node[anchor=west] at (-2.8,-0.7) {Dead-end tile (\texttt{D}):};
\node[anchor=west] at (-2.8,-1.9) {Straight-line tile (\texttt{S}):};
\node[anchor=west] at (-2.8,-3.1) {Curve tile (\texttt{C}):};
\node[anchor=west] at (-2.8,-4.3) {T-join tile (\texttt{T}):};
\node[anchor=west] at (-2.8,-5.5) {X-join tile (\texttt{X}):};

\tile{(2,0)}{}

\tile{(2,-1.2)}{\pathup}
\tile{(3.2,-1.2)}{\pathdown}
\tile{(4.4,-1.2)}{\pathleft}
\tile{(5.6,-1.2)}{\pathright}

\tile{(2,-2.4)}{\pathup \pathdown}
\tile{(3.2,-2.4)}{\pathleft \pathright}

\tile{(2,-3.6)}{\pathup \pathright}
\tile{(3.2,-3.6)}{\pathright \pathdown}
\tile{(4.4,-3.6)}{\pathdown \pathleft}
\tile{(5.6,-3.6)}{\pathleft \pathup}

\tile{(2,-4.8)}{\pathup \pathleft \pathright}
\tile{(3.2,-4.8)}{\pathright \pathdown \pathup}
\tile{(4.4,-4.8)}{\pathdown \pathleft \pathright}
\tile{(5.6,-4.8)}{\pathleft \pathup \pathdown}

\tile{(2,-6.0)}{\pathup \pathdown \pathleft \pathright}

\end{tikzpicture}
\caption{The six possible tile types in KPlumber and all their rotations.
  Based on \cite[Figure~1]{kplumber}.
}
\label{fig:kplumber}
\end{figure}

In 2004, Král et al.~\cite{kplumber} studied the complexity of KPlumber when the input is restricted to various subsets of the tiles in Figure~\ref{fig:kplumber}. They provided complexity classifications (either P or NP-complete) for most subsets of tiles, but there was one family of cases left unresolved. All cases in this family were shown to be polynomially equivalent to KPlumber with \texttt{0}, \texttt{D}, \texttt{S}, \texttt{T}, and \texttt{X} tiles (i.e., only the curve tile \texttt{C} is excluded). Here, we resolve this open question by showing that this family is (perhaps surprisingly) in P, completing the classification.

\begin{definition}
    \defn{Planar Bipartite SAT-E$2$} is a family of SAT problems where each variable appears in exactly two clauses, and the graph where clauses are vertices and variables are edges (connecting the two clauses each variable appears in) is planar and bipartite.

    In the context of Planar Bipartite SAT-E$2$, an \defn{$i$-in-$j$ clause} is satisfied if and only if exactly $i$ of its $j$ variables are true. An \defn{alternator clause} is satisfied if and only if its variables alternate between true and false in the planar embedding.
\end{definition}

\begin{corollary}\label{cor:sat-e2-easy}
    Planar Bipartite SAT-E$2$ with $i$-in-$j$ clauses and alternators is in P.
\end{corollary}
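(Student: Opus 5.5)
The plan is to reduce this SAT problem to Planar Graph Orientation with $i$-in-$j$ vertices and alternators, and then invoke Theorem~\ref{thm:i-in-j-and-alt}. The natural translation makes clauses into vertices and variables into edges. A variable's truth value, however, is not antisymmetric across its two clauses the way an edge orientation is. A true variable counts as ``true'' in both of its clauses, whereas an edge pointing into one endpoint points out of the other. Bipartiteness is what resolves this mismatch.

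First, fix a bipartition $V = A \sqcup B$ of the clause graph, which exists by assumption. Encode each variable $x$, appearing in clauses $a \in A$ and $b \in B$, as the edge $ab$. Orient the edge toward $a$ if $x$ is true and toward $b$ if $x$ is false. This is a bijection between truth assignments and orientations.

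Next, translate each clause into a vertex type.
\begin{itemize}
\item An $i$-in-$j$ clause at $a \in A$ requires exactly $i$ true variables, which means exactly $i$ incoming edges. So $a$ becomes an $i$-in-$j$ vertex.
\item An $i$-in-$j$ clause at $b \in B$ requires exactly $i$ true variables, which means exactly $i$ outgoing edges. So $b$ becomes a $(j-i)$-in-$j$ vertex, which is still a symmetric single-value type.
\item An alternator clause requires the variables to alternate between true and false in the cyclic order of the embedding. At either side of the bipartition this is exactly the condition that the incident edges alternate between in and out. So it becomes an alternator vertex; complementing all edges preserves the alternator type.
\end{itemize}

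The graph, its embedding, and hence its planarity are unchanged. Satisfying assignments therefore correspond exactly to satisfying orientations of a Planar Graph Orientation instance whose vertex types are $i$-in-$j$ types and alternators. Theorem~\ref{thm:i-in-j-and-alt} then decides this instance in polynomial time.

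There are two minor points to check. Self-loops cannot arise, since a bipartite graph has none. Trivial alternators (degree $2$) are just $1$-in-$2$, which is already an $i$-in-$j$ type, so they are covered either way. The only real subtlety is the orientation convention: one must verify that complementation sends $i$-in-$j$ to $(j-i)$-in-$j$ and alternators to alternators, so the translated instance stays within the class handled by Theorem~\ref{thm:i-in-j-and-alt}. The remaining steps are bookkeeping.
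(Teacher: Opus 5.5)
Your proposal is correct and is essentially the paper's own proof. The paper likewise 2-colors the clauses and directs true edges from black to white and false edges from white to black; your $A$ plays the role of white and your $B$ of black. As in your argument, white $i$-in-$j$ clauses become $i$-in-$j$ vertices, black ones become $(j-i)$-in-$j$ vertices, alternators remain alternators, and Theorem~\ref{thm:i-in-j-and-alt} finishes the argument.
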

\begin{proof}
    $2$-color the clauses with black and white. Direct every true edge from black to white, and every false edge from white to black. White $i$-in-$j$ SAT clauses become $i$-in-$j$ GO vertices, black $i$-in-$j$ SAT clauses become $(j-i)$-in-$j$ GO vertices, and alternator clauses become alternator GO vertices, so the problem is equivalent to Theorem~\ref{thm:i-in-j-and-alt}.
\end{proof}

\begin{corollary}\label{cor:kplumber-easy}
    KPlumber without the curve tile \texttt{C} is in P.%
\end{corollary}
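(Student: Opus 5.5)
The plan is to encode a KPlumber instance without \texttt{C} tiles as an instance of Planar Bipartite SAT-E$2$ with $i$-in-$j$ clauses and alternators, and then invoke Corollary~\ref{cor:sat-e2-easy}. Each \emph{interior grid edge} between two adjacent cells becomes a Boolean variable, meaning ``a pipe crosses this edge''. Boundary edges of the grid must carry no pipe, so they are fixed to false and simply dropped from the incident clauses; this turns an $i$-in-$j$ clause into an $i$-in-$j'$ clause with $j'\le j$. Each tile becomes a clause on its incident (non-boundary) edge variables. Every variable then appears in exactly the two clauses of the cells it separates. The clause graph is the grid graph itself, which is planar and bipartite (checkerboard coloring).

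Next I translate each tile type into a clause; this step needs the exact absence of \texttt{C}.
\begin{itemize}
\item Empty tile: $0$-in-$4$.
\item Dead-end tile: $1$-in-$4$.
\item T-join: $3$-in-$4$.
\item X-join: $4$-in-$4$.
\end{itemize}
These four are correct because every rotation of the tile realizes every subset of the right size: the \texttt{D} tile can point to any one side and the \texttt{T} tile can omit any one side. The straight tile \texttt{S} is the crucial one. Its rotations are exactly the patterns true--false--true--false around the cell, i.e., the two alternating assignments, so it is precisely a degree-$4$ alternator clause. Without \texttt{C} we never need the ``two adjacent sides'' pattern, which is exactly what would break this symmetric/alternator description.

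The main obstacle is boundary cells. An alternator with some variables forced false by the boundary is no longer a pure alternator. I would handle this by preprocessing: on a boundary cell an \texttt{S} tile is forced to run parallel to the boundary, so each such edge variable is fixed, and I propagate these forced values. Alternatively, I can keep the boundary edges as variables and attach constant-false clauses, i.e., a $0$-in-$1$ clause in a new outer cell. This must be done while preserving bipartiteness and planarity, which holds by placing a separate degree-$1$ dummy clause on each boundary edge; such a clause lies in a new face and has the opposite color automatically, since it is a leaf. A $0$-in-$1$ clause is itself an $i$-in-$j$ clause, so every clause stays in the allowed family. With all clauses of the allowed types, Corollary~\ref{cor:sat-e2-easy} gives a polynomial-time algorithm. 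Correctness of the encoding (a rotation assignment exists iff the formula is satisfiable) follows directly from the per-tile characterizations above.
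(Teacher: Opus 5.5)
Your proposal is correct and follows the paper's proof. Both use one clause per cell and one variable per shared grid edge, encode \texttt{0}, \texttt{D}, \texttt{T}, \texttt{X} as $i$-in-$4$ clauses with $i=0,1,3,4$ and \texttt{S} as a degree-$4$ alternator, and then apply Corollary~\ref{cor:sat-e2-easy} to the planar bipartite grid graph.

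Your explicit treatment of boundary cells goes a step further. Attaching a $0$-in-$1$ leaf clause to each boundary edge keeps every tile clause at degree $4$, so border \texttt{S} tiles stay genuine alternators. The paper's one-line argument passes over this detail, and your fix is valid because $0$-in-$1$ is itself an allowed $i$-in-$j$ type.
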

\begin{proof}
    Create a SAT-E$2$ clause for every cell in the grid and a variable for each pair of edge-adjacent cells, representing whether pipes are matched along that edge or not. \texttt{O}, \texttt{D}, \texttt{T}, \texttt{X} tiles correspond to $i$-in-$4$ clauses for $i = 0,1,3,4$ respectively, and \texttt{S} tiles are alternator clauses. The resulting grid graph is both planar and bipartite, so Corollary~\ref{cor:sat-e2-easy} applies.
\end{proof}

\section{Tiling}
\label{sec:tiling}

In this section, we apply our framework to tiling problems. A tiling problem is specified by a set of available tiles $S$ and a grid, where each cell of the grid is filled or unfilled. The goal is to place copies of the available tiles in $S$ such that no unfilled cell is covered and every filled cell is covered by exactly one tile. Rotations of the tiles are permitted but reflections are not (tiling with reflection can be represented by having $S$ include reflected versions of the tiles). Previous work by Horiyama et al.~\cite{tromino-tiling} has proved that tiling is NP-complete with only $\L$ trominoes, only $\I$ trominoes, or both $\L$ trominoes and $\I$ trominoes.

\subsection{Tromino Tiling}
\label{sec:tromino-tiling}

As an illustration of the general approach, we first reprove Horiyama et al.~\cite{tromino-tiling}'s result that tiling with $\L$ trominoes is NP-complete. The original reduction of Horiyama et al.\ needed \emph{six} types of gadgets which they named the \emph{line}, \emph{corner}, \emph{cross}, \emph{duplicator}, \emph{clause}, and \emph{negated-clause} gadgets. The former two gadgets simulate edges of a planar graph orientation instance, while the latter four gadgets, in our terminology, are equivalent to the \emph{crossover}, \emph{synchronizer}, \emph{1-in-3}, and \emph{2-in-3} respectively. Applying our framework, our reduction requires only two vertex gadgets and is thus simpler.

\begin{theorem}\label{thm:L-tromino}
    Tiling with $\L$ trominoes is NP-complete.
\end{theorem}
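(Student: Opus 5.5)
Membership in NP is immediate: a tiling is a polynomial-size certificate that can be checked directly. For hardness, we reduce from a planar graph orientation problem that the framework already shows is hard and that needs only two vertex gadgets. The natural target is Theorem~\ref{thm:i-in-j-and-sync} (or Lemma~\ref{lem:base-hard-cases}). Planar Graph Orientation with $1$-in-$3$, $2$-in-$3$, and a synchronizer is NP-hard, so a synchronizer gadget plus both clause gadgets would suffice. To get down to two gadgets, we would instead use Lemma~\ref{lem:base-hard-cases}, Case~4: $1$-in-$3$ together with a $3$-equalizer. Alternatively, by Theorem~\ref{thm:i-in-j-and-sync} with only $1$-in-$3$ and a synchronizer, we would need $2$-in-$3$ to appear. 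The plan is therefore to build an $\L$-tromino gadget for $1$-in-$3$ and one for the $3$-equalizer. If the equalizer turns out awkward, we fall back to $1$-in-$3$ plus a duplicator of net flow $\pm3$ (hard by Theorem~\ref{thm:i-in-j-and-dup}, since $f=3<5$) or plus a synchronizer with $2$-in-$3$.

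The edge encoding follows Horiyama et al. An edge is a width-$2$ (or $3$) corridor of filled cells. Its orientation is the phase of the tiling along the corridor, equivalently which endpoint region ``donates'' cells to the corridor. Concretely, the corridor admits exactly two tilings, which differ in whether a protruding cell at each end is covered from inside the corridor or left to the adjacent vertex region. We also need a corner piece so that edges can follow a planar embedding drawn on a grid. Since the instance is planar with bounded degree, an orthogonal grid drawing of polynomial size exists, and we scale it so that gadgets fit at vertices and corridors fit along edges.

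At a vertex region, the number of cells received from incident edges determines the area modulo $3$. Here the paper's mod-$k$ remark does the work. For $1$-in-$3$, we design a region whose area forces exactly one incoming cell modulo $3$. For the $3$-equalizer, the region's area must be $\equiv 0$ with all edges in, or all out. Area alone only gives ``in-count $\equiv c \pmod 3$'', and with degree~$3$ that already pins the count to one value: $\{1\}$ for the clause, and $\{0,3\}$ for the equalizer. In other words, the $3$-equalizer is exactly $\{0,3\}$-in-$3$. So each gadget needs only (a) correct area modulo $3$, and (b) a check that every in-count allowed by area is actually realizable by a tiling, with no cross-talk between corridors. This is a finite case check of small regions.

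The main obstacle is step (b): designing concrete $\L$-tromino regions that are tileable for every allowed assignment and for no other. For $\{0,3\}$-in-$3$, symmetry across the three ports is the delicate part. We would first try a small T-shaped junction and verify tileability by a hand or computer search over the $2^3$ port states.
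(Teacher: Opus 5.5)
Your skeleton matches the paper's proof. You reduce from Planar $\{1\text{-in-}3,\,3\text{-equalizer}\}$-GO, which is hard by Lemma~\ref{lem:base-hard-cases}, Case~4, or by Theorem~\ref{thm:i-in-j-and-dup} with $f=3$. Each edge is a wire whose two tiling classes decide which end cell it covers. A count modulo $3$ then forces a degree-$3$ vertex region to have in-count in $\{1\}$ or in $\{0,3\}$.

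\textbf{The gadgets are never constructed.} The proposal stops where the proof's content begins. The modulo-$3$ count only gives the exclusion direction. Even that needs a port geometry in which no tromino can straddle a wire and a vertex region. For instance, it suffices that no wire cell and vertex-region cell lie in a common $2\times 2$ square, so every L-tromino meets at most one side of each port. The inclusion direction depends entirely on concrete shapes:
\begin{itemize}
\item the $1$-in-$3$ region must be tileable for each of its three choices of incoming port;
\item the $3$-equalizer region must be tileable both with all three port cells taken by the wires and with all three left to it.
\end{itemize}
You never exhibit these shapes, and saying you would try a T-shaped junction and run a search does not show they exist. The paper's proof consists precisely of exhibiting them, with all tilings drawn (Figure~\ref{fig:tromino-L-vertex}).

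\textbf{The wiring step is also not routine.} Tromino wires realize only restricted displacements. A straight width-$2$ corridor that covers exactly one of its two end cells has area $2n$ with $2n+1\equiv 0 \pmod 3$, so its length satisfies $n\equiv 1 \pmod 3$. The paper's wire likewise moves its port cell only by multiples of $2$. Each vertex gadget fixes the relative offsets of its three ports, and the graph has cycles. So you cannot in general absorb these constraints by choosing where to place the vertices or how to scale the drawing. You need a piece that moves the signal by a single unit, which is the paper's shift gadget (Figure~\ref{fig:tromino-L-wire}).

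Your routing paragraph uses nothing specific to L-trominoes. The paper notes that the identical plan fails for I-trominoes at exactly this step: every I-wire preserves the colour $(x+y)\bmod 3$ of its port, while any $3$-equalizer gadget has ports of three different colours. So an explicit L-specific wire and shift construction is a necessary part of the argument, not bookkeeping.
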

\begin{proof}
\begin{figure}[h]
    \centering
    \includegraphics[width=0.21\textwidth]{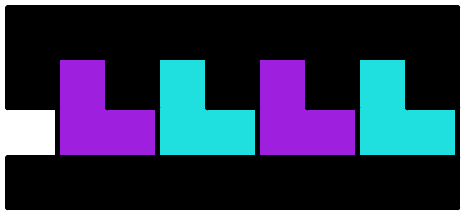}
    \includegraphics[width=0.21\textwidth]{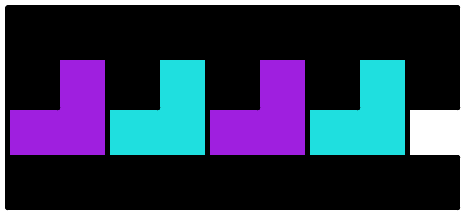}
    \hspace{1cm}
    \includegraphics[width=0.21\textwidth]{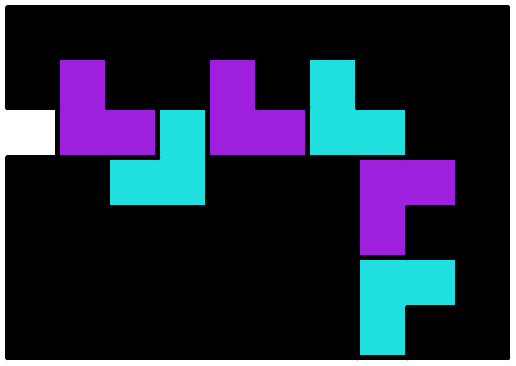}
    \includegraphics[width=0.21\textwidth]{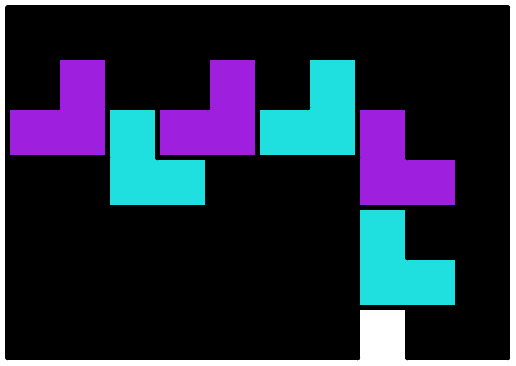}
    \caption{Left: Wire gadget for $\L$ trominoes. Right: Shift/turn gadget for $\L$ trominoes.}
    \label{fig:tromino-L-wire}
\end{figure}
    Reduce from Planar $\{\text{1-in-3},\text{3-equalizer}\}$-GO, which is NP-hard by Theorem~\ref{thm:i-in-j-and-dup}. Take a graph orientation instance and embed its vertices in the grid such that all the vertices are sufficiently far apart (a polynomial separation suffices).
    
    We represent an edge of the graph with a series of \defn{wire} and \defn{shift} gadgets, as shown in Figure~\ref{fig:tromino-L-wire}. The white cells of different wire and shift gadgets are shared between gadgets and propagate a Boolean signal from one area to another through the two possible tilings of the interior of these gadgets. We say this edge is oriented towards the direction that ``covers'' the unconnected white cell at the boundary.

    The wire gadget, using an arbitrary number of $\L$ trominoes in the middle, can move the white cell by any multiple of $2$ units horizontally. By turning in the same way as shown in the shift gadget, the white cell can be moved by any multiple of $2$ units in any direction. The leftmost blue tromino of the shift gadget allows us to shift the wire by $1$ unit. Therefore, it is possible to join any two (sufficiently far apart) cells of the grid with an edge composed of wire and shift gadgets.

\begin{figure}[h]
    \centering
    \includegraphics[width=0.16\textwidth]{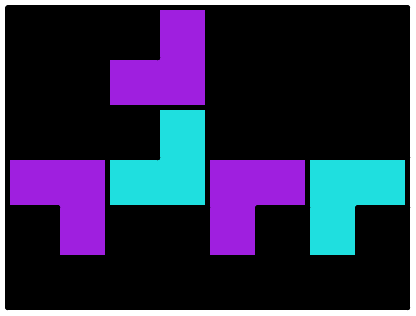}
    \includegraphics[width=0.16\textwidth]{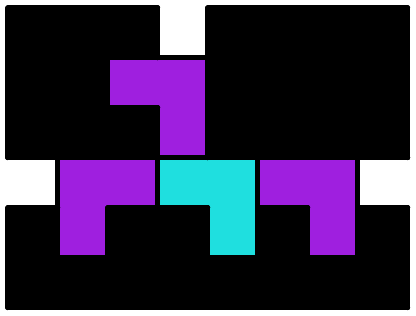}
    \hspace{1cm}
    \includegraphics[width=0.16\textwidth]{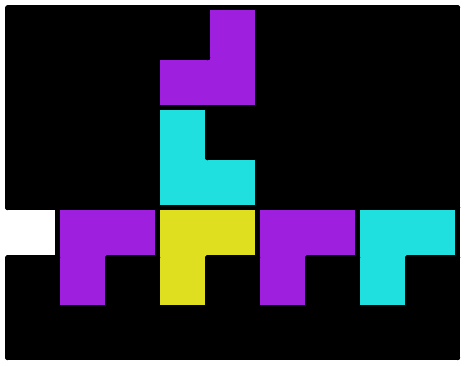}
    \includegraphics[width=0.16\textwidth]{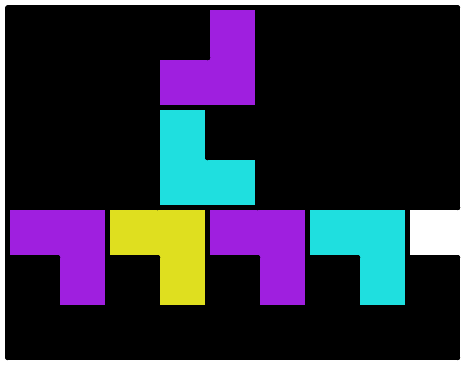}
    \includegraphics[width=0.16\textwidth]{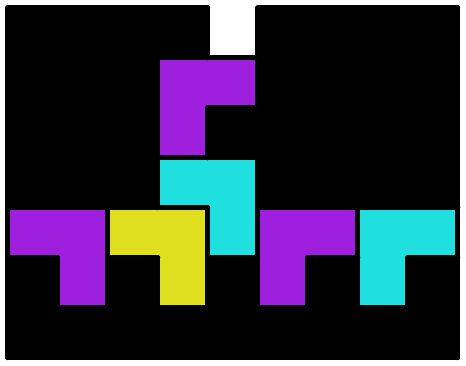}
    \caption{Left: 3-equalizer gadget for $\L$ trominoes. Right: 1-in-3 gadget for $\L$ trominoes.}
    \label{fig:tromino-L-vertex}
\end{figure}

    We now describe the vertex gadgets, which are shown in Figure~\ref{fig:tromino-L-vertex}. The left side shows a 3-equalizer gadget and the two ways to tile it without leaving any holes in its interior. Similarly, the right side shows a 1-in-3 gadget and the three ways to tile it. Both gadgets contain $3$ gadget ports which have the option of being covered by the gadget or by an incoming edge, constraining the possibilities of the incoming edges to be precisely those of the 3-equalizer or 1-in-3 respectively. One easy way to verify that no undesirable partial tilings (i.e., other assignments of the neighboring edges) exist is to observe that any partial tiling must cover a multiple of $3$ cells.

    Therefore, by replacing each vertex with 3-equalizer or 1-in-3 gadgets and wiring the edges of the graph, we can embed a Planar $\{\text{1-in-3},\text{3-equalizer}\}$-GO instance into $\L$ tromino tiling, so $\L$ tromino tiling is NP-hard.
\end{proof}

\begin{remark}
The reduction in Theorem~\ref{thm:L-tromino} still works for tiling with both $\L$ trominoes and $\I$ trominoes (reproving another theorem in \cite{tromino-tiling}) because it is impossible to fit any $\I$ trominoes into any of the gadgets without creating isolated cells.
\end{remark}

For $\I$ tromino tiling, one cannot easily reduce from Planar $\{\text{1-in-3},\text{3-equalizer}\}$-GO in the same way. The reason is that there is a ``$\bmod\,3$'' issue underlying the problem: If we color cells $(x,y)$ of the grid by $(x+y) \bmod 3$, any $\I$ tromino covers one cell of each color. Therefore any wire gadget (that transports a one-cell-wide signal) can only connect between cells of the same color. On the other hand, any 3-equalizer gadget can only produce three gadget ports of different colors. It is not possible to connect arbitrary vertices with wires.

Instead, the reduction of Horiyama et al.~\cite{tromino-tiling}, rephrased in our framework, uses \emph{crossover}, \emph{synchronizer}, \emph{1-in-3}, and \emph{2-in-3} gadgets, taking care to ensure that all gadget ports have the same $(x \bmod 3,y \bmod 3)$ (so they can be connected by wires), including a highly nontrivial crossover construction. The gadgets used by Horiyama et al.\ are shown in Figures~\ref{fig:tromino-I-wire},~\ref{fig:tromino-I-turn},~\ref{fig:tromino-I-synch},~\ref{fig:tromino-I-1in3},~\ref{fig:tromino-I-2in3},~\ref{fig:tromino-I-crossover}. The crossover is unnecessary to establish NP-completeness according to Theorem~\ref{thm:i-in-j-and-dup}.

\begin{figure}
\centering
\includegraphics[width=0.3\textwidth]{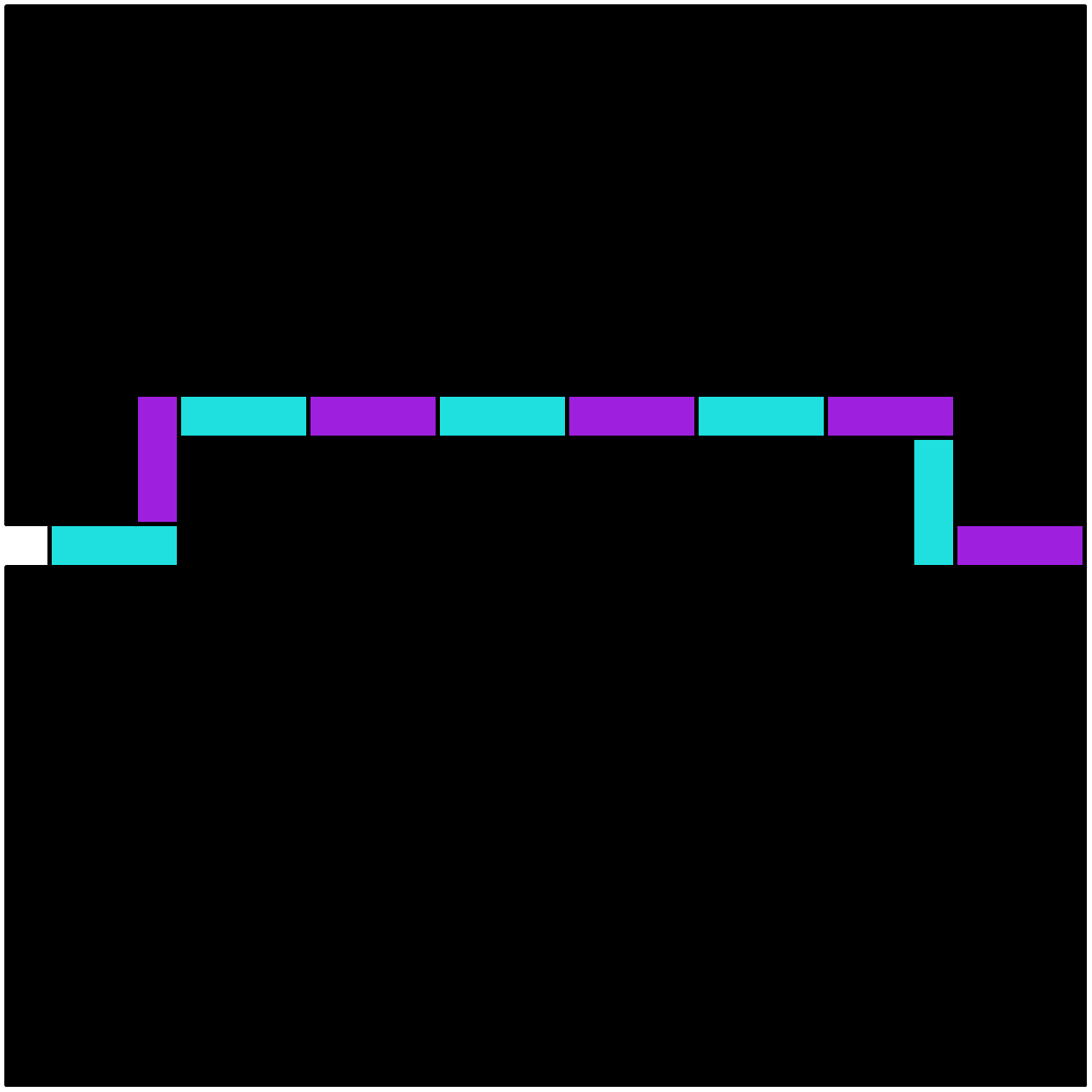}
\hspace{1cm}
\includegraphics[width=0.3\textwidth]{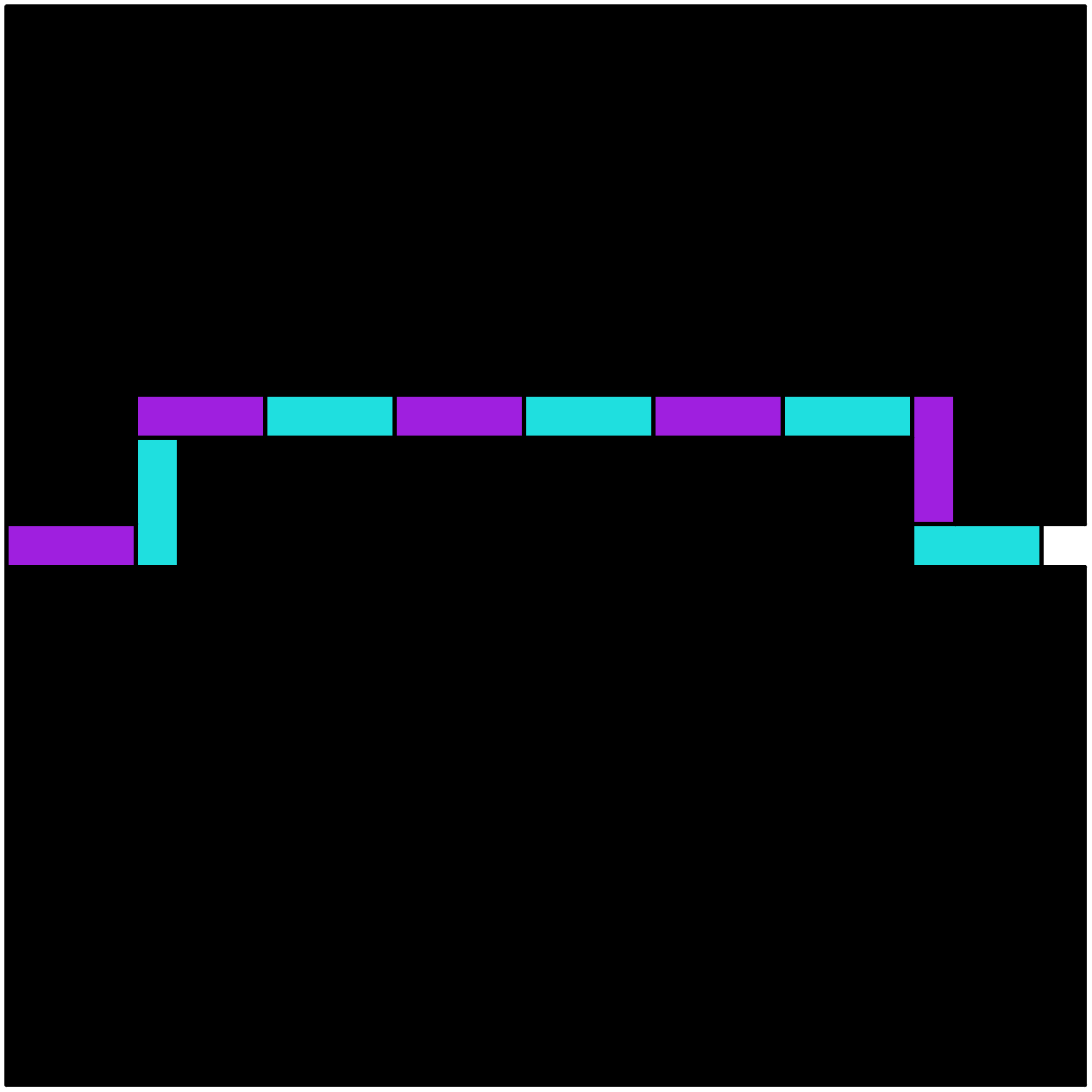}
\caption{Wire gadget for $\I$ tromino tiling used by \cite{tromino-tiling}.}
\label{fig:tromino-I-wire}
\end{figure}

\begin{figure}
\centering
\includegraphics[width=0.3\textwidth]{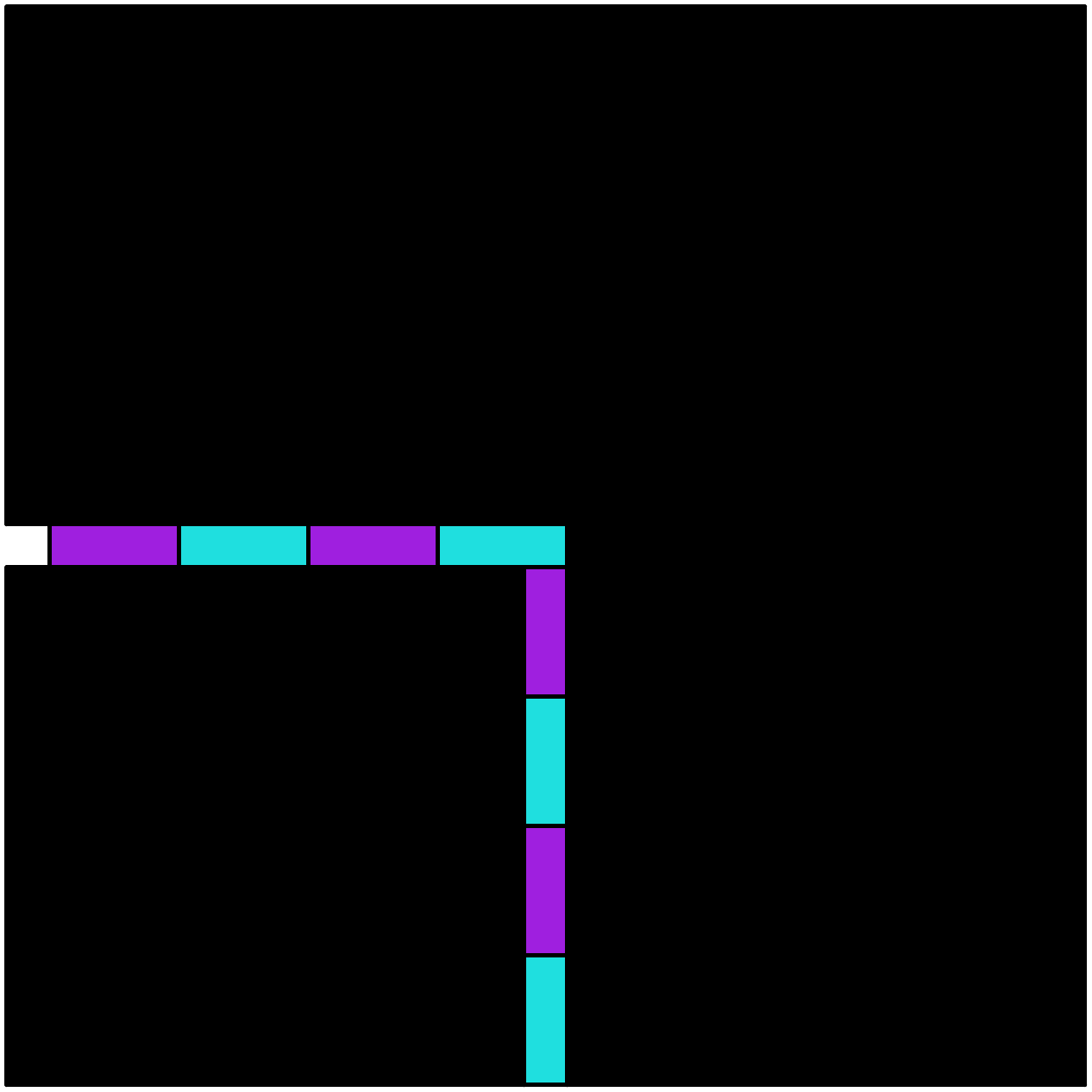}
\hspace{1cm}
\includegraphics[width=0.3\textwidth]{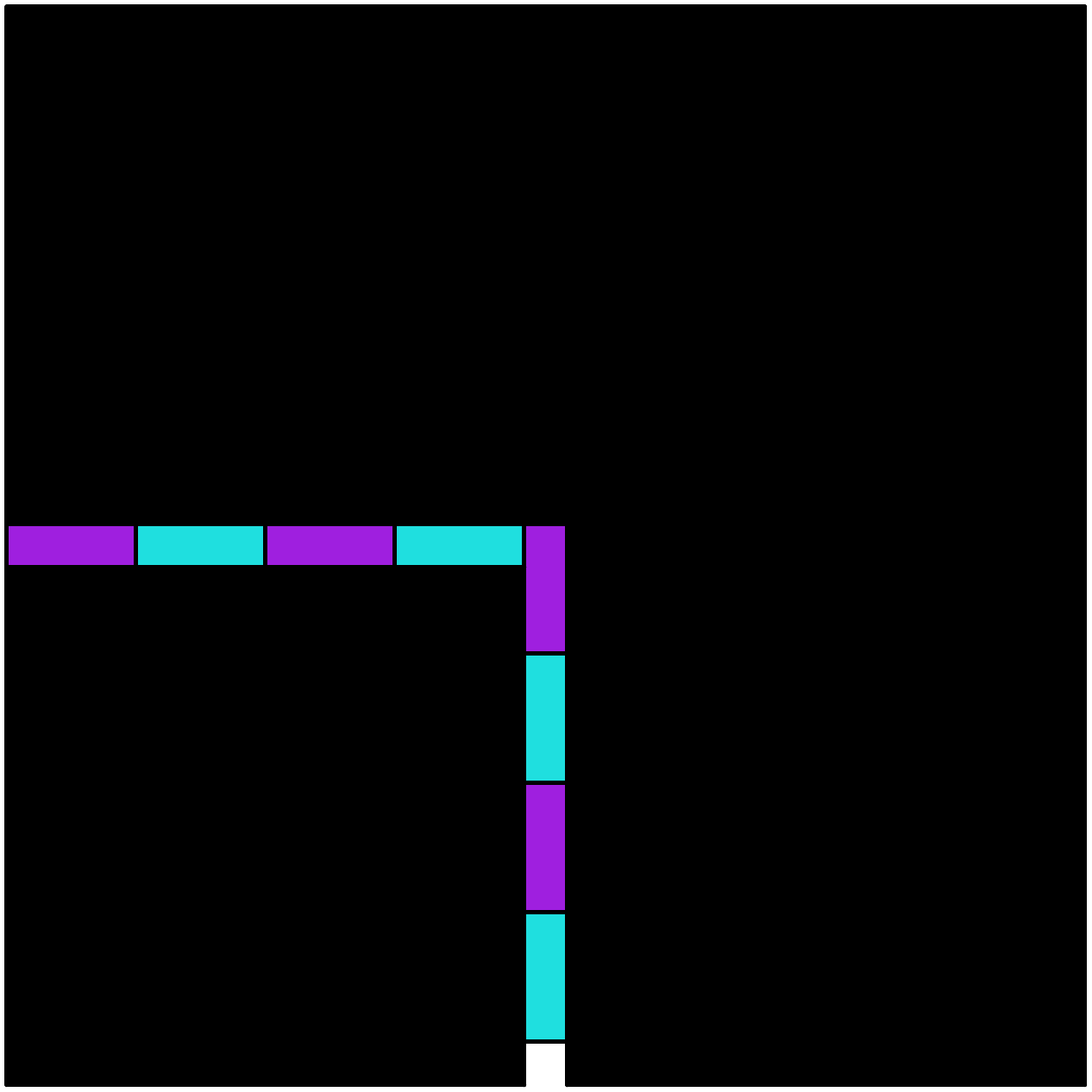}
\caption{Turn gadget for $\I$ tromino tiling used by \cite{tromino-tiling}.}
\label{fig:tromino-I-turn}
\end{figure}

\begin{figure}
\centering
\includegraphics[width=0.45\textwidth]{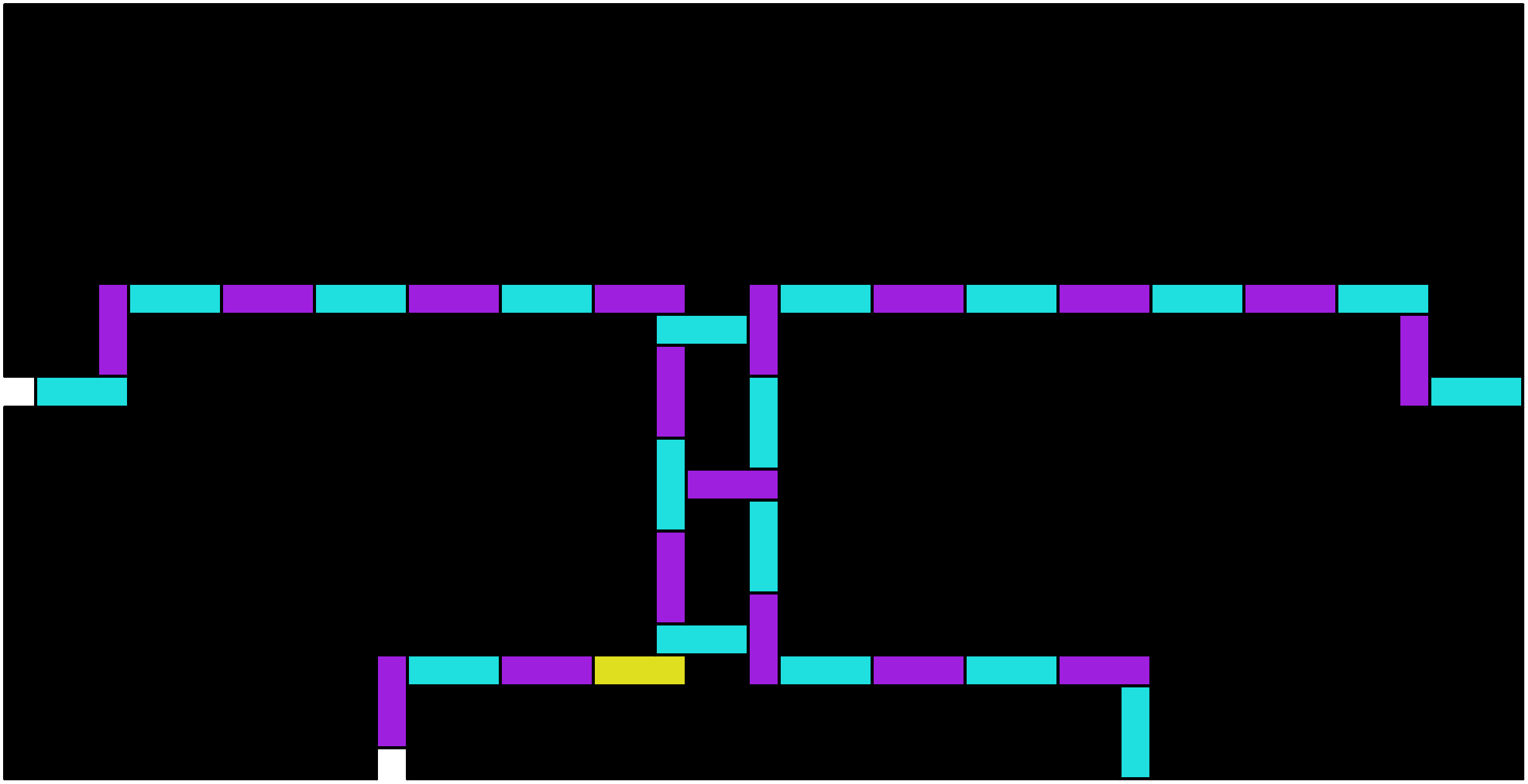}
\includegraphics[width=0.45\textwidth]{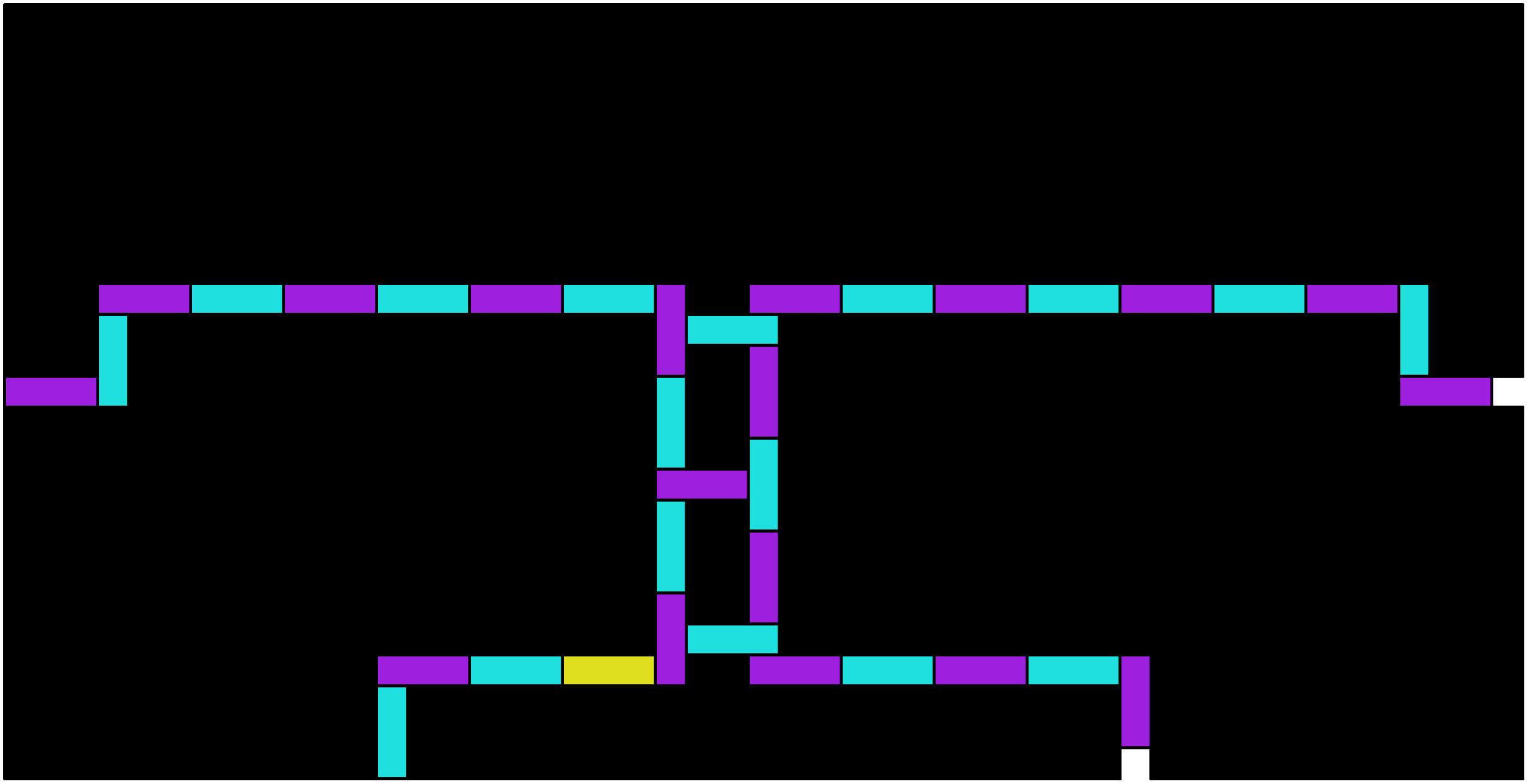}
\caption{Synchronizer gadget for $\I$ tromino tiling used by \cite{tromino-tiling}.}
\label{fig:tromino-I-synch}
\end{figure}

\begin{figure}
\centering
\includegraphics[width=0.3\textwidth]{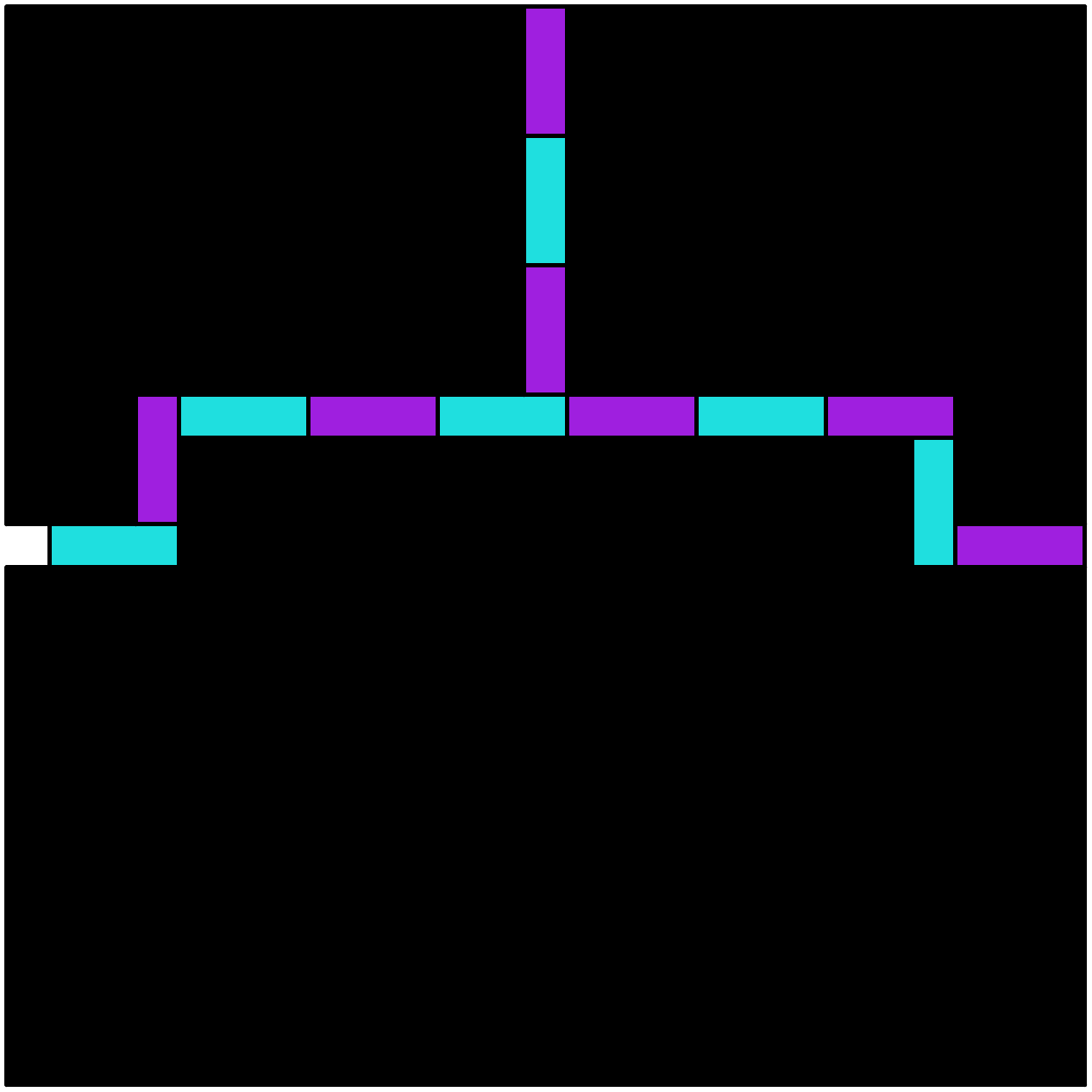}
\includegraphics[width=0.3\textwidth]{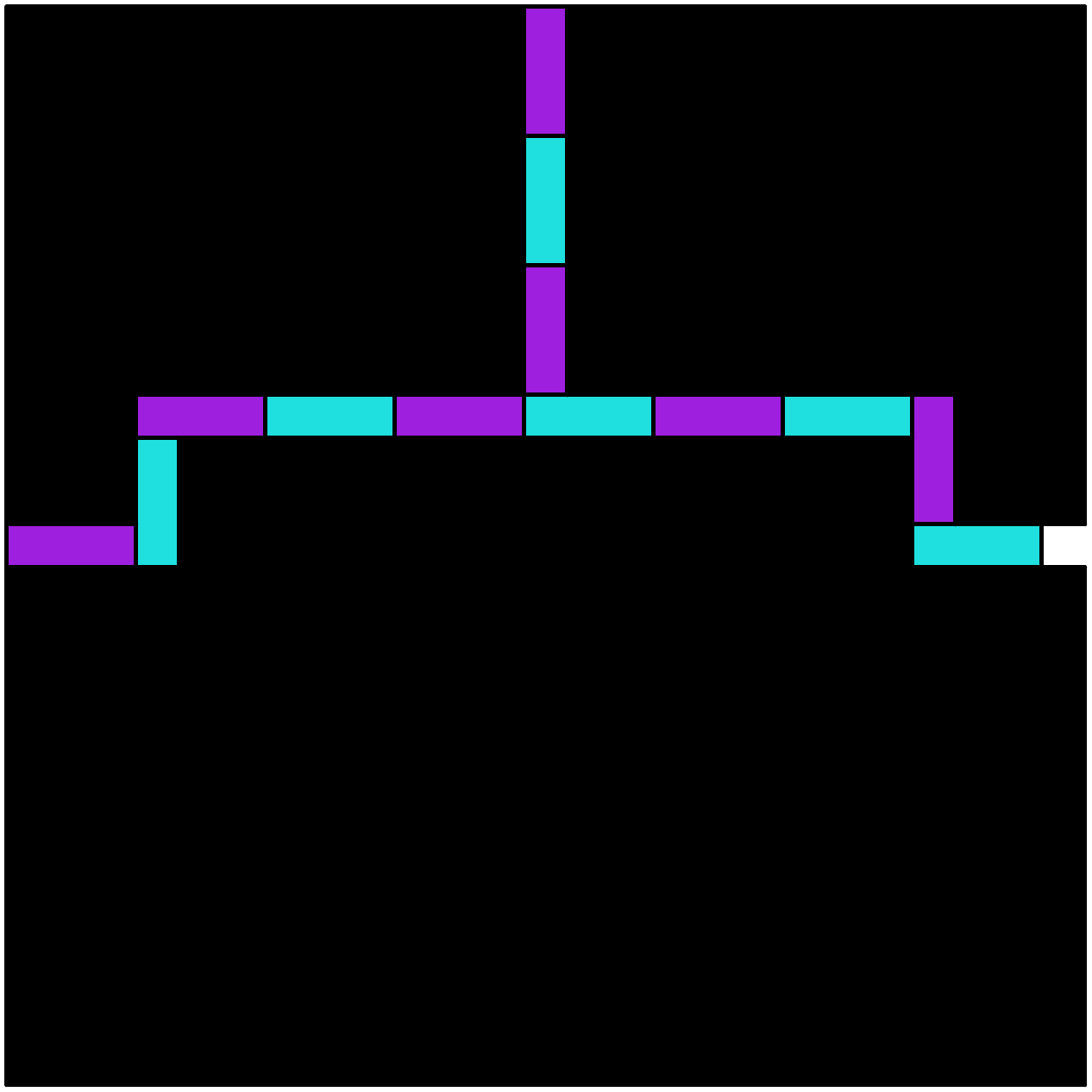}
\includegraphics[width=0.3\textwidth]{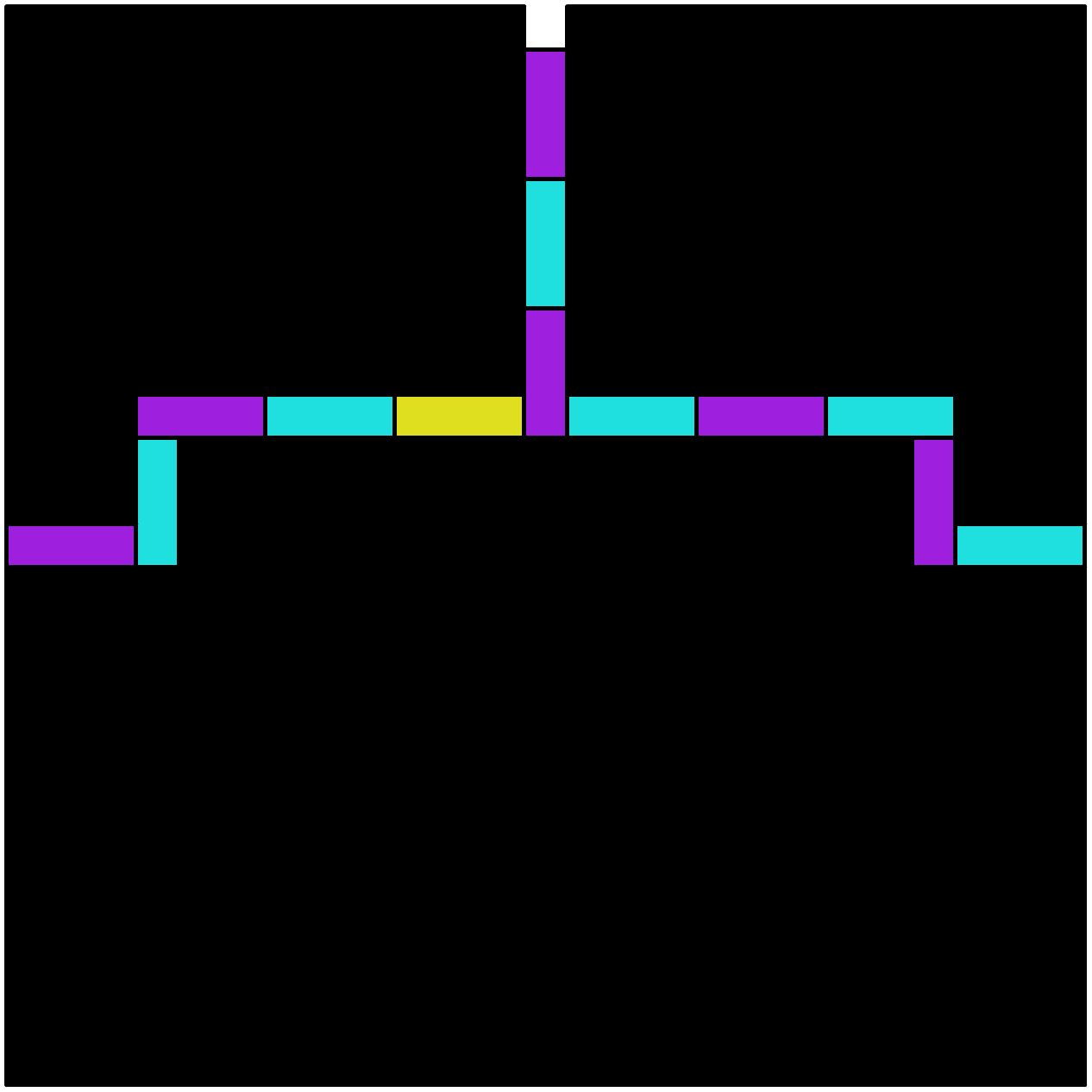}
\caption{1-in-3 (clause) gadget for $\I$ tromino tiling used by \cite{tromino-tiling}.}
\label{fig:tromino-I-1in3}
\end{figure}

\begin{figure}
\centering
\includegraphics[width=0.3\textwidth]{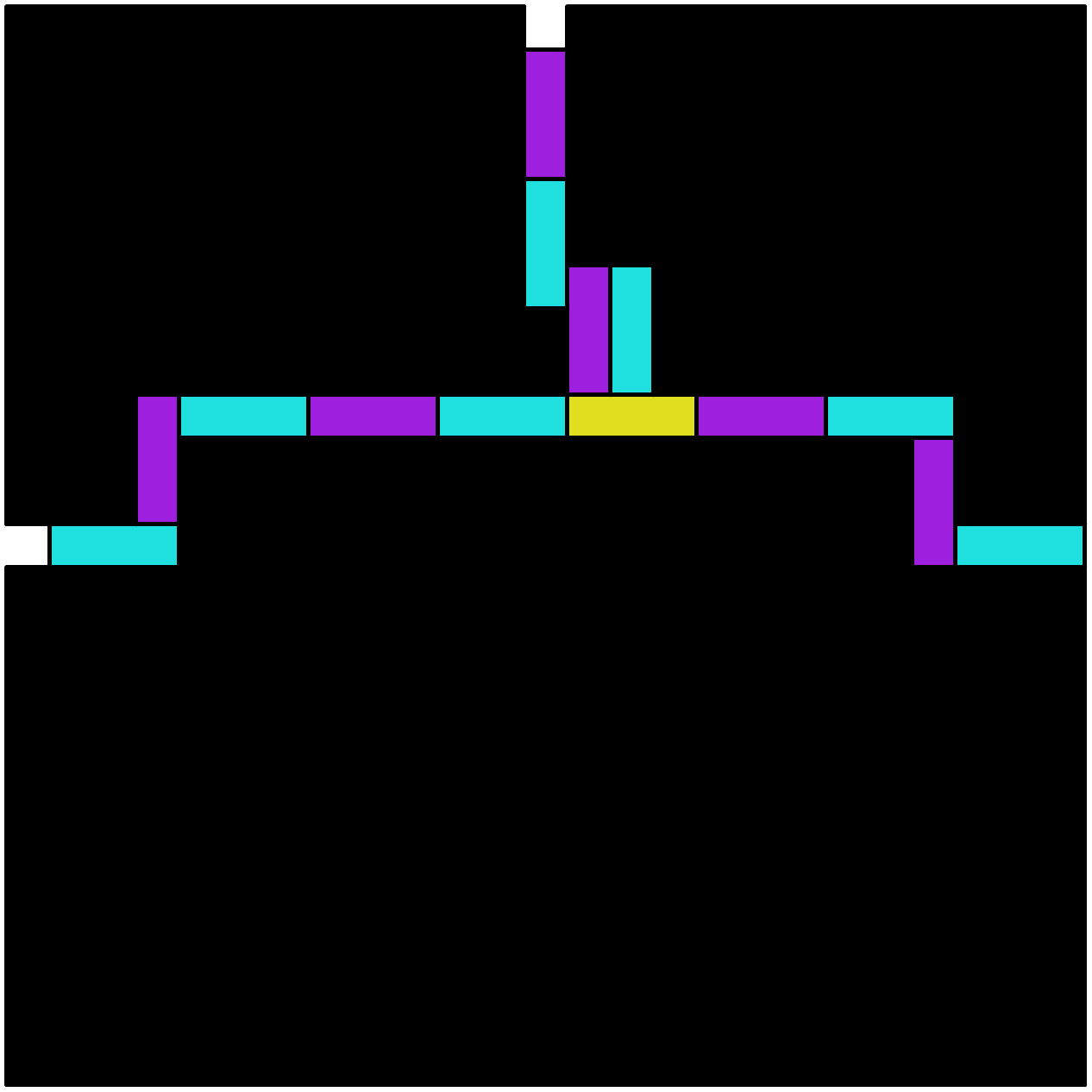}
\includegraphics[width=0.3\textwidth]{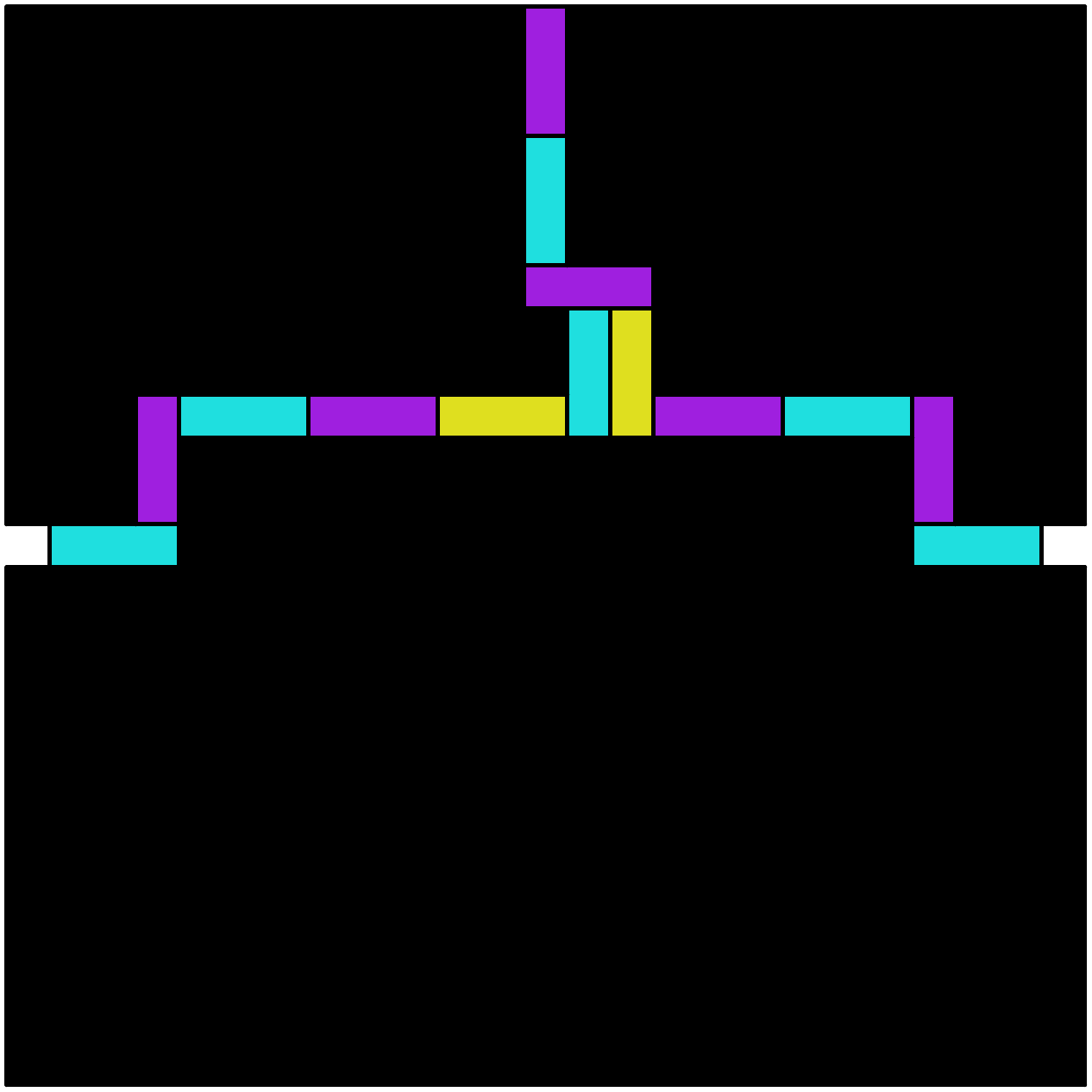}
\includegraphics[width=0.3\textwidth]{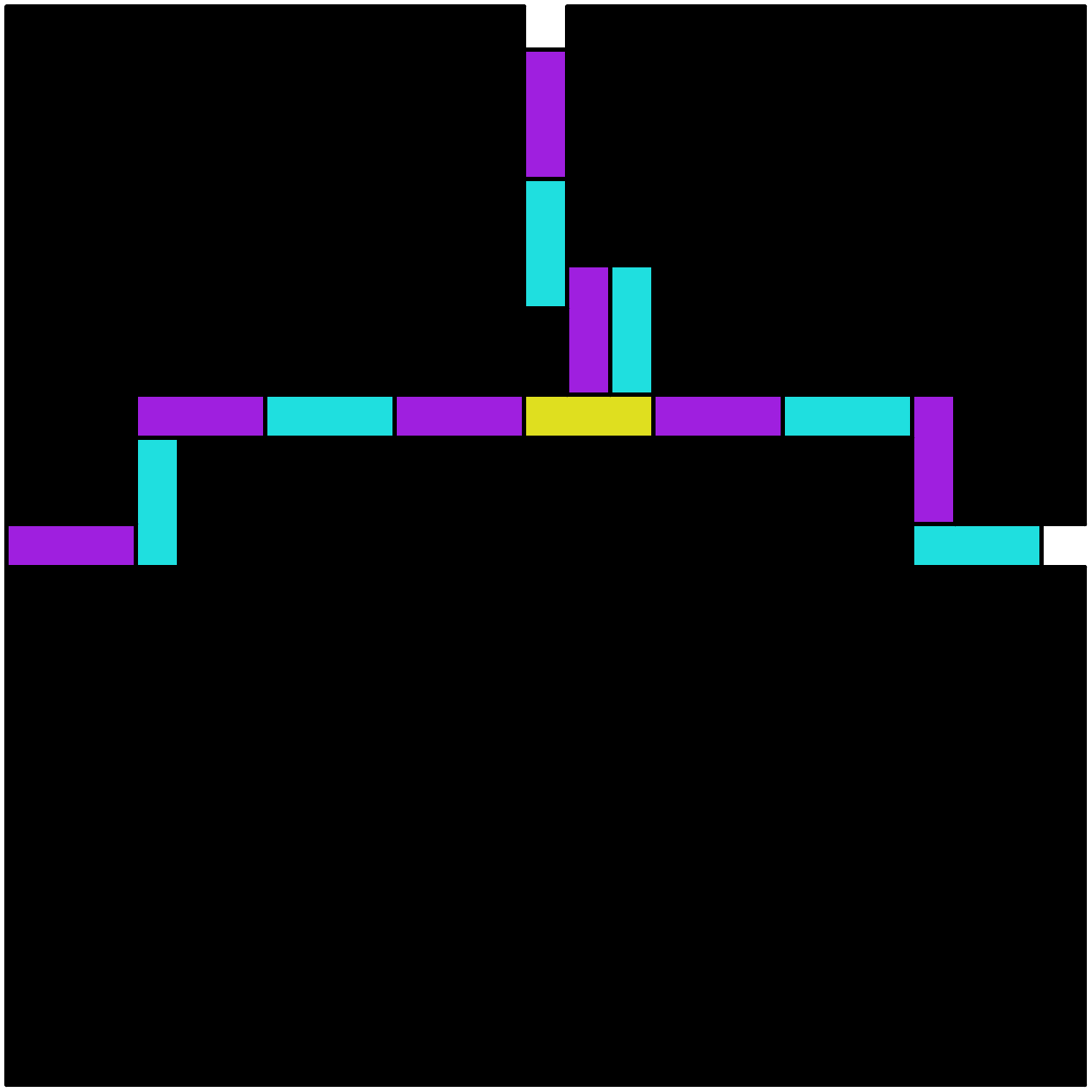}
\caption{2-in-3 (negated clause) gadget for $\I$ tromino tiling used by \cite{tromino-tiling}.}
\label{fig:tromino-I-2in3}
\end{figure}

\begin{figure}
\centering
\includegraphics[width=0.5\textwidth]{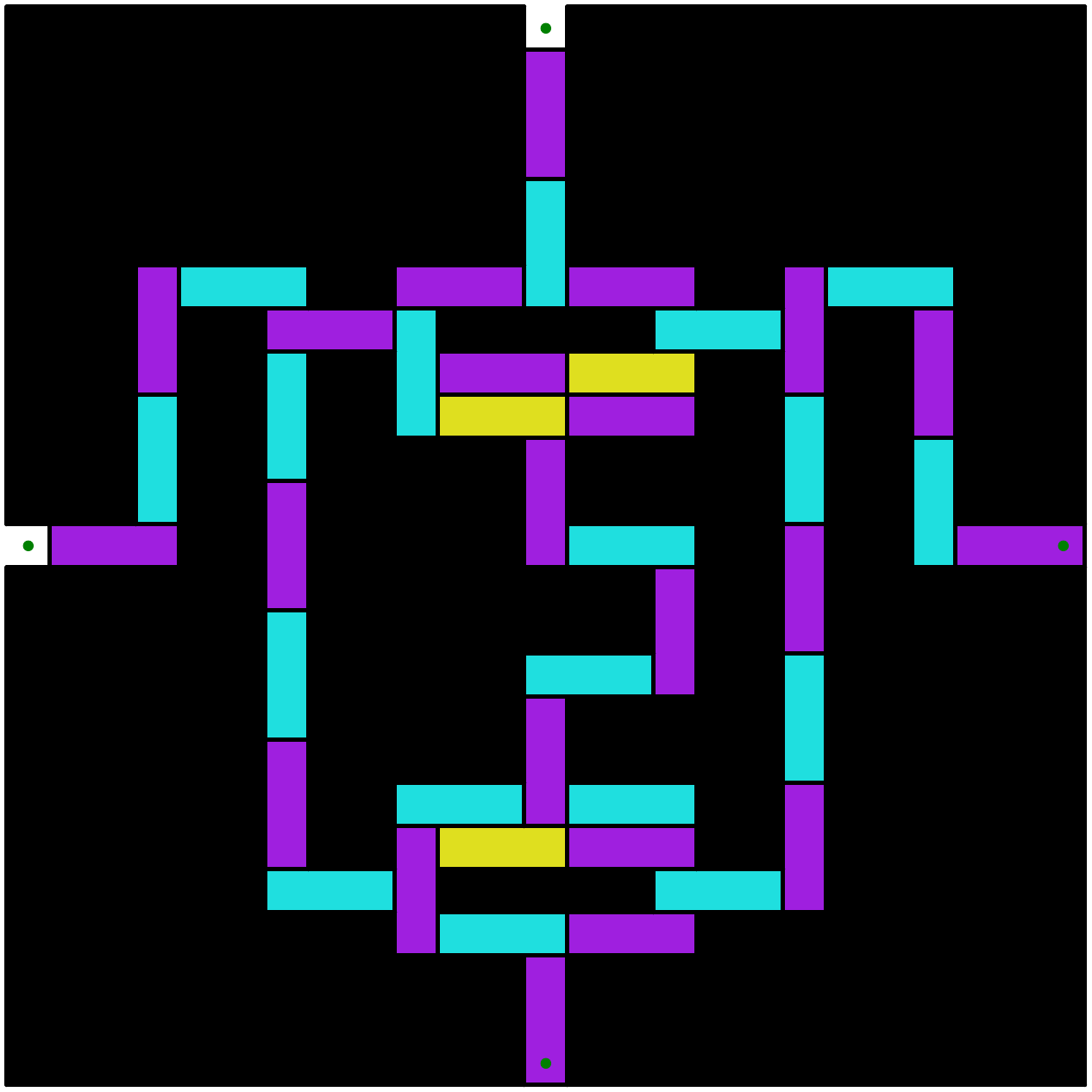}
\caption{One covering of the crossover gadget for $\I$ tromino tiling used by \cite{tromino-tiling}. This very nontrivial gadget is actually not necessary for establishing NP-hardness by Theorem~\ref{thm:i-in-j-and-dup}.}
\label{fig:tromino-I-crossover}
\end{figure}

\subsection{Tetromino Tiling}
\label{sec:tetromino-tiling}

We now apply our framework to establish several new results about the hardness of tiling with tetrominoes. We settle the complexity of tiling for all single tetromino sets, with or without reflection, and the set of all tetrominoes. Our results are summarized in Table~\ref{tab:tetrominos}.

First, we discuss the polynomial-time cases:

\begin{theorem}\label{thm:O-tetromino}
    Tiling with $\OO$ tetrominoes is in P.
\end{theorem}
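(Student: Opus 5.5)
The plan is to give a direct greedy algorithm rather than route through Graph Orientation. The $\OO$ tetromino is a $2\times 2$ square and cannot be rotated into anything new, so every placement is determined by its top-left cell.

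\textbf{The greedy rule.} Scan the filled cells in row-major order (top to bottom, then left to right within a row). Let $(x,y)$ be the first filled cell not yet covered. In any valid tiling, the tile covering $(x,y)$ must have $(x,y)$ as its top-left cell. To see this, suppose the tile's top-left cell were anything else. Then the tile would also contain a cell strictly above $(x,y)$, or a cell in the same row strictly to its left. Every such cell precedes $(x,y)$ in row-major order. By the inductive invariant, that cell is either unfilled or already covered by a previously forced tile, and either case contradicts the tile being valid. So the placement is forced: we put the square with top-left corner $(x,y)$. If any of its four cells is unfilled or already covered, we report that no tiling exists.

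\textbf{Correctness.} We argue by induction that the tiles placed so far appear in every tiling, so the procedure succeeds if and only if a tiling exists. Each step places one tile and does $O(1)$ work, so the algorithm runs in time linear in the number of filled cells.

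The only delicate point is making the invariant precise. The claim is that after each step, every filled cell preceding the scan pointer is covered, and it is covered by a forced tile. Once that is stated, the uniqueness argument above closes the induction.

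As a sanity check, we could alternatively note that the tiling, if it exists, is unique. That uniqueness is exactly what the greedy argument establishes, so it offers no separate route to the result.
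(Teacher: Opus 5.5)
Your proposal is correct and follows the paper's proof: scan in row-major order and, at each first uncovered cell, place the only possible $\OO$ tetromino, namely the one with that cell as its top-left corner. Your invariant and the check that any other placement would include an earlier, already-covered or unfilled cell fill in details the paper leaves implicit.
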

\begin{proof}
    Loop through the cells of the grid in row-major order. At every uncovered cell, there is at most one way to cover it with an $\OO$ tetromino. Greedily place an $\OO$ tetromino there and continue.
\end{proof}

\begin{theorem}\label{thm:S-tetromino}
    Tiling with $\SS$ tetrominoes is in P. Symmetrically, tiling with $\ZZ$ tetrominoes is also in P.
\end{theorem}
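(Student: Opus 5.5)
We use a greedy algorithm: repeatedly take the first uncovered filled cell in row-major (reading) order and show that at most one $\SS$ placement can cover it consistently. The $\ZZ$ case is then the mirror image of this argument.

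\textbf{Setup.} Without reflection, the $\SS$ tetromino has two orientations. Use coordinates with $x$ increasing rightward and $y$ increasing downward, and write $R,L,D,U$ for unit shifts.
\begin{itemize}
\item The horizontal orientation occupies $\{(1,0),(2,0),(0,1),(1,1)\}$.
\item The vertical orientation occupies $\{(0,0),(0,1),(1,1),(1,2)\}$.
\end{itemize}
The algorithm maintains a partial tiling in which every filled cell before the current position in reading order is covered. Let $c$ be the first filled cell not yet covered.

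\textbf{Candidate placements.} Any tile covering $c$ contains no cell before $c$ in reading order. Every cell before $c$ is either unfilled or already covered, and tiles must be disjoint. So $c$ is the reading-order-first cell of its tile, which leaves exactly two candidates:
\begin{itemize}
\item horizontal: $\{c,\,c+R,\,c+D,\,c+D+L\}$;
\item vertical: $\{c,\,c+D,\,c+D+R,\,c+2D+R\}$.
\end{itemize}

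\textbf{Key claim: the choice is forced by the status of $c+R$.}
\begin{itemize}
\item If $c+R$ is unfilled or already covered, the horizontal placement is impossible. The vertical one is then forced, or the instance is rejected if it does not fit.
\item If $c+R$ is filled and uncovered, we argue the vertical placement cannot extend to a full tiling. Suppose $c$ is covered vertically, and consider the tile that must eventually cover $c+R$.
 \begin{itemize}
 \item Its reading-order-first cell precedes or equals $c+R$, and is not before $c$ (those cells are unavailable). So that first cell is $c$ or $c+R$.
 \item It cannot be $c$, since $c$ is already used.
 \item So $c+R$ is the first cell of its tile. The horizontal tile from $c+R$ uses $c+R+D = c+D+R$, and the vertical tile from $c+R$ also uses $c+R+D$. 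Both collide with the vertical tile at $c$.
 \end{itemize}
 Hence the horizontal placement is forced.
\end{itemize}

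\textbf{Algorithm and running time.} In every case at most one placement is viable: place it, or reject the instance if it does not fit in the remaining filled, uncovered cells. An easy induction shows that if a tiling exists, it agrees with every greedy choice, so the greedy procedure succeeds exactly when a tiling exists. The procedure makes a linear number of steps, each of constant work.

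\textbf{Main obstacle.} The only delicate step is the exhaustive check that no tile other than those starting at $c$ or $c+R$ can cover $c+R$. This relies on the reading-order-minimality argument, including the point that cells left of $c+R$ on its row are exactly $c$ and earlier cells. For $\ZZ$, the same argument works with the roles mirrored: use the reflected reading order (right-to-left within rows) and branch on the status of $c+L$.
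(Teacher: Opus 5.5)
Your proof is correct and follows the same greedy row-major argument as the paper. At the first uncovered cell $c$, the status of the cell to its right forces either the vertical or the horizontal placement. You also spell out why a vertical placement at $c$ leaves $c+R$ uncoverable (both placements anchored at $c+R$ use $c+D+R$), which the paper only asserts via a figure.
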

\begin{proof}
\begin{figure}[h!]
\centering
    \includegraphics[width=0.2\textwidth]{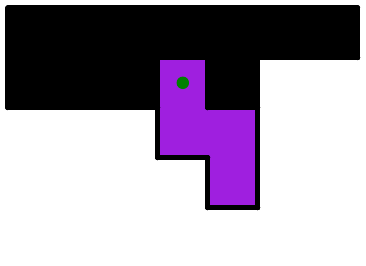}
    \hspace{1cm}
    \includegraphics[width=0.2\textwidth]{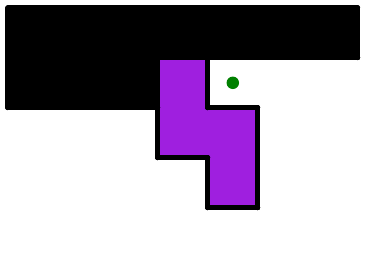}
    \hspace{1cm}
    \includegraphics[width=0.2\textwidth]{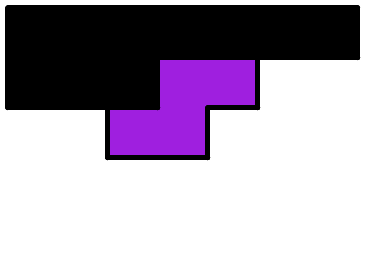}
    \caption{For $\SS$ tetromino tiling, there is always at most one choice that does not immediately cause a contradiction.}
    \label{fig:S-tetromino-easy}
\end{figure}
    Loop through the cells of the grid in row-major order. Consider an uncovered cell $c$ such that all cells before it in row-major order are already covered, as in Figure~\ref{fig:S-tetromino-easy}. If the cell to the right of $c$ is covered or off the grid, the only way to cover $c$ is to place an $\SS$ vertically (Figure~\ref{fig:S-tetromino-easy}, left). If the cell to the right of $c$ is not covered, then if we place an $\SS$ vertically to cover $c$, then this cell would be impossible to cover (Figure~\ref{fig:S-tetromino-easy}, middle). Therefore, we must place an $\SS$ horizontally (Figure~\ref{fig:S-tetromino-easy}, right). There is always at most one way to cover $c$ that does not immediately cause a contradiction, so we can greedily place an $\SS$ tetromino there and continue.
\end{proof}

We now apply our framework to establish NP-hardness for the remaining tetrominoes.

\begin{theorem}\label{thm:T-tetromino}
    Tiling with $\TT$ tetrominoes (plus any subset of the remaining tetrominoes) is NP-hard.
\end{theorem}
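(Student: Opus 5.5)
We reduce from a Planar Graph Orientation problem that is NP-hard by the results above, following the template of Theorem~\ref{thm:L-tromino}. Tetrominoes cover $4$ cells, so every gadget's net flow is constrained modulo $4$. The natural source is therefore Planar GO with $1$-in-$3$ and $4$-equalizer vertices, which is NP-hard by Lemma~\ref{lem:base-hard-cases} (Case~5), equivalently by Theorem~\ref{thm:i-in-j-and-dup} with $f=4<5$. If a $1$-in-$3$ gadget turns out to be awkward, we can instead use $1$-in-$4$ (or $3$-in-$4$) together with a $4$-equalizer, which is again NP-hard by Theorem~\ref{thm:i-in-j-and-dup}.

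The first step is to design a wire. This is a corridor whose interior can be tiled by $\TT$ tetrominoes in exactly two ways, differing in whether the single boundary ``port'' cell at each end is covered by the wire or left for the neighbouring gadget. The wire must then be extended with turn and shift gadgets so that it can connect any two ports placed far apart in the grid. We should check the parity and coloring invariants for $\TT$: a $\TT$ covers $3$ cells of one checkerboard color and $1$ of the other, and the wire must be compatible with this. The second step is to build the vertex gadgets, namely a $4$-equalizer with four ports (either all covered internally or none) and a $1$-in-$3$ with three ports, each having exactly the intended tilings. Hardness then follows by embedding the planar instance in the grid with the vertices spaced polynomially far apart, exactly as in Theorem~\ref{thm:L-tromino}.

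To handle ``plus any subset of the remaining tetrominoes'', all gadgets should be designed from thin, irregular corridors where the other pieces cannot fit. $\OO$ needs a $2\times 2$ block, $\II$ needs a straight run of $4$, and $\SS$, $\ZZ$, $\LL$, $\JJ$ each need specific local shapes. We then argue cell by cell that no non-$\TT$ piece fits anywhere in the constructed region. The counting argument (any partial tiling covers a multiple of $4$ cells) rules out unwanted port assignments, so correctness of each gadget reduces to a finite case check.

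The main obstacle is the detailed design: finding wire, turn, and shift pieces whose $\TT$-tilings are rigid and also exclude every other tetromino, so that arbitrary offsets can be connected despite $\TT$'s coloring constraint. If the $1$-in-$3$ gadget is hard to build, I would first try a $1$-in-$4$ built from a plus-shaped junction, since a $\TT$ naturally has three arms around a center cell, which makes a $3$-in-$4$/$1$-in-$4$ center a promising vertex gadget.
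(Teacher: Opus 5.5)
Your reduction has a genuine gap, and it sits where you expected only ``detailed design'': with single-cell ports, as you propose, a checkerboard parity invariant makes the plan impossible, not merely hard.

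Every T-tetromino covers an odd number (one or three) of white cells. So in any T-tiling, the number of tiles and the number of white cells have the same parity. Apply this to the two local states of each gadget:
\begin{itemize}
\item A wire's two states use the same number of tiles and differ by exchanging its end port $a$ for its end port $b$. So $a$ and $b$ must have the same color. No turn or shift gadget can ever join ports of opposite colors, so you cannot ``connect arbitrary offsets'' as in Theorem~\ref{thm:L-tromino}.
\item In an $i$-in-$j$ gadget with $0<i<j$ (your $1$-in-$3$, $1$-in-$4$, or $3$-in-$4$), two states differ by exchanging one port for another. So all of its ports share one color.
\item The two states of a $4$-equalizer differ by all four ports and exactly one tile. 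So an odd number of its ports must be white, and a $4$-equalizer gadget can never have all four ports the same color.
\end{itemize}

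Hence the step ``embed the planar instance with vertices far apart and wire everything'' fails for $\{1\text{-in-}3, 4\text{-equalizer}\}$. It fails equally for your fallback $\{1\text{-in-}4, 4\text{-equalizer}\}$. You would need every instance to admit a 2-coloring of its edges with each clause vertex monochromatic and each equalizer split $3$--$1$. That is a mod-$2$ linear system that can be inconsistent: for example, an equalizer joined by double edges to two $1$-in-$3$ vertices always sees an even number of white ports. You do not prove hardness for a colorable subclass.

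The missing idea is to never expose a bare equalizer. Join two $4$-equalizers by two wires, one of each color. The result is a synchronizer whose four external ports all share one color, since each equalizer sees three ports of one color and one of the other. The paper therefore reduces from Planar $\{1\text{-in-}4, 3\text{-in-}4, \text{synchronizer}\}$-GO, which is NP-hard by Theorem~\ref{thm:i-in-j-and-sync}. Both clause types are needed there: with a synchronizer, $1$-in-$4$ (or $1$-in-$3$) alone is an easy case of that theorem. With every port of a single color, color-preserving wires suffice. The other tetrominoes are then handled much as you describe: wires admit only T-tetrominoes, and the area-mod-$4$ count rules out unwanted port subsets inside gadgets.
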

\begin{proof}
    Reduce from Planar $\{\text{1-in-4},\text{3-in-4},\text{synchronizer}\}$-GO, which is NP-hard by Theorem~\ref{thm:i-in-j-and-sync}.

\begin{figure}
    \centering
    \includegraphics[width=0.4\textwidth]{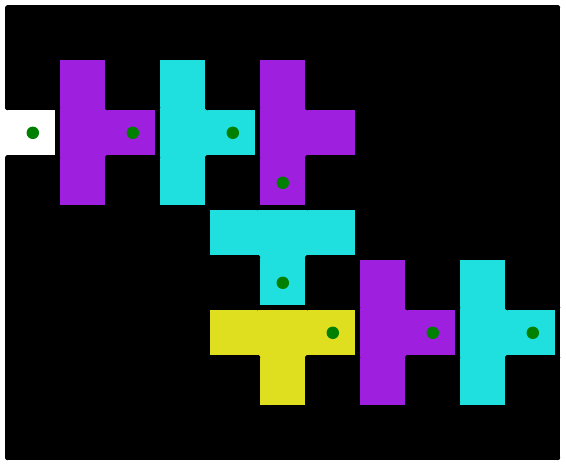}\hfil
    \includegraphics[width=0.4\textwidth]{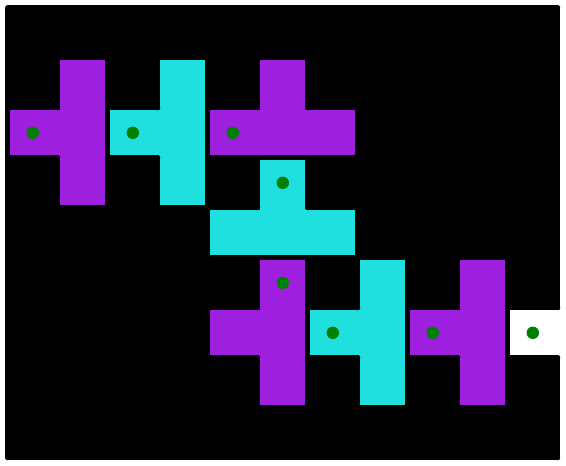}
    \caption{Wire gadget for $\TT$ tetrominoes.}
    \label{fig:tetromino-T-wire}
\end{figure}

    We represent edges using a wire as shown in Figure~\ref{fig:tetromino-T-wire}. Using turns as shown in the middle of Figure~\ref{fig:tetromino-T-wire}, the wire can be translated by any amount that preserves $(x+y) \bmod 2$. Unfortunately, because $\TT$ tetrominoes always cover 1 square of one color and 3 squares of the other color in a checkerboard coloring of the grid, it is not possible to build a wire that moves by an odd amount. Thus, we ensure that all gadget ports have the same color, so they can be connected by wires.

\begin{figure}
    \centering
    \includegraphics[width=0.23\textwidth]{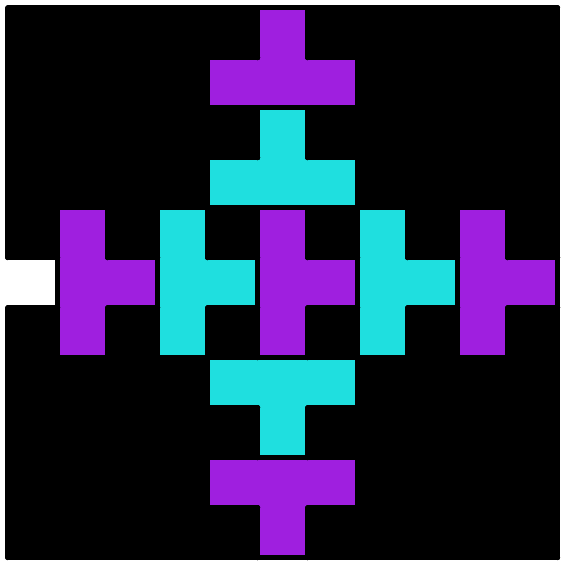}\hfill
    \includegraphics[width=0.23\textwidth]{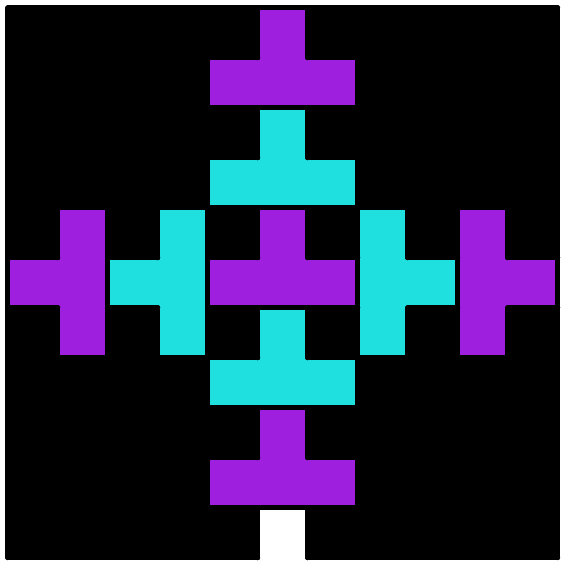}\hfill
    \includegraphics[width=0.23\textwidth]{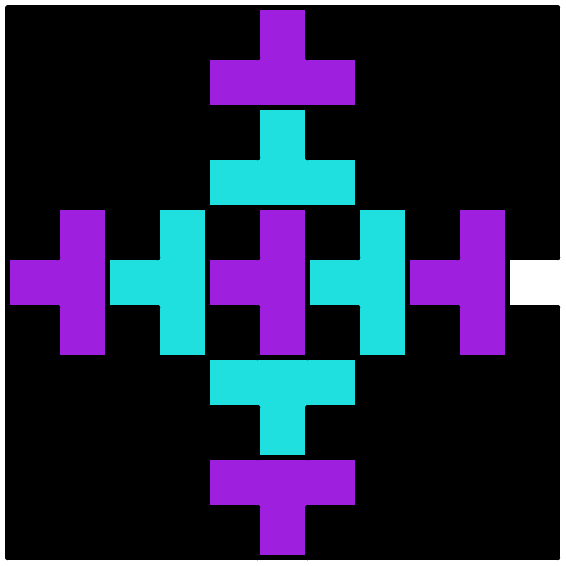}\hfill
    \includegraphics[width=0.23\textwidth]{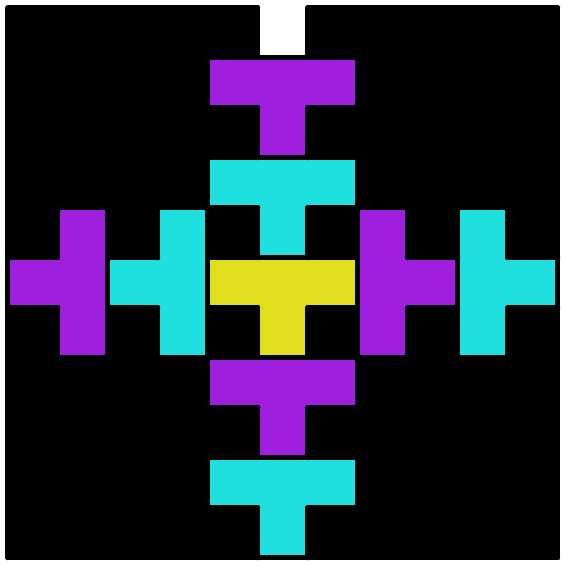}
    \caption{The four possible tilings of the 1-in-4 gadget for $\TT$ tetrominoes.}
    \label{fig:tetromino-T-1in4}
\end{figure}

\begin{figure}
    \centering
    \includegraphics[width=0.23\textwidth]{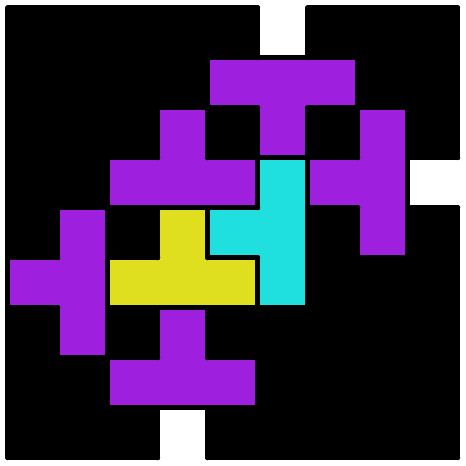}\hfill
    \includegraphics[width=0.23\textwidth]{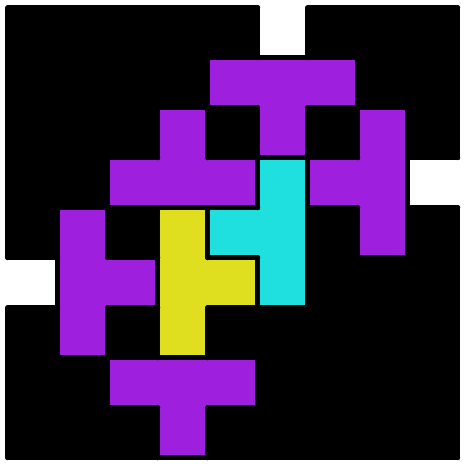}\hfill
    \includegraphics[width=0.23\textwidth]{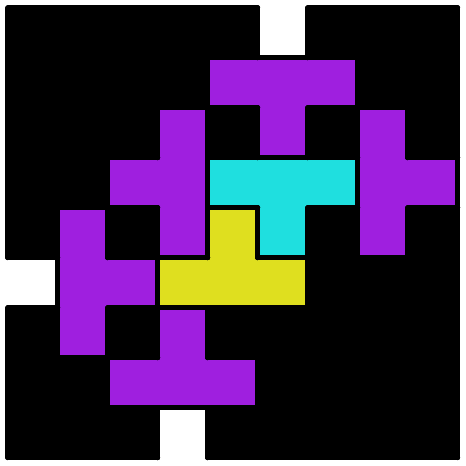}\hfill
    \includegraphics[width=0.23\textwidth]{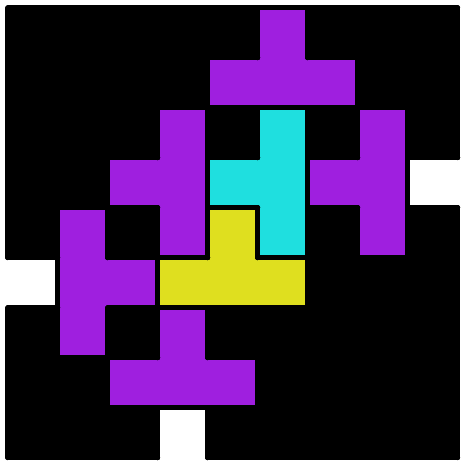}
    \caption{The four possible tilings of the 3-in-4 gadget for $\TT$ tetrominoes.}
    \label{fig:tetromino-T-3in4}
\end{figure}

    The 1-in-4 and 3-in-4 gadgets are shown in Figure~\ref{fig:tetromino-T-1in4} and Figure~\ref{fig:tetromino-T-3in4} respectively. Note that all gadget ports $(x,y)$ have the same $(x+y) \bmod 2$. Other partial covers are not possible because the number of covered cells must be a multiple of $4$.

\begin{figure}
    \centering
    \includegraphics[width=0.4\textwidth]{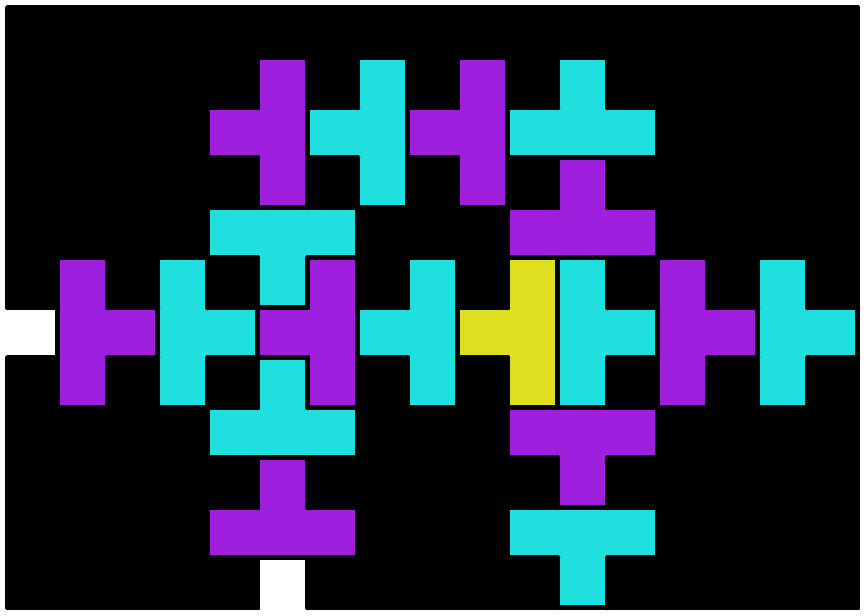}\hfil
    \includegraphics[width=0.4\textwidth]{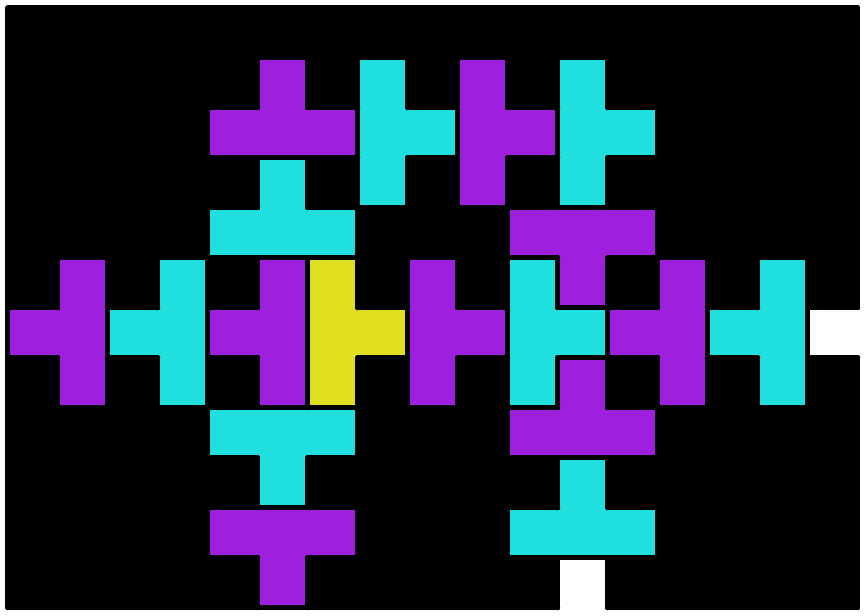}
    \caption{Synchronizer gadget for $\TT$ tetrominoes.}
    \label{fig:tetromino-T-synch}
\end{figure}

    The synchronizer is shown in Figure~\ref{fig:tetromino-T-synch}. It is built out of two ``4-equalizer'' gadgets connected through two wires. Constructing only a 4-equalizer does not work because one of the wires ends up at the wrong parity. Hence, we connect two of them to build a synchronizer where all gadget ports have the same parity.

    These gadgets suffice to prove that tiling with $\TT$ tetrominoes is NP-hard. Finally, observe that out of all seven tetrominoes, $\TT$ tetrominoes are the only tetrominoes that can be placed inside a wire without immediately creating isolated cells. Therefore, in any tiling, the wires must be tiled using only $\TT$ tetrominoes. Even if other tetrominoes are permitted, there are no undesirable partial coverings inside each gadget because the total number of covered cells is a multiple of $4$ (the synchronizer is built out of two 4-equalizers, so this reasoning also applies to it). Therefore, this proof works for any subset of tetrominoes containing $\TT$.
\end{proof}

\begin{theorem}\label{thm:L-tetromino}
    Tiling with $\LL$ tetrominoes is NP-hard. Symmetrically, tiling with $\JJ$ tetrominoes is also NP-hard. Tiling with both $\LL$ tetrominoes and $\JJ$ tetrominoes (i.e., $\LL$ tetrominoes with reflection) is also NP-hard.
\end{theorem}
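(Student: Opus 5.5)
We reduce from a Planar Graph Orientation problem known to be NP-hard from Section~\ref{sec:graph-ori}, following the template of Theorems~\ref{thm:L-tromino} and~\ref{thm:T-tetromino}. Since $\LL$ tetrominoes come in four orientations, vertex gadgets with net flow constant modulo $4$ are natural: $1$-in-$4$, $3$-in-$4$, and $4$-equalizers. The most convenient source is Planar $\{\text{1-in-4},\text{3-in-4},\text{synchronizer}\}$-GO, which is NP-hard by Theorem~\ref{thm:i-in-j-and-sync}: $1$-in-$4$ and $3$-in-$4$ are not bijunctive, and they violate the remaining easy cases (one has $i<j/2$, the other $i>j/2$, and $3\notin\{0,1,4\}$, $1\notin\{0,3,4\}$). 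As a fallback, if building both clause gadgets proves awkward, I would use $\{\text{1-in-4},\text{4-equalizer}\}$. Its NP-hardness follows from Theorem~\ref{thm:i-in-j-and-dup} with $f=4<5$, since $1$-in-$4$ is not bijunctive.

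First, I will design a wire gadget: a one-cell-wide corridor of cells with a white port cell at each end, tiled by $\LL$ tetrominoes in exactly two ways, covering one end or the other. I also need turn and shift gadgets so that wires can connect any two ports satisfying whatever coloring invariant $\LL$ imposes. Unlike $\TT$, an $\LL$ tetromino covers two black and two white cells in a checkerboard coloring, so that parity obstruction disappears. However, under the column coloring $x \bmod 2$, an $\LL$ covers cells in a $3{:}1$ or $2{:}2$ ratio depending on orientation, so I will check whether the shift gadget allows odd displacements. If it does not, all ports will be placed at a common residue class, as was done for $\TT$.

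Next, I will build the vertex gadgets: a $1$-in-$4$, a $3$-in-$4$ (or a $4$-equalizer), and a synchronizer. The synchronizer can be made by joining two $4$-equalizers with two wires, as in the $\TT$ case, if port parity requires it. For each gadget I will list the valid tilings, one per allowed port pattern. Correctness rests on the counting argument: any partial tiling of a gadget region covers a multiple of $4$ cells, so the number of ports covered internally is fixed modulo $4$. This rules out all patterns except those with the correct count, and local inspection then confirms that every allowed pattern is realizable. Because all gadgets are drawn as chiral $\LL$ shapes, the $\JJ$ case follows by mirroring the entire construction.

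For the combined $\LL,\JJ$ case, I plan to show that the wire and gadget regions are rigid enough that a $\JJ$ piece cannot be placed without isolating cells, just as the remark after Theorem~\ref{thm:L-tromino} handles $\I$. If that fails, I will redesign the wires so that their width-one corners force a specific chirality. I expect the main obstacle to be the actual gadget design, not the logic. The hardest parts will be finding a rigid one-cell-signal wire for $\LL$ whose tilings are exactly two, and ensuring that neither the wire nor the gadgets admit unintended tilings when reflections are allowed. I would start with the wire, by searching small corridor shapes computationally.
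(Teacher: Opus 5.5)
\textbf{There is a genuine gap: no gadgets are constructed.} Your choice of source problem is fine. Planar $\{1\text{-in-}4, 3\text{-in-}4, \text{synchronizer}\}$-GO is NP-hard by Theorem~\ref{thm:i-in-j-and-sync}. $\{1\text{-in-}4, 4\text{-equalizer}\}$ is NP-hard by Theorem~\ref{thm:i-in-j-and-dup} with $f=4$; the paper uses its edge-reversal, $\{3\text{-in-}4, 4\text{-equalizer}\}$. But for this theorem the reduction logic is the easy part. The proof consists of the wire, shift and vertex gadgets, plus the check that they admit no unintended tilings. Your proposal defers all of this (``I would start with the wire, by searching small corridor shapes computationally''), so as written it does not establish the statement.

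\textbf{The plan rests on a wrong invariant.} Every L (or J) tetromino covers cells in ratio $3{:}1$ under column parity in every orientation, never $2{:}2$: it is a bar of three plus a foot beside an end. The same holds for row parity. Let $D(R)$ be the number of cells of $R$ in even columns minus the number in odd columns. Then any region tiled by $N$ pieces has $D(R)\equiv 2N \pmod 4$. With single-cell ports this has two consequences.
\begin{itemize}
\item A wire whose two states tile $W\cup\{p\}$ and $W\cup\{q\}$ (same $N$) forces $p$ and $q$ into the same column class and the same row class. So no shift gadget can produce an odd displacement in either coordinate.
\item A $4$-equalizer region $V$ with ports $p_1,\dots,p_4$ must tile both $V$ and $V\cup\{p_1,\dots,p_4\}$, which differ by exactly one piece. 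Writing $d(p)=\pm1$ for the column parity of $p$, this forces $\sum_i d(p_i)\equiv 2 \pmod 4$. So its ports must split $3{:}1$ between column classes.
\end{itemize}
Hence your fallback $\{1\text{-in-}4, 4\text{-equalizer}\}$, with all ports in a common residue class, cannot be realized at all. A synchronizer built from two $4$-equalizers must put the off-class ports on the internal wires, as in the T case. Your primary route is not excluded by this argument, but nothing in the proposal shows its gadgets exist.

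\textbf{How the paper avoids this.} The signal is a $2\times 1$ rectangle rather than a single cell. The wire translates within the lattice generated by $(-2,1)$ and $(1,2)$, and a shift gadget adds $(0,4)$ to reach every translation. The paper then exhibits explicit $4$-equalizer and $3$-in-$4$ gadgets. With $2$-cell ports, the mod-$4$ area count fixes only the parity of the number of covered ports, so your counting argument would not suffice there. Correctness, including the fact that J pieces cannot enter the wires, is instead checked by inspecting the gadgets.
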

\begin{proof}
\begin{figure}
    \centering
    \includegraphics[width=0.263\textwidth]{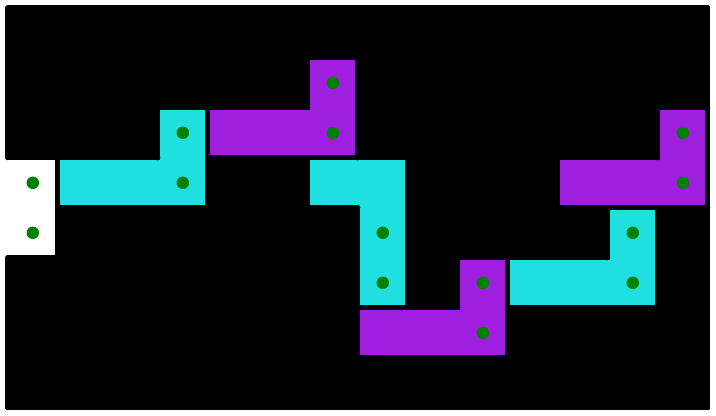}
    \includegraphics[width=0.263\textwidth]{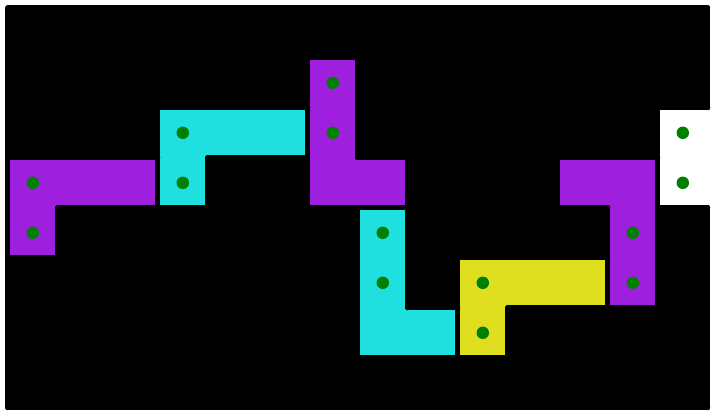}
    \hspace{0.3cm}
    \includegraphics[width=0.21\textwidth]{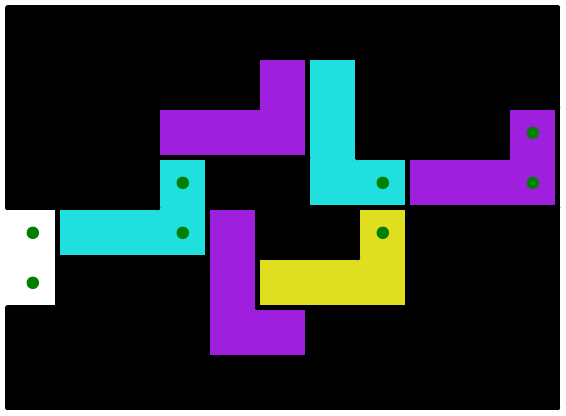}
    \includegraphics[width=0.21\textwidth]{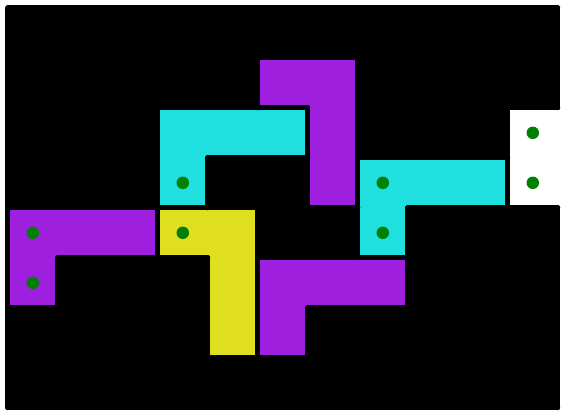}
    \caption{Left: Wire/turn gadget for $\LL$ tetrominoes. Right: Shift gadget for $\LL$ tetrominoes.}
    \label{fig:tetromino-L-wire}
\end{figure}
    Reduce from Planar $\{\text{3-in-4},\text{4-equalizer}\}$-GO, which is NP-hard by Theorem~\ref{thm:i-in-j-and-dup}.

    We represent edges using the wire and shift gadgets shown in Figure~\ref{fig:tetromino-L-wire}. Unlike previous proofs, the orientation of an edge is represented using a $2 \times 1$ rectangle of cells. The wire gadget can translate the signal by any amount in the square lattice generated by $(-2,1)$ and $(1,2)$ (using $(\text{row}, \text{column})$ coordinates), which is unfortunately not all possible translations. To correct this, the shift gadget allows shifting by $(0,4)$, so now any translation is obtainable.

\begin{figure}
    \centering
    \includegraphics[width=0.3\textwidth]{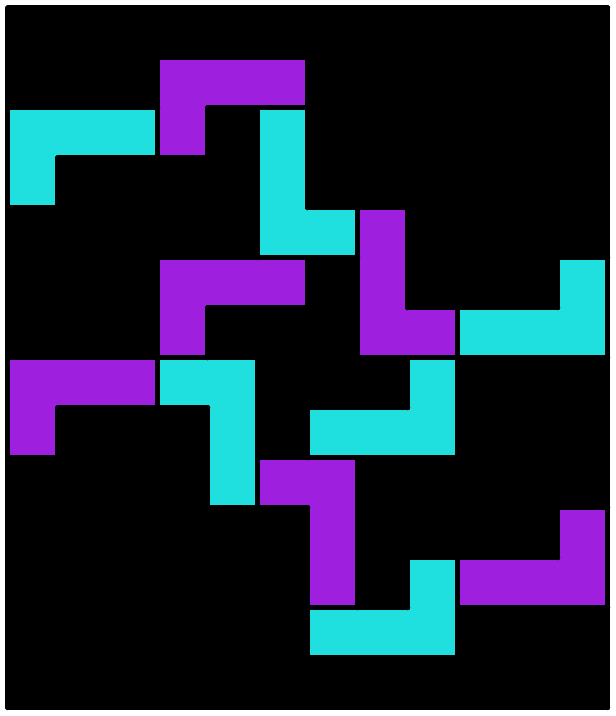}
    \hfil
    \includegraphics[width=0.3\textwidth]{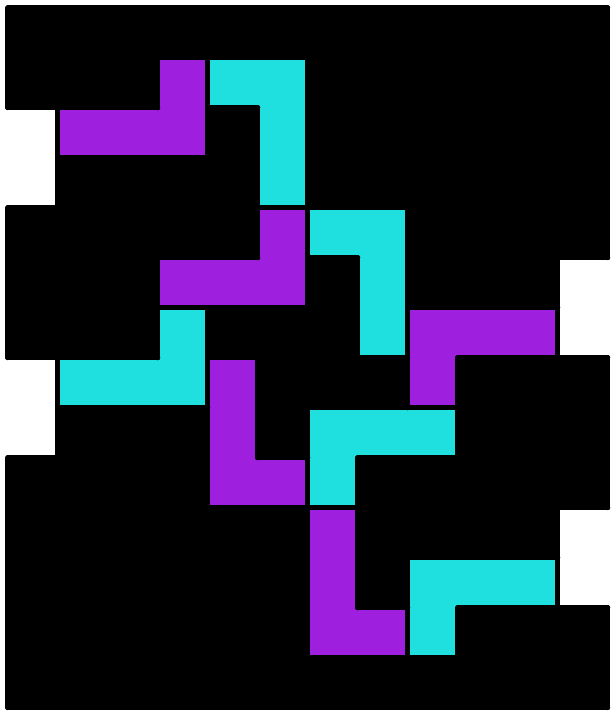}
    \caption{The two possible tilings of the 4-equalizer gadget for $\LL$ tetrominoes.}
    \label{fig:tetromino-L-0or4}
\end{figure}

\begin{figure}
    \centering
    \includegraphics[width=0.23\textwidth]{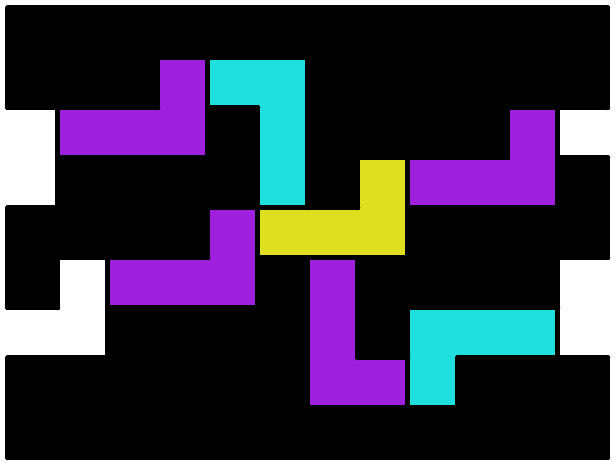}\hfill
    \includegraphics[width=0.23\textwidth]{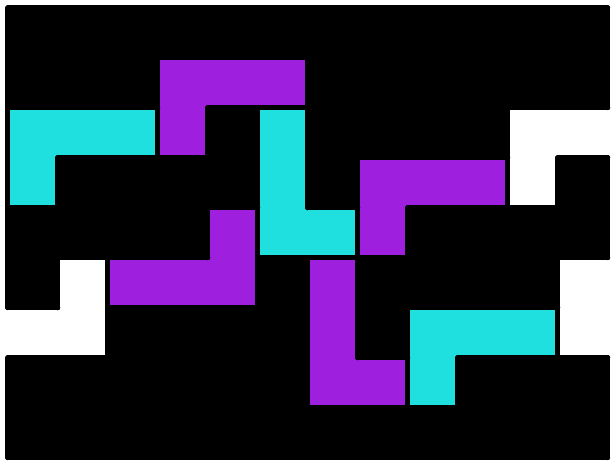}\hfill
    \includegraphics[width=0.23\textwidth]{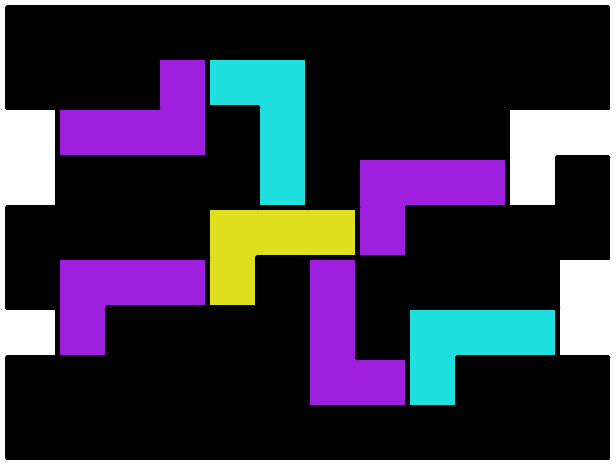}\hfill
    \includegraphics[width=0.23\textwidth]{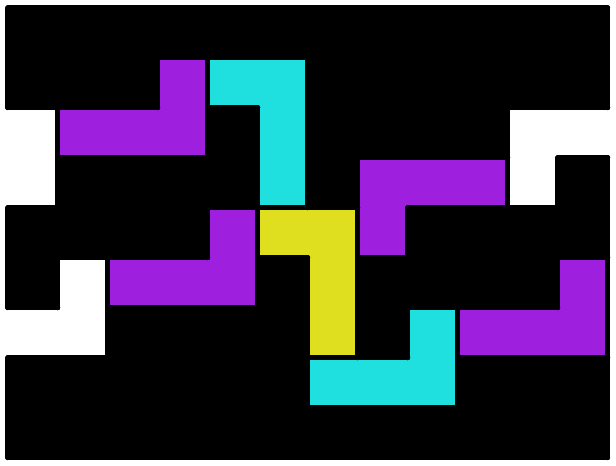}
    \caption{The four possible tilings of the 3-in-4 gadget for $\LL$ tetrominoes.}
    \label{fig:tetromino-L-3in4}
\end{figure}

    The 4-equalizer gadget is shown in Figure~\ref{fig:tetromino-L-0or4}. The two ways of covering the inner loop decide whether the neighboring edges all point inward or all outward.

    The 3-in-4 gadget is shown in Figure~\ref{fig:tetromino-L-3in4}. There are four ways to cover the $1 \times 2$ rectangle at the center of the gadget, each determining which neighboring edge points outward.

    These gadgets suffice to prove that tiling with $\LL$ tetrominoes is NP-hard. The proof still works if we additionally allow $\JJ$ tetrominoes (i.e., allow reflection in the tiles): Observe that no $\JJ$ tetrominoes can fit into any of the wire gadgets without creating isolated cells. Therefore, the wire gadgets can only be tiled using $\LL$ tetrominoes, leaving the $2 \times 1$ rectangular gadget ports in all 4-equalizers and 3-in-4 gadgets uncovered or fully covered. It can be checked that even if $\JJ$ tetrominoes are permitted, the only subsets of gadget ports that can be covered inside a gadget are the desired arrangements of Figure~\ref{fig:tetromino-L-0or4} and Figure~\ref{fig:tetromino-L-3in4}.
\end{proof}

Although tiling with only $\SS$ tetrominoes or only $\ZZ$ tetrominoes is easy, allowing both types of tiles (i.e., allowing reflections) makes the problem NP-hard.
\begin{theorem}\label{thm:SZ-tetromino}
    Tiling with both $\SS$ tetrominoes and $\ZZ$ tetrominoes is NP-hard.
\end{theorem}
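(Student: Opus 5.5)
We reduce from a planar Graph Orientation problem already shown NP-hard, following the template of Theorems~\ref{thm:T-tetromino} and~\ref{thm:L-tetromino}. The natural source is Planar $\{\text{1-in-4},\text{3-in-4},\text{synchronizer}\}$-GO, which is NP-hard by Theorem~\ref{thm:i-in-j-and-sync}. Alternatively we could use Planar $\{\text{3-in-4},\text{4-equalizer}\}$-GO (or $\{\text{1-in-4},\text{4-equalizer}\}$), which is NP-hard by Theorem~\ref{thm:i-in-j-and-dup} since $f=4<5$. Which source we use depends on which vertex gadgets turn out to be easiest to build with $\SS$ and $\ZZ$ pieces. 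In all cases the area argument from the earlier proofs carries over: any partial cover of a gadget must cover a multiple of $4$ cells. This rules out most unintended port assignments, leaving only a small case check per gadget.

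First, we design a wire gadget: a thin corridor, about two cells wide and staircase-shaped, that admits exactly two tilings by $\SS$/$\ZZ$ tetrominoes. The two tilings differ by a shift of one piece, so the corridor transmits one bit, encoded as whether a boundary port cell (or $2\times1$ port block) is covered from this side or the other. We then add turn and shift gadgets, so that any two sufficiently distant ports of compatible type can be joined. Next, we check any coloring invariants that restrict translations. $\SS$ and $\ZZ$ are each balanced in the checkerboard coloring, but other colorings, such as column parity, may constrain wires. If they do, we arrange all gadget ports to lie in the same residue class, as was done for $\TT$. We embed the planar instance with vertices polynomially far apart and route edges along the grid.

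Second, we build the vertex gadgets. For an equalizer, we use a closed loop of corridor whose two global tilings push every attached port outward or every port inward, mirroring the $\LL$ $4$-equalizer construction. For $3$-in-$4$ (or $1$-in-$4$), we use a small central cavity (for instance a $1\times2$ or $2\times2$ hole) with four arms. Exactly one arm can be used to fill the cavity, and each choice forces the corresponding port orientation. If a synchronizer is needed, we form it, as in the $\TT$ proof, from two $4$-equalizers joined by two wires, which fixes port parities.

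The main obstacle is soundness against the extra freedom from mixing the two chiralities. Since $\SS$ alone and $\ZZ$ alone are greedy-solvable (Theorem~\ref{thm:S-tetromino}), any nontrivial choice must come from interactions between $\SS$ and $\ZZ$. The wire must be built so that its two valid tilings use both shapes in an essential way, and so that no mixed tiling lets a signal be absorbed or flipped midway. To verify this, we would run the greedy row-major argument of Theorem~\ref{thm:S-tetromino} locally along each gadget. We would show that once the state at one port is fixed, each next uncovered cell has at most one extension, so every gadget has exactly the intended tilings. If the hand-designed cavity for $3$-in-$4$ proves hard to make rigid, we would fall back to whichever of $1$-in-$4$ or $3$-in-$4$ is easier and use the corresponding source problem instead.
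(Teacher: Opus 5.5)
Your outline matches the paper's at the top level: the same source problem, Planar $\{\text{1-in-4},\text{3-in-4},\text{synchronizer}\}$-GO from Theorem~\ref{thm:i-in-j-and-sync}; signals carried by $2\times1$ port blocks; and a 1-in-4 gadget whose central cavity is filled from exactly one arm. But it is a plan, not a proof. In a tiling reduction the whole mathematical content is exhibiting concrete regions and verifying their tilings, and you never specify a single gadget. ``We design a wire'', ``we use a closed loop'', ``a small central cavity'' and ``we would run the greedy argument locally'' are all deferred. Beyond remarking that the two chiralities must interact, nothing you write depends on having both S and Z. The same outline could be written verbatim for S alone, which is polynomial by Theorem~\ref{thm:S-tetromino}. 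The paper fills this gap with explicit gadgets:
\begin{itemize}
\item a wire whose turns move the $2\times1$ signal by multiples of $2$;
\item a 1-in-4 gadget whose central $2\times1$ rectangle has exactly four coverings;
\item a 3-in-4 gadget obtained by lengthening the arms by half a tetromino;
\item a synchronizer that splits each $2$-wide signal into $1$-wide wires, briefly merges two of these into one $2$-wide wire in the middle, and splits them again.
\end{itemize}

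Two of your concrete steps would also fail as stated. The governing invariant is that every S or Z tetromino, in every orientation, covers exactly one cell of each class $(x \bmod 2, y \bmod 2)$. So every tileable region is balanced among the four classes. A wire whose two states tile $W\cup P$ and $W\cup Q$ therefore forces its end ports $P,Q$ to have the same class vector; this is why the paper places all ports at the same $(x \bmod 2, y \bmod 2)$. But a $2\times1$ (or single-cell) port is itself unbalanced. With congruent ports, the two states of a duplicator cover numbers of ports that differ by its net flow $f$, so both can be tileable only if $f=0$. Hence no 4-equalizer with congruent ports exists, and this has three consequences:
\begin{itemize}
\item Your fallback sources $\{\text{3-in-4},\text{4-equalizer}\}$ and $\{\text{1-in-4},\text{4-equalizer}\}$ are not realizable this way.
\item With mixed port types instead, every edge would have to join ports of equal type, a constraint the source problem does not provide.
\item Your synchronizer built from two 4-equalizers would need equalizers with two ports of each of two complementary types. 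You neither note this nor construct such a gadget; the paper avoids equalizers entirely.
\end{itemize}
Second, the area argument does not carry over. With $2$-cell ports, area mod $4$ fixes only the parity of the number of ports a gadget covers, so it cannot distinguish covering one port from covering three. It is the four-class balance that pins the count down. Even then, a genuine case analysis of each gadget is still needed to show which ports can be covered.
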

\begin{proof}
    Reduce from Planar $\{\text{1-in-4},\text{3-in-4},\text{synchronizer}\}$-GO, which is NP-hard by Theorem~\ref{thm:i-in-j-and-sync}.

\begin{figure}
    \centering
    \includegraphics[width=0.3\textwidth]{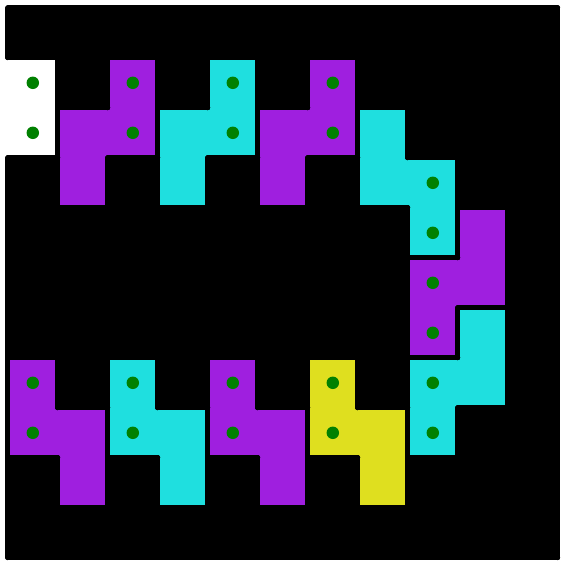}
    \hfil
    \includegraphics[width=0.3\textwidth]{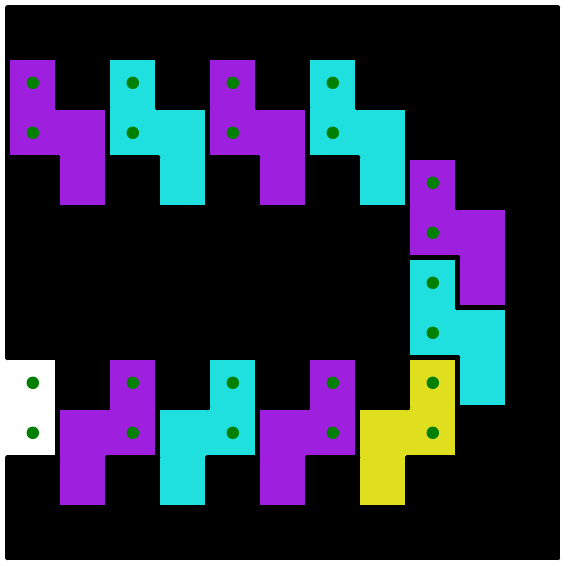}
    \caption{Wire gadget for $\SS$ tetrominoes with reflection.}
    \label{fig:tetromino-SZ-wire}
\end{figure}

    We represent edges using a wire as shown in Figure~\ref{fig:tetromino-SZ-wire}. The orientation of an edge is represented using a $2 \times 1$ rectangle of cells. By turning as shown in Figure~\ref{fig:tetromino-SZ-wire}, the wire can be moved vertically and horizontally by any multiple of $2$ units. We will ensure that all gadget ports have the same $(x \bmod 2, y \bmod 2)$.

\begin{figure}
    \centering
    \includegraphics[width=0.23\textwidth]{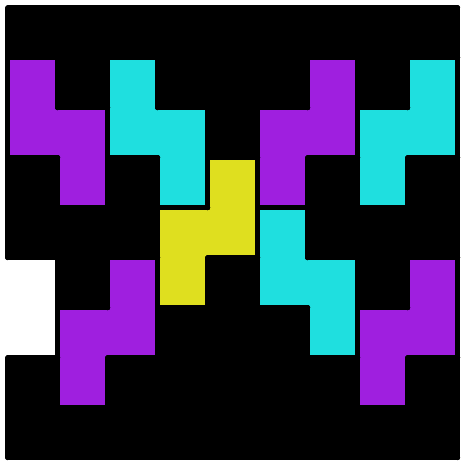}
    \includegraphics[width=0.23\textwidth]{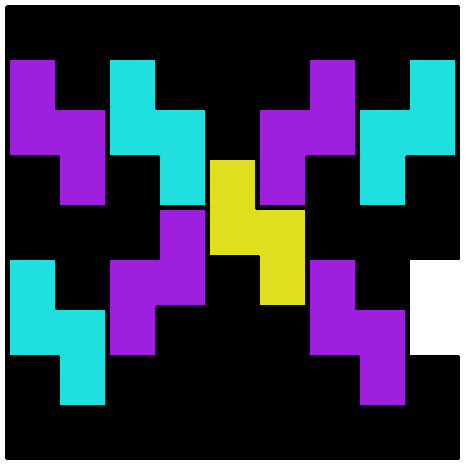}
    \includegraphics[width=0.23\textwidth]{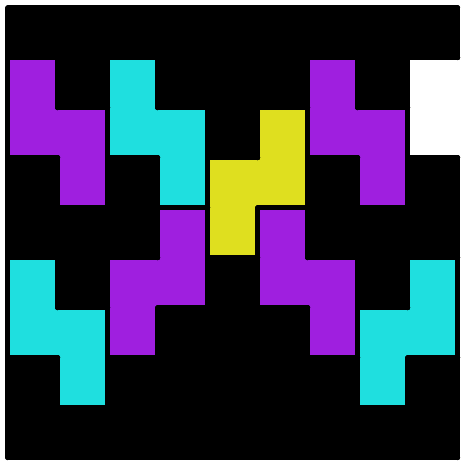}
    \includegraphics[width=0.23\textwidth]{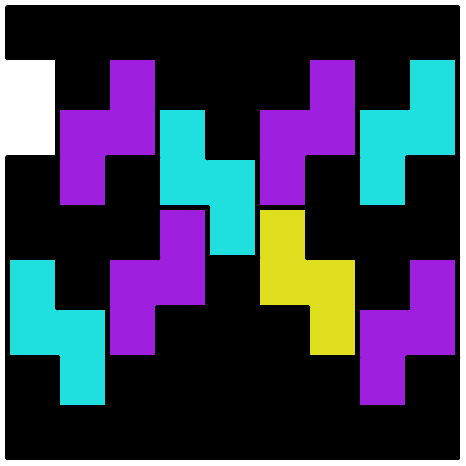}
    \caption{The four possible tilings of the 1-in-4 gadget for $\SS$ tetrominoes with reflection.}
    \label{fig:tetromino-SZ-1in4}
\end{figure}

    The 1-in-4 gadget is shown in Figure~\ref{fig:tetromino-SZ-1in4}. There are four ways to cover the $2 \times 1$ rectangle in the center of the gadget, corresponding to the four valid ways to orient the neighboring edges. All gadget ports have the same $(x \bmod 2,y \bmod 2)$, so they can be connected using the wire gadgets.

\begin{figure}
    \centering
    \includegraphics[width=0.23\textwidth]{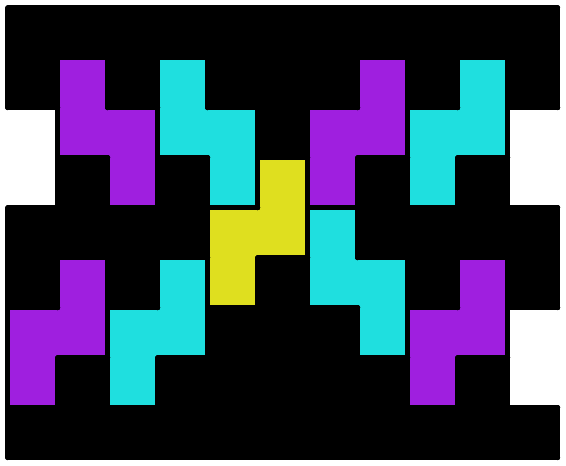}\hfill
    \includegraphics[width=0.23\textwidth]{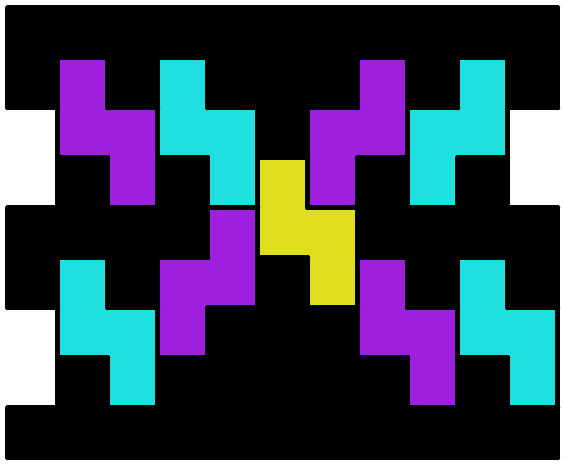}\hfill
    \includegraphics[width=0.23\textwidth]{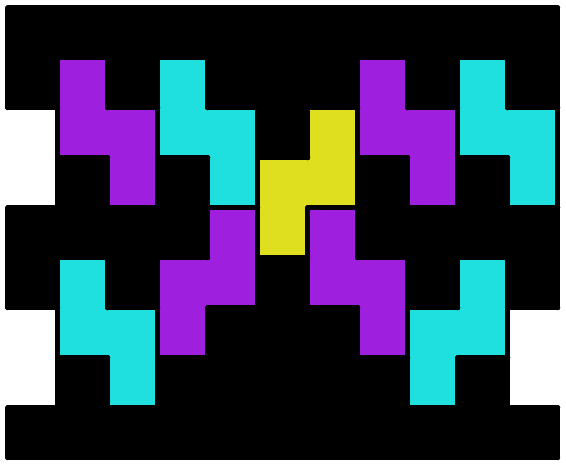}\hfill
    \includegraphics[width=0.23\textwidth]{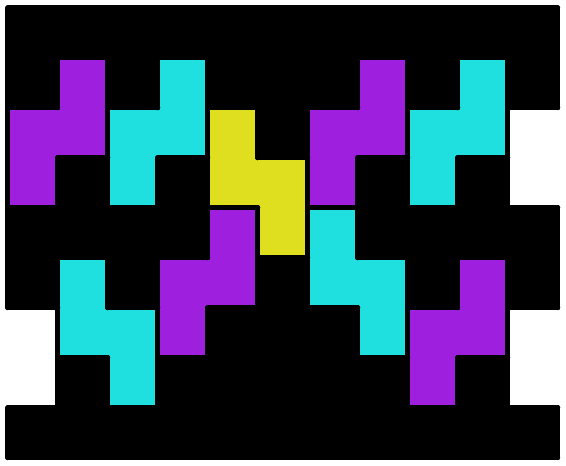}
    \caption{The four possible tilings of the 3-in-4 gadget for $\SS$ tetrominoes with reflection. This gadget is built in the same way as the 1-in-4 gadget.}
    \label{fig:tetromino-SZ-3in4}
\end{figure}

    The 3-in-4 gadget is built in the same way as the 1-in-4 gadget, except we extend the wires by ``half'' a tetromino, as shown in Figure~\ref{fig:tetromino-SZ-3in4}.

\begin{figure}[h]
    \centering
    \includegraphics[width=0.4\textwidth]{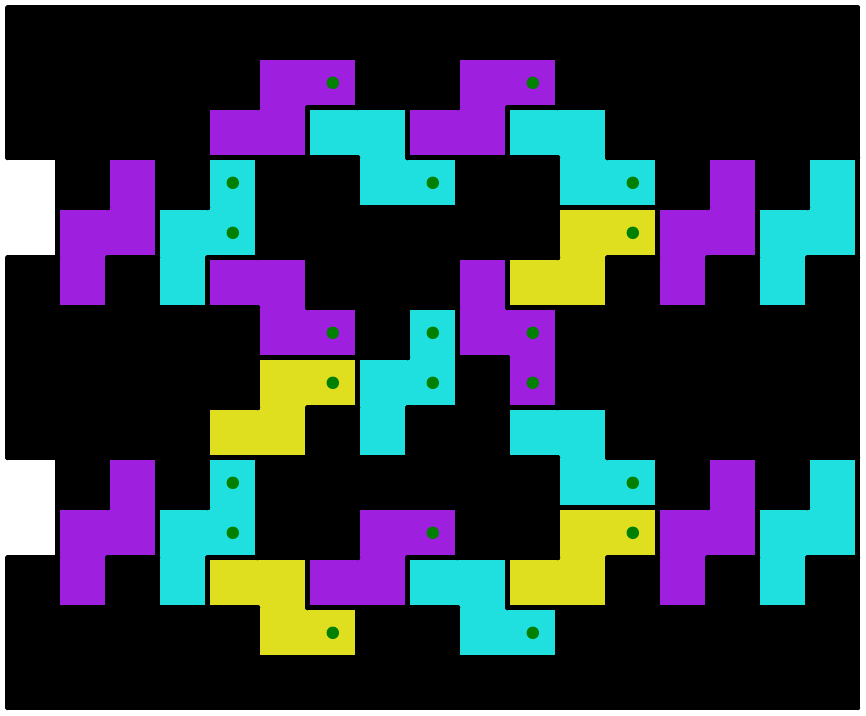}\hfil
    \includegraphics[width=0.4\textwidth]{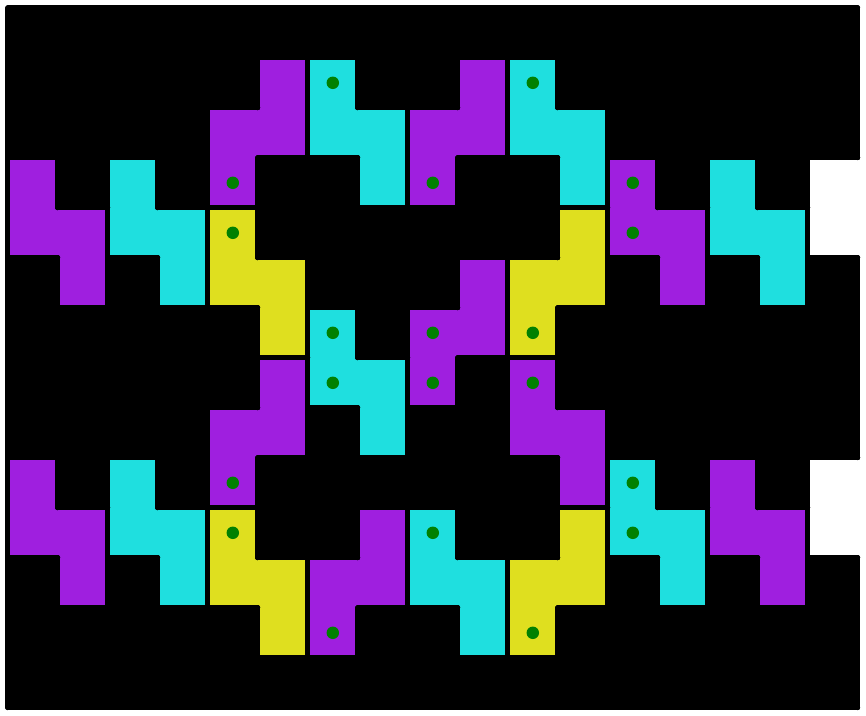}
    \caption{Synchronizer gadget for $\SS$ tetrominoes with reflection. The gadget works by splitting and recombining the wires. }
    \label{fig:tetromino-SZ-synch}
\end{figure}

    The synchronizer is shown in Figure~\ref{fig:tetromino-SZ-synch}. It essentially splits the 2-cell-wide wire signals into 1-cell-wide wires (connected at the green dots) and rejoins them. At the center of the gadget, two 1-cell-wide wires are briefly combined into a 2-cell-wide wire and then split apart again. This synchronizes these wires to point in the same direction. As a result, the only partial coverings of this gadget are the two shown in Figure~\ref{fig:tetromino-SZ-synch}.
\end{proof}

\section*{Acknowledgments}

This paper was initiated during open problem solving in the MIT class on Algorithmic Lower Bounds: Fun with Hardness Proofs (6.5440) taught by Erik Demaine in Fall 2023. We thank the other participants of that class --- specifically
Evgeniya Artemova,
Krit Boonsiriseth,
Josh Brunner,
Lily Chung,
Della Hendrickson,
Hayashi Layers,
Jayson Lynch,
Richard Qi,
Mark Saengrungkongka,
Nathan Sheffield,
Andy Tockman,
Anthony Wang,
William Wang, and
Alek Westover
--- for helpful discussions and providing an inspiring atmosphere.
The original concept is based on early discussions in 2012 with
Martin Demaine, Takahasi Horiyama, and Ryuhei Uehara,
who we thank for helping invent this problem domain.

\bibliography{bibliography}
\bibliographystyle{alpha}

\end{document}